\documentclass[aps,pra,onecolumn,superscriptaddress]{revtex4-1}
\usepackage{amsmath,amssymb,graphicx,bm}
\usepackage{adjustbox}
\usepackage[table]{xcolor}
\usepackage{subcaption}
\usepackage{booktabs}
\usepackage{multirow}
\usepackage{float}
\usepackage{placeins}
\usepackage{sansmath}
\usepackage{hyperref}
\usepackage{braket}
\usepackage{tikz}
\usetikzlibrary{trees,positioning,calc,arrows.meta,shapes.geometric}
\usepackage{tikz-cd}
\usepackage[qm]{qcircuit}

\DeclareMathOperator{\atantwo}{atan2}

\newtheorem{theorem}{Theorem}

\newtheorem{proposition}{Proposition}
\newtheorem{lemma}{Lemma}
\newtheorem{definition}{Definition}
\newtheorem{corollary}{Corollary}
\newcounter{algorithm}

\newcommand{\sref}[1]{Sec.~\ref{#1}}

\begin{document}

\title{A hardware-efficient variational ansatz with an exact diagonal metric for real- and imaginary-time evolution and Haar sampling}

\author{Dario Picozzi}
\email{picozzi.dario@gmail.com}
\affiliation{Department of Physics and Astronomy, University College London (UCL), Gower Street, London, WC1E 6BT, United Kingdom}
\affiliation{London Centre for Nanotechnology, 19 Gordon St, London, WC1H 0AH, United Kingdom}

\begin{abstract}
Variational quantum algorithms depend on the geometry of their parametrised circuits: metric-aware optimisation and time evolution require the Fubini--Study metric, which has hitherto demanded costly auxiliary measurements and ill-conditioned inversions. This work introduces a hardware-efficient $n$-qubit ansatz, which parametrises states by a binary tree and whose Fubini--Study pullback metric is diagonal in closed form. Quantum natural gradient on the tree parameters, variational imaginary- and real-time evolution, and exact unitary-invariant (Haar) sampling on a symmetry sector run with no auxiliary metric circuits or matrix inversion. When the target state is supported on a subspace of $k$ computational-basis states, the redundant tree parameters carry a gauge freedom a pruning compiler converts into circuits whose two-qubit count provably grows linearly in $k$; a variant reaches near-optimal $O(nk/\log n)$ scaling with the closed-form metric intact. On electronic-structure calculations for small molecules and half-filled Hubbard quench dynamics, the method reaches reference-level accuracy with one to three orders of magnitude fewer two-qubit gates than leading alternatives. Interchangeable constructions (a Schur-transform dressing or internal reparameterisations) make the ansatz exactly spin-adapted, with fixed total spin at every parameter and no penalty terms. The bare ansatz is an exactly controllable, well-conditioned and barren-plateau-free primitive for preparing and sampling sector states: on its own, it is classically simulable in $k$ (a boundary proved for a general class of sector-sparse ans\"atze); composed with a classically hard dressing, it yields molecular ground states, sector-Haar benchmarking, thermal correlators, and exact effective Hamiltonians trained from energy measurements alone, with the composed circuit carrying the potential for quantum advantage.
\end{abstract}

\maketitle

Variational quantum algorithms are widely regarded as the most plausible route to extracting useful computation from intermediate-scale quantum hardware~\cite{Tilly2022, Cerezo2021, McClean2016, McArdle2020}. In their canonical form, a parametric quantum circuit prepares a trial state $\ket{\psi(\bm\lambda)}$ whose parameters $\bm\lambda$ are tuned by a classical optimiser against a problem-defined cost~\cite{Peruzzo2014}, and the same machinery, with imaginary or real time playing the role of an optimiser step, also drives the variational simulation of static and dynamical quantum many-body problems. The geometry of the parametric manifold, captured by the Fubini--Study metric and its compatible K\"ahler structure, controls the efficiency of every step of this loop: metric-aware optimisation and imaginary-time evolution require the metric gradient, while real-time evolution rotates that same metric gradient by the complex structure. For all but a handful of fine-tuned ansätze, however, the metric is dense, must be estimated by $\mathcal{O}(P^2)$ auxiliary Hadamard tests, and is then regularised by a pseudoinverse on the classical side, an overhead that scales with the parameter count $P$ and ultimately limits which problems are reachable in practice~\cite{Stokes2020, McArdle2019, Yuan2019, Gacon2021}.

Here we cut this overhead away at its source by designing an ansatz whose Fubini--Study pullback metric on the tree coordinates is diagonal in closed form, so that the linear system reduces to elementwise division and the auxiliary metric circuits disappear entirely. The construction is a binary tree of uniformly controlled $R_y$ rotations interleaved with diagonal phase gates (geometrically, a system of polyspherical coordinates on the projective Hilbert space $\mathbb{C}\mathbf{P}^{2^n-1}$), which is simultaneously complete (every state can be reached) and hardware-efficient (the unparameterised scaffold is the standard CNOT staircase). Like the standard hardware-efficient ansatz (HEA)~\cite{Kandala2017} it uses only CNOTs and single-qubit rotations on a device-friendly layout, but it is not that ansatz: the HEA is a fixed, problem-agnostic layered template, whereas the tree is a structured cascade dictated by the target subspace, and that difference is the decisive one. The HEA's template determines neither which subspace the circuit reaches nor how expressive it is, so that making it expressive enough to be certain of reaching a chosen target drives it towards a unitary $2$-design and the exponentially vanishing gradients of a barren plateau~\cite{McClean2018, Holmes2022, Sim2019, Leone2023}. The tree instead makes expressibility a controllable resource: fixing a set of $k$ active leaves renders it maximally expressive within, and strictly confined to, exactly the chosen $k$-dimensional subspace, at the minimal parameter count that subspace admits (\sref{sec:pruning}), so that a target symmetry sector is reached exactly and without penalty terms. Because that subspace is only polynomially large, the bare ansatz is moreover free of barren plateaus: under Fubini--Study initialisation its natural-gradient signal equals the quantum variance of the sector-projected Hamiltonian, which is bounded below by an inverse polynomial in $n$ rather than being exponentially small (Appendix~\ref{app:no-bp})~\cite{McClean2018, Holmes2022, Larocca2025}; the exact $2$-design of \sref{sec:fs-apps} makes this a closed-form statement. We show that the metric admits the diagonal closed form via a direct calculation that uses no Hadamard tests; that an exact four-term parameter-shift rule yields analytic Euclidean gradients on the same circuit at four evaluations per parameter; and that, when the target state is sparsely supported on $k\ll 2^n$ computational-basis states, the redundant parameters of the full tree carry a gauge freedom made explicit through an active/fixed/inactive node classification and exploit to compile pruned circuits whose CNOT count provably grows linearly with $k$ (at most $2nA_s=O(n^2k)$ for any ordering, with the tight constant conjectured). The diagonal metric has one further consequence, exploited throughout: drawing the parameters from the Fubini--Study volume measure
\begin{equation}\label{eq:fs-measure-intro}
d\mu_{\mathrm{FS}} \;\propto\; \sqrt{\det g}\;\, d\bm\lambda
\end{equation}
reproduces the unitary-invariant (Haar) ensemble on a chosen symmetry sector exactly and rejection-free, the diagonal metric reducing the draw to one independent sample per tree node. Three interchangeable constructions (a Schur-transform dressing and two gate-free variants) further make the ansatz exactly spin-adapted, with fixed total spin at every parameter and no penalty terms.

On the resulting compiler the bare ansatz is first benchmarked across static and dynamical problems: symmetry-adapted variational quantum eigensolver (VQE) calculations on H$_3^+$, LiH, BeH$_2$, H$_2$O and NH$_3$, and half-filled $1\!\times\!4$ and $2\!\times\!2$ Hubbard quenches simulated with the exact metric/K\"ahler integrator. The pruned ansatz reaches FCI accuracy with one to two orders of magnitude fewer CNOTs than UCCSD and reproduces exact real-time evolution at a fraction of the depth of pVQD and up to three orders of magnitude fewer two-qubit gates than Trotter. These bare computations are, moreover, classically simulable in the active-leaf count: the prepared state, its energy, its metric and its Fubini--Study sample ensembles all admit a closed-form, polynomial-cost classical description. This dequantization is proved not only for the tree but for any sector-sparse variational family and its classically contractible dressings (Appendix~\ref{app:simulability}), so that the construction becomes genuinely quantum only when the prepared state is composed with a non-classically-simulable dressing $U(\bm\phi)$. This bare/dressed boundary organizes the applications that follow: in the composed role the same diagonal-metric ansatz and the same shift-rule primitive deliver dressed molecular ground states, Fubini--Study process benchmarking and infinite-temperature transport, and exact effective Hamiltonians trained from energy measurements alone. What follows sets out the construction, the compiler and the supporting numerics; Figure~\ref{fig:overview} summarises the end-to-end pipeline.

\begin{figure*}[!htbp]
\centering

\providecommand{\PWactC}{yellow!70!black}
\providecommand{\PWactF}{yellow!55}
\providecommand{\PWfixC}{red!65!black}
\providecommand{\PWfixF}{red!40}
\providecommand{\PWinaC}{gray!55}
\providecommand{\PWinaF}{gray!15}
\providecommand{\PWleafC}{green!55!black}
\providecommand{\PWleafF}{green!35}

\providecommand{\PWactB}{\push{\fcolorbox{\PWactC}{\PWactF}{\rule{0pt}{1.4ex}\rule{2.6ex}{0pt}}}\qw}
\providecommand{\PWfixB}{\push{\fcolorbox{\PWfixC}{\PWfixF}{\rule{0pt}{1.4ex}\rule{2.6ex}{0pt}}}\qw}
\providecommand{\PWTARG}{\push{\raisebox{-0.12ex}{\tikz[baseline=-0.6ex]{%
  \draw[fill=white,line width=0.4pt] (0,0) circle (0.85ex);%
  \draw[line width=0.5pt] (-0.65ex,0) -- (0.65ex,0);%
  \draw[line width=0.5pt] (0,-0.65ex) -- (0,0.65ex);}}}\qw}

\begin{tikzpicture}[
  every node/.style={font=\small\sffamily},
  panel/.style={inner sep=2pt, align=center},
  plab/.style={font=\small\sffamily\bfseries, anchor=south west, inner sep=1pt, yshift=1pt},
]

\node[panel, anchor=north west, text width=24mm] (leaves) at (0,0) {%
  {\footnotesize\sffamily $0000$\\$0001$\\$0010$\\$0110$\\$1010$}%
};
\node[plab] at (leaves.north west) {(a)~Active leaves};

\node[panel, right=8mm of leaves, anchor=west] (tree) {%
  \begin{tikzpicture}[
    grow=right,
    level distance=8mm,
    level 1/.style={sibling distance=22mm},
    level 2/.style={sibling distance=11mm},
    level 3/.style={sibling distance=5.5mm},
    level 4/.style={sibling distance=2.8mm},
    int/.style={circle, draw, inner sep=0pt, font=\tiny, minimum size=3.4mm},
    lf/.style={draw, rectangle, minimum size=2.4mm, inner sep=0pt},
    edge from parent/.style={draw, line width=0.35pt}
  ]
  \node[int,fill=\PWactF,draw=\PWactC] {}
    child { node[int,fill=\PWactF,draw=\PWactC] {}
      child { node[int,fill=\PWactF,draw=\PWactC] {}
        child { node[int,fill=\PWactF,draw=\PWactC] {}
          child { node[lf,fill=\PWleafF,draw=\PWleafC] {} }
          child { node[lf,fill=\PWleafF,draw=\PWleafC] {} }
        }
        child { node[int,fill=\PWfixF,draw=\PWfixC] {}
          child { node[lf,fill=\PWleafF,draw=\PWleafC] {} }
          child { node[lf] {} }
        }
      }
      child { node[int,fill=\PWfixF,draw=\PWfixC] {}
        child { node[int,fill=\PWinaF,draw=\PWinaC] {}
          child { node[lf] {} } child { node[lf] {} }
        }
        child { node[int,fill=\PWfixF,draw=\PWfixC] {}
          child { node[lf,fill=\PWleafF,draw=\PWleafC] {} }
          child { node[lf] {} }
        }
      }
    }
    child { node[int,fill=\PWfixF,draw=\PWfixC] {}
      child { node[int,fill=\PWfixF,draw=\PWfixC] {}
        child { node[int,fill=\PWinaF,draw=\PWinaC] {}
          child { node[lf] {} } child { node[lf] {} }
        }
        child { node[int,fill=\PWfixF,draw=\PWfixC] {}
          child { node[lf,fill=\PWleafF,draw=\PWleafC] {} }
          child { node[lf] {} }
        }
      }
      child { node[int,fill=\PWinaF,draw=\PWinaC] {}
        child { node[int,fill=\PWinaF,draw=\PWinaC] {}
          child { node[lf] {} } child { node[lf] {} }
        }
        child { node[int,fill=\PWinaF,draw=\PWinaC] {}
          child { node[lf] {} } child { node[lf] {} }
        }
      }
    };
  \end{tikzpicture}%
};
\node[plab] at (tree.north west) {(b)~Binary tree};

\node[panel, right=8mm of tree, anchor=west, text width=78mm] (pruned) {%
  \resizebox{74mm}{!}{\Qcircuit @C=0.45em @R=0.85em {
    & \PWactB & \qw     & \ctrl{1} & \qw     & \ctrl{1} & \qw     & \qw     & \ctrl{2} & \qw     & \qw     & \qw     & \ctrl{2} & \ctrl{3} & \qw     & \qw     & \qw     & \qw     & \ctrl{3} \\
    & \qw     & \PWactB & \PWTARG  & \PWactB & \PWTARG  & \ctrl{1}& \qw     & \qw      & \qw     & \ctrl{1}& \qw     & \qw      & \qw      & \qw     & \qw     & \qw     & \qw     & \qw \\
    & \qw     & \qw     & \qw      & \qw     & \qw      & \PWTARG & \PWactB & \PWTARG  & \PWfixB & \PWTARG & \PWactB & \PWTARG  & \qw      & \ctrl{1}& \qw     & \ctrl{1}& \qw     & \qw \\
    & \qw     & \qw     & \qw      & \qw     & \qw      & \qw     & \qw     & \qw      & \qw     & \qw     & \qw     & \qw      & \PWTARG  & \PWTARG & \PWactB & \PWTARG & \PWactB & \PWTARG
  }}%
};
\node[plab] at (pruned.north west) {(c)~Pruned circuit};

\draw[->, line width=0.8pt] (leaves.east) -- (tree.west);
\draw[->, line width=0.8pt] (tree.east)   -- (pruned.west);

\end{tikzpicture}
\caption{End-to-end pipeline, illustrated on $n=4$ qubits with $k=|S|=5$ active basis states, $S=\{0000,0001,0010,0110,1010\}$. From the user-supplied active leaves \textbf{(a)} (green nodes in the tree), the compiler builds the binary tree and classifies each internal node as \textsc{Active} (yellow, free $R_y$ angle), \textsc{Fixed} (red, $R_y$ angle pinned to $0$ or $1$ in units of $\pi/2$) or \textsc{Inactive} (grey, gauge direction that does not affect the represented state) \textbf{(b)}. Gauge-fixing the inactive parameters eliminates redundant elementary $R_y$ gates in the cascade and exposes pairwise CNOT cancellations, yielding the pruned circuit \textbf{(c)} (shown stylised; coloured boxes mark \textsc{Active}/\textsc{Fixed} elementary $R_y$ rotations in the same colour convention) with $|\bm\theta_A|=4$ free parameters and $10$ CNOTs, against $15$ free $R_y$ angles and $14$ CNOTs for the unrestricted ansatz on the same $n$. The Fubini--Study metric on the surviving parameters remains diagonal in closed form (see \sref{sec:metric}). The same example is used throughout the paper to detail each step (Fig.~\ref{fig:pruned-example}).}
\label{fig:overview}
\end{figure*}

\section{Methods}\label{sec:methods}

\subsection{Binary-tree ansatz}\label{sec:circuit}\label{sec:polyspherical}

The full ansatz is a cascade of uniformly controlled $R_y$ rotations followed by uniformly controlled $R_z$ rotations: the $R_y$ block prepares all real amplitudes, the $R_z$ block dresses them with the necessary complex phases. Throughout, $\theta_i$ and $\omega_j$ are reserved for the tree parameters carried by the binary tree below (one per internal node and one per leaf, respectively), and use $k=|S|$ for the size of a target set of active computational-basis states (defined below in \sref{sec:pruning}). Each uniformly controlled rotation is realised as a triangular cascade of elementary $R_y$/$R_z$ rotations on the target qubit separated by CNOTs whose controls run in Gray-code order, with the elementary angles obtained from the user-specified angles by a Walsh--Hadamard transform re-indexed by the Gray code~\cite{Mottonen2004, Mottonen2005, Savage1997}; the chart-to-circuit map relating $(\bm\theta,\bm\omega)$ to the elementary rotation angles, and the ranges of $\theta_i$, are recorded in Appendix~\ref{app:c2p-detail}. Throughout, hardware-efficient refers to this fixed CNOT-plus-single-qubit-rotation scaffold~\cite{Kandala2017}; the cascade's CNOTs are not all nearest-neighbour, but long-range controls are geometrically rare, so the construction maps onto a linear qubit chain at constant-factor overhead in the dense limit, and on the symmetry sectors of \sref{sec:vqe} the pruned circuit likewise maps onto the chain at constant-factor two-qubit overhead, realising each uniformly controlled rotation by the nearest-neighbour synthesis of Ref.~\cite{Bergholm2005} (Appendix~\ref{app:cnot-structure}).

The resulting parametrization can be read off a binary tree with $n+1$ layers (Fig.~\ref{fig:classified-tree}). Each leaf corresponds to one of the $2^n$ computational-basis states. Each internal node in the first $n$ layers carries one $R_y$ parameter $\theta_i$, with the two outgoing branches from that node weighted by $\cos\theta_i$ (downward) and $\sin\theta_i$ (upward); each node on the final layer carries one $R_z$ parameter $\omega_j$ that contributes a phase $e^{i\omega_j}$ on its single outgoing edge. Global-phase invariance lets us fix one phase to zero (e.g.\ for $\ket{0}$), giving $2^n - 1$ independent phases.

Given an assignment of parameters, the amplitude of any computational-basis state is the product of the factors along the unique root-to-leaf path. For the example in Fig.~\ref{fig:classified-tree}, $\braket{0110\,|\,\psi} = \cos\theta_0\,\sin\theta_1\,\sin\theta_4\,\cos\theta_{10}$. With the parametrization in hand, we turn to its information geometry, and then exploit the redundancy that arises when the target state is supported only on a small subset of the leaves.
\begin{figure*}[!htbp]
    \centering
    \captionsetup[subfigure]{font=sf, labelfont={sf,bf}}
    \begin{subfigure}[t]{0.95\textwidth}
        \centering
        \resizebox{0.70\linewidth}{!}{

\providecommand{\PWactC}{yellow!70!black}
\providecommand{\PWactF}{yellow!55}
\providecommand{\PWfixC}{red!65!black}
\providecommand{\PWfixF}{red!40}
\providecommand{\PWinaC}{gray!55}
\providecommand{\PWinaF}{gray!15}
\providecommand{\PWleafC}{green!55!black}
\providecommand{\PWleafF}{green!35}
\providecommand{\PWomegaC}{cyan!60!black}
\providecommand{\PWomegaF}{cyan!30}

\begin{tikzpicture}[
  grow=right,
  level distance=18mm,
  level 1/.style={sibling distance=44mm},
  level 2/.style={sibling distance=22mm},
  level 3/.style={sibling distance=11mm},
  level 4/.style={sibling distance=5.5mm},
  level 5/.style={sibling distance=5.5mm, level distance=16mm},
  int/.style={circle, draw, inner sep=0pt, font=\scriptsize, minimum size=5.2mm, align=center},
  ph/.style ={circle, draw=\PWleafC, fill=green!8, inner sep=0pt, font=\tiny, minimum size=4.2mm},
  phA/.style={circle, draw=\PWomegaC, fill=\PWomegaF, inner sep=0pt, font=\tiny, minimum size=4.2mm},
  phI/.style={circle, draw=\PWinaC, fill=\PWinaF, inner sep=0pt, font=\tiny, minimum size=4.2mm},
  lf/.style ={rectangle, draw, inner sep=2pt, font=\tiny},
  lfS/.style={rectangle, draw=\PWleafC, fill=\PWleafF, inner sep=2pt, font=\tiny},
  edge from parent/.style={draw, line width=0.4pt},
  el/.style={midway, font=\tiny, sloped, above, inner sep=1pt},
]

\node[int, fill=\PWactF, draw=\PWactC] {$\theta_0$}
  child { node[int, fill=\PWactF, draw=\PWactC] {$\theta_1$}
    child { node[int, fill=\PWactF, draw=\PWactC] {$\theta_3$}
      child { node[int, fill=\PWactF, draw=\PWactC] {$\theta_7$}
        child { coordinate
          child { node[lfS] {$0000$} }
          edge from parent node[el]{$\cos\theta_7$}
        }
        child { node[phA] {$\omega_1$}
          child { node[lfS] {$0001$} edge from parent node[el]{$e^{i\omega_1}$} }
          edge from parent node[el]{$\sin\theta_7$}
        }
        edge from parent node[el]{$\cos\theta_3$}
      }
      child { node[int, fill=\PWfixF, draw=\PWfixC] {$\theta_8$}
        child { node[phA] {$\omega_2$} child { node[lfS] {$0010$} edge from parent node[el]{$e^{i\omega_2}$} } edge from parent node[el]{$1$} }
        child { node[phI] {$\omega_3$} child { node[lf] {$0011$} } edge from parent node[el]{$0$} }
      }
      edge from parent node[el]{$\cos\theta_1$}
    }
    child { node[int, fill=\PWfixF, draw=\PWfixC] {$\theta_4$}
      child { node[int, fill=\PWinaF, draw=\PWinaC] {$\theta_9$}
        child { node[phI] {$\omega_4$} child { node[lf] {$0100$} } }
        child { node[phI] {$\omega_5$} child { node[lf] {$0101$} } }
        edge from parent node[el]{$0$}
      }
      child { node[int, fill=\PWfixF, draw=\PWfixC] {$\theta_{10}$}
        child { node[phA] {$\omega_6$} child { node[lfS] {$0110$} edge from parent node[el]{$e^{i\omega_6}$} } edge from parent node[el]{$1$} }
        child { node[phI] {$\omega_7$} child { node[lf] {$0111$} } edge from parent node[el]{$0$} }
        edge from parent node[el]{$1$}
      }
    }
    edge from parent node[el]{$\cos\theta_0$}
  }
  child { node[int, fill=\PWfixF, draw=\PWfixC] {$\theta_2$}
    child { node[int, fill=\PWfixF, draw=\PWfixC] {$\theta_5$}
      child { node[int, fill=\PWinaF, draw=\PWinaC] {$\theta_{11}$}
        child { node[phI] {$\omega_8$} child { node[lf] {$1000$} } }
        child { node[phI] {$\omega_9$} child { node[lf] {$1001$} } }
        edge from parent node[el]{$0$}
      }
      child { node[int, fill=\PWfixF, draw=\PWfixC] {$\theta_{12}$}
        child { node[phA] {$\omega_{10}$} child { node[lfS] {$1010$} edge from parent node[el]{$e^{i\omega_{10}}$} } edge from parent node[el]{$1$} }
        child { node[phI] {$\omega_{11}$} child { node[lf] {$1011$} } edge from parent node[el]{$0$} }
        edge from parent node[el]{$1$}
      }
      edge from parent node[el]{$1$}
    }
    child { node[int, fill=\PWinaF, draw=\PWinaC] {$\theta_6$}
      child { node[int, fill=\PWinaF, draw=\PWinaC] {$\theta_{13}$}
        child { node[phI] {$\omega_{12}$} child { node[lf] {$1100$} } }
        child { node[phI] {$\omega_{13}$} child { node[lf] {$1101$} } }
      }
      child { node[int, fill=\PWinaF, draw=\PWinaC] {$\theta_{14}$}
        child { node[phI] {$\omega_{14}$} child { node[lf] {$1110$} } }
        child { node[phI] {$\omega_{15}$} child { node[lf] {$1111$} } }
      }
      edge from parent node[el]{$0$}
    }
    edge from parent node[el]{$\sin\theta_0$}
  };
\end{tikzpicture}}
        \subcaption{Parametrization and classification tree.}\label{fig:classified-tree-a}
    \end{subfigure}\\[4pt]
    \begin{subfigure}[t]{0.95\textwidth}
        \centering

\providecommand{\PWactC}{yellow!70!black}
\providecommand{\PWactF}{yellow!55}
\providecommand{\PWomegaC}{cyan!60!black}
\providecommand{\PWomegaF}{cyan!30}

\scalebox{0.85}{\begin{tikzpicture}[font=\small\sffamily,
    cell/.style={draw, rectangle, inner xsep=3pt, inner ysep=3pt, anchor=north west, line width=0.3pt},
    actcell/.style={cell, fill=\PWactF, draw=\PWactC},
    leafcell/.style={cell, fill=\PWomegaF, draw=\PWomegaC}]
  \renewcommand{\cos}{\operatorname{\mathsf{cos}}}%
  \renewcommand{\sin}{\operatorname{\mathsf{sin}}}%
  \node[actcell] (n0) at (0,0) {$1$};
  \node[actcell] (n1) at (n0.south east) {$\cos^{2}\!\theta_{0}$};
  \node[actcell] (n2) at (n1.south east) {$\cos^{2}\!\theta_{0}\cos^{2}\!\theta_{1}$};
  \node[actcell] (n3) at (n2.south east) {$\cos^{2}\!\theta_{0}\cos^{2}\!\theta_{1}\cos^{2}\!\theta_{3}$};
  \node[leafcell] (n4) at (n3.south east) {$\cos^{2}\!\theta_{0}\cos^{2}\!\theta_{1}\cos^{2}\!\theta_{3}\sin^{2}\!\theta_{7}$};
  \node[leafcell] (n5) at (n4.south east) {$\cos^{2}\!\theta_{0}\cos^{2}\!\theta_{1}\sin^{2}\!\theta_{3}$};
  \node[leafcell] (n6) at (n5.south east) {$\cos^{2}\!\theta_{0}\sin^{2}\!\theta_{1}$};
  \node[leafcell] (n7) at (n6.south east) {$\sin^{2}\!\theta_{0}$};
  \coordinate (TL) at (n0.north west);
  \coordinate (BR) at (n7.south east);
  \coordinate (TR) at (n7.east |- TL);
  \coordinate (BL) at (n0.west |- BR);
  \draw[dashed, line width=0.5pt, black!65, dash pattern=on 2.2pt off 1.6pt]
        ($(n3.north east |- TL)+(0.5pt,0)$) -- ($(n3.north east |- BR)+(0.5pt,0)$);
  \draw[dashed, line width=0.5pt, black!65, dash pattern=on 2.2pt off 1.6pt]
        ($(TL |- n3.south)+(0,-0.5pt)$) -- ($(TR |- n3.south)+(0,-0.5pt)$);
  \draw[line width=1.1pt] ($(TL)+(-0.08,0.04)$) -- ($(TL)+(-0.22,0.04)$)
        -- ($(BL)+(-0.22,-0.04)$) -- ($(BL)+(-0.08,-0.04)$);
  \draw[line width=1.1pt] ($(TR)+(0.08,0.04)$)  -- ($(TR)+(0.22,0.04)$)
        -- ($(BR)+(0.22,-0.04)$)  -- ($(BR)+(0.08,-0.04)$);
  \node[left=10pt, font=\Large\sffamily] at ($(TL)!0.5!(BL)+(-0.22,0)$) {$\mathsf{g} \;=\;$};
\end{tikzpicture}}
        \subcaption{Resulting diagonal metric.}\label{fig:classified-tree-b}
    \end{subfigure}
    \caption{Parametrization, classification, and the resulting diagonal metric for $n=4$ and $S=\{0000,0001,0010,0110,1010\}$. \textbf{(a)} Each internal node carries one $R_y$ parameter $\theta_i$; each final-layer node carries one $R_z$ parameter $\omega_j$, contributing the phase $e^{i\omega_j}$ on its outgoing edge. Branch factors follow the convention of \sref{sec:circuit} ($\cos\theta_i$ on downward, $\sin\theta_i$ on upward branches), labelled here along the leftmost path; the redundant global-phase coordinate $\omega_0$ is gauge-fixed to zero and drawn as a direct edge from $\theta_7$ to $\ket{0000}$ (no phase node). The metric weights $w_i$ and leaf probabilities $p_j$ are the squared path products of \sref{sec:metric}. Node colours follow the classification of \sref{sec:classify}: \textsc{Active} (yellow, $\theta_i$ free), \textsc{Fixed} (red, $\theta_i$ pinned to $0$ or $\pi/2$), \textsc{Inactive} (grey, gauge direction); green leaves are the elements of $S$. \textbf{(b)} The chart pullback metric $g$ on the surviving parameters is diagonal, block-split into the active $\bm\theta = (\theta_0,\theta_1,\theta_3,\theta_7)$ coordinates (yellow block, entries $w_i$) and the surviving $\bm\omega = (\omega_1,\omega_2,\omega_6,\omega_{10})$ coordinates (cyan block, entries $p_j$), read off the tree in \textbf{(a)}; equivalently $g^{-1}$ acts elementwise.}
    \label{fig:classified-tree}
\end{figure*}

\subsection{Closed-form diagonal Fubini--Study metric}\label{sec:metric}

The information-geometric content of a variational ansatz is captured by the parameter-space pullback of the Hilbert inner product, which controls quantum natural gradient~\cite{Amari1998}, variational imaginary-time evolution, and variational real-time evolution. For the binary-tree ansatz, this metric admits an exact closed form, is diagonal on every chart parameter, and can be read off the tree by inspection, with no auxiliary circuits, no matrix inversion, and no regularization.

Concretely, writing $g_{\mu\nu} \;:=\; \mathrm{Re}\,\braket{\partial_\mu\psi\,|\,\partial_\nu\psi}$ for that pullback, the metric is block-diagonal between the amplitude parameters $\bm\theta$ and the leaf phases $\bm\omega$, with each block itself diagonal,
\begin{align}
g_{\theta_i\theta_j} &\;=\;
\begin{cases}
\displaystyle w_i \;=\; \prod_{m\in\mathrm{anc}(i)} f_m(\theta_m)^2, & i=j,\\[6pt]
0, & i\neq j,
\end{cases} \label{eq:g-theta}\\[6pt]
g_{\omega_j\omega_{j'}} &\;=\;
\begin{cases}
\displaystyle p_j \;=\; |c_j|^2 \;=\; \prod_{m\in\mathrm{anc}(j)} f_m(\theta_m)^2, & j=j',\\[6pt]
0, & j\neq j',
\end{cases} \label{eq:g-omega}\\[6pt]
g_{\theta_i\omega_j} &\;=\; 0,\label{eq:g-cross}
\end{align}
where each factor $f_m\in\{\sin,\cos\}$ is selected by the path from ancestor $m$ (cosine at every downward step, sine at every upward step), the amplitude weight $w_i$ stopping at node $i$ and the leaf probability $p_j$ continuing all the way to the leaf.

These diagonal entries are simply measurement probabilities read off the tree. The leaf probability $p_j=|c_j|^2$ is the Born probability of measuring basis state $j$. The amplitude weight $w_i$ is the probability of measuring any basis state in the subtree rooted at node $i$, i.e.\ the chance that an outcome lies among the descendant leaves of $\theta_i$, so $w_i=\sum_{j\,\preceq\, i}p_j$ and $w_{\text{root}}=1$. (Equivalently, $w_i$ is the running path product of $f_m(\theta_m)^2$ from the root down to node $i$, each factor $f_m(\theta_m)^2$ being the conditional probability of branching into the chosen child at node $m$; $p_j$ is the same product carried all the way to the leaf.) The metric diagonal is thus just the list of subtree and leaf probabilities of the tree, and the natural-gradient preconditioner $g^{-1}$ divides each parameter's Euclidean gradient by the probability mass it controls. The formal statement (Theorem~\ref{thm:diag-metric}), a short proof, and a worked example of reading these weights off the tree are collected in Appendix~\ref{app:diag-metric}. Up to the factor $\tfrac14$, $g$ is the quantum Fisher information of the prepared state~\cite{BraunsteinCaves1994, Meyer2021} (Appendix~\ref{app:geometry-primer}), so these same weights give the metrological sensitivity of each parameter in closed form.

Because $g$ is positive on every open coordinate patch (all $w_i,p_j>0$), its inverse is the Hadamard reciprocal,
\begin{equation}\label{eq:Minv}
g^{-1} \;=\; \mathrm{diag}(1/w_i) \,\oplus\, \mathrm{diag}(1/p_j),
\end{equation}
computable without auxiliary circuits and without a regularization parameter. The chart pullback $g$ differs from the projective Fubini--Study metric of Provost and Vall\'ee~\cite{Provost1980} only by a rank-one correction along the unobservable global-phase direction on the leaf-phase block; the two preconditioners yield the same physical update on the prepared state, and a self-contained derivation of this equivalence, together with the projective K\"ahler formulation of real-time evolution, is given in Appendix~\ref{app:ng-omega-equiv}. The same diagonal structure makes $\det g$ factorize along the tree, the basis for the exact Haar sampling of \sref{sec:fs-apps}.

\medskip\noindent\textbf{Invariance under fixed dressing.}\ Because $g_{\mu\nu}=\mathrm{Re}\,\braket{\partial_\mu\psi|\partial_\nu\psi}$ is built from the chart pullback alone, it is unchanged when the prepared state is post-composed with any fixed (parameter-independent) unitary $U$: the metric of $U\ket{\psi}$ equals that of $\ket{\psi}$, since $U$ unitary leaves $\braket{\partial_\mu\psi|\partial_\nu\psi}$ invariant. The closed-form diagonal metric is therefore a property of the bare tree core that survives verbatim inside any fixed dressing $\ket{\psi}\!\to\!U\ket{\psi}$, the composition $\ket{\Psi}=U(\bm\phi)\ket{\psi(\bm\theta,\bm\omega)}$ that organizes the dressed applications of \sref{sec:two-block}--\sref{sec:decouple}, so the geometry framework extends from the tree itself to the whole class $\{U\ket{\psi(\bm\theta,\bm\omega)}\}$ of fixed-dressed cores (Appendix~\ref{app:invariance}). One instance, a sparse-preparation circuit reaching the near-optimal $O(nk/\log n)$ scaling while retaining the diagonal metric, is developed in Appendix~\ref{app:sparse-routing}.

\subsection{Imaginary and real time from a single K\"ahler structure}\label{sec:kahler}

The same diagonal $g^{-1}$ drives both imaginary-time and real-time evolution, and the reason is geometry intrinsic to quantum mechanics. At every state $\ket\psi$, the Hilbert tangent space carries a real symmetric form $g(\xi,\eta) = \mathrm{Re}\braket{\xi|\eta}$, the Riemannian metric controlling distances. It also carries a compatible antisymmetric form $\Omega(\xi,\eta)=g(\xi,J\eta)$, with $J=i$ the complex structure inherited from multiplication by $i$ in Hilbert space; equivalently, $\Omega$ is the imaginary part of the Hilbert inner product up to the conventional sign. These structures make the Hilbert space a K\"ahler manifold~\cite{Provost1980, Ashtekar1999}. A short introductory primer, including the one-qubit Bloch-sphere picture of the two perpendicular flows, is given in Appendix~\ref{app:geometry-primer}. Both flows act on two ingredients built from the energy $E(\bm\lambda)=\braket{\psi(\bm\lambda)|H|\psi(\bm\lambda)}$, with $\bm\lambda=(\bm\theta,\bm\omega)$ the joint vector of tree angles $\theta_i$ and leaf phases $\omega_j$: the Euclidean energy covector $dE=\bigl(\partial_{\theta_i}E,\ \partial_{\omega_j}E\bigr)$, assembled from the parameter-shift derivatives of $E$ with respect to each tree parameter, and the diagonal pullback metric $g=\operatorname{diag}\bigl(w_i,\ p_j\bigr)$ of Theorem~\ref{thm:diag-metric}, whose entries are the amplitude weights $w_i$ and leaf probabilities $p_j$. On any variational manifold, imaginary-time evolution is the downhill Riemannian gradient flow of the energy with respect to $g$~\cite{McArdle2019, Yuan2019, Stokes2020, McMahon2026}, while real-time Schr\"odinger evolution is the Hamiltonian flow generated by the same energy; the factor $1/2$ that accompanies the latter follows from the convention $g(\xi,\eta)=\mathrm{Re}\braket{\xi|\eta}$ for the tangent vectors $\xi,\eta$ introduced above. The physical imaginary-time equation carries the same overall factor $1/(2\hbar)$; in the natural-gradient optimizer this constant is absorbed into the imaginary-time step size or learning rate. The algorithms below set $\hbar=1$. The complex structure supplies the $90^\circ$ rotation between the two tangent directions, which is exactly the Wick rotation $\tau\mapsto-it$.

For this ansatz the rotation has an unusually concrete form. The same shift-rule oracle returns the Euclidean derivative $dE$ for both flows. The diagonal metric first raises this covector by the Hadamard reciprocal $g^{-1}$, giving the metric gradient $\nabla_g E=g^{-1}dE$, which is already the entire imaginary-time / natural-gradient update,
\begin{equation}\label{eq:vite-flow}
\dot{\bm\lambda}_{\mathrm{imag}} \;=\; -\,g^{-1}\,dE \;=\; -\,\nabla_g E .
\end{equation}
Real-time evolution uses the same raised gradient, but rotated by the K\"ahler complex structure: the binary tree applies $J$ through one bottom-up and one top-down pass over the active split tree, supplying the $90^\circ$ turn that selects Hamiltonian rather than gradient motion. This gives
\begin{equation}\label{eq:tdvp-flow}
\dot{\bm\lambda}_{\mathrm{real}} \;=\; -\,\frac{1}{2\hbar}\,J\!\left(g^{-1}\,dE\right).
\end{equation}
The subscripts label the imaginary-time / natural-gradient and physical real-time conventions respectively; the tree form of $J$ is recorded explicitly in Appendix~\ref{app:tdvp-rhs}. This reduction to one diagonal metric raise plus linear-time tree passes is the central computational consequence of Theorem~\ref{thm:diag-metric}: no auxiliary metric circuits, dense matrix assembly or classical inversion are required. In particular, one explicit-Euler step of the real-time flow~\eqref{eq:tdvp-flow} is the exact real-time twin of one quantum-natural-gradient / imaginary-time step of~\eqref{eq:vite-flow}: same Euclidean derivative oracle and same diagonal preconditioner, followed by the K\"ahler rotation and interpreted with timestep $dt$ rather than learning rate.

\subsection{Phase gauge invariance and chart singularities}\label{sec:gauge-singular}

For any physical objective $\mathcal F(\ket\psi)$ (energy expectation, fidelity, an action functional), the leaf-phase gradient is automatically orthogonal to the global-phase direction,
\begin{equation}\label{eq:gauge-orth}
\sum_j\partial_{\omega_j}\mathcal F \;=\; 0,
\end{equation}
because $\sum_j\partial_{\omega_j}\ket\psi = i\ket\psi$ generates the global phase and $\mathcal F$ is invariant under that generator. The imaginary-time flow~\eqref{eq:vite-flow} therefore never excites the gauge mode; the real-time K\"ahler flow~\eqref{eq:tdvp-flow} is defined only up to an arbitrary uniform phase velocity, which is projected out at each step (Appendix~\ref{app:tdvp-rhs}). Theorem~\ref{thm:diag-metric} carries over verbatim to the pruned ansatz of \sref{sec:pruning}, with the indices restricted to the active internal nodes and the active-leaves set $S$; the active-leaf probabilities then satisfy $\sum_{j\in S}p_j = 1$. The weights $w_i$ and $p_j$ vanish only on the measure-zero singular configurations of the polyspherical chart, where an entire subtree of parameters becomes underdetermined by the metric. Such configurations arise generically only at initialisation, a Hartree--Fock start places every active angle at $0$ or $\pi/2$, and are resolved by an analytic singular-coordinate initialisation step (Appendix~\ref{app:singular-init}) that matches the chart move to the leading-order Euclidean update from a single sweep of $(\bm\theta,\bm\omega)$ shift-rule queries, with cost $8(|S|-1)$. Once the chart is open, $g^{-1}$ is used unregularised.

\subsection{Analytic gradients via parameter-shift rules}\label{sec:shift}

The Euclidean energy gradient that enters Eq.~\eqref{eq:vite-flow} is itself available in closed form on this ansatz. Each tree parameter $\theta_i$ drives a single uniformly controlled $R_y$ rotation whose generator has spectrum $\{-1,0,+1\}$, so $E(\theta_i)$ (with all other parameters fixed) is a trigonometric polynomial of degree at most $2$ and its derivative obeys the exact four-term shift rule
\begin{equation}\label{eq:shift-rule}
\partial_{\theta_i}E \;=\; \frac{2+\sqrt 2}{4}\bigl[E(\theta_i+\tfrac{\pi}{4})-E(\theta_i-\tfrac{\pi}{4})\bigr] \;-\; \frac{2-\sqrt 2}{4}\bigl[E(\theta_i+\tfrac{3\pi}{4})-E(\theta_i-\tfrac{3\pi}{4})\bigr].
\end{equation}
This is the $R=2$ instance of the equispaced parameter-shift family of Wierichs, Izaac and Killoran~\cite{Wierichs2022}, which generalises the original parameter-shift rule~\cite{Mitarai2018, SchuldGradient2019}; the same rule has appeared previously for fermionic single excitations~\cite{Anselmetti2021, Kottmann2021}. For the leaf phases, $\partial_{\omega_j}\ket\psi = i c_j \ket{j}$ has spectrum $\{-1,0\}$ and the standard two-term rule applies,
\begin{equation}\label{eq:shift-rule-omega}
\partial_{\omega_j}E \;=\; \tfrac{1}{2}\bigl[E(\omega_j+\tfrac{\pi}{2}) - E(\omega_j-\tfrac{\pi}{2})\bigr].
\end{equation}
Both rules are exact for any Hermitian observable, incur no finite-difference bias and no $\varepsilon$ to tune; a self-contained derivation is given in Appendix~\ref{app:shift-proof}. One full Euclidean energy gradient therefore costs $4|\bm\theta|+2|\bm\omega|$ circuit evaluations exactly, so one full quantum-natural-gradient step is computable from at most $4P+1$ energy evaluations, where $P$ is the number of variational parameters.

When the prepared state and the observable are real, the amplitude-only chart $\bm\omega=\bm 0$ that underlies ground-state search and imaginary-time evolution, each tree derivative collapses further to a two-evaluation rule~\cite{Mari2021, Kottmann2021},
\begin{equation}\label{eq:shift-rule-real}
\partial_{\theta_i}E \;=\; \tilde E(\theta_i+\tfrac{\pi}{4}) - \tilde E(\theta_i-\tfrac{\pi}{4}),
\end{equation}
where $\tilde E(\theta_i\pm\tfrac{\pi}{4})$ is the expectation of $H$ in the state the circuit prepares at the shifted angle, followed by a fixed gate $U_0^{(\pm)}=\exp(\pm i\tfrac{\pi}{4} P_i)$, with $P_i = \ket{b_i}\!\bra{b_i}_{\mathrm{ctrl}}\otimes I$ the control projector of Eq.~\eqref{eq:shift-generator}. This $U_0^{(\pm)}$ is diagonal in the computational basis and acts only on the ancestor qubits, and is realised by the same $R_z$ phase block the complex ansatz already compiles, truncated above the target qubit $q_i$; it therefore adds at most the linear-in-$k$ CNOT overhead of the pruned ansatz, with one such block per derivative. The full real-target gradient then costs $2|\bm\theta|$ evaluations and one natural-gradient step $2P+1$, half the generic figure; the reduction is specific to the real chart and does not apply to the complex-amplitude real-time runs, which retain the four-term rule. The two rules trade off against each other: the four-term rule keeps the circuit at its most compact but doubles the evaluation count, whereas the two-term rule halves the evaluations at the price of the slightly longer correction block. A derivation, together with the precise reuse of the phase-block compiler, is given in Appendix~\ref{app:shift-real}.

The same shift-rule gradient, followed by a single $\mathcal O(P)$ bottom-up/top-down sweep through the chart tree, also supplies the right-hand side of the real-time flow~\eqref{eq:tdvp-flow}, integrated with standard explicit time-steppers, first-order Euler and fourth-order Runge--Kutta (\sref{sec:hubbard}), around the singular-coordinate event handler of Appendix~\ref{app:singular-init}; the closed-form RHS expression and the global-phase gauge fix applied at every step are given in Appendix~\ref{app:tdvp-rhs}. This stands in sharp contrast to projection-based schemes such as pVQD~\cite{Barison2021, Gacon2024}, in which each timestep is itself a nonlinear least-squares optimisation over the ansatz parameters, and to the McArdle--Endo--Benjamin scheme~\cite{McArdle2019}, which assembles the metric from $\mathcal O(P^2)$ Hadamard tests on generic ans\"atze.

\subsection{Active, fixed and inactive tree parameters}\label{sec:pruning}\label{sec:classify}

In many applications the target state is known a priori to be supported on a small subset of computational-basis states, a particle-number, spin-projection or symmetry-irrep sector in chemistry; a feasible-solutions subspace in constrained optimization~\cite{Farhi2014, Hadfield2019, Lucas2014}; a fixed-Hamming-weight code in error correction~\cite{Gottesman1997}. These are called the active leaves~$S\subset\{0,\dots,2^n-1\}$, with $k=|S|$. The full tree of \sref{sec:circuit} can prepare any such state, but most of its $2^n - 1$ tree parameters are then redundant. This redundancy can be turned into an explicit gauge freedom and used to compile a much smaller circuit.

Mark every leaf in $S$ as occupied and propagate ``occupied'' bottom-up: an internal node is occupied if at least one of its descendant leaves is active. Each internal node then falls into exactly one of three classes (illustrated for the running example in Fig.~\ref{fig:classified-tree}). An internal node is active if both children are occupied, in which case the corresponding $\theta_i$ remains a free variational parameter. It is fixed if exactly one child is occupied: the corresponding $\theta_i$ is then forced to the value that sends all amplitude into the occupied subtree, $\theta_i = 0$ when the downward (cosine) branch is occupied and $\theta_i = \pi/2$ otherwise. It is inactive if neither child is occupied, in which case $\ket\psi$ does not depend on $\theta_i$.

The total number of active and fixed internal nodes equals the number of internal nodes of the binary subtree spanned by $S$ (the union of the root-to-leaf paths of the active leaves); a sharper structure-dependent form of the CNOT bound based on this count is given in Appendix~\ref{app:cnot-evidence}.

\subsection{Inactive-parameter gauge fixing and the pruning compiler}\label{sec:gauge}

The inactive parameters do not affect the prepared state: by Eq.~\eqref{eq:c2p-ry}, the amplitude on every leaf in $S$ is a product of factors along a path that visits only active and fixed nodes. The pruning compiler turns this gauge freedom into circuit savings in two stages (illustrated end-to-end for the running example in Fig.~\ref{fig:pruned-example}): first, the inactive parameters are chosen so as to zero as many circuit parameters (elementary $R_y$ angles) in the Walsh--Hadamard expansion of the cascade as possible (exact integer Gaussian elimination on the inactive columns of an affine system, performed level by level); second, the resulting empty sub-cascades cause pairs of bracketing CNOTs in the triangular cascade to cancel pairwise. A constant-bit removal pass is run beforehand to strip qubits whose value is fixed across all active leaves. The full affine system, the level-block matrices, the constant-bit pass, and the end-to-end algorithm (Algorithm~\ref{alg:compile}) are recorded in Appendix~\ref{app:compiler}; that the elimination stays in exact integer arithmetic (every pivot a power of two) is proved in Appendix~\ref{app:integrality}.

The resulting circuit is referred to as the pruned ansatz; the diagonal metric of Thm.~\ref{thm:diag-metric} restricts to a diagonal metric on the active parameters, with the same path-product weights $w_i$ and $p_j$ of Eqs.~\eqref{eq:g-theta}--\eqref{eq:g-omega}, where the angles at fixed nodes take their pinned values and the inactive ancestors are absent. The surviving free parameters number exactly the dimension of the subspace they span---$k-1$ amplitude angles, together with up to $k-1$ leaf phases for a complex target (one global phase fixed), matching $\dim_{\mathbb R}\mathbb{C}\mathbf P^{\,k-1}$. The pruned ansatz is therefore a minimal complete chart on $\mathrm{span}(S)$: it reaches every state of the target subspace and never leaves it, carrying no redundant variational directions. Reachability, strict confinement to the sector, and this parameter economy are exactly the guarantees a fixed-template hardware-efficient ansatz~\cite{Kandala2017} does not provide~\cite{Leone2023}, and here they follow from the active-leaf set alone rather than from a hand-tuned circuit template.

\begin{figure*}[!htbp]
\centering
\makebox[\textwidth][c]{

\providecommand{\PWactC}{yellow!70!black}
\providecommand{\PWactF}{yellow!55}
\providecommand{\PWfixC}{red!65!black}
\providecommand{\PWfixF}{red!40}
\providecommand{\PWinaC}{gray!55}
\providecommand{\PWinaF}{gray!15}
\providecommand{\PWdelC}{gray!50}
\providecommand{\PWdelF}{gray!8}
\providecommand{\PWcanC}{orange!75!red}
\providecommand{\PWcanF}{orange!22}

\providecommand{\PWactBox}{\push{\fcolorbox{\PWactC}{\PWactF}{\rule{0pt}{4.0pt}\hspace{6.0pt}}}\qw}
\providecommand{\PWfixBox}{\push{\fcolorbox{\PWfixC}{\PWfixF}{\rule{0pt}{4.0pt}\hspace{6.0pt}}}\qw}
\providecommand{\PWdelBox}{\push{\fcolorbox{\PWdelC}{\PWdelF}{\rule{0pt}{4.0pt}\hspace{6.0pt}}}\qw}
\providecommand{\PWactL}[1]{\push{\fcolorbox{\PWactC}{\PWactF}{$\rule{0pt}{1.9ex}\mathsf{R_y}(#1)$}}\qw}
\providecommand{\PWfixL}[1]{\push{\fcolorbox{\PWfixC}{\PWfixF}{$\rule{0pt}{1.9ex}\mathsf{R_y}(#1)$}}\qw}
\providecommand{\PWTARG}{\push{\raisebox{-0.12ex}{\tikz[baseline=-0.6ex]{%
  \draw[fill=white,line width=0.4pt] (0,0) circle (0.85ex);%
  \draw[line width=0.5pt] (-0.65ex,0) -- (0.65ex,0);%
  \draw[line width=0.5pt] (0,-0.65ex) -- (0,0.65ex);}}}\qw}
\providecommand{\PWcanTARG}{\push{\raisebox{-0.12ex}{\tikz[baseline=-0.6ex]{%
  \draw[\PWcanC,fill=\PWcanF,line width=0.6pt] (0,0) circle (0.95ex);%
  \draw[\PWcanC,line width=0.7pt] (-0.7ex,0) -- (0.7ex,0);%
  \draw[\PWcanC,line width=0.7pt] (0,-0.7ex) -- (0,0.7ex);%
  \draw[\PWcanC,line width=0.9pt] (-1.1ex,-1.1ex) -- (1.1ex,1.1ex);}}}\qw}
\colorlet{PWcanXY}{orange!75!red}
\providecommand{\PWcanCTRL}[1]{\ar @{-} |<{\tikz[baseline=-0.5ex]{%
  \useasboundingbox (-0.6ex,-0.6ex) rectangle (0.6ex,0.6ex);%
  \draw[\PWcanC,line width=0.6pt] (0,0) -- ++(0,-#1*2.0em);%
  \fill[\PWcanC] (0,0) circle (0.6ex);}} [0,-1]}

\begin{tikzpicture}[
  every node/.style={font=\small\sffamily},
  panel/.style={inner sep=0pt, align=center},
  plab/.style={font=\small\sffamily\bfseries, anchor=south west, inner sep=1pt, yshift=1pt},
]

\node[panel, anchor=north west, text width=\linewidth] (full) at (0,0) {%
  \resizebox{0.985\linewidth}{!}{\Qcircuit @C=0.34em @R=0.85em {
    \lstick{\mathsf{q_3}} & \PWactBox & \qw       & \ctrl{1}  & \qw       & \ctrl{1}  & \qw       & \qw       & \qw       & \ctrl{2}  & \qw       & \qw       & \qw       & \ctrl{2}  & \qw       & \qw       & \qw       & \qw       & \qw       & \qw       & \qw       & \ctrl{3}  & \qw       & \qw       & \qw       & \qw       & \qw       & \qw       & \qw       & \ctrl{3} \\
    \lstick{\mathsf{q_2}} & \qw       & \PWactBox & \PWTARG   & \PWactBox & \PWTARG   & \qw       & \ctrl{1}  & \qw       & \qw       & \qw       & \ctrl{1}  & \qw       & \qw       & \qw       & \qw       & \qw       & \PWcanCTRL{2}  & \qw       & \qw       & \qw       & \qw       & \qw       & \qw       & \qw       & \PWcanCTRL{2}  & \qw       & \qw       & \qw       & \qw \\
    \lstick{\mathsf{q_1}} & \qw       & \qw       & \qw       & \qw       & \qw       & \PWdelBox & \PWTARG   & \PWactBox & \PWTARG   & \PWfixBox & \PWTARG   & \PWactBox & \PWTARG   & \qw       & \PWcanCTRL{1}  & \qw       & \qw       & \qw       & \PWcanCTRL{1}  & \qw       & \qw       & \qw       & \ctrl{1}  & \qw       & \qw       & \qw       & \ctrl{1}  & \qw       & \qw \\
    \lstick{\mathsf{q_0}} & \qw       & \qw       & \qw       & \qw       & \qw       & \qw       & \qw       & \qw       & \qw       & \qw       & \qw       & \qw       & \qw       & \PWdelBox & \PWcanTARG & \PWdelBox & \PWcanTARG & \PWdelBox & \PWcanTARG & \PWdelBox & \PWTARG   & \PWdelBox & \PWTARG   & \PWdelBox & \PWcanTARG & \PWactBox & \PWTARG   & \PWactBox & \PWTARG
  }}%
};
\node[plab] at (full.north west) {(a)~Full unpruned cascade};

\node[panel, below=6mm of full.south west, anchor=north west, text width=\linewidth] (pruned) {%
  \resizebox{0.985\linewidth}{!}{\Qcircuit @C=0.5em @R=0.85em {
    \lstick{\mathsf{q_3}} & \PWactL{2\theta_0} & \qw                & \ctrl{1}  & \qw                & \ctrl{1}  & \qw       & \qw                              & \ctrl{2}  & \qw                       & \qw       & \qw                              & \ctrl{2}  & \ctrl{3}  & \qw       & \qw                & \qw       & \qw                & \ctrl{3} \\
    \lstick{\mathsf{q_2}} & \qw                & \PWactL{\theta_1}  & \PWTARG   & \PWactL{\theta_1}  & \PWTARG   & \ctrl{1}  & \qw                              & \qw       & \qw                       & \ctrl{1}  & \qw                              & \qw       & \qw       & \qw       & \qw                & \qw       & \qw                & \qw \\
    \lstick{\mathsf{q_1}} & \qw                & \qw                & \qw       & \qw                & \qw       & \PWTARG   & \PWactL{\theta_3+\tfrac{\pi}{2}} & \PWTARG   & \PWfixL{-\pi}             & \PWTARG   & \PWactL{\theta_3+\tfrac{\pi}{2}} & \PWTARG   & \qw       & \ctrl{1}  & \qw                & \ctrl{1}  & \qw                & \qw \\
    \lstick{\mathsf{q_0}} & \qw                & \qw                & \qw       & \qw                & \qw       & \qw       & \qw                              & \qw       & \qw                       & \qw       & \qw                              & \qw       & \PWTARG   & \PWTARG   & \PWactL{\theta_7} & \PWTARG   & \PWactL{\theta_7} & \PWTARG
  }}%
};
\node[plab] at (pruned.north west) {(b)~Pruned circuit};

\end{tikzpicture}}
\caption{End-to-end pruning walkthrough (Algorithm~\ref{alg:compile}) for the running example $n=4$, $S=\{0000,0001,0010,0110,1010\}$, with classified tree shown in Fig.~\ref{fig:classified-tree}. Classes:\, \textsc{Active}~$=\{\theta_0,\theta_1,\theta_3,\theta_7\}$ (4 free), \textsc{Fixed}~$=\{\theta_2{=}0,\theta_4{=}1,\theta_5{=}1,\theta_8{=}0,\theta_{10}{=}0,\theta_{12}{=}0\}$, \textsc{Inactive}~$=\{\theta_6,\theta_9,\theta_{11},\theta_{13},\theta_{14}\}$. \textbf{(a)} The full unpruned triangular circuit ($15\,R_y + 14$ CNOT); each elementary $R_y$ is coloured by the class of the tree parameter it serves -- yellow~$=$~kept (active), red~$=$~kept as a constant $\pm\pi/2$ or $\pm\pi$ (fixed), faded grey~$=$~removed by pruning -- and the two bracketing-empty CNOT pairs that cancel after $R_y$ removal are highlighted in orange. Block boundaries (left to right): position~0 the cascade on $q_3$, positions~1--4 on $q_2$, 5--12 on $q_1$, 13--28 on $q_0$. The two cascades that undergo non-trivial elimination are $q{=}0$ and $q{=}1$. \textbf{(b)} The pruned circuit: $8\,R_y + 10$ CNOT, retaining the four free angles $\theta_0,\theta_1,\theta_3,\theta_7$ together with one compiler-emitted constant angle $-\pi$. Overall reduction: $15\!\to\!8$ rotations, $14\!\to\!10$ CNOTs (cancelled pairs at positions $(14,18)$ and $(16,24)$ in panel~(a)), $15\!\to\!4$ free parameters.}
\label{fig:pruned-example}
\end{figure*}

\subsection{Linear CNOT scaling in the active-leaf count}\label{sec:cnot-bound}

The CNOT count of the pruned circuit grows linearly in the number $k$ of active leaves, and this scaling is provable for any qubit ordering: the run-count identity below, combined with a bound on the surviving rotations (at most one per occupied node per level), gives $N_{\mathrm{cx}}\le 2nA_s\le 2n\bigl((n-1)k+1\bigr)=O(n^2k)$, where $A_s$ is the internal-node count of the binary subtree spanned by $S$ (Appendix~\ref{app:cnot-structure}, Proposition~\ref{prop:linear}). Three regimes are visible (Appendix~\ref{app:cnot-evidence}): a clustered sparse regime, in which the $k$ leaves fill a common depth-$\lceil\log_2 k\rceil$ subtree so that the circuit collapses to a dense state preparation on that subtree alone, with the $n$-independent cost $N_{\mathrm{cx}}\approx k$ (exactly $k-2$ when $k$ is a power of two); a spread sparse regime whose per-leaf slope grows with $n$ but stays within the structural envelope $2n$; and a dense regime saturating at the M\"ott\"onen ceiling $2(2^n-1)$. The sparse and dense bounds are
\begin{equation}\label{eq:cnot-bound}
N_{\mathrm{cx}} \;\le\; 2nk - 2 \quad\text{(sparse, with reordering; conjectured)}, \qquad N_{\mathrm{cx}} \;\le\; 2(2^n - 1) \quad\text{(dense)};
\end{equation}
The tighter sparse bound $N_{\mathrm{cx}}\le 2nk-2$ follows from the structure-dependent form $N_{\mathrm{cx}}\le 2(A_s+k-2)$ via the same envelope $A_s\le(n-1)k+1$. It sharpens the unconditional $O(n^2k)$ estimate above by replacing the $2n$ prefactor of $A_s$ with $2$, and this sharpening, not the linear scaling itself, is what requires the reordering, which the compiler currently finds by an exponential-time branch-and-bound search over qubit orderings (Appendix~\ref{app:compiler}). Through the run-count identity $N_{\mathrm{cx}}=2\sum_{q,\beta}R_{q,\beta}$ (Appendix~\ref{app:cnot-structure}) it is equivalent to the single combinatorial inequality $\sum_{q,\beta}R_{q,\beta}\le E_s-1$, which is proved for $k=2$ and, under the inactive-node-maximising reordering, verified by exhaustive enumeration through $n=4$ and sampling through $n=8$ with no violations; without the reordering it can be violated (Appendix~\ref{app:cnot-evidence}). The linear scaling contrasts with the $\Theta(2^n)$ cost of dense M\"ott\"onen preparation~\cite{Mottonen2005, Sun2023}: when $k\ll 2^n$ the pruned ansatz is exponentially cheaper, and the savings follow automatically from the active-leaves set rather than from any user-supplied gate template. Recent constructions establish (near-)optimal circuit sizes for sparse and fixed-Hamming-weight state classes~\cite{Li2025, Luo2025}, against which the linear-in-$k$ count here can be benchmarked. Indeed, by the dressing invariance of Appendix~\ref{app:invariance}, loading the $k$ amplitudes on a compact $\lceil\log_2 k\rceil$-qubit tree and routing them to their targets with such a fixed log-efficient addressing reaches the near-optimal $O(nk/\log n)$ scaling with the closed-form metric intact (Appendix~\ref{app:sparse-routing}). The direct pruned tree nonetheless remains preferable at presently relevant sizes, where its smaller constant outweighs the asymptotic gain.

\section{Numerical results}\label{sec:numerics}

The construction is illustrated across two complementary regimes. The bare ansatz (state preparation and energy, optimised or propagated by the closed-form diagonal-metric updates) is first applied to symmetry-adapted molecular electronic structure (\sref{sec:vqe}), real-time molecular dynamics (\sref{sec:mol-dynamics}), and Fermi--Hubbard quench dynamics (\sref{sec:hubbard}), where it reaches reference accuracy at markedly lower two-qubit depth than the state-of-the-art baselines. As proved in Appendix~\ref{app:simulability}, these bare computations are classically simulable in the active-leaf count, so their accuracy can be verified exactly; they establish the ansatz as an efficient, exactly controllable primitive whose composition with a hard dressing is where a quantum--classical separation can arise. Composing that primitive with a non-classically-simulable dressing $U(\bm\phi)$ then yields several further applications: dressed molecular ground states (\sref{sec:two-block}), Fubini--Study sampling for process benchmarking and infinite-temperature transport (\sref{sec:fs-apps}), and exact effective Hamiltonians trained from energy measurements (\sref{sec:decouple}). The unified method, the pruned ansatz with diagonal-metric updates and shift-rule oracle, is referred to as MinimalMetric (MM) throughout. Energies use exact double-precision statevector simulation; MinimalMetric runs use exact natural-gradient descent with the closed-form diagonal preconditioner of Eqs.~\eqref{eq:g-theta}--\eqref{eq:Minv} and the four-term shift rule of Eq.~\eqref{eq:shift-rule}, while baselines use the optimiser stated per panel (gradient descent and COBYLA for UCCSD, L-BFGS for the variational-dynamics baselines).

\subsection{Molecular electronic structure problem}\label{sec:vqe}

The active-leaves picture provides a clean way to incorporate physical symmetries: many symmetries (particle number, total spin projection $S_z$, point-group irreps under the Jordan--Wigner encoding) act diagonally on the computational basis, so the states in a given symmetry sector are exactly the computational-basis states whose bitstrings satisfy a set of linear or combinatorial constraints~\cite{Picozzi2022, Picozzi2026periodic, Bravyi2017}. Choosing $S$ to be such a sector therefore yields a pruned ansatz that exactly preserves the symmetry, with no penalty terms required and no leakage outside the sector~\cite{Gard2020}. In this implementation the user supplies an atom, a basis set, a target charge, spin projection and (optionally) irrep, and the routine \texttt{sieve\_states\_by\_symmetry} enumerates the basis states in the chosen sector and feeds them as active leaves to the compiler. For LiH in the STO-3G basis (CAS$(2e,3o)$, $n=6$ qubits), the particle-and-spin sector $\{N=2,\,S_z=0\}$ contains $k=9$ basis states; the compiler returns a pruned ansatz with $8$ free $R_y$ parameters and $32$ CNOTs, against $136$--$280$ CNOTs for the corresponding UCCSD circuit (point-group-screened singles-and-doubles to full pool). Imposing the ground-state point-group irrep reduces the pruned circuit further to $24$ CNOTs, the configuration plotted in Fig.~\ref{fig:vqe-bondcurves}, and the symmetry-adapted encoding of Refs.~\cite{Picozzi2022, Picozzi2026sae} to $6$. For BeH$_2$, H$_2$O and NH$_3$ the same procedure produces $\{N,S_z\}$ sectors of $36, 100, 100$ basis states respectively, all with linear-in-$k$ CNOT counts; Appendix~\ref{app:vqe-extra} tabulates the full resource progression.

For each molecule, two ans\"atze are compared on the same Hamiltonian in the STO-3G basis: the pruned ansatz, with active leaves chosen as the symmetry sector with the correct particle number, $S_z$ and ground-state irrep, optimized by natural-gradient descent (with derivative-free Rotosolve~\cite{Ostaszewski2021, Nakanishi2020} as a cross-check) on the active parameters; and the standard UCCSD ansatz~\cite{Romero2019} on the same orbital basis, optimized by gradient descent and COBYLA. The reference is full configuration interaction (FCI) within the chosen active space; the pruned ansatz uses the shift-rule analytic gradients of \sref{sec:shift} throughout. The multireference $\mathrm{H}_4$ square geometry is omitted from the figures because the CCSD(T)~\cite{Raghavachari1989} reference used in the pipeline is unreliable there; for the molecules shown the FCI reference is well-defined.

Figure~\ref{fig:vqe-bondcurves} shows equilibrium-geometry convergence traces for five molecules, together with the equilibrium-geometry gate-count and cost-evaluation breakdown (panels (b)--(c)). The pruned ansatz reaches sub-mHa accuracy on every closed-shell molecule in the panel. Equilibrium accuracies are recorded in Appendix~\ref{app:vqe-extra}.

\begin{figure*}[!htbp]
\centering
\includegraphics[width=\textwidth]{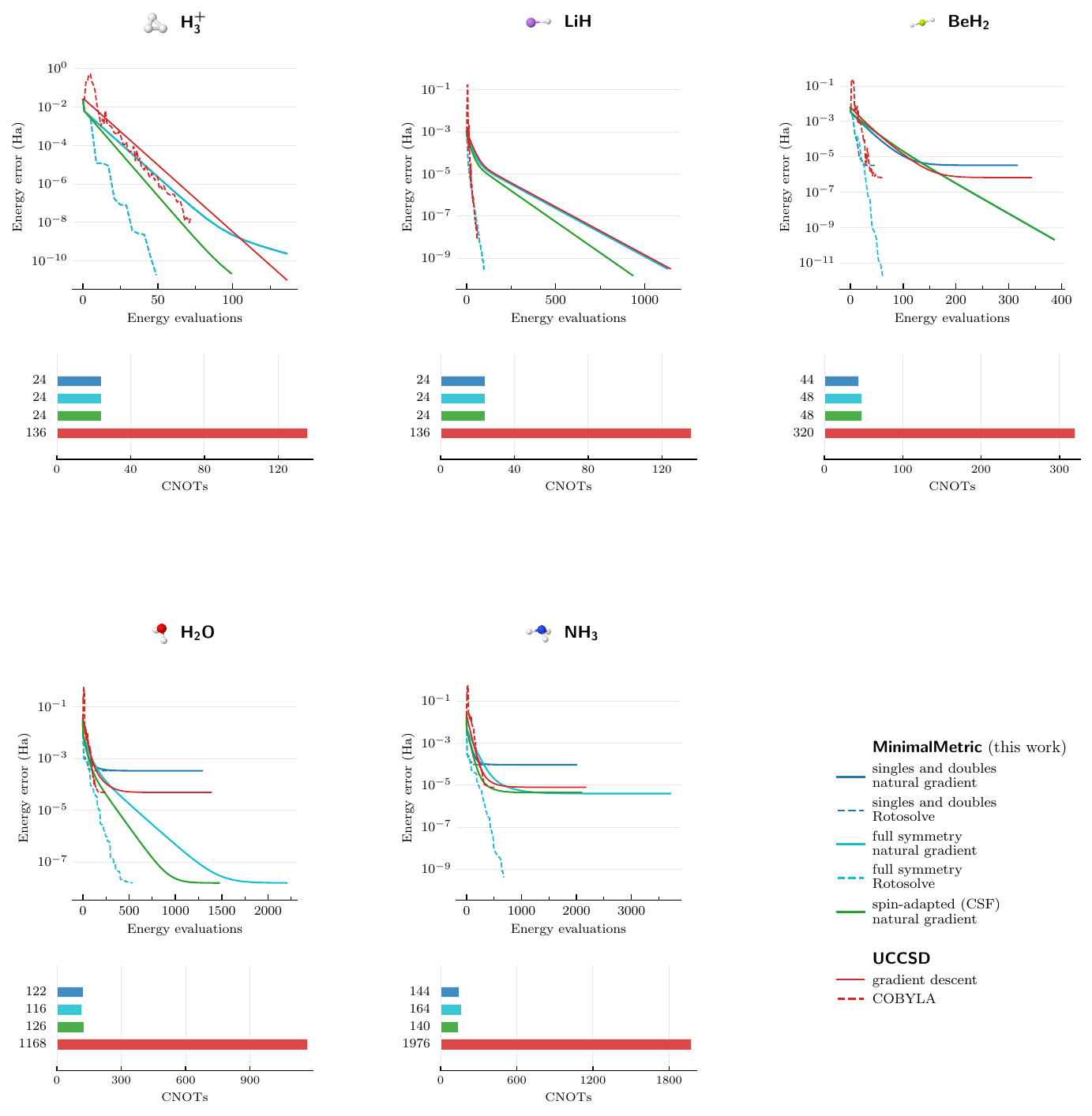}
\caption{Symmetry-adapted VQE benchmark in the STO-3G basis. Convergence at equilibrium geometry for five molecules: absolute energy error $|E-E_{\mathrm{ref}}|$ versus the cumulative number of energy evaluations on a log scale, with $E_{\mathrm{ref}}$ the CASCI (or FCI) reference. All curves share the same point-group-screened singles-and-doubles (S+D) parameter set unless noted. For MinimalMetric (MM, the symmetry-adapted pruned ansatz) two optimizers are compared on the identical ansatz, exact natural gradient (NG) and derivative-free Rotosolve, which reach the same S+D variational floor, confirming the residual error is the S+D truncation rather than an optimizer artefact; lifting MM to the full symmetry sector (MM full sym.) closes the gap to CASCI. UCCSD ($e^{T_1+T_2}$) is shown under gradient descent (GD) and COBYLA on the same S+D excitations. State-vector simulation throughout. The horizontal bars beneath each panel report the CNOT gate count of the corresponding ansatz (blue: MinimalMetric S+D; cyan: MM full sym.; green: spin-adapted; red: UCCSD).}
\label{fig:vqe-bondcurves}
\end{figure*}

\medskip\noindent\textbf{Exact spin adaptation.}\ The symmetries above act diagonally on the computational basis; total spin is the exception, its eigenfunctions (configuration state functions, CSFs) being entangled combinations of determinants, so the determinant-sector runs reach the singlet ground state only variationally, through transient spin contamination. Composing the tree with the fixed sector Schur transform $U_{\mathrm S}$, the determinant-to-CSF change of basis~\cite{Bacon2006, Burkat2025}, turns each active leaf into a CSF label. Restricting the leaves to the $S=0$ block then yields an exactly spin-adapted ansatz, with $\langle S^2\rangle=0$ at every parameter value and no penalty terms, on which the closed-form diagonal metric descends unchanged by Proposition~\ref{prop:invariance}. The singlet block is smaller by the Weyl dimension formula: absent spatial symmetry it halves the $\{N,S_z\}$ sector ($k=6/6/20/50/50$ against $9/9/36/100/100$), and the free-parameter count and metric dimension drop with it. This reduction is available for the low-symmetry molecules that dominate applications. Combined with the point-group screening of the other curves, it refines the full-symmetry ansatz of Fig.~\ref{fig:vqe-bondcurves} from $k=5/5/6/28/52$ to $4/4/6/18/28$ for H$_3^+$/LiH/BeH$_2$/H$_2$O/NH$_3$, reaching the same CASCI floor in up to $1.8\times$ fewer energy evaluations at comparable core two-qubit depth, all while holding $\langle S^2\rangle$ at machine zero. The determinant sector, by contrast, passes through spin contamination up to $\sim\!6\times10^{-6}$ (Fig.~\ref{fig:spin-contamination}). The fixed dressing, charged separately, is itself the dominant two-qubit cost of the dressed circuit on hardware (Appendix~\ref{app:schur-csf}). The transform is a fixed, parameter-free dressing, realised at the gate level as the quantum Paldus transform~\cite{Burkat2025}, and the same route reaches any spin sector. The construction, the gate-level circuit and its verification, and the resource table are given in Appendix~\ref{app:schur-csf}. That dressing cost can be avoided: Appendix~\ref{app:spin-tree} gives two gate-free constructions that reach the identical spin-adapted state with only the bare determinant-tree gates, paying for spin symmetry in polynomial classical work instead. One builds the tree directly in the CSF basis, keeping the exactly diagonal metric; the other ties a subset of the determinant tree's angles. The three routes prepare the same state and trade quantum against classical resources differently.\label{sec:schur-csf}

\subsection{Molecular real-time dynamics}\label{sec:mol-dynamics}
To complement the static VQE benchmark, dipole-kick dynamics were propagated for H$_3^+$, LiH, BeH$_2$, H$_2$O and NH$_3$ from their sector-restricted ground states under the same fixed molecular Hamiltonians, and MinimalMetric was compared against three hardware-efficient state-of-the-art baselines: projected-VQD on a hardware-efficient ansatz~\cite{Kandala2017} (pVQD-HEA), McLachlan variational real-time evolution~\cite{McLachlan1964} on the same ansatz (VarQRTE-HEA), and a second-order Suzuki--Trotter product formula~\cite{Suzuki1991, Childs2021, Childs2018}. The HEA depth is fixed per molecule at the knee of a dedicated repetition sweep. Each variational baseline is given its best initial state via a multistart L-BFGS state-preparation fit, whose residual sets a non-zero infidelity floor, and VarQRTE-HEA carries an additional ancilla for its Hadamard tests. Figure~\ref{fig:molecular-dynamics} (top panels) shows that MinimalMetric tracks exact evolution at numerical precision: the fourth-order Runge--Kutta integrator reaches the $\sim\!10^{-14}$ infidelity floor for all five systems, and first-order Euler tracks at $\sim\!10^{-11}$ (rising to $\sim\!10^{-9}$ for H$_3^+$ and NH$_3$). The variational hardware-efficient baselines plateau far above, even at this knee depth: pVQD-HEA at $\sim\!10^{-4}$--$10^{-5}$ and VarQRTE-HEA at $\sim\!10^{-5}$--$10^{-7}$ infidelity, several to ten orders of magnitude short of MinimalMetric. Their floors reflect an expressibility limit of the fixed hardware-efficient manifold, not a lack of depth: at the knee depth the manifold need neither contain the target state nor stay aligned with the exact trajectory, and enlarging it until it does drives the ansatz towards a $2$-design and the exponentially vanishing gradients of a barren plateau~\cite{Sim2019, Holmes2022, McClean2018, Leone2023, Larocca2025}. Trotter alone matches MinimalMetric's accuracy, but only by paying an enormous gate cost. The CNOT strips beneath each panel make this explicit: MinimalMetric reaches numerical precision at $50$--$418$ CNOTs, whereas the only accuracy-matching baseline (second-order Trotter) requires $1.0\times10^{5}$--$1.3\times10^{6}$ CNOTs, roughly three orders of magnitude more. The variational baselines reach their far coarser floors at two-qubit depth comparable to or greater than MinimalMetric's.

\begin{figure*}[!htbp]
\centering
\includegraphics[width=\textwidth]{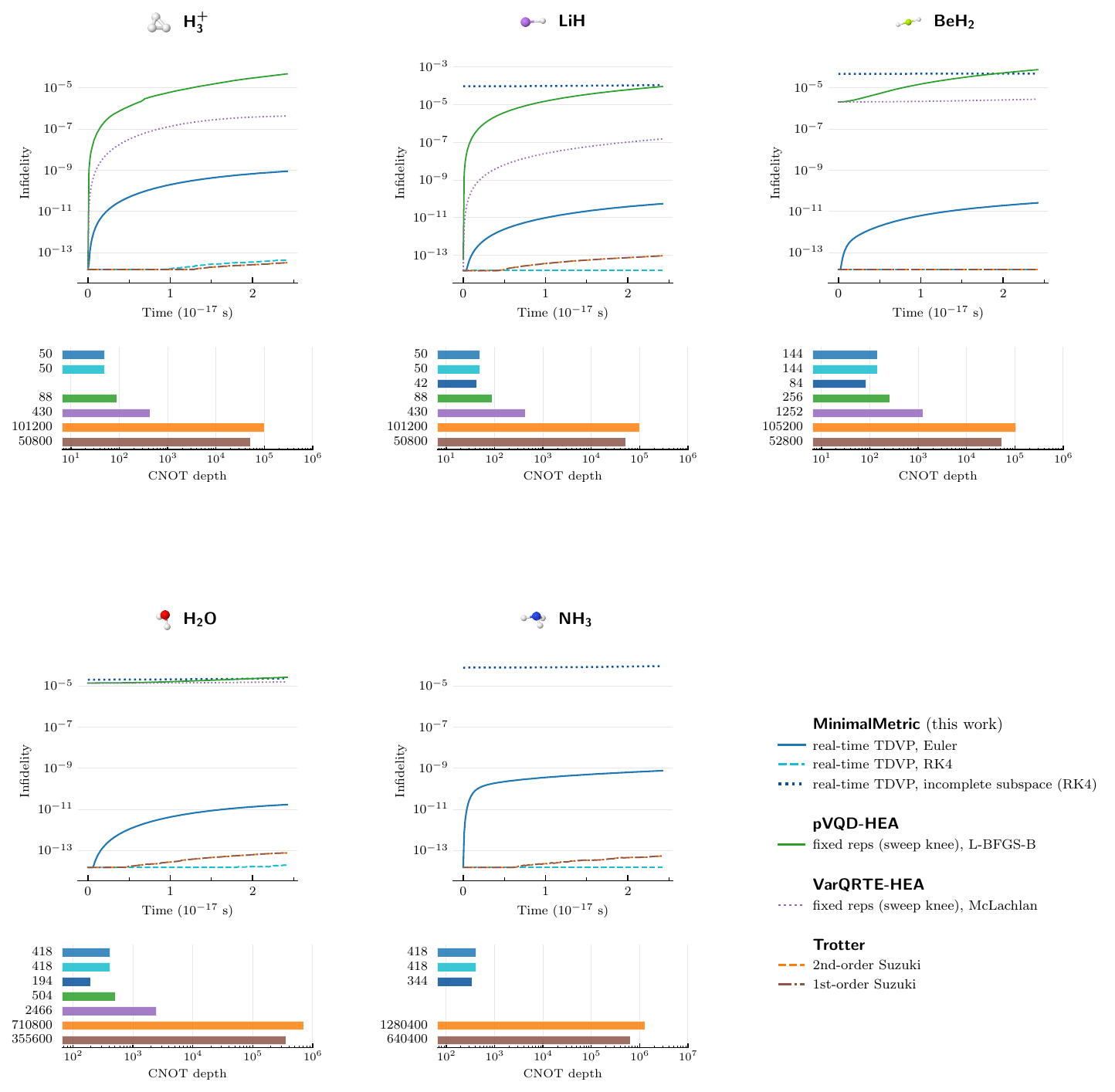}
\caption{Real-time molecular dipole-kick trajectories for H$_3^+$, LiH, BeH$_2$, H$_2$O and NH$_3$ at equilibrium geometry ($\kappa=0.05$, $T=1$; Euler and RK4 integrators). Each panel shows the trajectory infidelity against exact evolution on a log scale (top) over the coherent CNOT depth of each method (strip beneath, log scale). MinimalMetric (blue) tracks the exact trajectory to numerical precision; second-order Trotter (orange, dashed) is numerically exact; pVQD-HEA (green) and VarQRTE-HEA (purple), with per-molecule repetition count fixed at the knee of a dedicated sweep and the multistart L-BFGS state-preparation floor charged, plateau at $\sim\!10^{-4}$--$10^{-7}$ infidelity, orders of magnitude above MinimalMetric. MinimalMetric on a deliberately incomplete subspace (dark blue, dotted; four of the five molecules) tracks the trajectory to a controlled $\sim\!10^{-4}$--$10^{-5}$ floor at a fraction of the full-sector two-qubit depth (Appendix~\ref{app:restricted-subspace}). The coherent CNOT depth is the deepest circuit that must run without intermediate measurement: the per-step re-prepared circuit for the variational methods, and the full product formula for Trotter, which must be executed coherently end to end. MinimalMetric matches the numerically exact Trotter accuracy at $\sim10^{3}\times$ shallower depth.}
\label{fig:molecular-dynamics}
\end{figure*}

\medskip\noindent\textbf{Approximate dynamics on an incomplete subspace.} The runs above hand the compiler the full particle-number sector, so the reachable space contains the exact trajectory. Handing it instead a proper subset $S$ of that sector turns the same machinery, unchanged, into a controlled approximate propagator. The justification is structural: the pruned chart is complete on $\mathrm{span}(S)$, a coordinate sub-$\mathbb{C}\mathbf{P}^{k-1}$ and hence a K\"ahler submanifold, so the real-time flow~\eqref{eq:tdvp-flow} on the restricted chart is exactly Schr\"odinger evolution under the projected Hamiltonian $H_S=P_S H P_S$ from the normalised projection of the kicked state. The ansatz therefore commits no variational error beyond the subspace truncation itself, which is confirmed to $\leq 4\times10^{-15}$ infidelity (Appendix~\ref{app:restricted-subspace}). Choosing $S$ so that this truncation stays small, the reduced-subspace curves in Fig.~\ref{fig:molecular-dynamics} (dark blue, dotted) track the exact trajectory to a $\sim\!10^{-4}$--$10^{-5}$ floor, a reasonable approximation from an incomplete subspace, at a fraction of the two-qubit depth of the full-sector run ($42$--$344$ against $50$--$418$ CNOTs) and below that of the hardware-efficient baselines. Appendix~\ref{app:restricted-subspace} details the subspace construction, an a priori error certificate computable from the restricted trajectory alone, and the extension of the idea to a dressed, non-classically-simulable frame where the same controlled truncation would itself be genuinely quantum.

\subsection{Fermi--Hubbard quench dynamics}\label{sec:hubbard}

To validate the real-time variational machinery, the half-filled $1\!\times\!4$ and $2\!\times\!2$ Hubbard quenches at $U/t=4$ are evolved with the MinimalMetric integrator (first-order Euler and fourth-order Runge--Kutta around the shift-rule right-hand side, with the singular-coordinate initialiser) and compared against projected-VQD on the Hamiltonian variational ansatz (pVQD-HVA)~\cite{Wecker2015, Cade2020} and second-order Trotter baselines. Both quenches start from $U=0$ and are initialised in the symmetry-adapted noninteracting reference (for the open-shell $2\!\times\!2$ plaquette, the shell-pair singlet configuration), an exact eigenstate of the initial Hamiltonian that is supported on the pruned leaf set; every method therefore begins at machine-zero infidelity, ensuring a fair comparison. The pVQD-HVA baseline follows the literature protocol: its reference state is the same noninteracting ground state prepared by a Givens-rotation network~\cite{Kivlichan2018}, its layer count is fixed per case at the knee of a dedicated repetition sweep, and its residual multistart state-preparation fit is charged as a non-zero infidelity floor. Figure~\ref{fig:hubbard-dynamics} reports errors directly against exact diagonalisation: with the RK4 integrator the variational trajectory reaches final-time infidelity $3.3\times10^{-11}$ on the $1\!\times\!4$ chain and $8.0\times10^{-15}$ on the $2\!\times\!2$ plaquette at $T=1$ for the $N=500$ run ($dt=0.002$). The CNOT strip beneath shows the matched resource cost: MinimalMetric runs at a coherent depth of $142$ CNOTs per step, against the deeper per-step compute--uncompute circuits of pVQD-HVA and $5.6\times10^{4}$ for the single coherent second-order Trotter product formula, while reaching a final-time error many orders of magnitude lower than pVQD-HVA. The advantage extends to the total measurement budget: MinimalMetric's per-step shot noise averages down along the trajectory, whereas pVQD's per-step optimisation tolerance tightens as the timestep shrinks, so for any sufficiently fine integration MinimalMetric reaches a given accuracy at fewer total shots (Appendix~\ref{app:shot-cost}).

\begin{figure*}[!htbp]
\centering
\includegraphics[width=\textwidth]{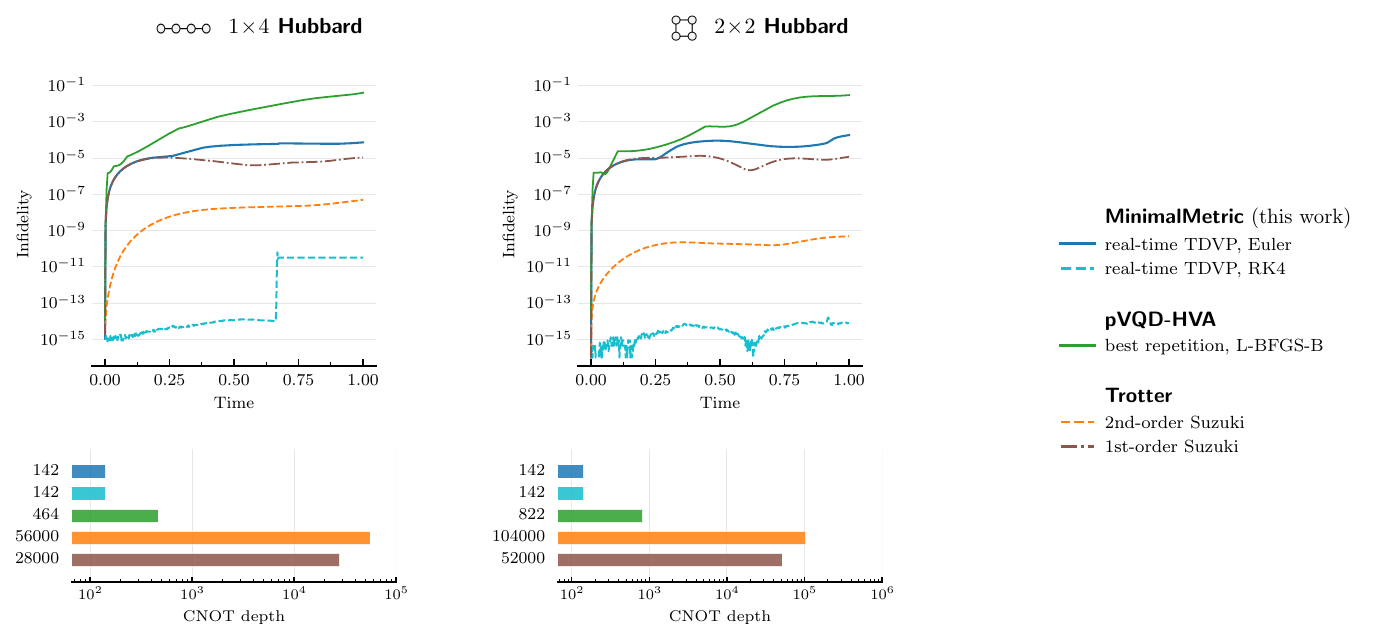}
\caption{Real-time Hubbard quench at $U/t=4$. Each panel shows the trajectory infidelity against exact diagonalisation (top, log scale) over the coherent CNOT depth of each method (strip beneath, log scale). Both quenches are initialised in the symmetry-adapted noninteracting reference, so all methods start at machine-zero infidelity. MinimalMetric (cyan, RK4) reaches final-time infidelity $3.3\times10^{-11}$ at $142$ CNOTs per step on the $1\!\times\!4$ quench ($T=1$, $N=500$, $dt=0.002$) and $8.0\times10^{-15}$ at $142$ CNOTs on the $2\!\times\!2$ quench, both outperforming pVQD-HVA (green, Givens-prepared reference, fixed repetition count at the knee: $4.0\times10^{-2}$ at $464$ CNOTs on $1\!\times\!4$, and $3.0\times10^{-2}$ at $822$ CNOTs on $2\!\times\!2$) and second-order Trotter ($5.6\times10^{4}$ CNOTs on $1\!\times\!4$ reaching $5.0\times10^{-8}$, $1.0\times10^{5}$ CNOTs on $2\!\times\!2$ reaching $4.9\times10^{-10}$, orange dashed).}
\label{fig:hubbard-dynamics}
\end{figure*}

\medskip\noindent\textbf{Composing with a non-classically-simulable dressing.}\ The benchmarks so far are classically simulable in the active-leaf count (Appendix~\ref{app:simulability}); they establish the bare ansatz as an efficient, exactly controllable primitive. A quantum--classical separation appears once the bare state is composed with a dressing that is classically hard to simulate,
\begin{equation}\label{eq:dressed-main}
\ket{\Psi(\bm\theta,\bm\omega,\bm\phi)}=U(\bm\phi)\,\ket{\psi(\bm\theta,\bm\omega)} ,
\end{equation}
for the relevant expectation values then involve the dressed Hamiltonian $\widetilde H(\bm\phi)=U^\dagger(\bm\phi)\,H\,U(\bm\phi)$, whose conjugation destroys the sparse, path-product structure that makes the bare loop efficient. The applications that follow exploit this composition, an efficient preparation-and-sampling primitive feeding a hard $U(\bm\phi)$; the classical hardness, when $U$ is deep enough to spread the efficiently-sampled support across the full Hilbert space, is that of simulating quantum dynamics (\sref{sec:discussion}).

\subsection{Two-block active-space dressing}\label{sec:two-block}

The pruned ansatz also serves as the exact state-preparation half of a two-block construction that targets correlation lying outside a chosen active space. The circuit splits into a state-preparation block $\ket{\psi}$ followed by a dressing block $U(\phi)$. The first is the symmetry-pruned binary tree of \sref{sec:vqe} with active leaves equal to a complete-active-space (CAS) sector, preparing the exact CAS ground state at fixed cost; the second is a unitary $U(\phi)=\exp\!\big(\sum_a \phi_a A_a\big)$ whose generators $A_a$ are the symmetry-adapted single and double excitations coupling the model space $P$ (the CAS) to its external complement $Q$, exactly the block-decoupling construction studied in \sref{sec:decouple}. The full pool of $P\!\to\!Q$ generators is screened by an ADAPT-style gradient criterion~\cite{Grimsley2019} down to the same number of parameters as a singles-and-doubles UCCSD ansatz in the identical symmetry-adapted encoding, so that every comparison below is at matched parameter count (the pool, screening and two-stage optimiser are detailed in Appendix~\ref{app:two-block}). This split also organises trainability. The exact state preparation is free of barren plateaus on the polynomially sized model space, and by the dressing invariance of Proposition~\ref{prop:invariance} it stays so for the core parameters at every dressing $\bm\phi$ (Appendix~\ref{app:no-bp}); any residual plateau is therefore confined to the dressing block, which the ADAPT screening keeps shallow and structured and which is warm-started at the exact CAS state. The split thus removes the plateau from the state-preparation step---where a monolithic hardware-efficient ansatz would incur it---and isolates the trainability-versus-hardness trade-off in the dressing alone.

Figure~\ref{fig:two-block} benchmarks this construction along the symmetric dissociation of a single water molecule, stretching both O--H bonds together from $R=1.5$ to $2.25\,\text{\AA}$ in steps of $0.25\,\text{\AA}$. This path interpolates from a predominantly dynamically-correlated regime near equilibrium to an increasingly static, strongly-correlated regime as the bonds stretch. All four geometries use one fixed full-space ansatz (ten qubits, with an inner $(6,5)$ complete-active-space state-preparation block) matched to a $48$-parameter UCCSD reference in the identical symmetry-adapted encoding. At every bond length the two-block ansatz (TwoBlock-MM, blue) reaches chemical accuracy, staying between $1.2\times10^{-4}$ and $2.7\times10^{-4}$\,Ha across the sweep, while using $2.4$--$3.3\times$ fewer CNOTs than UCCSD at every geometry. Undressed, the bare CAS tree (cyan) plateaus well above chemical accuracy near equilibrium ($9.5\times10^{-3}$\,Ha at $R=1.5\,\text{\AA}$) but improves monotonically as the bonds stretch and the static correlation it captures grows, narrowing to $2.4\times10^{-3}$\,Ha at $R=2.25\,\text{\AA}$; the dressing block therefore carries the residual dynamic correlation that dominates at the shorter bond lengths.

The robustness gap opens at the strongly-correlated stretched geometries. Both UCCSD baselines, the full symmetry-adapted singles-and-doubles pool (red solid) and its ADAPT-screened variant (light red dashed), track the two-block accuracy near equilibrium ($\sim\!2\times10^{-4}$\,Ha at $R\le1.75\,\text{\AA}$), but degrade past chemical accuracy as the bonds stretch, reaching $2.8$--$4.7\times10^{-3}$\,Ha at $R=2.0$ and $2.25\,\text{\AA}$, more than an order of magnitude above the two-block ansatz. The two-block ansatz, anchored by the exact CAS state preparation, descends monotonically and stays at chemical accuracy across the entire sweep, reaching it in all four geometries. This is an iso-accuracy, lower-depth result rather than an optimiser-efficiency claim: every method uses the same fixed gradient-descent schedule ($\sim\!5\times10^{4}$ energy evaluations), and the comparison is drawn at equal parameter count, where the two-block circuit is consistently shallower in two-qubit depth and more robust in the strongly-correlated regime.

\medskip\noindent\textbf{Exactly spin-adapted two-block.}\ The construction admits an exactly spin-pure variant (TwoBlock-MM-spin, green). Its state-preparation block is the gate-free spin-adapted CSF tree of \sref{sec:schur-csf}, prepared natively in the symmetry-adapted encoding (each intermediate-JW determinant is mapped to its encoded correspondent; Appendix~\ref{app:spin-tree}), so $\langle S^2\rangle=0$ holds identically and the CAS tree carries only $f_S-1=17$ free angles instead of the $27$ of the determinant sector; the dressing generators are correspondingly restricted to the spin-conserving (singlet, $S^2$-commuting) generalised single and double excitations, so the dressed state $U(\bm\phi)\ket\psi$ stays an exact singlet at every $(\bm\theta,\bm\phi)$. A spin-conserving double is a multi-Pauli operator whose two-qubit synthesis cost exceeds that of a spin-orbital excitation, so the spin-adapted pool is screened by a cost-aware ADAPT criterion (gradient per CNOT) to matched two-qubit cost rather than matched parameter count, and each rotation is realised by a single first-order Trotter step whose error at the small optimised angles ($|\bm\phi|_\infty\lesssim10^{-2}$) is negligible ($\Delta E\sim10^{-10}$\,Ha, residual $\langle S^2\rangle\sim10^{-9}$; Appendix~\ref{app:two-block}). At CNOT count equal to or below the determinant two-block ($1140/1048/776/1024$ against $1196/1108/896/1048$ across the sweep) the spin-adapted ansatz reaches chemical accuracy at every geometry with $\langle S^2\rangle\le5\times10^{-18}$, and is more accurate than the determinant two-block in the strongly-correlated regime ($8.4\times10^{-5}$ and $2.4\times10^{-5}$\,Ha at $R=2.0$ and $2.25\,\text{\AA}$, against $2.7\times10^{-4}$ and $2.4\times10^{-4}$), the determinant sector passing through the transient spin contamination that the spin-adapted ansatz never incurs.

\begin{figure*}[!htbp]
\centering
\includegraphics[width=\textwidth]{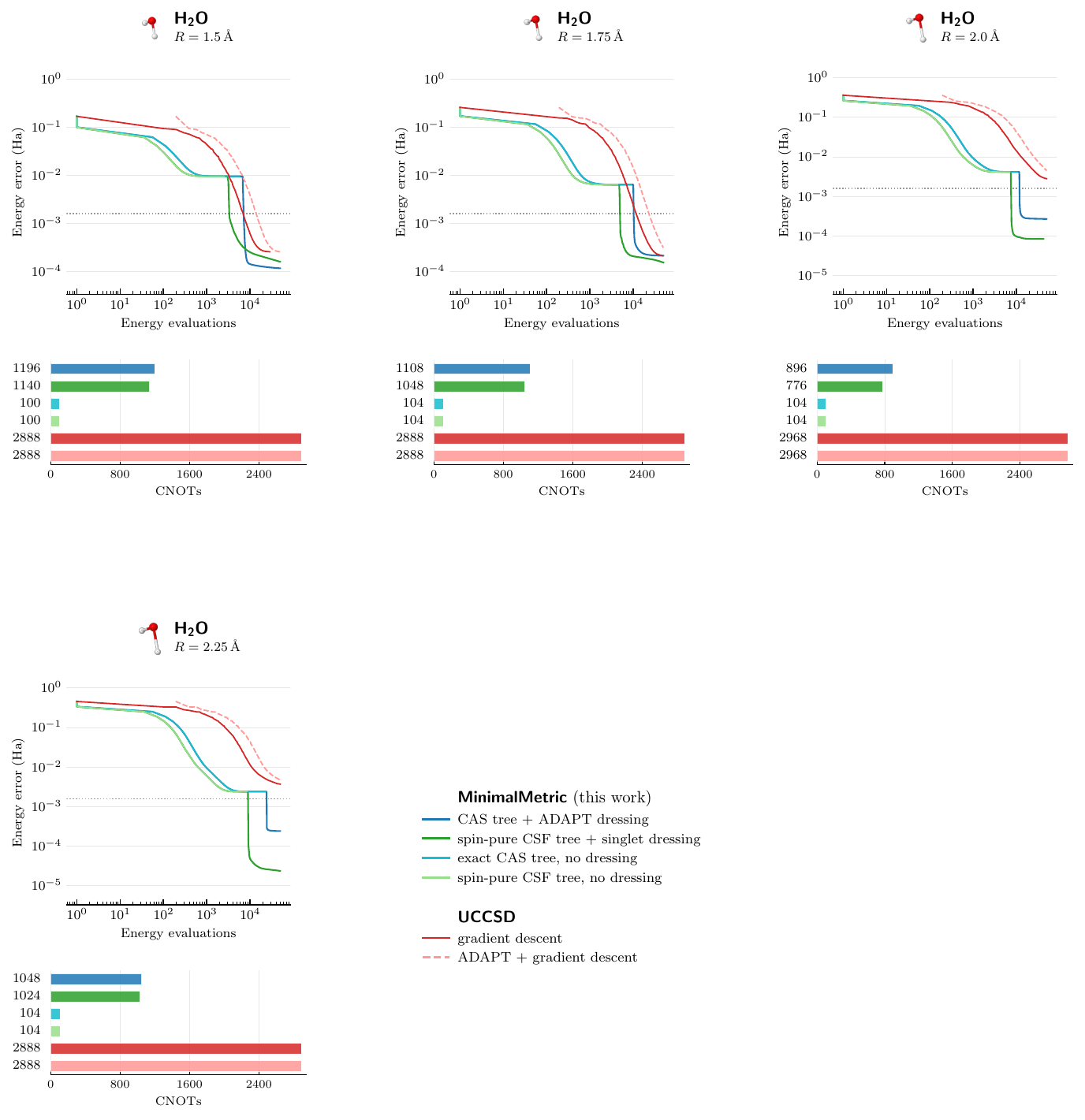}
\caption{Two-block active-space dressing along the symmetric stretch of water in the STO-3G symmetry-adapted encoding, at four O--H bond lengths $R=1.5$--$2.25\,\text{\AA}$ spanning the crossover from dynamic to static correlation. Each panel shows the absolute energy error $|E-E_{\mathrm{ref}}|$ (Ha) against the cumulative number of energy evaluations on log--log axes, with $E_{\mathrm{ref}}$ the exact (FCI) reference and the CNOT count of each ansatz in the strip beneath; the dotted grey line marks chemical accuracy ($1.6\times10^{-3}$\,Ha). All methods share one fixed full-space ansatz (ten qubits, inner $(6,5)$ CAS state-preparation block) at matched parameter count: the two-block ansatz with ADAPT-screened dressing (TwoBlock-MM, blue, this work); the undressed CAS tree (cyan); and UCCSD in the same encoding, both as the full singles-and-doubles pool (red solid) and its ADAPT-screened variant (light red dashed), each under gradient descent. The two-block ansatz reaches chemical accuracy at every bond length using $2.4$--$3.3\times$ fewer CNOTs than UCCSD and, unlike both gradient-descent UCCSD baselines, which degrade past chemical accuracy as the bonds stretch, stays at chemical accuracy across the dissociation. The exactly spin-adapted variant (TwoBlock-MM-spin, dark green) replaces the state preparation with the spin-pure CSF tree (its undressed counterpart, light green, converges to the same CAS floor as the determinant tree but from fewer free angles) and the dressing with spin-conserving (singlet) generalised excitations, selected by cost-aware ADAPT to a two-qubit cost matched to (or below) the determinant two-block; it holds $\langle S^2\rangle\le5\times10^{-18}$ throughout and reaches chemical accuracy at every geometry, more accurately than the determinant two-block in the stretched regime.}
\label{fig:two-block}
\end{figure*}

\subsection{Fubini--Study sampling: process fidelity and infinite-temperature transport}\label{sec:fs-apps}

The tree ansatz also samples ensembles. Drawing its parameters from the Fubini--Study volume measure $\sqrt{\det g}\,d\bm\theta\,d\bm\omega$ produces states distributed exactly as the unitary-invariant (sector-Haar) measure on a chosen symmetry sector. Throughout, the active-leaf count $k=|S|$ of \sref{sec:pruning} also serves as the sector's Hilbert-space dimension. The diagonal metric makes this equivalence both provable and cheap: $\det g$ factorises along the tree, reducing the draw to one independent $\mathrm{Beta}$-distributed angle per node~\cite{Zyczkowski1994, Mezzadri2007, Tilma2002}, so the sampler is rejection-free and exact at $\mathcal O(k)$ cost per sample even when the sector is exponentially large (Appendix~\ref{app:fs-design} gives the construction and a short proof). Being the Haar ensemble, it is an exact state design~\cite{Dankert2009}: its first moment is the maximally mixed sector state and its second moment supplies closed-form averages that computational-basis sampling, only a one-design, cannot reproduce. The sampling is classical; the quantum content is the process applied to each sample, complementing randomized-measurement schemes that instead fix the input state and randomise the readout~\cite{Huang2020, Elben2023}.

Figure~\ref{fig:haar-echo} uses this ensemble to benchmark a quantum process against its ideal target by the echo return probability $f(p)=|\braket{\psi(p)|U_{\mathrm{exact}}^\dagger U_\star|\psi(p)}|^2$ averaged over the sector (Appendix~\ref{app:fs-protocols}). Here $U_\star$ is a second-order Trotterisation of $e^{-iHT}$, so $\overline{1-f}$ is the state-dependent coherent Trotter error averaged over the half-filled Hubbard sectors ($k=36$ and $400$). The FS estimator tracks the exact sector average (known in closed form from the two-design identity, Eq.~\eqref{eq:Favg}) down to its Monte-Carlo floor across all Trotter depths, while the standard computational-basis-state average carries a persistent sector bias that widens as the sector grows from $k=36$ to $400$: precisely the off-diagonal coherent error a one-design misses. The single (ground) state echo lies far below every average: a near-eigenstate accumulates only a global phase that the echo cancels, so it registers almost no error and is not a representative probe of the averaged process.

The same ensemble estimates infinite-temperature transport, where the sector state is maximally mixed and the one-design identity turns FS sampling into a pure-state Monte-Carlo estimator of sector traces. The informative observables are dynamical correlators $C_{AB}(t)=\operatorname{Tr}[P\,A(t)\,B\,P]/k$ with $A(t)=U^\dagger(t)\,A\,U(t)$, measured by an ancilla Hadamard test (Appendix~\ref{app:fs-protocols}). Figure~\ref{fig:thermal-fs} reports the nearest-neighbour $S^zS^z$ correlator under the term-level Trotter dynamics on the $1\!\times\!4$ and $1\!\times\!6$ chains: FS sampling reproduces the exact Trotter target within error bars and, by quantum typicality~\cite{Sugiura2012, Goldstein2006, Popescu2006}, at lower single-sample variance than basis-state averaging. As $t$ grows the Heisenberg operator $U^\dagger(t)AU(t)$ spreads and a classical representation becomes expensive~\cite{Prosen2007, Schollwock2011}. On the sectors shown the exact sampler functions as an unbiased, low-variance estimator validated directly against exact diagonalisation; the clean quantum-advantage form of the construction, the same typicality composed with a support-leaving dressing on a polynomially-sized model space, is developed in \sref{sec:decouple} and \sref{sec:discussion}.

\begin{figure*}[!htbp]
\centering
\includegraphics[width=\textwidth]{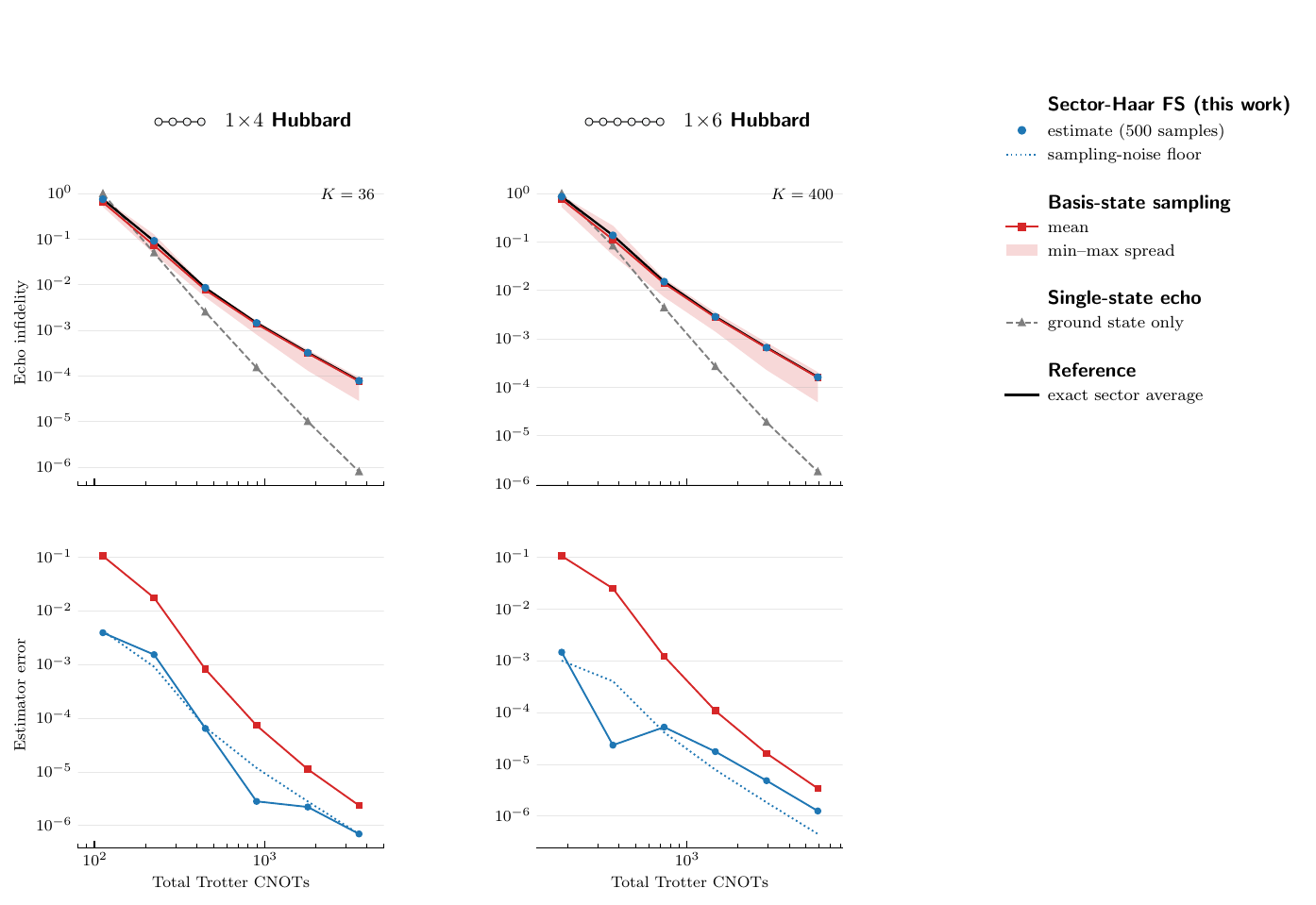}
\caption{Sector-averaged echo infidelity $1-F$ at evolution time $T=1$, estimated by sector-Haar (FS) sampling. \textbf{Top row}: for the half-filled $1\!\times\!4$ ($k=36$, left) and $1\!\times\!6$ ($k=400$, centre) Hubbard sectors, the FS estimator (blue, $500$ samples) tracks the exact sector average (black) across Trotter depths, while basis-state sampling (red) shows a persistent sector bias. The single-state (ground-state) echo (grey) sits far below every average: the echo cancels its global phase, so a near-eigenstate sees almost no coherent error and is the least representative probe of the averaged process error. \textbf{Bottom row}: the corresponding estimator error $|\,\overline{1-f}-(1-f_{\mathrm{avg}})\,|$ against the exact sector average, for each case. The FS error rides its own sampling-noise floor (dotted), i.e.\ it is unbiased down to Monte-Carlo noise, whereas basis-state averaging plateaus at a finite sector bias; the gap widens as the sector grows from $k=36$ to $k=400$. The single-state echo is not an estimator of the sector average and is omitted from the error panels.}
\label{fig:haar-echo}
\end{figure*}

\begin{figure*}[!htbp]
\centering
\includegraphics[width=\textwidth]{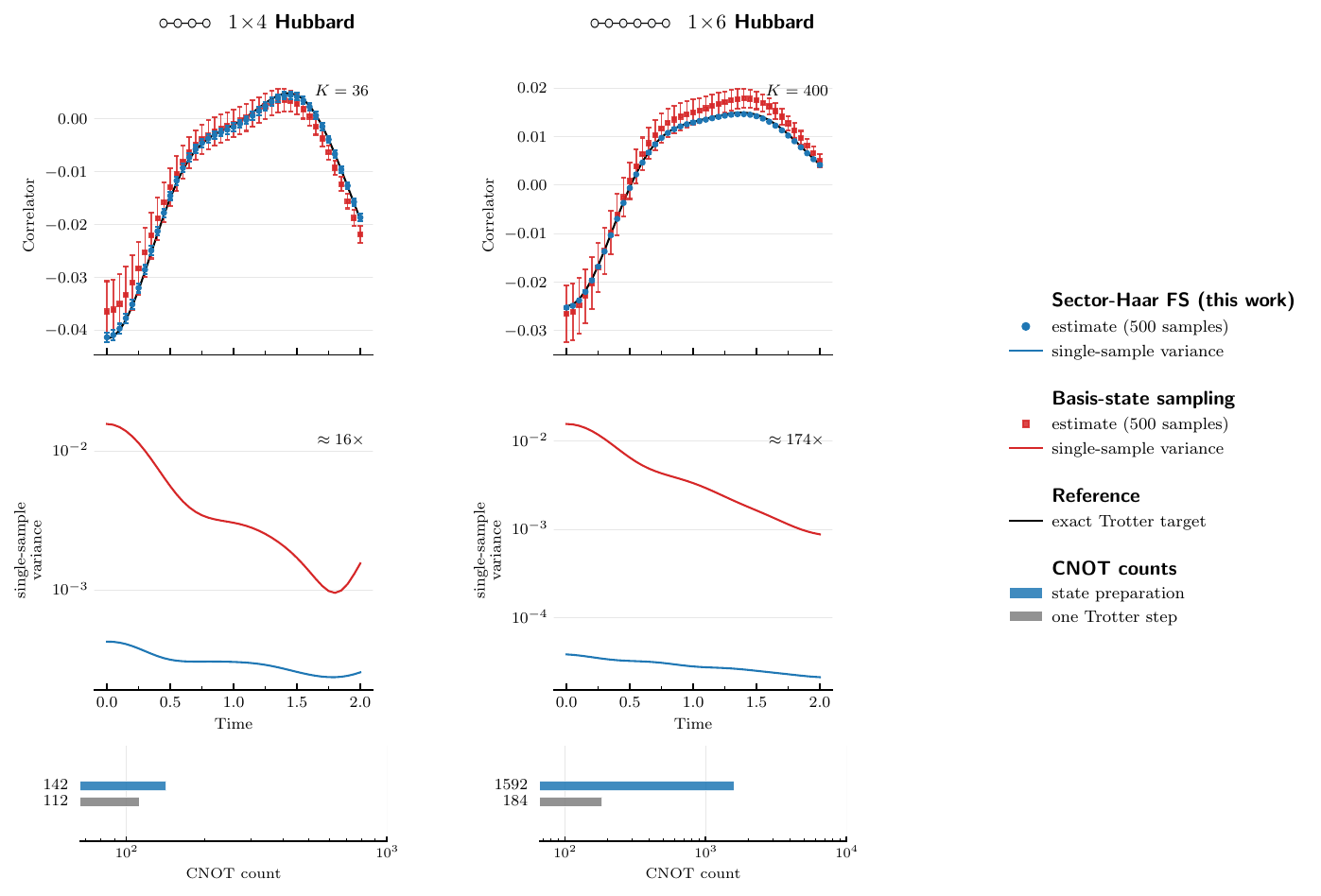}
\caption{Infinite-temperature nearest-neighbour $S^zS^z$ correlator $C_{AB}(t)$ from sector-Haar (FS) sampling on the half-filled $1\!\times\!4$ ($k=36$) and $1\!\times\!6$ ($k=400$) Hubbard sectors at $U/t=4$, under the term-level second-order Trotter dynamics. \textbf{Top of each cell}: the correlator estimated by FS sampling (blue) against the exact sector trace (black), reproduced within error bars across the evolution. \textbf{Middle}: the single-sample variance of the FS estimator versus the computational-basis-state estimator (log scale); both are unbiased, but the FS estimator concentrates by typicality and so attains a given accuracy at fewer samples. \textbf{Strip}: the coherent CNOT depth of the Trotter evolution. The per-sample quantity is read out by the ancilla Hadamard-test protocol of Appendix~\ref{app:fs-protocols}.}
\label{fig:thermal-fs}
\end{figure*}

\subsection{Exact effective Hamiltonians from energy measurements}\label{sec:decouple}

As a final application, the same machinery is used to decouple a model space exactly rather than to prepare a ground state, a quantum-native Schrieffer--Wolff transformation~\cite{Bravyi2011}, and to ask when the residual coupling can be driven not merely down but to machine zero. Lithium hydride is taken in a symmetry-adapted active space, with a compact reference model space $P$ singled out inside the full active-space sector and $Q$ its complement within that sector. The block is dressed by the same unitary $U(\phi)=\exp(\sum_a \phi_a A_a)$ used for the molecular ansatz of \sref{sec:vqe}, with $A_a$ the symmetry-adapted generalised single and double excitations connecting $P$ to $Q$.

Exact decoupling is reachable whenever the dressing pool is tangent-complete: because each active space spans at most three spatial orbitals, no model-space determinant differs from a complement determinant by more than a double excitation, so the screened singles-and-doubles pool exhausts the off-block tangent space and its dimension equals $|P|\,|Q|$ exactly ($n_\phi=6$ for CAS$(2e,3o)$ and $n_\phi=20$ for CAS$(4e,3o)$). A dressing built from a tangent-complete pool can rotate the block-off-diagonal coupling $QU^\dagger HUP$ all the way to zero, and the energy-only FS-sampled gradient surrogate $C(\phi)=\mathbb{E}_{\chi\sim\mathrm{FS}(P)}\big[\sum_a|\partial_{\phi_a}\langle\chi|U^\dagger(\phi)HU(\phi)|\chi\rangle|^2\big]$, whose global minimum is attained precisely when $QU^\dagger HUP=0$ under the tangent-completeness condition, finds that rotation (Appendix~\ref{app:sw-surrogate}). The optimiser never sees $Q$, the target spectrum, or the leakage $\|QU^\dagger HUP\|_F$; the leakage and the effective-model energy $E_0^{\mathrm{eff}}=\lambda_{\min}(PU^\dagger HUP)$ are recorded classically as held-out diagnostics only.

Figure~\ref{fig:decouple} sweeps the Li--H bond length over $R\in[1.0,3.2]\,\text{\AA}$, spanning the equilibrium and the stretched, strongly multireference regime, for the two active spaces (CAS$(2e,3o)$, $k=5$; CAS$(4e,3o)$, $k=9$). The trained dressing (blue) drives the leakage from its undressed value of order $10^{-3}$ and the effective-model energy error $|E_0^{\mathrm{eff}}-E_0|$ from order $10^{-4}$--$10^{-2}$ down in lockstep toward the machine-precision floor. For CAS$(4e,3o)$ both reach $\sim\!10^{-16}$ (leakage) and $\sim\!10^{-15}$ (energy) at every bond length; for CAS$(2e,3o)$ they do so at all but the most stretched geometry, where the energy-only fit leaves a residual $\sim\!10^{-11}$ leakage and $\sim\!5\times10^{-9}$ energy error, still far below chemical accuracy. The dressed block thus reproduces the exact sector ground state. The reconstruction succeeds at a fixed, manageable parameter count ($n_\phi=6$ and $20$) and a fixed circuit cost, state preparation plus one dressing unitary, with no access to the complement space that the effective Hamiltonian integrates out. This is the composition identified in \sref{sec:discussion} as a possible source of quantum advantage. The objective $C(\bm\phi)$ is a Fubini--Study typicality average of the dressed Hamiltonian over the polynomially-sized model space $P$, on which the exact sampler is efficient. Evaluating it classically requires propagating $U(\bm\phi)$, which carries $P$ into the exponentially larger complement $Q$ and is intractable once the dressing is deep, whereas the device pays only the $\mathcal O(k)$ preparation, one dressing, and a handful of typicality samples.

\begin{figure*}[!htbp]
\centering
\includegraphics[width=\textwidth]{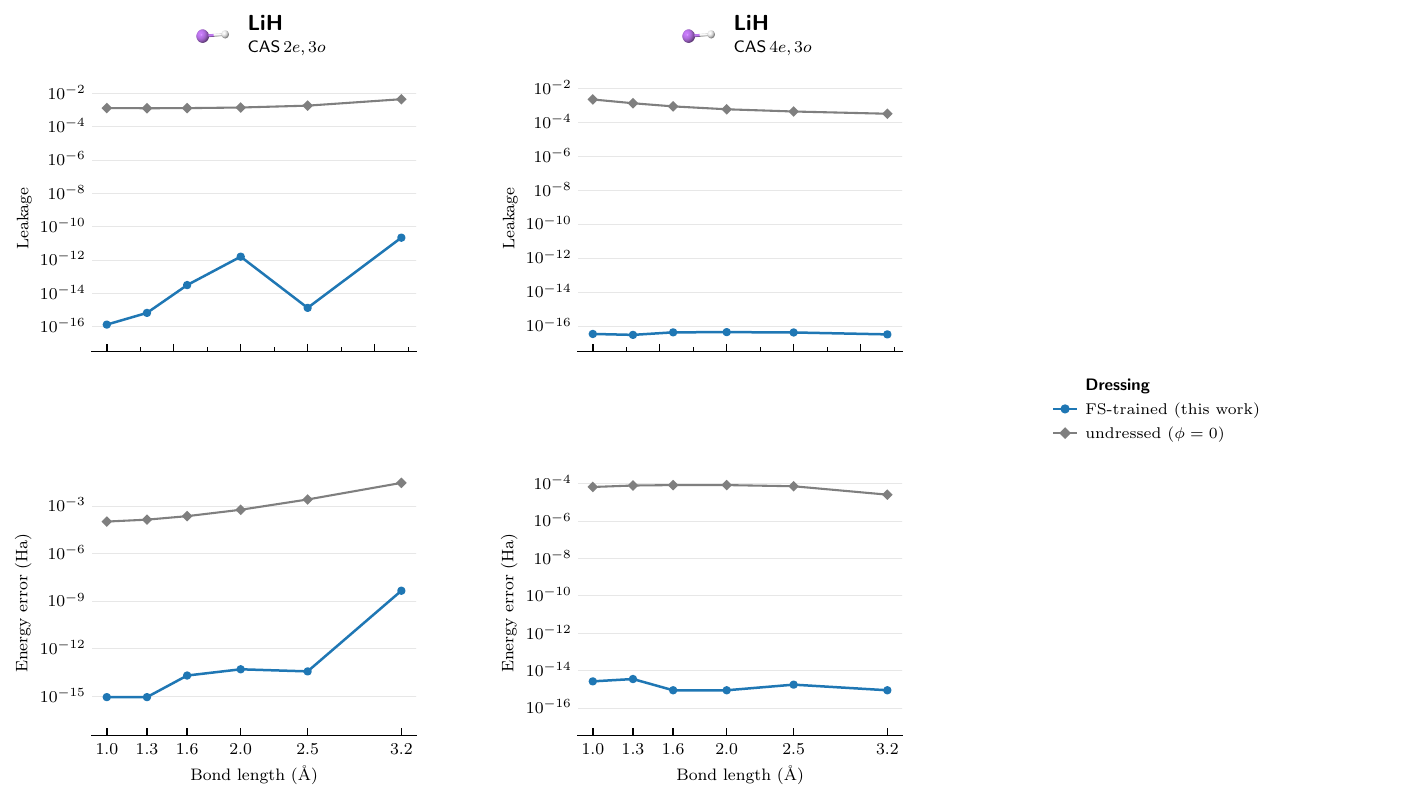}
\caption{Exact block decoupling of lithium hydride trained from energy measurements alone, across Li--H bond lengths $R\in[1.0,3.2]\,\text{\AA}$ for two active spaces (CAS$(2e,3o)$, $k=5$; CAS$(4e,3o)$, $k=9$). \textbf{Top}: block-off-diagonal leakage $\|QU^\dagger HUP\|_F^2/k_P$ for the undressed block ($\phi=0$, grey) and the FS-surrogate-trained dressing (blue, this work). \textbf{Bottom}: the effective-model energy error $|E_0^{\mathrm{eff}}-E_0|$, with $E_0^{\mathrm{eff}}=\lambda_{\min}(PU^\dagger HUP)$ the dressed $P$-block ground-state energy and $E_0$ the exact sector ground state. Because the screened singles-and-doubles pool spans the full off-block tangent space ($n_\phi=|P|\,|Q|$), the energy-only objective drives both quantities to machine precision, uniformly for CAS$(4e,3o)$, and at all but the most stretched CAS$(2e,3o)$ geometry, where a $\sim\!10^{-9}$ residual remains.}
\label{fig:decouple}
\end{figure*}

\FloatBarrier

\section{Discussion}\label{sec:discussion}

We have presented a hardware-efficient variational ansatz on $n$ qubits, built from a binary tree of uniformly controlled rotations, and shown that it has two complementary properties: an exactly diagonal Fubini--Study metric, which makes quantum natural gradient, variational imaginary- and real-time evolution, and exact unitary-invariant (Haar) sampling implementable at no extra circuit cost; and a clean gauge interpretation of its redundant parameters when the target state is sparse, which yields a pruning compiler with linear-in-$k$ CNOT count. The exact metric, the pruning compiler, and the symmetry-adapted active-leaves construction can be combined freely: a symmetry-pruned chemistry ansatz with exact natural-gradient updates is, for instance, only a few lines away.

Conceptually, Theorem~\ref{thm:diag-metric} reduces the entire family of metric-aware variational quantum algorithms on this ansatz to a single computational primitive: parameter-shift evaluation of the Euclidean energy derivatives, followed by Hadamard division by the diagonal entries $(w_i,p_j)$. Quantum natural gradient and variational imaginary-time evolution use the resulting metric gradient directly as a descent direction; real-time evolution applies the projective K\"ahler rotation $-J/(2\hbar)$ to the same metric gradient, implemented by one bottom-up and one top-down pass through the active binary tree (Appendix~\ref{app:tdvp-rhs}). This eliminates the $\mathcal O(P^2)$ auxiliary metric circuits and the $\mathcal O(P^3)$ classical inversion required by generic-ansatz formulations~\cite{McArdle2019, Yuan2019}, and replaces them with $\mathcal O(P)$ classical work per step on top of the same Euclidean parameter-shift gradient. There is no separate measurement path for real-time evolution: the same gradient oracle, the same diagonal preconditioner, and the tree representation of $J$ cover both flows. The benchmarks provide consistent empirical confirmation: integrating the real-time MinimalMetric equations around the shift-rule RHS reproduces the exact unitary $\exp(-iHt)$ to integrator order. On the $1\!\times\!4$ half-filled Hubbard quench ($N=500$, $dt=0.002$, $T=1$) fourth-order Runge--Kutta reaches final-time infidelity $3.3\times 10^{-11}$ against $7.4\times 10^{-5}$ for first-order Euler; both schemes integrate the identical diagonal-metric RHS, so the gap is precisely the smaller per-step truncation error of the higher-order integrator (\sref{sec:hubbard}). On the dipole-kick molecular dynamics Runge--Kutta likewise tracks exact propagation at the $\sim\!10^{-14}$ infidelity floor across all five molecules (\sref{sec:mol-dynamics}).

The construction is, however, classically simulable in this bare form: the prepared state, its energy, its diagonal metric and its Fubini--Study sample ensembles all admit a closed-form description of cost polynomial in the active-leaf count (Appendix~\ref{app:simulability}). This is not specific to the tree. Appendix~\ref{app:simulability} promotes the observation to a general exact dequantization theorem: for any variational family whose states are supported on an enumerable polynomial-size set of basis states in some efficiently describable frame, with classically computable amplitudes and derivatives, and any frame-sparse observable, the entire metric-aware loop, energy, quantum geometric tensor, natural-gradient, imaginary- and real-time evolution, measurement sampling, sector-Haar averages, and the leakage certificate of Eq.~\eqref{eq:df-bound-full}, is classically computable exactly in polynomial time (Theorem~\ref{thm:dequantization}). It remains so under Clifford, matchgate, constant-depth-local or polynomial-bond-dimension dressings (Proposition~\ref{prop:closure}). The boundary therefore constrains not just this construction but a structural class that includes sparse-state-preparation and fixed-Hamming-weight ans\"atze in the literature~\cite{Li2025, Luo2025, Monbroussou2025}, and it connects the exact diagonal geometry here to the broader principle that the structure enabling trainability also enables classical surrogates~\cite{VandenNest2011, Goh2023, Cerezo2025sim}. Its quantum value is accordingly that of an exactly controllable preparation-and-sampling primitive, whose composition with a dressing that is classically hard to simulate is where genuinely quantum content can enter. The relevant expectation values then involve the dressed Hamiltonian $\widetilde H=U^\dagger H U$, whose conjugation destroys the sparse structure that makes the bare loop efficient and which by Corollary~\ref{cor:advantage} must simultaneously escape stabiliser, Gaussian, light-cone and tensor-network contraction; shallow, Clifford or Gaussian dressings do not suffice. This composition was exercised across four tasks built on one ingredient, Fubini--Study preparation or sampling of the bare ansatz feeding a hard $U(\bm\phi)$: dressed molecular ground states reaching chemical accuracy at $2.4$--$3.3\times$ fewer CNOTs than UCCSD and remaining at chemical accuracy where gradient-descent UCCSD degrades past it (\sref{sec:two-block}); sector-Haar process benchmarking, where exact-design sampling removes the sector bias that computational-basis averaging carries; infinite-temperature transport estimated from the same ensemble (\sref{sec:fs-apps}); and exact block decoupling driven to machine precision from energy measurements alone (\sref{sec:decouple}).

All four tasks compose the efficiently-prepared reference $\ket\psi$ with a dressing $U(\bm\phi)$, and they point to quantum advantage on two complementary footings. One is the simulation of quantum dynamics~\cite{Feynman1982, Lloyd1996, Daley2022}, generically hard to reproduce classically. Fubini--Study typicality over a polynomially-sized model space $P$ (where the exact sampler is efficient and an exact two-design), dressed by a deep $U$ that carries $P$ into an exponentially larger space, yields the model-space average $\mathrm{Tr}_P[U^\dagger O U]/k$: the device prepares the $\mathcal O(k)$ sample, applies $U$, and reads one ancilla, whereas a classical evaluation must propagate the deep $U$ over the full space and is intractable once $U$ scrambles $P$. The advantage belongs to the composed circuit as a whole; the structured reference is the indispensable input the dressing acts on, even though estimating native observables of $\ket\psi$ alone is classical. The decoupling objective (\sref{sec:decouple}) instantiates this composition; in its purest form $U$ is fixed real-time evolution, resting the claim on dynamics simulation alone. The incomplete-subspace propagator of \sref{sec:mol-dynamics} composes with the same dressing (Appendix~\ref{app:restricted-subspace}), so that approximate time evolution on a small model space acquires the same footing. The strongly-correlated ground states of \sref{sec:two-block} are a second promising target, on a practical footing. The overlap problem of Ref.~\cite{Lee2023} is precisely a shortage of efficiently-preparable references that keep appreciable overlap with the true ground state once single-reference coupled cluster~\cite{Raghavachari1989} degrades, and the construction is built to supply one: a symmetry-adapted, multireference reference, prepared and controlled exactly, feeding the same hard dressing in the regime where the density-matrix renormalization group~\cite{ChanSharma2011} and selected configuration interaction~\cite{Holmes2016} grow expensive. A provable asymptotic separation there remains open, but it turns on exactly the reference overlap this construction targets. Because the bare computations are classically checkable, each demonstration validates the underlying mechanism exactly.

Several directions remain open. The linear-in-$k$ scaling itself is unconditional (Proposition~\ref{prop:linear} gives $N_{\mathrm{cx}}\le 2nA_s=O(n^2k)$ for any qubit ordering), so what remains is the tight constant: by the run-count identity of Appendix~\ref{app:cnot-structure} (Lemma~\ref{lem:runcount}), the structure-dependent bound Eq.~\eqref{eq:cnot-bound-As}, proved here for $k=2$ and verified numerically under the inactive-node-maximising reordering, is equivalent to the single combinatorial inequality $\sum_{q,\beta}R_{q,\beta}\le E_s-1$ (Eq.~\eqref{eq:runcount-conjecture}). Proving it for general $k$, together with whether a polynomial-time qubit ordering attains it (the greedy inactive-node heuristic does not for $n\ge 9$), would close the constant-factor gap and match the tight worst-case bounds for dense state preparation. A companion question concerns connectivity: the pruned cascade maps onto a linear qubit chain at constant-factor two-qubit overhead through the nearest-neighbour synthesis of Ref.~\cite{Bergholm2005} (Appendix~\ref{app:cnot-structure}), but the achievable constant on a given sector is set by the qubit ordering that minimises the summed control depth of the surviving Gray-code parities, not by the inactive-node count the compiler currently optimises; a parity-aware nearest-neighbour synthesis matched to that ordering would turn the constant-factor claim into a provable bound, and would separate the structured supports, where a constant overhead is expected, from arbitrary spread-out supports, where it need not hold.

Beyond the compiler, the most promising directions develop the composed circuit, where the construction's quantum value resides, broaden its reach, and demonstrate it on hardware. The dequantization boundary of Appendix~\ref{app:simulability} makes the composed circuit's hardness a precise design target: Corollary~\ref{cor:advantage} requires the dressing $U(\bm\phi)$ to escape stabiliser, Gaussian, light-cone and tensor-network contraction at once, and a natural next step is to characterise the minimal such dressing and to benchmark it against the Heisenberg-picture Pauli-propagation and sparse-Pauli-dynamics simulators~\cite{Begusic2024, Angrisani2025} that address the same deep-scrambling dynamics and are not subsumed by those four classes; identifying a regime that provably evades them would establish the dynamics-based advantage of \sref{sec:discussion} on an explicit instance. The construction also reaches beyond electronic structure: since the active-leaves picture preserves any diagonally-defined sector with no penalty terms, it yields a feasibility-preserving, exactly-diagonal-metric natural-gradient ansatz for constrained combinatorial optimisation~\cite{Farhi2014, Hadfield2019} and for fixed-Hamming-weight code states~\cite{Gottesman1997}. Finally, the bare circuits here are shallow (tens to a few hundred CNOTs) and thus accessible to present-day hardware, so a real-device implementation is a natural next step: it would run on hardware the two mechanisms specific to this ansatz, the metric-free natural-gradient step, whose preconditioner is read directly from measured subtree and leaf probabilities rather than from auxiliary Hadamard tests, and the exact Fubini--Study sampler, whose two-design advantage over basis-state averaging is directly measurable at small sector dimension, and it extends naturally to the dressed circuits themselves.

\section*{Data availability}
The data underlying the figures and tables in this paper were produced with the \textsc{QuantumSymmetry} package (see Code availability); the figure-generating scripts and intermediate result files are available from the corresponding author on reasonable request.

\section*{Code availability}
The methods introduced here, the binary-tree ansatz and pruning compiler, the closed-form diagonal-metric routines, the parameter-shift gradients and singular-coordinate initialiser, the real- and imaginary-time integrators, and the sector-Haar sampler, are implemented in the open-source Python package \textsc{QuantumSymmetry}~\cite{picozzi2023quantumsymmetry, Picozzi2022} (\url{https://github.com/dariopicozzi/quantumsymmetry}, installable via \texttt{pip install quantumsymmetry}), which builds on Qiskit, OpenFermion and PySCF. Tutorials demonstrating how to implement the ideas of this paper are available in \textsc{QuantumSymmetry}. Scripts that generate the specific figures of this paper are available from the corresponding author on reasonable request.

\bibliography{references}

\section*{Acknowledgements}
D.P.\ acknowledges support from the Engineering and Physical Sciences Research Council (EPSRC) under grant numbers EP/T517793/1, EP/S021582/1, and EP/W524335/1.

\paragraph*{Author contributions.} D.P.\ conceived the project, performed all analytical derivations and numerical experiments, and wrote the manuscript.

\paragraph*{Competing interests.} The author declares no competing interests.

\newcommand{\COMBINEDBUILD}{}
\providecommand{\SupplementaryStandaloneEnd}{}

\ifdefined\COMBINEDBUILD
  \clearpage
\fi

\definecolor{headcolor}{RGB}{224,235,245}
\newcommand{\molicon}[1]{\raisebox{-0.33\height}{\includegraphics[height=4mm]{figures/molecules/#1.png}}}

\appendix

\section{Geometry primer: state manifolds, tangent spaces, and real versus imaginary time}\label{app:geometry-primer}

This appendix fixes the geometric language used in the main text. The point of the discussion is simple: the energy derivative $dE$ is first a slope, not a motion; the metric $g$ turns that slope into a gradient vector; and the complex structure $J$ rotates the gradient into the Hamiltonian wavefunction motion. With the convention $g=\mathrm{Re}\braket{\cdot|\cdot}$ used throughout the paper, physical real time also carries the factor $1/(2\hbar)$. The whole construction is the commuting triangle
\begin{equation}
\begin{tikzcd}[row sep=large, column sep=large]
dE \in T^{*} \arrow[r, "g^{-1}"] & \nabla_g E \in T \arrow[d, "{-J/(2\hbar)}"] \\
& X_E \in T \arrow[ul, "{2\hbar\,\Omega}"]
\end{tikzcd}
\label{eq:primer-chain}
\end{equation}
where $T^{*}\equiv T_{[\psi]}^{*}\mathbb{CP}^{N-1}$ and $T\equiv T_{[\psi]}\mathbb{CP}^{N-1}$ are the cotangent and tangent spaces at $[\psi]$, $g$ is the matrix of the metric components in the chosen tangent-coordinate basis, and $\Omega=gJ$ is the K\"ahler (symplectic) form. The metric inverse $g^{-1}$ raises the energy covector to the gradient vector, the complex structure $J$ rotates it into the real-time velocity, and $\Omega$ closes the loop, returning $X_E$ to $dE$ via $(2\hbar\,\Omega)X_E=dE$; the triangle commutes because the composite map around it is the identity, $(2\hbar\,\Omega)\bigl(-J/(2\hbar)\bigr)g^{-1}=-gJ^2g^{-1}=I$. The numerical algorithms set $\hbar=1$; restoring units divides every real-time velocity by $\hbar$. The same statement is the geometric reason that imaginary-time and real-time variational equations in the paper use the same metric inverse.

\paragraph*{Rays and tangent spaces.}
Physical pure states are rays, not Hilbert-space vectors with a fixed global phase:
\begin{equation}
[\psi] \;=\; \{e^{i\alpha}\ket\psi:\alpha\in\mathbb R\}.
\end{equation}
For a Hilbert space of dimension $N$, the pure-state manifold is the projective space $\mathbb{CP}^{N-1}$. A tangent vector at $[\psi]$ is an infinitesimal physical change of the ray. If one works with normalized Hilbert-space representatives, the component proportional to $i\ket\psi$ is only an infinitesimal global phase, so it is a gauge direction rather than a physical tangent direction. A standard horizontal representative removes this phase component:
\begin{equation}
\ket{\delta\psi_\perp}
\;=\;
(I-\ket\psi\!\bra\psi)\ket{\delta\psi},
\qquad
\braket{\psi|\delta\psi_\perp}=0,
\end{equation}
up to the convention used to fix the global phase. For one qubit this projective manifold is $\mathbb{CP}^1$, which is represented by the Bloch sphere in Fig.~\ref{fig:bloch-geometry-primer}.

\paragraph*{The energy slope is a covector.}
The energy expectation
\begin{equation}
E([\psi])=\braket{\psi|H|\psi}
\end{equation}
defines a scalar function on projective state space. Its differential $dE$ is a covector: it eats a tangent vector $\xi$ and returns the first-order energy change in that direction, $dE[\xi]$. By itself $dE$ is not a direction of motion on the manifold. This distinction matters because a slope and a vector live in dual spaces: $dE\in T_{[\psi]}^*\mathbb{CP}^{N-1}$, whereas wavefunction velocities live in $T_{[\psi]}\mathbb{CP}^{N-1}$.

\paragraph*{The metric turns slopes into gradients.}
The Riemannian metric $g$ identifies tangent vectors with covectors. In a local tangent basis $\{e_\mu\}$, write
\begin{equation}
g_{\mu\nu}=g(e_\mu,e_\nu),
\qquad
(dE)_\mu=dE[e_\mu].
\end{equation}
The gradient vector $\nabla_gE$ is defined by
\begin{equation}
g(\nabla_gE,\xi)=dE[\xi]
\quad\text{for every tangent vector }\xi.
\end{equation}
In coordinates this is the linear system
\begin{equation}
g\,\nabla_gE=dE,
\qquad
\nabla_gE=g^{-1}dE.
\label{eq:primer-gradient}
\end{equation}
Imaginary-time evolution is the corresponding downhill flow, $-\nabla_gE$: it uses the metric to convert the energy slope into the steepest descent direction. For physical imaginary time generated by $H$, the right-hand side is multiplied by the same overall factor $1/(2\hbar)$ that appears below; in optimization language this constant is normally absorbed into the step size.

\paragraph*{The complex structure rotates gradients into wavefunction dynamics.}
Quantum state space also has a complex structure $J$, inherited from multiplication by $i$ in Hilbert space after quotienting the unphysical phase. It satisfies $J^2=-I$ and acts as a $90^\circ$ rotation in each complex tangent plane. With the sign and normalization convention used here, the Hamiltonian vector field generated by $E$ is
\begin{equation}
X_E
\;=\;
-\,\frac{1}{2\hbar}J\nabla_gE
\;=\;
-\,\frac{1}{2\hbar}J\,g^{-1}dE.
\label{eq:primer-hamiltonian-vector}
\end{equation}
Thus the metric $g$ and the complex structure $J$ do two different jobs: $g$ converts a slope into a gradient vector, while $J$ rotates that gradient vector into the real-time wavefunction motion. If the local metric is the identity, $g=I$, the formula reduces to $X_E=-(2\hbar)^{-1}JdE$ after identifying the covector with a vector by the Euclidean metric.

\paragraph*{The K\"ahler form records the metric-plus-rotation relation.}
The compatible symplectic, or K\"ahler, form already introduced in the triangle~\eqref{eq:primer-chain} is the antisymmetric bilinear form obtained from the metric and complex structure, $\Omega(\xi,\eta)=g(\xi,J\eta)$, with matrix $\Omega=gJ$ in local coordinates. The Hamiltonian equation can therefore be written equivalently as $\bigl(2\hbar\,\Omega\bigr)X_E=dE$, which is the same content as Eq.~\eqref{eq:primer-hamiltonian-vector} and the closing arrow of the triangle: the metric and complex structure together turn the energy slope into the real-time tangent vector.

Equivalently, $g$ and $\Omega$ are the symmetric and antisymmetric parts of the quantum geometric tensor $Q_{\mu\nu}=\braket{\partial_\mu\psi|\partial_\nu\psi}-\braket{\partial_\mu\psi|\psi}\braket{\psi|\partial_\nu\psi}$~\cite{Kolodrubetz2017}: its real part $\operatorname{Re}Q_{\mu\nu}=g_{\mu\nu}$ is the Fubini--Study metric, while its imaginary part is the Berry curvature $F_{\mu\nu}=-2\operatorname{Im}Q_{\mu\nu}$~\cite{Berry1984, Shapere1989, Xiao2010}, so that $\Omega=\tfrac12 F$. The identification $\Omega=gJ\propto F$ is a property of the K\"ahler structure of projective Hilbert space and holds because the chart is complete and its pruned restriction maps onto a coordinate sub-$\mathbb{CP}$, which is itself a K\"ahler submanifold; for a generic variational ansatz, whose manifold need not be K\"ahler, $gJ$ and the pulled-back Berry curvature differ. The metric part of $Q$ is one quarter of the quantum Fisher information, $g=\tfrac14\,\mathcal F_Q$. All of these identifications hold on the projective state manifold, i.e.\ up to the global-phase gauge fixed in Appendix~\ref{app:ng-omega-equiv}.

\begin{figure*}[t]
\centering
\resizebox{0.95\textwidth}{!}{\input{bloch_geometry_primer.tex}}
\caption{One-qubit picture of the projective state manifold. \textbf{Right:} Bloch-sphere view, with darker shading indicating lower energy. At the state $\ket{\psi}$, the metric turns the energy differential $dE$ into the gradient vector $\nabla_gE$ in the tangent plane $T_{\ket{\psi}}\mathbb{CP}^1$. Imaginary-time evolution follows the downhill direction $-\nabla_gE$ (red), while real-time evolution follows the Hamiltonian vector field $X_E=-(2\hbar)^{-1}J\nabla_gE$ (blue). The complex structure $J$ rotates tangent vectors by $90^\circ$, so the two directions are perpendicular in the Fubini--Study metric. \textbf{Left:} pullback to the ansatz coordinates represented by $\varphi(\theta,\omega)=\cos\theta\,\ket{0}+e^{i\omega}\sin\theta\,\ket{1}$, with $\theta\in[0,\pi/2]$ and periodic $\omega\in[0,2\pi)$. Geometric angles in the ansatz manifold are measured by the pullback metric rather than by the Euclidean metric suggested by the ansatz coordinates.}
\label{fig:bloch-geometry-primer}
\end{figure*}

\paragraph*{Pullback to variational parameters.}
The algorithms in the paper do not move freely on all of $\mathbb{CP}^{N-1}$; they move on the variational submanifold traced by $\bm\lambda\mapsto[\psi(\bm\lambda)]$. Pulling the ambient metric back to parameter space gives
\begin{equation}
g_{\mu\nu}
\;=\;
\mathrm{Re}\braket{\partial_\mu\psi|\partial_\nu\psi},
\end{equation}
up to the projective global-phase correction discussed in Appendix~\ref{app:ng-omega-equiv}. The same pullback metric appears in both metric-aware flows:
\begin{align}
\dot{\lambda}^{\mu}_{\mathrm{imag}}
&\;=\;
-\sum_\nu (g^{-1})^{\mu\nu}\,\partial_\nu E,\\
\dot{\lambda}^{\mu}_{\mathrm{real}}
&\;=\;
-\,\frac{1}{2\hbar}\sum_{\nu,\rho} J^{\mu}{}_{\nu}\,(g^{-1})^{\nu\rho}\,\partial_\rho E,
\end{align}
where the velocity $\dot\lambda^\mu$ and the inverse metric $(g^{-1})^{\mu\nu}$ carry upper (contravariant) indices, the energy covector $\partial_\nu E$ carries a lower index, and $J^{\mu}{}_{\nu}$ is the complex structure as a $(1,1)$ tensor; the labels $\mathrm{imag}$ and $\mathrm{real}$ are written as subscripts to avoid colliding with the contravariant component index.
For a generic ansatz this metric is dense and must be estimated and inverted. The central result of the main text is that the binary-tree chart makes this pullback metric diagonal in closed form. Consequently the conversion $dE\mapsto g^{-1}dE$ is just elementwise division, and the real- and imaginary-time equations differ only by the subsequent complex-structure rotation that selects Hamiltonian rather than gradient motion, up to the overall physical time-unit factor just noted.

\section{Diagonal chart metric: statement and proof}\label{app:diag-metric}

The central structural result of the paper is that the binary-tree chart makes the pullback metric exactly diagonal in closed form, summarised in \sref{sec:metric}. It is restated formally and proved below.

\begin{theorem}[Diagonal chart metric]\label{thm:diag-metric}
For the binary-tree ansatz of \sref{sec:circuit}, the chart pullback metric $g_{\mu\nu}=\mathrm{Re}\,\braket{\partial_\mu\psi|\partial_\nu\psi}$ is block-diagonal between the amplitude parameters $\bm\theta$ and the leaf phases $\bm\omega$, and each block is itself diagonal, with entries and weights given by Eqs.~\eqref{eq:g-theta}--\eqref{eq:g-cross}.
\end{theorem}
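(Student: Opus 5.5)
The plan is to work directly with the amplitude representation $\ket\psi=\sum_j c_j\ket j$, where $c_j=e^{i\omega_j}a_j$ and $a_j=\prod_{m\in\mathrm{anc}(j)}f_m(\theta_m)$ is the real path product read off the tree (\sref{sec:circuit}), with $\omega_0=0$. Since the $\ket j$ are orthonormal, $\braket{\partial_\mu\psi|\partial_\nu\psi}=\sum_j\overline{\partial_\mu c_j}\,\partial_\nu c_j$. The whole proof therefore reduces to elementary identities for the derivatives of the path products, organised by where the differentiated node sits relative to each leaf's root-to-leaf path.

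\textbf{Phase and cross blocks.} These come first because they are immediate. We have $\partial_{\omega_j}c_{j'}=i\,\delta_{jj'}c_j$, so $\braket{\partial_{\omega_j}\psi|\partial_{\omega_{j'}}\psi}=\delta_{jj'}|c_j|^2=\delta_{jj'}p_j$. For the cross block,
\begin{equation*}
\braket{\partial_{\theta_i}\psi|\partial_{\omega_j}\psi}=\overline{\partial_{\theta_i}a_j}\,e^{-i\omega_j}\cdot i\,e^{i\omega_j}a_j=i\,a_j\,\partial_{\theta_i}a_j,
\end{equation*}
which is purely imaginary because $a_j$ and $\partial_{\theta_i}a_j$ are real. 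Its real part therefore vanishes, which gives Eq.~\eqref{eq:g-cross}.

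\textbf{Amplitude block.} Here $g_{\theta_i\theta_{i'}}=\sum_j\partial_{\theta_i}a_j\,\partial_{\theta_{i'}}a_j$, since the phases cancel. A leaf $j$ contributes only if both nodes $i$ and $i'$ lie on its path.

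If $i\neq i'$ and neither node is an ancestor of the other, no leaf path contains both, so the entry is zero.

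If, say, $i$ is a strict ancestor of $i'$, fix $i$ and group the leaves below $i'$. Write $\partial_{\theta_{i'}}a_j=A\,f'_{i'}(\theta_{i'})\,R_j$, where $A$ is the product of factors above $i'$ and $R_j$ is the product below the chosen child of $i'$. Node $i$ enters only through $A$, in a way common to all these leaves, so $\partial_{\theta_i}a_j=(\partial_{\theta_i}A)\,f_{i'}(\theta_{i'})\,R_j$. Summing over the leaves below $i'$ gives
\begin{equation*}
A\,\partial_{\theta_i}A\sum_{\text{children}} f_{i'}f'_{i'}\sum R_j^2 .
\end{equation*}
Subtree normalisation gives $\sum R_j^2=1$ for each child. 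The remaining factor is $\cos\theta\,(-\sin\theta)+\sin\theta\cos\theta=0$, so the off-diagonal entry vanishes.

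For the diagonal entry $i=i'$, the same grouping gives $w_i^{\text{path}}\sum_{\text{children}}(f'_i)^2\cdot 1$, where $w_i^{\text{path}}=\prod_{m\in\mathrm{anc}(i)}f_m^2$. Since $(\sin\theta)^2+(\cos\theta)^2=1$, this equals $w_i$ as in Eq.~\eqref{eq:g-theta}.

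\textbf{Key lemma and main obstacle.} The central ingredient is the subtree normalisation: for any node $v$, the sum over descendant leaves of the squared path product taken below $v$ equals $1$. I would prove it by induction on subtree height, using $\cos^2+\sin^2=1$ at each node. The main care point is the bookkeeping for the bottom layer. Leaf phases multiply without affecting moduli, and on the pruned ansatz fixed nodes take pinned values while inactive subtrees are absent. Normalisation must still hold there: for fixed nodes it holds because one of the two factors is $0$ and the other is $1$, and inactive subtrees simply drop out. So the same argument restricts verbatim to the active parameters.
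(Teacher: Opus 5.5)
Your proposal is correct and is essentially the paper's proof. It uses the same split of the amplitude block into the disjoint-subtree case and the ancestor case, the same orthogonality of single-leaf phase derivatives, and the same leaf-by-leaf observation that the cross term $i\,a_j\,\partial_{\theta_i}a_j$ is purely imaginary. Your subtree-normalisation lemma together with $\cos\theta(-\sin\theta)+\sin\theta\cos\theta=0$ is the coordinate form of the paper's statement that the branch state $\cos\theta_j\ket{L}+\sin\theta_j\ket{R}$ (with normalised $\ket{L},\ket{R}$) is orthogonal to its $\theta_j$-derivative, and your explicit evaluation of the diagonal entries makes precise what the paper reads off from the prefactor $\sqrt{w_i}$.
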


The two weight formulas can be read off the tree by walking down from the root and accumulating one factor at each step: $\cos^2\theta_m$ on a downward branch, $\sin^2\theta_m$ on an upward branch. For the tree of Fig.~\ref{fig:classified-tree}, the parameter $\theta_7$ sits at depth three on the leftmost root-to-leaf path with ancestors $\theta_0,\theta_1,\theta_3$ all reached by downward steps, so $w_7 = \cos^2\theta_0\cos^2\theta_1\cos^2\theta_3$; the leaf $\ket{0000}$ below it adds one further downward step, giving $p_0 = \cos^2\theta_0\cos^2\theta_1\cos^2\theta_3\cos^2\theta_7$.

\noindent\textbf{Proof.} The diagonal structure is easy to see at the level of state derivatives. The amplitude derivative $\partial_{\theta_i}\ket\psi$ lives entirely on the subtree below node $i$ and carries a prefactor $\sqrt{w_i}$ from the sines and cosines above it. For two distinct amplitude derivatives $\partial_{\theta_i}\ket\psi$ and $\partial_{\theta_j}\ket\psi$ there are two cases. If neither node is an ancestor of the other, their lowest common ancestor routes them into orthogonal child subtrees, so they have disjoint support and the inner product vanishes outright. If instead one node, say $i$, is an ancestor of $j$, then on their only shared support, the subtree below $j$, the higher derivative $\partial_{\theta_i}\ket\psi$ is proportional to the undifferentiated branch state $\ket{\varphi_j}=\cos\theta_j\ket{L}+\sin\theta_j\ket{R}$ at node $j$ (with $\ket{L},\ket{R}$ the orthonormal child-subtree states), while $\partial_{\theta_j}\ket\psi$ is proportional to $\partial_{\theta_j}\ket{\varphi_j}=-\sin\theta_j\ket{L}+\cos\theta_j\ket{R}$; these are orthogonal, $\braket{\varphi_j|\partial_{\theta_j}\varphi_j}=0$, so the inner product again vanishes. The leaf-phase derivative $\partial_{\omega_j}\ket\psi = i c_j\ket{j}$ is supported on a single computational-basis leaf, so distinct leaf-phase derivatives are exactly orthogonal in Hilbert space. The cross-block vanishes leaf by leaf: writing $c_j=r_j\,e^{i\omega_j}$ with $r_j$ real, the only nonzero overlap at leaf $j$ is $(\partial_{\theta_i}c_j)^{*}\,(\partial_{\omega_j}c_j)=(\partial_{\theta_i}r_j)\,e^{-i\omega_j}\cdot i\,r_j\,e^{i\omega_j}=i\,r_j\,\partial_{\theta_i}r_j$, which is pure imaginary and so contributes no real part.\hfill$\square$

\section{Metric invariance under fixed dressing}\label{app:invariance}

The diagonal metric of \sref{sec:metric} is a property of the bare tree map $(\bm\theta,\bm\omega)\mapsto\ket{\psi(\bm\theta,\bm\omega)}$, and is preserved under any fixed post-composition. This both grounds the bare/dressed split used throughout the applications and broadens the ansatz class.

\begin{proposition}[Invariance under fixed dressing]\label{prop:invariance}
Let $\ket{\psi(\bm\xi)}$ be any parametrized state and let $U$ be any unitary independent of the parameters $\bm\xi$. Then the quantum geometric tensor $\braket{\partial_\mu\psi|\partial_\nu\psi}$, and hence both the chart metric $g_{\mu\nu}=\mathrm{Re}\,\braket{\partial_\mu\psi|\partial_\nu\psi}$ and the symplectic form $\mathrm{Im}\,\braket{\partial_\mu\psi|\partial_\nu\psi}$, is unchanged under post-composition by $U$: the metric of $U\ket{\psi}$ equals that of $\ket{\psi}$.
\end{proposition}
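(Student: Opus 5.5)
The plan is a direct computation that uses only linearity of differentiation and unitarity of $U$. Set $\ket{\Psi(\bm\xi)}:=U\ket{\psi(\bm\xi)}$. Because $U$ does not depend on $\bm\xi$, it commutes with each partial derivative:
\[
\partial_\mu\ket{\Psi}=U\,\partial_\mu\ket{\psi}.
\]
This holds in any representation. Differentiate the components $\Psi_a=\sum_b U_{ab}\psi_b$ term by term; the sum is finite-dimensional, and the entries $U_{ab}$ are constant. I would state this as the first step. It also fixes the regularity assumption: $\bm\xi\mapsto\ket{\psi(\bm\xi)}$ is differentiable, which is all the metric needs.

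The second step is to substitute into the quantum geometric tensor and use $U^\dagger U=I$:
\[
\braket{\partial_\mu\Psi|\partial_\nu\Psi}=\bra{\partial_\mu\psi}U^\dagger U\ket{\partial_\nu\psi}=\braket{\partial_\mu\psi|\partial_\nu\psi}.
\]
Taking real and imaginary parts gives invariance of the chart metric $g_{\mu\nu}$ and of the symplectic form. I would add a remark that the same argument covers the projective (gauge-corrected) tensor $Q_{\mu\nu}$ of Appendix~\ref{app:geometry-primer}. Its extra term $\braket{\partial_\mu\psi|\psi}\braket{\psi|\partial_\nu\psi}$ is also invariant, because $\braket{\Psi|\partial_\nu\Psi}=\braket{\psi|U^\dagger U|\partial_\nu\psi}=\braket{\psi|\partial_\nu\psi}$. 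So both the chart and the Provost--Vall\'ee versions of the metric and the Berry curvature carry over unchanged.

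The last step applies this to the tree. Combining the result with Theorem~\ref{thm:diag-metric}, the dressed state $U\ket{\psi(\bm\theta,\bm\omega)}$ has the same diagonal metric $\mathrm{diag}(w_i)\oplus\mathrm{diag}(p_j)$. That is the use made of it in \sref{sec:schur-csf} and \sref{sec:two-block}.

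There is no real obstacle here; the only point needing care is a caveat about scope. The proposition concerns fixed $U$, meaning $U(\bm\phi)$ at frozen $\bm\phi$. When $\bm\phi$ is also varied, the blocks $g_{\phi\phi}$ and $g_{\theta\phi}$ are not covered, but the $(\bm\theta,\bm\omega)$ block is still given exactly by this computation at every $\bm\phi$. I would state that caveat explicitly so the later barren-plateau appeal in Appendix~\ref{app:no-bp} rests only on what is proved.
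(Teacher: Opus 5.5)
Your proposal is correct and is the paper's own proof. The paper argues exactly as you do: since $U$ is parameter-independent, $\partial_\mu(U\ket\psi)=U\ket{\partial_\mu\psi}$, unitarity gives $\braket{\partial_\mu\psi|U^\dagger U|\partial_\nu\psi}=\braket{\partial_\mu\psi|\partial_\nu\psi}$, and one then takes real and imaginary parts. Your extra remarks are also correct: the projective correction $\braket{\partial_\mu\psi|\psi}\braket{\psi|\partial_\nu\psi}$ is invariant for the same reason, and at frozen $\bm\phi$ the core block of $U(\bm\phi)\ket{\psi(\bm\theta,\bm\omega)}$ is reproduced exactly, which matches the paper's consequence (iii).
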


\noindent\textbf{Proof.} Since $U$ does not depend on $\bm\xi$, $\partial_\mu\bigl(U\ket{\psi}\bigr)=U\ket{\partial_\mu\psi}$, so $\braket{\partial_\mu(U\psi)|\partial_\nu(U\psi)}=\braket{\partial_\mu\psi|U^\dagger U|\partial_\nu\psi}=\braket{\partial_\mu\psi|\partial_\nu\psi}$ by unitarity; taking real and imaginary parts gives the metric and the symplectic form.\hfill$\square$

Three consequences fix the role of this result in the main text.

\textbf{(i) An enlarged ansatz class.} Because the diagonal metric of Eqs.~\eqref{eq:g-theta}--\eqref{eq:g-cross} is a property of the bare tree map and not of any particular circuit realizing it, Proposition~\ref{prop:invariance} carries it unchanged to every member of the family $\{U\ket{\psi(\bm\theta,\bm\omega)}:U\ \text{fixed}\}$. The closed-form toolkit, natural gradient, variational imaginary- and real-time evolution (\sref{sec:kahler}), and exact sector-Haar sampling (\sref{sec:fs-apps}), therefore applies not only to the tree but to any fixed dressing of it.

\textbf{(ii) Fixed dressings reduce to the core.} For a Hamiltonian $H$ the dressed energy is $\braket{U\psi|H|U\psi}=\braket{\psi|\widetilde H|\psi}$ with $\widetilde H=U^\dagger H U$; by the Proposition the geometry seen by the chart parameters is unchanged, so a natural-gradient or imaginary-time step on the dressed state for $H$ is exactly such a step on the bare core for $\widetilde H$, using the diagonal preconditioner of Eq.~\eqref{eq:Minv}. The metric thus stays free, while the conjugation $\widetilde H=U^\dagger H U$ carries the quantum content, precisely the bare/dressed boundary of \sref{sec:decouple}.

\textbf{(iii) Parametrised dressings.} When the dressing carries its own parameters, $U=U(\bm\phi)$ as in Eq.~\eqref{eq:dressed-main}, the Proposition still applies at each fixed $\bm\phi$: differentiating $U(\bm\phi)\ket{\psi(\bm\theta,\bm\omega)}$ in the core parameters alone, holding $\bm\phi$ fixed, reproduces the core's diagonal metric exactly, for every $\bm\phi$. Embedding the bare core under a parametrised dressing thus leaves its closed-form geometry untouched at every point of an optimisation, not only at a fixed dressing. The two applications use this differently. In the decoupling of \sref{sec:decouple} the dressing is screened and trained on its own, and the metric enters only through the bare-core Fubini--Study sampling over the model space $P$, which the dressing does not touch. In the two-block calculation of \sref{sec:two-block} the core and the dressing are refined together: at each step the core parameters take a natural-gradient step preconditioned by the closed-form diagonal metric (which, by the Proposition, is exactly the core block of the dressed metric at the current $\bm\phi$), while the dressing parameters take an ordinary gradient step. The core-dressing block of the full metric is never formed. Dropping that block does not rest on its being small, and in general it is not: the objective does not depend on the metric, so its minima are unchanged; the diagonal preconditioner is positive-definite, so the joint step is a descent direction that the line search keeps monotone; and the core step is, by the Proposition, the exact natural gradient of the core at the current dressing. The genuine boundary of the result lies elsewhere: trainable layers inserted within the tree change the core map itself, and with it the closed-form geometry.

Two fixed-dressing instances are developed in this Supplement: Appendix~\ref{app:sparse-routing}, in which $U$ is a routing permutation, and Appendix~\ref{app:schur-csf}, in which $U$ is the sector Schur transform that carries the leaf basis from Slater determinants to configuration state functions.

\section{Two metrics, one update}\label{app:ng-omega-equiv}

This appendix gives the self-contained derivation behind the relation-to-Fubini--Study discussion of \sref{sec:metric}. The chart pullback metric of Theorem~\ref{thm:diag-metric} and the projective Fubini--Study form of Eq.~\eqref{eq:g-FS} differ by the rank-one term $-pp^\top$ on the leaf-phase block, and this difference is shown to be unobservable for the physical updates used in the paper. For energy minimization and variational imaginary-time evolution the energy gradient is automatically sum-zero in the leaf phases. For real-time evolution the projective K\"ahler flow is defined only up to a uniform phase velocity, so the two choices of leaf-phase metric differ only by the gauge that is projected out in Appendix~\ref{app:tdvp-rhs}.

\paragraph*{Setting.} On the leaf-phase block the two metrics are
\begin{equation}\label{eq:two-metrics}
\widetilde g_\omega \;=\; D \;:=\; \mathrm{diag}(p), \qquad g^{\mathrm{FS}}_\omega \;=\; D - p p^\top,
\end{equation}
with $p\in\mathbb R_{>0}^{|S|}$ and $\sum_j p_j = 1$ (the prepared state is normalized). The first is the raw real part of $\braket{\partial_{\omega_j}\psi|\partial_{\omega_k}\psi}$ in the redundant one-phase-per-leaf chart. The second is the Provost--Vall\'ee form $g^{\mathrm{FS}}_{\mu\nu} = \mathrm{Re}\bigl[\braket{\partial_\mu\psi|\partial_\nu\psi} - \braket{\partial_\mu\psi|\psi}\braket{\psi|\partial_\nu\psi}\bigr]$~\cite{Provost1980}, which subtracts the (single) global-phase direction and is the canonical Riemannian metric on $\mathbb{CP}^{N-1}$. In components, the leaf-phase block of the Provost--Vall\'ee form is
\begin{equation}\label{eq:g-FS}
g^{\mathrm{FS}}_{\omega_j\omega_k} \;=\; p_j\,\delta_{jk} - p_j p_k,
\end{equation}
the rank-one correction along the global-phase direction $\bm 1$.

\paragraph*{Kernel and pseudoinverse of $g^{\mathrm{FS}}_\omega$.} A direct computation gives $g^{\mathrm{FS}}_\omega\bm 1 = p - p\,(\bm 1^\top p) = 0$, so $\ker g^{\mathrm{FS}}_\omega = \mathrm{span}(\bm 1)$ and $g^{\mathrm{FS}}_\omega$ has rank $|S|-1$. Let $P_\perp := I - \tfrac{1}{|S|}\bm 1\bm 1^\top$ denote the orthogonal projector onto $\bm 1^\perp$; the Moore--Penrose pseudoinverse $g^{\mathrm{FS},+}_\omega$ is uniquely determined by $g^{\mathrm{FS}}_\omega g^{\mathrm{FS},+}_\omega = g^{\mathrm{FS},+}_\omega g^{\mathrm{FS}}_\omega = P_\perp$.

\paragraph*{Imaginary time: gauge invariance kills the correction.} For the energy objective $E(\bm\theta,\bm\omega)=\braket{\psi|H|\psi}$, the chain rule gives $\partial_{\omega_j} E = 2\,\mathrm{Re}\braket{\partial_{\omega_j}\psi|H|\psi}$. Summing over $j$ and using $\sum_j\partial_{\omega_j}\ket\psi = i\ket\psi$,
\begin{equation}\label{eq:omega-sum-zero}
\bm 1^\top\nabla_\omega E \;=\; 2\,\mathrm{Re}\braket{i\psi|H|\psi} \;=\; -2\,\mathrm{Im}\braket{\psi|H|\psi} \;=\; 0,
\end{equation}
since $\braket{\psi|H|\psi}\in\mathbb R$. Equivalently, $E$ is invariant under $\omega_j\mapsto\omega_j+c$ for any $c\in\mathbb R$, so its gradient is automatically sum-zero. For any sum-zero $u$ (i.e.\ $\bm 1^\top u = 0$) we have
\begin{equation}\label{eq:agreement}
g^{\mathrm{FS},+}_\omega\,u \;=\; P_\perp\bigl(D^{-1} u\bigr) \;=\; \widetilde g_\omega^{-1}\,u \;-\; \tfrac{1}{|S|}\bm 1\bigl(\bm 1^\top D^{-1} u\bigr).
\end{equation}
\textbf{Proof.} Applying $g^{\mathrm{FS}}_\omega$ to $P_\perp D^{-1} u$, and using $g^{\mathrm{FS}}_\omega P_\perp = g^{\mathrm{FS}}_\omega$ together with $p^\top D^{-1} = \bm 1^\top$,
\begin{equation}
g^{\mathrm{FS}}_\omega\bigl(P_\perp D^{-1} u\bigr) = (D - p p^\top) D^{-1} u = u - p\,(\bm 1^\top u) = u.
\end{equation}
The pseudoinverse satisfies $g^{\mathrm{FS}}_\omega\,g^{\mathrm{FS},+}_\omega u = P_\perp u = u$. Both candidates therefore solve $g^{\mathrm{FS}}_\omega x = u$ on $\bm 1^\perp$, on which $g^{\mathrm{FS}}_\omega$ is invertible, so they coincide. $\square$

The Provost--Vall\'ee preconditioned natural gradient $g^{\mathrm{FS},+}_\omega\nabla_\omega E$ and the diagonal preconditioned gradient $\widetilde g_\omega^{-1}\nabla_\omega E = D^{-1}\nabla_\omega E$ therefore differ only by a uniform shift of all $\omega_j$ along $\bm 1$. The prepared state is invariant under that shift (it is exactly the global-phase ambiguity of the $\omega$-parametrization), so the two updates produce the same physical state after each step. This is the formal justification for using $\widetilde g_\omega^{-1}$ throughout imaginary-time evolution and quantum natural gradient.

\paragraph*{Real time: the same gauge equivalence.} The projective real-time flow is
\[
\dot{\bm\lambda}=-(2\hbar)^{-1}Jg^{-1}dE.
\]
Its phase-gradient input is again $u=\nabla_\omega E$, and Eq.~\eqref{eq:omega-sum-zero} still gives $\bm 1^\top u=0$. Replacing $D^{-1}u$ by $g^{\mathrm{FS},+}_\omega u$ therefore changes only the uniform component of the metric-raised phase gradient. The subsequent complex-structure rotation changes the representative of the Hilbert-space vector by the corresponding global-phase velocity, but not the ray in $\mathbb{CP}^{|S|-1}$. Appendix~\ref{app:tdvp-rhs} implements this directly: after applying the diagonal reciprocal, the tree recursion applies $J$ and then subtracts the mean of $\dot{\bm\omega}$ as a gauge choice.

\paragraph*{Worked example.} Take $|S|=2$, $H=Z$, and $\ket\psi=\sqrt{p_0}\ket0+e^{i(\omega_1-\omega_0)}\sqrt{p_1}\ket1$. The energy is $E=p_0-p_1$. The real-time K\"ahler rule gives the relative phase velocity $\dot\omega_1-\dot\omega_0=2/\hbar$, matching $\exp(-iZt/\hbar)\ket\psi$. In the zero-mean projective gauge this may be represented as $\dot{\bm\omega}=(-2p_1/\hbar,\,2p_0/\hbar)$; adding the uniform velocity $-E/\hbar$ gives the equally valid Schr\"odinger representative $(-1/\hbar,+1/\hbar)$. The difference is only the global phase.

\section{Coordinate transformations: ranges and inverse map}\label{app:c2p-detail}

This appendix collects the standard details of the polyspherical chart map. For each internal node bifurcating into subtrees with leaf-index sets $S_-$ (downward) and $S_+$ (upward), the amplitude angle reads off the squared subtree masses by
\begin{equation}\label{eq:c2p-ry}
\theta_i \;=\; \atantwo\!\left( \sqrt{\textstyle\sum_{k\in S_+}|c_k|^2},\; \sqrt{\textstyle\sum_{k\in S_-}|c_k|^2}\right),
\end{equation}
and the leaf phase $\omega_j$ is the argument of the corresponding amplitude $c_j$. The leaf-phase angles read off the complex amplitudes,
\begin{equation}\label{eq:c2p-rz}
\omega_j = \atantwo\bigl(\Im c_j,\; \Re c_j\bigr),
\end{equation}
with one phase fixed (e.g.\ $\omega_0=0$) to remove the global-phase redundancy. The natural ranges for $\theta_i$ are $[0,\pi/2]$ when both subtrees contain at least one leaf, and one of $[-\pi/2,\pi/2]$ or $[0,\pi]$ when one subtree is empty (the choice fixes which child carries the surviving amplitude). The elementary $R_y$ and $R_z$ angles emitted by the cascade are obtained from the tree parameters $(\bm\theta,\bm\omega)$ by the inverse Walsh--Hadamard transform on each level, re-indexed by the Gray code~\cite{Mottonen2004, Mottonen2005}; together with Eq.~\eqref{eq:c2p-ry}, this defines the chart-to-circuit map used throughout the paper.

\section{Closed-form real-time RHS from the tree recursion}\label{app:tdvp-rhs}

The real-time right-hand side is derived directly from the K\"ahler geometry of projective Hilbert space. With the convention of the main text,
\[
g(\xi,\eta)=\mathrm{Re}\braket{\xi|\eta},
\]
Schr\"odinger evolution $i\hbar\,\partial_t\ket\psi=H\ket\psi$ induces, on the projective state manifold,
\begin{equation}\label{eq:kahler-flow-units}
\dot{\bm\lambda}
\;=\;
-\,\frac{1}{2\hbar}\,J\,g^{-1}dE
\end{equation}
up to the unobservable global-phase direction. The factor $1/2$ comes from $\partial_\mu E=2\,\mathrm{Re}\braket{\partial_\mu\psi|H|\psi}$. The formulas below keep $\hbar$ explicit; the numerical work sets $\hbar=1$.

The binary tree supplies an efficient coordinate expression for the complex-structure rotation $J$. One phase coordinate per active leaf is temporarily kept; a uniform shift of all phases is a gauge direction and is projected out at the end. If the implemented chart fixes one reference phase, its derivative is recovered from the gauge identity $\sum_j\partial_{\omega_j}E=0$, so this temporary restoration introduces no extra circuit evaluations. Let $a$ be an active internal node, with left and right child subtrees denoted $L(a)$ and $R(a)$, and write
\[
s_a:=\sin\theta_a,\qquad c_a:=\cos\theta_a .
\]
For any subtree $B$, define the collective phase coordinate $\phi_B$ to shift every leaf phase in $B$ by the same amount. Its derivative and diagonal metric entry are just leaf sums,
\begin{align}
D^\omega_B &\;:=\; \partial_{\phi_B}E
        \;=\; \sum_{j\in B}\partial_{\omega_j}E,
        \label{eq:subtree-phase-derivative}\\
G^\omega_B &\;:=\; g_{\phi_B\phi_B}
        \;=\; \sum_{j\in B}g_{\omega_j\omega_j}.
        \label{eq:subtree-phase-metric}
\end{align}
The metric-raised subtree phase derivative is
\begin{equation}\label{eq:Gamma-def}
\Gamma_B \;:=\; (G^\omega_B)^{-1}D^\omega_B .
\end{equation}
Geometrically, $\Gamma_B$ is the phase component of $g^{-1}dE$ averaged over subtree $B$.

\paragraph*{Bottom-up pass for the angle velocities.}
At a leaf $j$, initialise $D^\omega_j=\partial_{\omega_j}E$ and $G^\omega_j=g_{\omega_j\omega_j}$. A single postorder walk over the active tree computes, for every internal subtree $B$ with children $L$ and $R$,
\begin{equation}\label{eq:Gamma-recursion}
D^\omega_B=D^\omega_L+D^\omega_R,\qquad
G^\omega_B=G^\omega_L+G^\omega_R,\qquad
\Gamma_B=\frac{D^\omega_B}{G^\omega_B}.
\end{equation}
The real-time $-J/(2\hbar)$ rotation sends the phase-gradient contrast between the children of $a$ into the velocity of the split angle:
\begin{equation}\label{eq:thetadot-tree}
\boxed{\displaystyle
\dot\theta_a
\;=\;
\frac{\sin(2\theta_a)}{4\hbar}
\left(\Gamma_{R(a)}-\Gamma_{L(a)}\right).}
\end{equation}
Expanded out, this is
\begin{equation}
\dot\theta_a
\;=\;
\frac{\sin(2\theta_a)}{4\hbar}
\left[
\frac{\sum_{j\in R(a)}\partial_{\omega_j}E}
     {\sum_{j\in R(a)}g_{\omega_j\omega_j}}
-
\frac{\sum_{j\in L(a)}\partial_{\omega_j}E}
     {\sum_{j\in L(a)}g_{\omega_j\omega_j}}
\right],
\end{equation}
but Eq.~\eqref{eq:Gamma-recursion} is the linear-time way to evaluate all such subtree averages at once.

\paragraph*{Top-down pass for the phase velocities.}
The complementary part of the real-time $-J/(2\hbar)$ rotation sends the metric-raised amplitude derivative at node $a$ into a phase-velocity contrast between its two child subtrees. Define the metric-raised split derivative
\begin{equation}\label{eq:Lambda-def}
\Lambda_a
\;:=\;
\frac{g_{\theta_a\theta_a}^{-1}\,\partial_{\theta_a}E}{\sin(2\theta_a)}
\;=\;
\frac{\partial_{\theta_a}E}
     {g_{\theta_a\theta_a}\,\sin(2\theta_a)} .
\end{equation}
Let $\nu_B$ denote the phase velocity assigned uniformly to the subtree $B$ before the descendants of $B$ add their own contrasts. Choose an arbitrary global-phase gauge $\nu_{\mathrm{root}}=\chi(t)$; the projective state is independent of this choice. A preorder walk over the active tree propagates
\begin{align}
\nu_{L(a)} &\;=\; \nu_a + \frac{s_a^2}{\hbar}\,\Lambda_a,
        \label{eq:nu-left}\\
\nu_{R(a)} &\;=\; \nu_a - \frac{c_a^2}{\hbar}\,\Lambda_a.
        \label{eq:nu-right}
\end{align}
These weights are fixed by two requirements: the child contrast is
\[
\nu_{R(a)}-\nu_{L(a)}=-\,\frac{1}{\hbar}\Lambda_a,
\]
and the probability-weighted child average equals the parent velocity,
$c_a^2\nu_{L(a)}+s_a^2\nu_{R(a)}=\nu_a$. At a leaf,
\begin{equation}\label{eq:omegadot-tree}
\boxed{\displaystyle
\dot\omega_j=\nu_{\mathrm{leaf}(j)}.}
\end{equation}
Equivalently, after choosing a root gauge,
\begin{equation}\label{eq:omegadot-path}
\dot\omega_j
\;=\;
\chi(t)
+\frac{1}{\hbar}\sum_{a\in\mathrm{anc}(j)}
\alpha_{a,j}\,
\frac{g_{\theta_a\theta_a}^{-1}\partial_{\theta_a}E}
     {\sin(2\theta_a)},
\qquad
\alpha_{a,j}=
\begin{cases}
\sin^2\theta_a, & j\in L(a),\\
-\cos^2\theta_a, & j\in R(a).
\end{cases}
\end{equation}

\paragraph*{Global-phase gauge fix.}
The uniform component of $\dot{\bm\omega}$ is unobservable. For numerical hygiene it is projected out after the top-down pass,
\begin{equation}\label{eq:gauge-fix}
\dot\omega_j \;\leftarrow\; \dot\omega_j \;-\; \tfrac{1}{|S|}\sum_k \dot\omega_k ,
\end{equation}
which leaves the prepared ray unchanged.

\paragraph*{Cost and singular coordinates.}
The real-time RHS uses the same Euclidean derivatives as one ordinary shift-rule gradient: $4|\bm\theta|+2|\bm\omega|$ energy evaluations, followed by one bottom-up pass and one top-down pass through the active tree. On a pruned support with $k$ active leaves, the active split tree has $k-1$ free $\theta$ angles and $\mathcal O(k)$ nodes, so the classical conversion from $(\nabla_{\bm\theta}E,\nabla_{\bm\omega}E)$ to $(\dot{\bm\theta},\dot{\bm\omega})$ is linear in $k$. No auxiliary metric circuits, dense matrix assembly, matrix inversion, or per-leaf phase-shifted real-part reconstruction is required. The formulas above assume an open chart, $\sin(2\theta_a)\neq0$ and positive metric entries; the chart-singularity limits are handled by the singular-coordinate initialisation of Appendix~\ref{app:singular-init}.

\section{Chart-singularity initialisation in \texorpdfstring{$(\bm\theta,\bm\omega)$}{(theta,omega)}}\label{app:singular-init}

A chart singularity is a parameter configuration at which an active angle $\theta_a$ has reached one of its boundary values $\theta_{\mathrm{snap}}\in\{0,\pm\pi/2,\pi\}$, so that one of the two child branches of node $a$ carries zero amplitude. Write $R$ for that closed, zero-amplitude subtree, $\mathcal L(R)\subseteq S$ for its active leaves, and $m_a=|\mathcal L(R)|$. By Theorem~\ref{thm:diag-metric} the weight of any parameter $\theta_b$ lying strictly under $R$ is a path product (Eq.~\eqref{eq:g-theta}) that now contains the vanishing factor $\sin^2\theta_a$ or $\cos^2\theta_a$, so $w_b=0$ and the natural-gradient update $\dot\theta_b=-\partial_{\theta_b}E/w_b$ takes the indeterminate form $0/0$. The entire subtree $R$ is thus underdetermined by the metric. Such configurations are measure-zero and arise in practice only at a structured start: a Hartree--Fock initialisation pins every active angle to $0$ or $\pi/2$, placing the whole tree at a singularity with only the Hartree--Fock leaf occupied.

The resolution rests on a single observation: at the singularity the update is set by the Hamiltonian, not by the metric. On any open chart the natural-gradient / imaginary-time step $-g^{-1}\nabla E$ reproduces, to first order in the step size, the orthogonal projection of the exact Hilbert-space flow $-\eta(H-E)\ket{\psi_o}$ onto the tangent space of the ansatz (this is the defining property of the metric-raised update; see Appendix~\ref{app:ng-omega-equiv}). As $R$ is opened, the tangent directions that drop out of the metric (the buried parameters under $R$, together with $\theta_a$ itself) are precisely the ones that span the closed-leaf subspace $\mathrm{span}\{\ket j: j\in\mathcal L(R)\}$. The well-defined limit of the metric-raised update is therefore the projection of the exact flow onto that subspace,
\begin{equation}\label{eq:singinit-target}
-\,\eta\!\sum_{j\in\mathcal L(R)} F_j\,\ket j,\qquad F_j \;:=\; \braket{j|H|\psi_o},
\end{equation}
where $\ket{\psi_o}$ is the current state at $\theta_a=\theta_{\mathrm{snap}}$ (it has zero amplitude on $\mathcal L(R)$, so the $-E$ term drops). This is the $0/0$ limit evaluated in closed form: the gradient $\partial_{\theta_b}E$ and the weight $w_b$ vanish together as $R$ opens, and their ratio is fixed entirely by the local Hamiltonian amplitudes $F_j$. The same projection with $\eta\mapsto i\,dt$ gives the real-time target $-i\,dt\sum_j F_j\ket j$. The rest of this appendix does two things: it reads the $F_j$ off the energy oracle using $(\bm\theta,\bm\omega)$ shift-rule queries alone (no auxiliary metric circuit, no Cartesian state vector), and it writes the resulting amplitudes back into the chart in one analytic step. From a Hartree--Fock start the whole sweep costs $8(|S|-1)$ shift-rule evaluations, comparable to the $4|\bm\theta|+2|\bm\omega|=6(|S|-1)$ evaluations of a single Euclidean gradient, so the treatment is asymptotically free.

\paragraph*{Geometry of the singular node.} The snap value $\theta_{\mathrm{snap}}$ records which child of $a$ is closed. Define the snap sign
\begin{equation}\label{eq:snap-sign}
\sigma_a \;:=\;
\begin{cases}+1, & \theta_{\mathrm{snap}}\in\{0,-\pi/2\},\\[2pt] -1, & \theta_{\mathrm{snap}}\in\{\pi,\pi/2\},\end{cases}
\end{equation}
as the unique sign for which the chart step $\theta_a\leftarrow\theta_{\mathrm{snap}}+\sigma_a\alpha$ opens the closed branch with a positive amplitude $+\alpha+\mathcal O(\alpha^3)$ for small $\alpha>0$ (check $\sin(\theta_{\mathrm{snap}}+\sigma_a\alpha)$ when the sine branch closes, $\cos(\theta_{\mathrm{snap}}+\sigma_a\alpha)$ when the cosine branch closes). Let
\begin{equation}\label{eq:ra-def}
r_a \;:=\; \prod_{b\in\mathrm{anc}(a)} f_b(\theta_b), \qquad f_b\in\{\cos,\sin\},
\end{equation}
be the amplitude funnelled down to node $a$ along its open ancestors, with $f_b$ selected as in Theorem~\ref{thm:diag-metric} ($\cos$ on a downward step, $\sin$ on an upward step); equivalently $r_a=\sqrt{w_a}$.

\paragraph*{The Hamiltonian amplitudes are frame-independent.} The number $F_j=\braket{j|H|\psi_o}$ depends only on the current state $\ket{\psi_o}$, which carries zero amplitude throughout $R$. It is therefore independent of how the closed subtree is parametrised, and the inner angles of $R$ may be chosen freely when probing a particular $F_j$. This freedom is used twice: first to concentrate the (not-yet-present) amplitude of $R$ on a single chosen leaf, so that one shift-rule query isolates one $F_j$; and afterwards to overwrite those inner angles with the amplitude profile to be installed.

\paragraph*{Reading $F_j$ from the energy oracle.} Fix a leaf $j\in\mathcal L(R)$ and set the inner angles of $R$ so that opening $\theta_a$ would send all of $R$'s amplitude onto $j$ with phase $\omega_j$ (walk the root-to-$j$ path inside $R$, setting each active angle to $0$ on a downward step and $\pi/2$ on an upward step). Holding everything else fixed, the state as a function of $\theta_a$ alone is
\begin{equation}\label{eq:psi-of-thetaA}
\ket{\psi(\theta_a)} \;=\; \ket{\chi} \;+\; r_a\bigl[\,f^{\mathrm{op}}(\theta_a)\,\ket{\phi} \;+\; f^{\mathrm{cl}}(\theta_a)\,e^{i\omega_j}\ket{j}\,\bigr],
\end{equation}
where $\ket\chi$ is the fixed part of the state on leaves outside node $a$'s cone, $\ket\phi$ is the normalised open-sibling state, and $f^{\mathrm{op}},f^{\mathrm{cl}}\in\{\cos,\sin\}$ are the open- and closed-branch factors. At the singularity $f^{\mathrm{cl}}(\theta_{\mathrm{snap}})=0$, while $f^{\mathrm{op}}(\theta_{\mathrm{snap}})=\pm1$ sits at an extremum; hence the two branch derivatives are $\partial_{\theta_a}f^{\mathrm{op}}|_{\mathrm{snap}}=0$ and $\partial_{\theta_a}f^{\mathrm{cl}}|_{\mathrm{snap}}=\sigma_a$ (the latter by the defining property of $\sigma_a$). Therefore $\partial_{\theta_a}\ket\psi|_{\mathrm{snap}}=\sigma_a r_a\,e^{i\omega_j}\ket j$, and since $E(\theta_a)=\braket{\psi(\theta_a)|H|\psi(\theta_a)}$,
\begin{equation}\label{eq:dE-at-snap}
\partial_{\theta_a}E\big|_{\theta_{\mathrm{snap}}}
\;=\; 2\,\mathrm{Re}\,\braket{\partial_{\theta_a}\psi|H|\psi}\big|_{\mathrm{snap}}
\;=\; \sigma_a\cdot 2 r_a\,\mathrm{Re}\bigl[e^{-i\omega_j}F_j\bigr].
\end{equation}
The four-term shift rule~\eqref{eq:shift-rule} applied to $\theta_a$ at the current $\omega_j$ thus returns $\mathrm{Re}[e^{-i\omega_j}F_j]$ up to the known factor $\sigma_a\,2r_a$. The four cases $\theta_{\mathrm{snap}}\in\{0,\pi,\pi/2,-\pi/2\}$ are collected in Table~\ref{tab:snap-cases}.

\begin{table}[h]
\centering
\begin{tabular}{cccc}
\toprule
\rowcolor{headcolor}[\tabcolsep][\tabcolsep]
$\theta_{\mathrm{snap}}$ & closed & $\partial_{\theta_a}E/(2r_a\mathrm{Re}[e^{-i\omega_j}F_j])$ & $\sigma_a$ \\
\midrule
$0$       & right ($\sin$) & $+1$ & $+1$ \\
$\pi$     & right ($\sin$) & $-1$ & $-1$ \\
$\pi/2$   & left ($\cos$)  & $-1$ & $-1$ \\
$-\pi/2$  & left ($\cos$)  & $+1$ & $+1$ \\
\bottomrule
\end{tabular}
\caption{Snap sign $\sigma_a$ at each chart singularity. The two natural definitions, the sign of $\partial_{\theta_a}E$ and the direction of the opening step, agree.}\label{tab:snap-cases}
\end{table}

Applying the four-term shift rule~\eqref{eq:shift-rule} to $\theta_a$ at the current $\omega_j$ thus returns $D^{\mathrm{Re}}_j:=\sigma_a\,2r_a\,\mathrm{Re}[e^{-i\omega_j}F_j]$; repeating it in the rotated frame $\omega_j\mapsto\omega_j+\pi/2$ returns the imaginary part, since $e^{-i(\omega_j+\pi/2)} = -i\,e^{-i\omega_j}$ exchanges $\mathrm{Re}\leftrightarrow\mathrm{Im}$. The two outputs are therefore
\begin{align}
D^{\mathrm{Re}}_j &\;=\; \sigma_a\cdot 2 r_a\,\mathrm{Re}\bigl[e^{-i\omega_j}F_j\bigr], \label{eq:Dre}\\
D^{\mathrm{Im}}_j &\;=\; \sigma_a\cdot 2 r_a\,\mathrm{Im}\bigl[e^{-i\omega_j}F_j\bigr], \label{eq:Dim}
\end{align}
and the complex amplitude is recovered as
\begin{equation}\label{eq:Fj-recon}
F_j \;=\; \sigma_a\,e^{+i\omega_j}\,\frac{D^{\mathrm{Re}}_j + i\,D^{\mathrm{Im}}_j}{2 r_a}.
\end{equation}
Each leaf costs $8$ shift-rule evaluations (four per frame), and the full profile $\{F_j\}_{j\in\mathcal L(R)}$ is reconstructed in $8 m_a$ queries; the inner-path encoding of each leaf is a classical parameter rewrite and needs no extra circuit.

\paragraph*{The initialisation step.} With $\{F_j\}$ in hand, the target amplitudes of Eq.~\eqref{eq:singinit-target} are installed in chart coordinates: open the singular angle by $\theta_a\leftarrow\theta_{\mathrm{snap}}+\sigma_a\alpha$, distribute the closed-leaf magnitudes through the inner split angles of $R$, and set the leaf phases. Writing $\|F\|^2:=\sum_{j\in\mathcal L(R)}|F_j|^2$, the leading-order match gives
\begin{align}
Y_j \;&=\; \frac{|F_j|}{\|F\|}, &
\alpha \;&=\; \kappa\,\frac{\|F\|}{r_a}, \label{eq:singinit-step}\\
\phi_j \;&=\; \omega_j + \mathrm{atan2}\!\bigl(\sigma_a D^{\mathrm{Im}}_j,\,\sigma_a D^{\mathrm{Re}}_j\bigr) + \varphi, \notag
\end{align}
where
\begin{equation}\label{eq:singinit-kappa}
(\kappa,\varphi) \;=\;
\begin{cases}
(\eta,\;+\pi), & \text{imaginary time (VITE),}\\
(dt,\;-\pi/2), & \text{real time.}
\end{cases}
\end{equation}
The non-negative magnitudes $\{Y_j\}$ are loaded into the inner angles of $R$ by the standard $\mathrm{atan2}(\|Y_R\|,\|Y_L\|)$ recursion of \sref{sec:circuit}, which lands in $[0,\pi/2]$ at every node; the per-leaf phases $\{\phi_j\}$ are written directly into the leaf coordinates $\omega_j$ (the chart's phase block is per-leaf, so no inner recursion is required). If some $|F_j|=0$ the corresponding branch of $R$ is itself closed, creating a nested singularity one level down; the step then recurses into $R$ to resolve it.

\paragraph*{Correctness.} Using the closed-branch factor $f^{\mathrm{cl}}(\theta_{\mathrm{snap}}+\sigma_a\alpha)=+\alpha+\mathcal O(\alpha^3)$ (guaranteed by the choice of $\sigma_a$) together with Eqs.~\eqref{eq:singinit-step}--\eqref{eq:singinit-kappa}, the chart amplitude installed on each leaf $j\in\mathcal L(R)$ is
\begin{equation}\label{eq:singinit-match}
\alpha\,r_a\,Y_j\,e^{i\phi_j}
\;=\;
\begin{cases}
-\,\eta\,F_j, & \text{(VITE)},\\
-\,i\,dt\,F_j, & \text{(real time)},
\end{cases}
\end{equation}
which is exactly the target~\eqref{eq:singinit-target} (and its real-time analogue) to first order. For real $H$ and real initial $\bm\omega$ every $D^{\mathrm{Im}}_j$ vanishes, $\phi_j$ collapses to $\omega_j+\{0,\pi\}$, and the step reduces to the real-amplitude version in which leaf signs are absorbed into the inner $\mathrm{atan2}$ recursion.

\paragraph*{Role of the snap sign.} The reconstruction~\eqref{eq:Fj-recon} carries $\sigma_a$ explicitly. The magnitudes $Y_j=|F_j|/\|F\|$ and the step size $\alpha=\kappa\|F\|/r_a$ are sign-invariant; only $\arg F_j$ inherits the $\sigma_a$ correction, which is why the bare $\mathrm{atan2}(D^{\mathrm{Im}}_j,D^{\mathrm{Re}}_j)$ is replaced by $\mathrm{atan2}(\sigma_a D^{\mathrm{Im}}_j,\sigma_a D^{\mathrm{Re}}_j)$ in~\eqref{eq:singinit-step}. The same sign sets the direction of the opening step $\theta_a\leftarrow\theta_{\mathrm{snap}}+\sigma_a\alpha$.

\paragraph*{Cost.} Resolving one singular node $a$ costs $8 m_a$ shift-rule evaluations and writes its entire closed subtree at once. At a Hartree--Fock start only the root is singular, with closed subtree containing every active leaf but the Hartree--Fock determinant ($m_{\mathrm{root}}=|S|-1$), for a total of $8(|S|-1)$ evaluations. Nested singularities occur only on the measure-zero profiles with some $|F_j|=0$, so in generic use no recursion is triggered and the sweep cost is exactly $8 m_{\mathrm{root}}$, comparable to the $4|\bm\theta|+2|\bm\omega|=6(|S|-1)$ shift-rule queries of one Euclidean gradient (Eq.~\eqref{eq:shift-rule}), hence asymptotically free relative to the variational iterations themselves.

\section{Derivation of the four-term shift rule}\label{app:shift-proof}

The shift rule of Eq.~\eqref{eq:shift-rule} is derived here. The argument is elementary and does not invoke the general parameter-shift machinery of Ref.~\cite{Wierichs2022}; the connection to that family is recorded at the end of the appendix.

\paragraph*{Multi-controlled form.} Each tree parameter $\theta_i$ controls exactly one uniformly-controlled rotation in the cascade of \sref{sec:circuit}: an $R_y(2\theta_i)$ on a target qubit $q_i$, controlled on the upper qubits $q_i{+}1,\dots,n{-}1$ being in a specific bitstring $b_i$ determined by the position of node $i$ in the tree.\footnote{Equivalently, the elementary $R_y$ angles produced by the Walsh--Hadamard expansion of the cascade are integer-affine combinations of the $\theta_i$, but the resulting circuit is unitarily equivalent to the multi-controlled rotation. The shift rule is most naturally derived from the controlled form.} Its generator is
\begin{equation}\label{eq:shift-generator}
G_i \;=\; -\,\ket{b_i}\!\bra{b_i}_{\mathrm{ctrl}} \otimes Y_{q_i},
\end{equation}
a rank-2 Hermitian operator with spectrum $\sigma(G_i)=\{-1,\,0,\,+1\}$.

\paragraph*{Energy as a degree-2 trigonometric polynomial.} Write $P_i := G_i^2 = \ket{b_i}\!\bra{b_i}\otimes I$, the rank-2 projector onto the subspace where the controls match $b_i$. From $G_i^3 = G_i$ the matrix exponential collapses to
\begin{equation}\label{eq:U-collapse}
U_i(\theta_i) \;=\; e^{-i\theta_i G_i} \;=\; (I - P_i) \;+\; \cos\theta_i\,P_i \;-\; i\sin\theta_i\,G_i .
\end{equation}
Hence $U_i^\dagger H U_i$ is bilinear in $\{(I-P_i),\,\cos\theta_i\,P_i,\,-i\sin\theta_i\,G_i\}$, and
\begin{equation}\label{eq:E-form}
E(\theta_i) \;=\; a_0 + a_1\cos\theta_i + b_1\sin\theta_i + a_2\cos 2\theta_i + b_2\sin 2\theta_i ,
\end{equation}
i.e.\ $E$ is a real trigonometric polynomial of degree at most $2$ in $\theta_i$. Differentiating and grouping by frequency,
\begin{align}
\partial_{\theta_i} E(\theta_i) &\;=\; f_1(\theta_i) + 2\,f_2(\theta_i),\\
f_r(\theta) &\;:=\; b_r\cos r\theta - a_r\sin r\theta .
\end{align}

\paragraph*{Inversion at $\pi/4$ and $3\pi/4$.} The shifted differences pick out exactly these two harmonics: a direct sum-to-product calculation gives, for any shift $s$,
\begin{equation}\label{eq:shift-diff}
E(\theta_i + s) - E(\theta_i - s) \;=\; 2\sin s\,f_1(\theta_i) + 2\sin 2s\,f_2(\theta_i) .
\end{equation}
Choosing the symmetric pair of shifts $s_1 = \pi/4$ and $s_2 = 3\pi/4$ and demanding that the linear combination $\alpha\,[E(\theta_i{+}s_1){-}E(\theta_i{-}s_1)] - \beta\,[E(\theta_i{+}s_2){-}E(\theta_i{-}s_2)]$ reproduce $f_1 + 2f_2$ yields the $2\times 2$ system
\begin{equation}
\begin{pmatrix}\sqrt 2/2 & \sqrt 2/2 \\ 1 & -1\end{pmatrix}\!\!\begin{pmatrix}\alpha\\-\beta\end{pmatrix} = \begin{pmatrix}1/2\\1\end{pmatrix},
\end{equation}
whose unique solution is $\alpha=(2+\sqrt 2)/4$ and $\beta=(2-\sqrt 2)/4$, giving Eq.~\eqref{eq:shift-rule}.

The rule applies uniformly to every tree parameter, irrespective of $n$, of the depth of node $i$, or of the active-leaves set $S$ of \sref{sec:pruning}: the seemingly large integer prefactors that appear when the cascade is expanded into elementary $R_y$+CNOT primitives (the Walsh--Hadamard coefficients of the synthesis) cancel identically, since the underlying multi-controlled rotation always has rank-$2$ generator and frequency support contained in $\{1,2\}$.

\paragraph*{Connection to the Wierichs equispaced family.} The same rule is the $R=2$ instance of the equispaced parameter-shift family of Wierichs, Izaac and Killoran~\cite{Wierichs2022}: for any expectation $E(\theta)$ that is a real trigonometric polynomial with positive integer frequencies bounded by $R$,
\begin{align}
\partial_\theta E &\;=\; \sum_{\ell=1}^{R} w_\ell\bigl[E(\theta+\mu_\ell) - E(\theta-\mu_\ell)\bigr], \label{eq:wierichs}\\
\mu_\ell &\;=\; \frac{(2\ell-1)\pi}{2R}, \label{eq:wierichs-shifts}\\
w_\ell &\;=\; \frac{(-1)^{\ell-1}}{4R\sin^2(\mu_\ell/2)}. \label{eq:wierichs-weights}
\end{align}
Equation~\eqref{eq:E-form} shows that the hypothesis of Eq.~\eqref{eq:wierichs} is met with $R=2$. Direct substitution gives the same shifts $\mu_1=\pi/4,\mu_2=3\pi/4$ and weights $w_1=(2+\sqrt 2)/4$, $w_2=-(2-\sqrt 2)/4$. Among all pairs of equispaced shifts that produce an exact rule for degree-$2$ trigonometric polynomials, this choice minimises $w_1^2+w_2^2$ and therefore the shot-noise variance of the gradient estimator under uniform per-circuit shot budgets.

\paragraph*{Two-term rule for the leaf phases.} Each leaf phase $\omega_j$ enters the prepared state through a single factor $e^{i\omega_j}$ acting on the leaf basis state $\ket{j}$, so its generator is the rank-one projector
\begin{equation}
G^\omega_j \;=\; -\,\ket{j}\!\bra{j},
\end{equation}
with spectrum $\sigma(G^\omega_j) = \{-1,\,0\}$. The exponential $V_j(\omega_j) = e^{-i\omega_j G^\omega_j} = (I-\ket{j}\!\bra{j}) + e^{i\omega_j}\ket{j}\!\bra{j}$ then makes $V_j^\dagger H V_j$ linear in $\{(I-\ket{j}\!\bra{j}),\,\cos\omega_j\,\ket{j}\!\bra{j},\,\sin\omega_j\,\ket{j}\!\bra{j}\}$, so
\begin{equation}\label{eq:E-form-omega}
E(\omega_j) \;=\; a_0 + a_1\cos\omega_j + b_1\sin\omega_j,
\end{equation}
a real trigonometric polynomial of degree at most $1$ in $\omega_j$. Specialising Eq.~\eqref{eq:shift-diff} to $R=1$ with $s = \pi/2$ gives $E(\omega_j+\tfrac{\pi}{2}) - E(\omega_j-\tfrac{\pi}{2}) = 2\,(b_1\cos\omega_j - a_1\sin\omega_j) = 2\,\partial_{\omega_j}E$, which is the two-term rule of Eq.~\eqref{eq:shift-rule-omega}; equivalently, it is the $R=1$ instance of Eq.~\eqref{eq:wierichs} with $\mu_1 = \pi/2$ and $w_1 = 1/2$. The rule applies uniformly to every leaf, irrespective of $n$ and of the active-leaves set $S$, and combines with Eq.~\eqref{eq:shift-rule} to give an exact Euclidean gradient at a cost of $4|\bm\theta| + 2|\bm\omega|$ energy evaluations.

\paragraph*{Two-evaluation reduction for real targets.}\label{app:shift-real}

When the prepared state and the observable are real, the amplitude-only chart $\bm\omega=\bm 0$ that underlies ground-state search and imaginary-time evolution, the four-term rule of Eq.~\eqref{eq:shift-rule} reduces to two evaluations per tree parameter. The argument is the self-inverse-decomposition of Refs.~\cite{Mari2021, Kottmann2021}, specialised to the rank-2 generator $G_i$ of Eq.~\eqref{eq:shift-generator}.

\paragraph*{Self-inverse halves.} Let $P_0^{(i)} := I - G_i^2 = I - P_i$ be the projector onto the null eigenspace of $G_i$ (the subspace where the controls do not match $b_i$). Define
\begin{equation}\label{eq:self-inverse-halves}
G_i^{\pm} \;:=\; G_i \pm P_0^{(i)} .
\end{equation}
Each half is a self-inverse involution: $G_i^\pm$ acts as $\pm 1$ on the null space and as $G_i$ on its support, so $(G_i^\pm)^2 = P_0^{(i)} + G_i^2 = I$. The two halves commute (they are simultaneously diagonal in the eigenbasis of $G_i$), and $G_i = \tfrac12(G_i^+ + G_i^-)$, so the rotation factorises into two commuting self-inverse rotations,
\begin{equation}\label{eq:U-factor}
U_i(\theta_i) \;=\; e^{-i\theta_i G_i} \;=\; e^{-i\frac{\theta_i}{2} G_i^+}\,e^{-i\frac{\theta_i}{2} G_i^-} .
\end{equation}
Each factor obeys an ordinary two-term shift rule with shift $\pi/2$ in its own half-angle. Treating the two as independent and recombining reproduces Eq.~\eqref{eq:shift-rule}; this is the generic four-evaluation case, valid for any Hermitian $H$ and any state.

\paragraph*{Collapse under reality.} The generator $G_i = -\ket{b_i}\!\bra{b_i}\otimes Y_{q_i}$ is purely imaginary, so complex conjugation $K$ obeys $K G_i K = -G_i$, hence $K P_\pm K = P_\mp$ for the rank-one $\pm 1$ eigenprojectors of $G_i$ and $K P_0^{(i)} K = P_0^{(i)}$. When $\ket\psi$ and $H$ are real, $\ket{\psi_-} = K\ket{\psi_+}$ and $\ket{\psi_0}$ is real, and the two harmonic amplitudes that distinguish the $G_i^+$ and $G_i^-$ half-derivatives become equal. The two half-rules therefore contribute identically, and the four evaluations of Eq.~\eqref{eq:shift-rule} collapse to a single shifted pair carried by the $G_i^+$ half alone:
\begin{equation}\label{eq:shift-real-app}
\partial_{\theta_i}E \;=\; \braket{H}_{V_+} - \braket{H}_{V_-}, \qquad V_\pm \;=\; U_i\!\bigl(\theta_i \pm \tfrac{\pi}{4}\bigr)\,U_0^{(\pm)}, \qquad U_0^{(\pm)} \;=\; e^{\pm i\frac{\pi}{4} P_i}.
\end{equation}
The appended factor $U_0^{(\pm)}$ is exactly $e^{-i\frac{\pi}{4} G_i^\mp}$ stripped of the rotation $U_i(\mp\pi/4)$ that the shifted evaluation already supplies; up to an irrelevant global phase it is the diagonal phase $e^{\pm i\frac{\pi}{4} P_i}$, a phase $e^{\pm i\pi/4}$ on the subspace where the ancestor qubits match $b_i$ and the identity elsewhere.

\paragraph*{The correction reuses the phase-block compiler.} Because $P_i = \ket{b_i}\!\bra{b_i}\otimes I$ is diagonal in the computational basis and is supported entirely on the control (ancestor) qubits of node $i$, the unitary $U_0^{(\pm)}$ commutes with $U_i$ and with every gate acting at or below the target qubit $q_i$; it can therefore be moved to the end of the circuit and folded into the observable. There it coincides with the diagonal $R_z$ phase block of the complex ansatz, restricted to the qubits above $q_i$ (the ancestors, not including the target), with the per-leaf phases chosen to place $\pm\pi/4$ on every active leaf in the subtree of node $i$ and $0$ on the rest of the support. Concretely, the subtree-indicator phase pattern $\varphi_i(s) = \pm\tfrac{\pi}{4}\,\mathbf{1}[s\in\mathrm{subtree}(i)]$ lies in the range of the affine $\bm\omega\mapsto$ leaf-phase map synthesised by the pruning compiler, so $U_0^{(\pm)}$ is compiled by the same exact-integer Gaussian elimination of Appendix~\ref{app:compiler} and inherits the linear-in-$k$ CNOT scaling of Eq.~\eqref{eq:cnot-bound}. No multi-controlled gate and no ancilla are introduced. As with the four-term rule, and as with the fermionic UCCSD shift rule~\cite{Anselmetti2021, Kottmann2021}, each evaluated derivative carries its own correction block (its own subtree pattern $b_i$), appended for that gradient evaluation only.

\paragraph*{Cost and scope.} The real-target tree gradient thus costs $2|\bm\theta|$ energy evaluations, and one quantum-natural-gradient or imaginary-time step is computable from $2P+1$ evaluations rather than $4P+1$. The reduction is specific to the real amplitude chart: the complex-amplitude real-time runs, which carry genuine leaf phases $\bm\omega\neq\bm 0$, break the reality condition $\ket{\psi_-}=K\ket{\psi_+}$ and retain the four-term rule of Eq.~\eqref{eq:shift-rule}. Eq.~\eqref{eq:shift-real-app} has been verified, with $U_0^{(\pm)}$ compiled through the phase-block construction above, against the four-term gradient to machine precision on the symmetry-adapted molecular supports.

\section{Pruning compiler: gauge-fixing elimination and end-to-end algorithm}\label{app:compiler}

This appendix collects the full algorithmic content of the pruning compiler summarised in \sref{sec:gauge}, and proves that its gauge-fixing step runs in exact integer arithmetic.

\paragraph*{Tree parameters and circuit parameters.} The cascade is assembled one level at a time. At level $\ell$ (acting on target qubit $q=n-1-\ell$, with $2^\ell$ control patterns) the uniformly controlled $R_y$ rotation is synthesised as a row of $2^\ell$ elementary $R_y$ gates interleaved with CNOTs whose controls run in Gray-code order. The $2^\ell$ angles $\bm\theta^{(\ell)}$ carried by the tree nodes at that level are the tree parameters, and the $2^\ell$ elementary $R_y$ angles $\bm\alpha^{(\ell)}$ actually emitted into the cascade the circuit parameters. The two are related by the Gray-reindexed Walsh--Hadamard transform~\cite{Mottonen2004, Mottonen2005},
\begin{equation}\label{eq:walsh-block}
\bm\alpha^{(\ell)} \;=\; 2^{1-\ell}\,G_\ell\,\bm\theta^{(\ell)},
\qquad
G_\ell \;=\; P_\ell\,H_\ell .
\end{equation}
Here $H_\ell=H_1^{\otimes\ell}$ is the Sylvester--Hadamard (Walsh--Hadamard) matrix, $(H_\ell)_{ij}=(-1)^{\langle i,j\rangle}$, and $P_\ell$ is the permutation that reorders its rows into Gray-code order, $(P_\ell)_{i,\mathrm{gray}(i)}=1$. Columns $j$ index the tree parameters; rows $i$ index the circuit parameters (the elementary gates, in Gray order). Each column inherits the active/fixed/inactive label of its tree node (\sref{sec:classify}). Stacking the blocks $G_\ell$ over all levels gives the block-diagonal affine map from tree parameters to circuit parameters, and gauge-fixing is performed independently within each block.

\paragraph*{Stage 1: gauge-fixing by integer Gaussian elimination.} The inactive (gauge) columns of a block do not affect the prepared state, so they may be chosen to zero as many circuit parameters as possible. The elimination operates on the augmented integer array $(\,G_\ell \mid \bm c\,)$, the constant column $\bm c\in\mathbb Z^{2^\ell}$ starting at $\bm 0$:
\begin{itemize}
\item[(i)] \textbf{Fold the fixed nodes.} Each fixed column $j$, pinned to $v_j\in\{0,1\}$ (in units of $\pi/2$), is added into the constant, $\bm c\mathrel{+}=v_j\,G_\ell[:,j]$, and removed.
\item[(ii)] \textbf{Eliminate the inactive columns.} For each inactive column $c$ in turn, take the first row $r$ (in Gray order) with $(G_\ell)_{rc}\neq0$ as pivot, $p:=(G_\ell)_{rc}$, and clear that column from every other row,
\begin{equation}\label{eq:elim}
\mathrm{row}_o \;\leftarrow\; \mathrm{row}_o \;-\; \frac{(G_\ell)_{oc}}{p}\;\mathrm{row}_r
\qquad (o\neq r),
\end{equation}
acting on $G_\ell$ and $\bm c$ simultaneously; the spent pivot row $r$ is then deleted.
\end{itemize}
This is ordinary Gaussian elimination, restricted to the gauge columns and carried out on the Walsh--Hadamard block. Each inactive node consumes one elementary $R_y$ gate (the pivot row), so the number of surviving circuit parameters at level $\ell$ drops by exactly the number of inactive nodes there. A surviving row that is identically zero is an angle pinned to $0$ and is deleted from the circuit; each remaining row $i$ gives a surviving circuit parameter
\begin{equation}\label{eq:surviving-angle}
\alpha_i \;=\; 2^{1-\ell}\!\sum_{a}(G_\ell)_{ia}\,\theta_a \;+\; \tfrac{\pi}{2}\,2^{1-\ell} c_i ,
\end{equation}
an affine function of the active tree parameters $\theta_a$ alone. The word ``Gaussian elimination'' is meant literally here, with one caveat, now made precise: it is performed in exact integer arithmetic.

\begin{lemma}[Exact integer elimination]\label{lem:integer-elim}
Carried out on the integer array $(\,G_\ell\mid\bm c\,)$, the elimination~\eqref{eq:elim} uses only integer row operations. At every step the pivot $p$ is $\pm$ a power of two and divides every entry of its pivot row, so $\mathrm{row}_r/p$ is an integer vector and the working array stays integer-valued throughout. Consequently every surviving circuit parameter~\eqref{eq:surviving-angle} is an exact integer-affine function of the active tree parameters (up to the fixed scale $2^{1-\ell}$ and the unit $\pi/2$), and which circuit parameters vanish is decided exactly, with no floating-point tolerance.
\end{lemma}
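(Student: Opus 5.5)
The approach is an explicit invariant for the working array, proved by induction on the number of inactive columns already eliminated. The Gray permutation $P_\ell$ only reorders rows. It therefore affects which row is chosen as pivot (the first nonzero one in Gray order) but not the set of row vectors. So I would work with $H_\ell$ directly and read column $j$ as the Walsh character $\chi_j(i)=(-1)^{\langle i,j\rangle}$ on $\mathbb F_2^\ell$. Step (i) of the compiler, folding the fixed columns, only adds the integer multiples $v_j\in\{0,1\}$ of integer columns into $\bm c$, so integrality of $\bm c$ is immediate at that stage. All the content lies in step (ii). Once (ii) is shown to use only integer row operations, the claims about~\eqref{eq:surviving-angle} follow: every coefficient in it is an integer entry of the final array, so an angle vanishes exactly when its integer row is zero.

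\emph{Invariant to prove.} Suppose $t$ inactive columns $c_1,\dots,c_t$ have been eliminated. I would show that every remaining row $o$ of $(G_\ell\mid\bm c)$, restricted to the not-yet-eliminated columns, is either identically zero or equal to $\varepsilon_o\,2^{s_o}$ times a $\{0,\pm1\}$-pattern whose nonzero entries all have the same magnitude. The natural concrete form is $\varepsilon_o 2^{s_o}\chi_d(o)\,\mathbf 1[d\in D_o]$ for some subset $D_o$ of the remaining columns, with $s_o\le t$.

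The base case $t=0$ is the Hadamard matrix, whose entries are all $\pm1$. For the inductive step, take a pivot $p=\pm2^{s_r}$ in column $c$. Because every nonzero entry of row $r$ equals $\pm2^{s_r}$, the quotient $\mathrm{row}_r/p$ has entries in $\{0,\pm1\}$, so the pivot divides its row. Each update~\eqref{eq:elim} then subtracts $\pm(\text{entry of row }o)$ times a $\{0,\pm1\}$-vector, which stays integral. The nontrivial half of the induction is that the updated row $o$ again has all nonzero entries of a single power-of-two magnitude. The first computation is the one-step case on bare characters:
\[
\chi_d(o)-\chi_c(o)\chi_c(r)\chi_d(r)=\chi_d(o)\bigl(1-\chi_{c+d}(o+r)\bigr)\in\{0,\pm2\}.
\]
This already has the required shape: magnitude $2$, and the support is cut out by the linear condition $\langle o+r,\,c+d\rangle=1$.

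\emph{Structural reason, used to make the induction go through.} I would identify the reduced row $o$ with the character table after successive projections that kill the eliminated characters. These projections are supported on cosets of the subgroup cut out by the pivot differences $o+r_1,\dots$. On such a subgroup each surviving character either vanishes identically or has constant modulus $2^{s}$, with $2^{s}$ the index of the relevant subgroup. Concretely, the claim to prove is that the reduced row $o$ equals $\sum_{u\in V_o}\pm\chi_\cdot(u)$, where $V_o$ is an affine subspace of dimension $s_o$ spanned by pivot differences. The sum $\sum_{u\in V}\chi_d(u)$ over an affine subspace $V$ is either $0$ or $\pm|V|$, a power of two; this is standard character orthogonality on $\mathbb F_2$-subspaces. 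Combining an affine subspace with its translate by a new pivot difference gives again an affine subspace, of dimension one higher, or a cancellation to zero. That closes the induction.

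\emph{Main obstacle.} The hardest step is verifying that this affine-subspace representation survives an update in which the current entry of row $o$ and the pivot entry have different magnitudes $2^{s_o}\neq2^{s_r}$. There the ratio $(G_\ell)_{oc}/p$ is not $\pm1$, and the affine-sum description must absorb a rescaling. My first attempt would be to show this case cannot occur: rows that are still nonzero in column $c$ all carry the same $s$, because $s$ counts the pivots whose support they have intersected. I would prove this by tracking $V_o$ and $V_r$ together. If that fails, the fallback is to observe that the ratio is still $\pm$ a power of two, which keeps everything integral as long as $s_o\ge s_r$. The remaining work is then a monotonicity argument that the first nonzero row in Gray order has minimal $s$.
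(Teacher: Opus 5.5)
\textbf{The invariant fails, so the induction cannot close.} Your one-step computation is correct. After two eliminations every row still has your form with $s_o=1$. The shape breaks at the third step.

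Take $\ell=3$, inactive columns $0,1,2$ processed in that order, and the rows of $H_3$ scanned in Gray order $0,1,3,2,6,7,5,4$. The pivots are the rows that start as $\chi_0,\chi_1,\chi_3$, with values $1,-2,-2$. After the third step, the row that started as $\chi_6$ is
\[
\chi_6-\chi_0-\chi_3+\chi_1=(0,0,0,-4,-2,-2,2,-2).
\]
On the surviving columns $3,\dots,7$ this row has nonzero entries of magnitudes both $4$ and $2$, which contradicts the form $\varepsilon_o2^{s_o}\chi_d(o)\mathbf 1[d\in D_o]$. Correspondingly, its index set $\{0,1,3,6\}$ is not an affine plane of $\mathbb F_2^3$, since $0\oplus1\oplus3\oplus6=4\neq0$.

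The step that fails is ``an affine subspace combined with its translate by a new pivot difference''. The pivot row being subtracted carries its own index set, $\{1,3\}$ with direction $2$. This is not a translate of row $o$'s set, $\{0,6\}$ with direction $6$.

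This is not an artefact of the Gray order: lowest-index scanning gives the same phenomenon, with the $\chi_7$ row becoming $(0,0,0,4,-2,2,2,2)$. Your ``first attempt'' in the obstacle paragraph also fails. In the same array, column $7$ contains $-4$ in the $\chi_2$ row and $\pm2$ in the $\chi_4,\dots,\chi_7$ rows.

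\textbf{What actually keeps the elimination integral.} In this example it is the pivoting rule, not any uniformity of row shape. The mixed row's $-4$ sits in column $3$. The $\chi_2$ row, $(0,0,0,-4,0,0,0,-4)$, also has $-4$ there and comes earlier in the scan. So at the next step the mixed row can only be selected as pivot at a $\pm2$ entry.

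A proof along your lines must therefore show that the first-nonzero rule always lands on an entry dividing the rest of its row, which is in general non-uniform. That is exactly the ``monotonicity'' statement you defer, and it cannot rest on the uniform-row invariant.

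\textbf{How the paper proceeds instead.} It does not track entrywise row shapes at all.
\begin{itemize}
\item It writes the $k$-th pivot as the ratio $\det H_\ell[R_k,C_k]/\det H_\ell[R_{k-1},C_{k-1}]$.
\item It evaluates these minors through the factorization $H_\ell=Z^\top DZ$, with $Z=J^{\otimes\ell}$ and $D=\operatorname{diag}((-2)^{|i|})$, together with Cauchy--Binet.
\item It uses Gale's theorem to kill every Cauchy--Binet term except $K=R_k$, because the greedily chosen pivot rows form the componentwise-least basis.
\item It uses a unimodularity lemma for $Z$ on greedy bases to get $\det Z[R_k,C_k]=\pm1$.
\end{itemize}
In this route the pivot-selection order is built into the argument from the start, rather than being handled by a separate lemma afterwards.
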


\noindent The only step in~\eqref{eq:elim} that can leave the integers is the division of the pivot row by $p$; once $p$ divides every entry of its own row, each multiplier $(G_\ell)_{oc}/p$ is an integer and every update preserves integrality, so the lemma reduces to a single divisibility statement about the Walsh--Hadamard matrix. The Gray permutation $P_\ell$ only relabels rows and so leaves both the pivot order and every pivot value unchanged; the claim is therefore really about $H_\ell$, and it is proved (in the sharper form $p=\pm 2^{\,|r|}=\pm\gcd(\mathrm{row}_r)$, where $|r|$ is the Hamming weight of the pivot row index) in Appendix~\ref{app:integrality}.

\paragraph*{Stage 2: CNOT cancellation.} The triangular cascade has palindromic structure: each pair of CNOTs on a given control--target pair brackets a sub-cascade of elementary $R_y$ gates. When Stage~1 has emptied the sub-cascade between two identical CNOTs, the two CNOTs compose to the identity and cancel. The cancellation is local and is iterated to fixpoint within each level.

\paragraph*{Constant bits and qubit reordering.} A constant-bit removal pass runs before the gauge fixing: a bit position whose value is the same for every leaf in $S$ is fully redundant, so the corresponding qubit is initialized to that fixed value and removed from the variational tree, reducing the effective qubit count. The number of inactive nodes, and therefore the size of the gauge group available to Stage~1, also depends on the order in which the varying qubits are placed in the tree, and the compiler searches over their permutations with a depth-first branch-and-bound procedure that maximizes the inactive-node count, at a cost that grows exponentially in the number of varying bits. The CNOT bounds of \sref{sec:cnot-bound} assume this reordering.

\paragraph*{End-to-end algorithm.} The pipeline takes as input the qubit count $n$ and the active-leaves set $S$, and returns a parametrized circuit (Algorithm~\ref{alg:compile}). Both the surviving angles and the CNOT count depend only on $n$ and $S$.

\begin{figure}[h]
\centering
\fbox{\begin{minipage}{0.95\columnwidth}
\textbf{Algorithm 1.} \textsc{PruneAnsatz}$(n, S)$\\
\textbf{Input:} qubit count $n$, active-leaves set $S\subset\{0,1\}^n$.\\
\textbf{Output:} parametrized circuit on the active parameters $\bm\theta_A$.
\begin{enumerate}
    \item[1.] Strip constant bit positions from $S$; let $n'\le n$ be the residual qubit count.
    \item[2.] (Optional; assumed by the CNOT bounds of \sref{sec:cnot-bound}.) Search for a qubit permutation $\pi^\star$ that maximises the inactive-node count, by depth-first branch-and-bound over the symmetric group on the $n'$ varying bits; otherwise $\pi^\star$ is the identity.
    \item[3.] Build the binary tree on $\pi^\star(S)$ and label every internal node \textsc{Active}/\textsc{Fixed}/\textsc{Inactive} (\sref{sec:classify}).
    \item[4.] Form the level-block matrices $G_\ell$ and the constant vector $\bm c$ of Eq.~\eqref{eq:walsh-block}; for each level, perform the exact integer Gaussian elimination~\eqref{eq:elim} on the inactive columns to maximise the number of zeroed circuit parameters (Lemma~\ref{lem:integer-elim}).
    \item[5.] Emit the triangular cascade level by level, deleting every $R_y$ whose angle was zeroed in step~4 and cancelling pairs of CNOTs that bracket an empty sub-cascade.
\end{enumerate}
\end{minipage}}
\refstepcounter{algorithm}\label{alg:compile}
\end{figure}

\section{Integrality of the gauge-fixing elimination}\label{app:integrality}

This appendix proves Lemma~\ref{lem:integer-elim}: the gauge-fixing elimination~\eqref{eq:elim} stays in the integers, and each pivot is $\pm$ a power of two equal to the gcd of its row. Because the Gray permutation $P_\ell$ in $G_\ell=P_\ell H_\ell$ only relabels rows (it permutes the pivot order but changes no pivot value), it suffices to treat the lowest-index Gaussian elimination of an arbitrary set of columns of the Walsh--Hadamard matrix $H_\ell$ itself. Throughout, $\langle i,j\rangle$ is the dot product of the binary strings $i,j\in\{0,1\}^\ell$ modulo $2$, $|i|$ is the Hamming weight of $i$, and $i\subseteq j$ means the bits of $i$ are contained in those of $j$.

The argument has three steps. First, a factorization $H_\ell=Z^\top D Z$ that exposes all the arithmetic. Second, a unimodularity lemma for the integer triangular factor $Z$. Third, a flag-minor identity that combines the two and pins each pivot to $\pm2^{|r|}$.

\paragraph*{An exact triangular factorization.}

Let $J=\left(\begin{smallmatrix}1&1\\0&1\end{smallmatrix}\right)$ and set
\[
Z \;=\; J^{\otimes\ell},
\qquad
D \;=\; \operatorname{diag}\!\big((-2)^{|i|}\big)_{i\in\{0,1\}^\ell}.
\]
The matrix $Z$ is the zeta matrix of the Boolean lattice: $Z_{ij}=[\,i\subseteq j\,]$, unit upper-triangular with $0,1$ entries, hence integer and unimodular. The diagonal $D$ records a signed power of two per index.

\begin{lemma}[Triangular factorization of the Walsh--Hadamard matrix]\label{lem:factorization}
$H_\ell = Z^\top D\,Z$.
\end{lemma}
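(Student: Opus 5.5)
Both sides of $H_\ell = Z^\top D\,Z$ are $\ell$-fold Kronecker products, so the plan is to reduce the statement to a single $2\times2$ identity. By definition $H_\ell=H_1^{\otimes\ell}$ and $Z=J^{\otimes\ell}$, hence $Z^\top=(J^\top)^{\otimes\ell}$. Since $(-2)^{|i|}=\prod_{b}(-2)^{i_b}$ factorizes over the bits of $i$, the diagonal matrix $D$ is itself the tensor power $D_1^{\otimes\ell}$ with $D_1=\operatorname{diag}(1,-2)$. The mixed-product property $(A\otimes B)(C\otimes E)=AC\otimes BE$ then gives
\[
Z^\top D\,Z=(J^\top D_1 J)^{\otimes\ell},
\]
so it suffices to prove the case $\ell=1$, namely $J^\top D_1 J=H_1$.

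The base case is a one-line multiplication:
\[
J^\top D_1=\begin{pmatrix}1&0\\1&-2\end{pmatrix},
\qquad
\begin{pmatrix}1&0\\1&-2\end{pmatrix}\begin{pmatrix}1&1\\0&1\end{pmatrix}=\begin{pmatrix}1&1\\1&-1\end{pmatrix}=H_1 .
\]

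As an independent check, and because the later flag-minor step will reuse the entrywise form, I would also verify the identity entrywise. The $(i,j)$ entry of the right-hand side is
\[
\sum_{k\subseteq i\cap j}(-2)^{|k|},
\]
and counting subsets $k$ of $i\cap j$ by size gives $\sum_{m}\binom{|i\cap j|}{m}(-2)^m=(1-2)^{|i\cap j|}=(-1)^{\langle i,j\rangle}$, which is exactly $(H_\ell)_{ij}$.

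There is no real obstacle here. The only care needed is bookkeeping: the index conventions for the Kronecker product (most significant bit first) must be used consistently for $H_\ell$, $Z$, and $D$, so that the bitwise factorization of $D$ lines up with that of $Z$.
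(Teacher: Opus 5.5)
Your proof is correct. The paper's proof is exactly your ``independent check'': it expands $(Z^\top D Z)_{ij}=\sum_{k\subseteq i\cap j}(-2)^{|k|}$ and factors the sum bit by bit as $\prod_{b\in i\cap j}(1-2)$, where you use the binomial theorem instead. Your main argument, the mixed-product reduction to $J^\top D_1 J=H_1$ with $D=D_1^{\otimes\ell}$, is the same computation carried out on the Kronecker factors rather than on entries, and it makes clearer that the identity is simply the $\ell=1$ case tensored up.
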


\noindent\textbf{Proof.}
Expanding the product entrywise and using $Z_{ki}=[k\subseteq i]$,
\[
(Z^\top D Z)_{ij}
=\sum_{k}Z_{ki}\,D_{kk}\,Z_{kj}
=\sum_{k\subseteq i\cap j}(-2)^{|k|}
=\prod_{b\in i\cap j}\big(1+(-2)\big)
=(-1)^{|i\cap j|}
=(-1)^{\langle i,j\rangle},
\]
where the third equality factorizes the sum over subsets $k$ of $i\cap j$ bit by bit. This is $(H_\ell)_{ij}$.\hfill$\square$

This is the LU factorization of $H_\ell$ with unit lower-triangular factor $L=Z^\top$ and upper factor $U=DZ$:
\begin{equation}\label{eq:factorization-display}
H_\ell \;=\; \underbrace{Z^\top}_{L}\;\underbrace{D\,Z}_{U},
\qquad
Z=J^{\otimes\ell},\quad D=\operatorname{diag}\big((-2)^{|i|}\big).
\end{equation}
In particular the Smith normal form of $H_\ell$ is $\operatorname{diag}(2^{|i|})$: the elementary divisors are exactly the powers of two $2^{0},2^{1},\dots,2^{\ell}$, with multiplicities $\binom{\ell}{0},\dots,\binom{\ell}{\ell}$. This already shows that only powers of two can appear as pivots; the work below is to show that the lowest-index pivot order realises them exactly.

\paragraph*{Greedy unimodularity of the triangular factor.}

Fix a set of columns $C$ of $Z$. Scanning rows in increasing index order and keeping a row whenever it is independent of those kept so far yields the greedy least-index basis $R$ of the column space, the row set the lowest-index elimination actually pivots on.

\begin{lemma}[Greedy unimodularity]\label{lem:greedy-unimodular}
For every column set $C$ of $Z=J^{\otimes\ell}$ with greedy least-index basis $R$, the square submatrix $Z[R,C]$ is unimodular: $\det Z[R,C]=\pm1$.
\end{lemma}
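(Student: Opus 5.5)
The plan is to argue by induction on $\ell$, using the recursive block form of $Z=J^{\otimes\ell}$. Throughout, the claim $\det Z[R,C]=\pm1$ implicitly includes $|R|=|C|$. Since $Z$ is invertible, the columns in $C$ are linearly independent, so the rank is $|C|$ and the greedy basis $R$ has exactly $|C|$ rows.

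\emph{Base case and block form.} For $\ell=0$ the matrix is $Z=(1)$ and the claim is trivial. For the inductive step, split indices by their top bit, so that rows and columns with top bit $0$ precede those with top bit $1$. Writing $Z'=J^{\otimes(\ell-1)}$, this gives
\[
Z=\begin{pmatrix} Z' & Z'\\ 0 & Z'\end{pmatrix}.
\]
Write $C=C_0\sqcup C_1$ according to the top bit, and let $C_0',C_1'$ be the corresponding lower-index sets in $\{0,1\}^{\ell-1}$. Because indices are scanned in increasing order, the greedy scan visits all top-bit-$0$ rows before any top-bit-$1$ row.

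\emph{Unimodular column reduction.} For each $j\in D:=C_0'\cap C_1'$, subtract column $(0,j)$ from column $(1,j)$. This is a unimodular column operation, so it preserves $|\det Z[R,C]|$. It also preserves which row subsets of the submatrix on $C$ are linearly independent, and hence preserves the greedy basis $R$. After the operation, set $A:=C_0'\cup C_1'$, a set of distinct indices. The top-bit-$0$ rows are then $Z'$ restricted to $A$ on the non-$D$ block and vanish on the modified $D$-columns. The top-bit-$1$ rows restricted to the modified $D$-columns are $Z'[\cdot,D]$. Ordering the columns as (non-modified, modified) makes the matrix block lower-triangular, with diagonal blocks $Z'[\cdot,A]$ from the top rows and $Z'[\cdot,D]$ from the bottom rows.

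\emph{Greedy basis splits, and the determinant factorizes.} The top-bit-$0$ phase of the scan is exactly the greedy least-index basis $R_0$ of $Z'$ on the column set $A$. Since $Z'$ is invertible, $|R_0|=|A|$, and the rows of $R_0$ span every vector supported on the $A$-columns. A subsequent top-bit-$1$ row is therefore independent of the rows kept so far precisely when its $D$-part is independent of the $D$-parts of the previously kept bottom rows. Hence the bottom phase is the greedy least-index basis $R_1$ of $Z'$ on the column set $D$. The block-triangular structure then gives
\[
\det Z[R,C]=\pm\det Z'[R_0,A]\cdot\det Z'[R_1,D]=\pm1
\]
by the induction hypothesis.

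The main obstacle is the quotient step: one must check that the greedy choice among bottom rows, made modulo the span of the top rows, is exactly greedy on the $D$-block alone. This relies on $R_0$ spanning the full coordinate space on the $A$-columns, which in turn uses the invertibility of $Z'$ in a weak form: a subset of columns of an invertible matrix is independent, so $\operatorname{rank} Z'[\cdot,A]=|A|$.
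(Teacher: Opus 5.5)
Your proof is correct and follows the paper's argument essentially step for step: induction on $\ell$ via the most-significant-bit split of $Z$, the unimodular column reduction on the shared indices $C_0'\cap C_1'$, the resulting factorisation $\det Z[R,C]=\pm\det Z'[R_0,C_0'\cup C_1']\,\det Z'[R_1,C_0'\cap C_1']$, and the quotient argument identifying the top and bottom phases of the scan as greedy least-index bases of $Z'$. You also state explicitly that the column operation preserves the greedy row selection and that both diagonal blocks are square (by invertibility of $Z'$); the paper leaves both points implicit.
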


\noindent\textbf{Proof.}
By induction on $\ell$. For $\ell=1$, $Z=J=\left(\begin{smallmatrix}1&1\\0&1\end{smallmatrix}\right)$ and every greedy minor is $\pm1$. For the step, split indices on the most significant bit and write
\[
Z_\ell=\begin{pmatrix}A&A\\0&A\end{pmatrix},\qquad A:=Z_{\ell-1},
\]
so that rows and columns carrying a leading $0$ (the ``top'' block) have strictly smaller index than those carrying a leading $1$ (the ``bottom'' block); the greedy scan therefore exhausts the top rows before any bottom row. Let $Y_0$ be the lower-order parts of the columns of $C$ with leading bit $0$, and $Y_1$ those with leading bit $1$.

Apply the unimodular column operations subtracting each leading-$1$ column from the matching leading-$0$ column when both are present, i.e.\ on the shared part $Y_0\cap Y_1$. Ordering the columns as $\{(0,y):y\in Y_0\}$, then $\{(1,y):y\in Y_1\setminus Y_0\}$, then $\{(1,y):y\in Y_0\cap Y_1\}$, brings the matrix to the block-staircase form
\[
\begin{pmatrix}
A[R_0,\,Y_0] & A[R_0,\,Y_1\setminus Y_0] & 0\\[2pt]
0 & A[R_1,\,Y_1\setminus Y_0] & A[R_1,\,Y_0\cap Y_1]
\end{pmatrix}.
\]
The bottom-only columns $\{(1,y):y\in Y_0\cap Y_1\}$ vanish on the top rows, so the generalized Laplace expansion along them has a single surviving term, giving
\begin{equation}\label{eq:greedy-factorization}
\det Z_\ell[R,C]
=\pm\,\det A\big[R_0,\,Y_0\cup Y_1\big]\;\cdot\;\det A\big[R_1,\,Y_0\cap Y_1\big],
\end{equation}
where $Y_0\cup Y_1=Y_0\sqcup(Y_1\setminus Y_0)$. It remains to identify $R_0,R_1$ as greedy bases of $A$, so that induction applies to both factors.

The top rows are $(0,x)\mapsto\big(A[x,Y_0],\,A[x,Y_1\setminus Y_0],\,0\big)$. As $Y_0$ and $Y_1\setminus Y_0$ are disjoint there is no column repetition, so a top row is independent of its predecessors iff $A[x,\,Y_0\cup Y_1]$ is; hence $R_0$ is exactly the greedy least-index basis of $A$ on the columns $Y_0\cup Y_1$, and its span is the set of vectors that vanish on the $Y_0\cap Y_1$ coordinates. Reducing each bottom row $(1,x)\mapsto\big(0,\,A[x,Y_1\setminus Y_0],\,A[x,Y_0\cap Y_1]\big)$ modulo $\operatorname{span}(R_0)$ collapses it to its $Y_0\cap Y_1$ component $A[x,\,Y_0\cap Y_1]$; thus the greedy selection of bottom rows is the greedy least-index basis $R_1$ of $A$ on the columns $Y_0\cap Y_1$. Both factors in~\eqref{eq:greedy-factorization} are then $\pm1$ by the induction hypothesis, and so is their product.\hfill$\square$

\noindent The lowest-index pivot order is essential: $Z$ is not totally unimodular (its minors reach $3$ for $\ell\ge4$), so a generic choice of pivot rows would break unimodularity. It is precisely the greedy least-index choice, the one the compiler makes, that the lemma certifies.

\paragraph*{The pivots are powers of two.}

The two ingredients are now combined. Let $C_k$ be the first $k$ gauge columns processed and $R_k$ the rows pivoted on after $k$ steps. Standard fraction-free elimination identifies the $k$-th pivot with a ratio of leading minors,
\begin{equation}\label{eq:pivot-ratio}
p_k \;=\; \frac{\det H_\ell[R_k,\,C_k]}{\det H_\ell[R_{k-1},\,C_{k-1}]}.
\end{equation}
The next lemma evaluates these minors.

\begin{lemma}[Flag-minor identity]\label{lem:flag-minor}
For the greedy least-index basis $R_k$ of any column set $C_k$ of $H_\ell$,
\[
\det H_\ell[R_k,\,C_k] \;=\; (-2)^{\sum_{r\in R_k}|r|}\;\det Z[R_k,\,C_k]
\;=\;\pm\,2^{\sum_{r\in R_k}|r|}.
\]
\end{lemma}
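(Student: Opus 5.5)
The plan is to transfer the triangular factorization of Lemma~\ref{lem:factorization} to the submatrix $H_\ell[R_k,C_k]$ by row operations, instead of expanding the minor with Cauchy--Binet. Cauchy--Binet would produce a signed sum over intermediate index sets $K$, and the cross terms are hard to control. The row-operation route, by contrast, reduces each row of $H_\ell[\cdot,C_k]$ to a single scaled row of $Z$ modulo earlier pivot rows.

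\textbf{Step 1 (the same greedy basis for $H_\ell$ and $Z$).} From $H_\ell=Z^\top D Z$, the $i$-th row of $H_\ell[\cdot,C_k]$ expands as
\[
\sum_{m\subseteq i}(-2)^{|m|}\,Z[m,C_k]=(-2)^{|i|}Z[i,C_k]+\sum_{m\subsetneq i}(-2)^{|m|}Z[m,C_k].
\]
Every $m\subsetneq i$ has index $m<i$. The expansion is therefore unit lower triangular, up to the nonzero diagonal scalings $(-2)^{|i|}$. It follows by induction on $i$ that, for every prefix $\{0,\dots,i\}$, the rows of $H_\ell[\cdot,C_k]$ and the rows of $Z[\cdot,C_k]$ in that prefix span the same space. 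Consequently the greedy least-index scan selects the same row set $R_k$ for both matrices. This point has to be recorded, because the lemma is stated for the greedy basis of $H_\ell$ (the rows the elimination actually pivots on), while Lemma~\ref{lem:greedy-unimodular} concerns the greedy basis of $Z$.

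\textbf{Step 2 (row reduction).} Fix $r\in R_k$. By the greedy property, each row $Z[m,C_k]$ with $m<r$ lies in the span of the selected rows $Z[r',C_k]$ with $r'\in R_k$ and $r'\le m<r$. By Step 1 those selected $Z$-rows lie in the span of the selected $H_\ell$-rows with index below $r$. I will process $R_k$ in increasing order. At each stage, subtracting from row $r$ of $H_\ell[R_k,C_k]$ a combination of the earlier rows of the same submatrix turns it into $(-2)^{|r|}Z[r,C_k]$. These are elementary row operations, adding multiples of earlier rows, so they leave the determinant unchanged. The result is
\[
\det H_\ell[R_k,C_k]=\prod_{r\in R_k}(-2)^{|r|}\,\det Z[R_k,C_k].
\]
The multipliers may be rational, but that does no harm for a determinant identity.

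\textbf{Step 3.} Lemma~\ref{lem:greedy-unimodular} gives $\det Z[R_k,C_k]=\pm1$, since $R_k$ is the greedy basis of $Z$ on $C_k$ by Step 1. This yields $\pm2^{\sum_{r\in R_k}|r|}$. Substituting into Eq.~\eqref{eq:pivot-ratio} then gives $p_k=\pm 2^{|r_k|}$, which is what Lemma~\ref{lem:integer-elim} requires.

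The main obstacle is Step 2. The care needed is to make sure the lower-order terms $Z[m,C_k]$, $m<r$, are absorbed \emph{only} by selected rows of index strictly less than $r$, and never by $r$ itself or by later pivots. The greedy least-index property together with the strict inclusion $m\subsetneq r$, which forces $m<r$, is exactly what guarantees this. That is why the lemma is stated for the greedy basis rather than for an arbitrary row set.
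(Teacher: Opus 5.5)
Your proof is correct, and it takes a different route from the paper's.

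The paper expands $\det H_\ell[R_k,C_k]$ by Cauchy--Binet over $H_\ell=Z^\top(DZ)$. The unit lower-triangularity of $Z^\top$ kills every term with $K\not\preceq R_k$ componentwise. Gale's theorem, that the greedy basis lies componentwise below every basis, kills every $K\preceq R_k$ with $K\neq R_k$. Only the diagonal term $K=R_k$ survives. So the cross terms you expected to be hard to control are in fact disposed of in two lines, but only by importing Gale optimality from matroid theory.

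You replace both steps with one observation: the invertible lower-triangular factor $Z^\top D$ preserves prefix row spans. Hence $H_\ell[R_k,C_k]=M\,Z[R_k,C_k]$ with $M$ lower triangular and diagonal entries $(-2)^{|r|}$. This is entirely elementary. It also makes explicit a point the paper uses silently. The lemma is stated for the greedy basis of $H_\ell[\cdot,C_k]$, whereas both the Gale step as applied and Lemma~\ref{lem:greedy-unimodular} concern the greedy basis of $Z[\cdot,C_k]$. Your Step~1 shows the two coincide.

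What the paper's version buys in exchange is a purely minor-theoretic statement, that only the diagonal Cauchy--Binet term survives. That form plugs directly into the fraction-free pivot-ratio formula it feeds.
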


\noindent\textbf{Proof.}
By the factorization $H_\ell=Z^\top DZ$ and the Cauchy--Binet formula,
\[
\det H_\ell[R_k,C_k]
=\sum_{K}\det\!\big(Z^\top\big)[R_k,K]\;\det\!\big(DZ\big)[K,C_k]
=\sum_{K}\det Z[K,R_k]\;\Big(\textstyle\prod_{m\in K}(-2)^{|m|}\Big)\det Z[K,C_k],
\]
the sum over $k$-subsets $K$ of row indices. Since $L=Z^\top$ is unit lower-triangular, $\det Z[K,R_k]=\det L[R_k,K]$ vanishes unless $K\preceq R_k$ componentwise. By Gale's theorem the greedy least-index basis $R_k$ is componentwise $\le$ every basis of the column space of $C_k$; hence any $K\preceq R_k$ with $K\neq R_k$ fails to be a basis, so $\det Z[K,C_k]=0$. Only $K=R_k$ survives, where $\det L[R_k,R_k]=1$, leaving
\[
\det H_\ell[R_k,C_k]=(-2)^{\sum_{r\in R_k}|r|}\,\det Z[R_k,C_k].
\]
The second factor is $\pm1$ by Lemma~\ref{lem:greedy-unimodular}.\hfill$\square$

Substituting the flag-minor identity into~\eqref{eq:pivot-ratio} and using that $R_k=R_{k-1}\cup\{r_k\}$ adds exactly the new pivot row $r_k$,
\[
p_k=\frac{\pm2^{\sum_{r\in R_k}|r|}}{\pm2^{\sum_{r\in R_{k-1}}|r|}}=\pm\,2^{\,|r_k|}.
\]
Thus the lowest-index pivot at row $r_k$ is $\pm2^{|r_k|}$, a power of two equal to $\gcd(\mathrm{row}_{r_k})$: the gcd cannot be smaller, since the pivot row of $H_\ell$ lies in $2^{|r_k|}\mathbb Z^{2^\ell}$ by the Smith normal form, and cannot be larger, since the pivot divides the unimodular minor $\det Z[R_k,C_k]=\pm1$. Each elimination~\eqref{eq:elim} therefore divides the pivot row by a power of two that divides every one of its entries, so the working array stays integral and Lemma~\ref{lem:integer-elim} follows. The Gray permutation $P_\ell$, applied on the left in $G_\ell=P_\ell H_\ell$, merely relabels which physical row plays the role of $r_k$ and leaves every pivot value $\pm2^{|r_k|}$ unchanged. \hfill$\square$

\section{CNOT cost as a Gray-coordinate run count}\label{app:cnot-structure}

The surviving CNOT count of the pruned circuit has an exact combinatorial form: it counts maximal runs of ones in the Gray coordinates of the surviving elementary rotations. This is established here; it is the identity that turns the scaling bound (Appendix~\ref{app:cnot-evidence}) into the single inequality Eq.~\eqref{eq:runcount-conjecture}. Throughout, level $q$ refers to the uniformly controlled rotation on target qubit $q$, realised as the cascade $T(q,q)$ followed by the appended $\mathrm{CX}(q,0)$, with $T(0,q)=[R_y@q]$ and $T(d,q)=T(d{-}1,q)\,\mathrm{CX}(q,q{-}d)\,T(d{-}1,q)$. Number its $2^q$ elementary $R_y$ gates by position $r=0,\dots,2^q-1$; let $\mathcal S\subseteq\{0,\dots,2^q-1\}$ be the positions surviving Stage~1, and write $v_2(i)$ for the $2$-adic valuation of $i\ge1$ and $x_r:=\mathrm{gray}(r)_\beta\in\{0,1\}$ for a fixed bit $\beta$.

\begin{lemma}[Per-level independence]\label{lem:perlevel}
$N_{\mathrm{cx}}=\sum_{q} c_q$, where $c_q$ is the number of CNOTs of the level-$q$ block surviving Stage~2.
\end{lemma}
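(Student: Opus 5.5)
The plan is to show that the pruning compiler never creates or cancels CNOTs across the boundary between two levels. Once that holds, the total CNOT count is simply the sum of the per-level survivors. The argument rests on how Stage~2 is defined in Appendix~\ref{app:compiler}. A CNOT pair cancels only when the two CNOTs act on the same control--target pair and bracket an emptied sub-cascade, and that cancellation is iterated to a fixpoint \emph{within each level}. So I would first fix the circuit layout precisely. The full cascade is the concatenation over $q=n-1,\dots,0$ of the level blocks $B_q = T(q,q)\,\mathrm{CX}(q,0)$. Every CNOT in $B_q$ has target qubit $q$, and every elementary $R_y$ in $B_q$ acts on qubit $q$. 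It follows that the unpruned count is $\sum_q 2^q$ with one appended CNOT per level, which gives $2(2^n-1)$ in the dense limit as stated in Eq.~\eqref{eq:cnot-bound}.

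Second, I would check that Stage~1 acts block by block. The affine map from tree parameters to circuit parameters is block-diagonal over levels (Eq.~\eqref{eq:walsh-block}), and the elimination is run separately on each $G_\ell$. The set $\mathcal S$ of surviving positions at level $q$ therefore depends only on the classification of the level-$q$ nodes. In particular, deleting a rotation in one block leaves the rotation content of every other block unchanged.

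Third, and this is the main point, I would rule out cross-level cancellation. Suppose two CNOTs cancel. They must share both their control and their target. Every CNOT in $B_q$ targets qubit $q$, so two CNOTs from different blocks have different targets and can never form a cancelling pair. A subtler worry is that removing gates in one block might make CNOTs of adjacent blocks commute past each other and become adjacent. I would dispose of this by noting that any gate lying between two CNOTs with the same target $q$ also belongs to $B_q$, because the blocks are contiguous. Hence the bracketed sub-cascade, and whether it is empty, is decided inside $B_q$ alone. Even if someone allowed commutation across blocks, it could not produce a CNOT$(c,q)$ pair from two different blocks, since only $B_q$ contains target-$q$ CNOTs. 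The fixpoint iteration of Stage~2 therefore decomposes into independent iterations, one per block. The surviving count is then $c_q$ for each $q$, and summing over $q$ gives $N_{\mathrm{cx}}=\sum_q c_q$.

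I expect the main obstacle to be this non-interaction step, specifically the appended $\mathrm{CX}(q,0)$ at the block boundary. Its control is qubit $0$, and I would need to confirm that no CNOT with target $q$ appears in a neighbouring block. This follows directly from the target-$q$ invariant. If the paper's convention instead labels this gate as control $q$ with target $0$, I would redo the check under that orientation and rely on the per-level fixpoint definition of Stage~2 to exclude the pairing.
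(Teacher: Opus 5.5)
Your argument is correct and is essentially the paper's: every CNOT of the level-$q$ block has target $q$, so CNOTs from different levels are never identical, and Stage~2, which cancels only identical pairs, therefore acts within each level. Your extra remarks on block-wise Stage~1 and on contiguity are harmless but not needed. One aside is wrong, though the lemma does not use it: each level $q\ge 1$ carries $2^q$ CNOTs \emph{including} the appended $\mathrm{CX}(q,0)$ (the other $2^q-1$ come from $T(q,q)$), so the unpruned $R_y$ cascade has $\sum_{q=1}^{n-1}2^q=2^n-2$ CNOTs ($14$ at $n=4$), not $2(2^n-1)$.
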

\noindent\textbf{Proof.} Every CNOT of the level-$q$ block has target $q$, and levels use distinct target qubits, so two CNOTs of different levels are never identical; the Stage-2 rule, which removes only identical pairs, therefore acts within each level. $\square$

\begin{lemma}[Run-count identity]\label{lem:runcount}
For control type $t\in\{0,\dots,q-1\}$ put $\beta=q-1-t$ and let $R_{q,\beta}$ be the number of maximal runs of $1$ in $(x_r)_{r\in\mathcal S}$ (survivors in increasing position order). Then the number of surviving control-$t$ CNOTs is $2R_{q,\beta}$, and
\begin{equation}\label{eq:runcount-identity}
N_{\mathrm{cx}}\;=\;2\sum_{q}\sum_{\beta=0}^{q-1}R_{q,\beta}.
\end{equation}
\end{lemma}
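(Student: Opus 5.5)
The plan is to compute, level by level, which CNOTs of the block $T(q,q)\,\mathrm{CX}(q,0)$ survive Stage~2, and then to sum over levels using Lemma~\ref{lem:perlevel}. There are four steps: identify the control of each CNOT in the unpruned cascade; show that this control records a flip of one Gray-code bit; reduce the surviving CNOTs to a parity count; and read that count off as a number of runs.

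\textbf{Controls of the unpruned cascade.} I first unroll the recursion $T(d,q)=T(d{-}1,q)\,\mathrm{CX}(q,q{-}d)\,T(d{-}1,q)$. An induction on $d$ shows that the CNOT sitting between the elementary rotations at positions $r-1$ and $r$, for $1\le r\le 2^q-1$, has control $q-1-v_2(r)$. The base case is $d=1$, whose single CNOT has control $q-1$ and sits at $r=1$. In the inductive step, the middle CNOT of $T(d,q)$ sits at $r=2^{d-1}$, and the two copies of $T(d{-}1,q)$ shift positions by $0$ and $2^{d-1}$, which leaves $v_2$ unchanged for all $r<2^{d-1}$. The appended $\mathrm{CX}(q,0)$ has control $0$, which is type $\beta=q-1$, and it closes the sequence after position $2^q-1$.

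\textbf{Bit flips.} I then use the standard property of the reflected Gray code, $\mathrm{gray}(r)\oplus\mathrm{gray}(r-1)=e_{v_2(r)}$. Together with the previous step, this says that a control-$t$ CNOT occurs exactly where bit $\beta=q-1-t$ of the Gray coordinate flips. The Gray sequence starts at $\mathrm{gray}(0)=0$. Its last element $\mathrm{gray}(2^q-1)=e_{q-1}$ has only the top bit set, and the appended CNOT flips that bit, returning the sequence to $0$. So for each fixed $\beta$, the control-$t$ CNOTs of the full block are the transitions of the closed binary sequence $0,x_0,x_1,\dots,x_{2^q-1},0$.

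\textbf{Reduction to parity.} Next I analyse Stage~2. All CNOTs of the block share target $q$ and act only on qubits other than $q$ as controls, so any two of them commute. After Stage~1 deletes the rotations at positions outside $\mathcal S$, the CNOTs lying between two consecutive survivors $s<s'$ (or before the first survivor, or after the last) form a mutually commuting product. Within such a gap, iterated cancellation of identical pairs that bracket an empty sub-cascade leaves exactly one control-$t$ CNOT if the number of flips of bit $\beta$ in the gap is odd, and none if it is even. That parity equals $x_s\oplus x_{s'}$, with the boundary values taken as $0$. Hence the surviving control-$t$ CNOTs number the transitions in $0,(x_r)_{r\in\mathcal S},0$. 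A binary sequence padded with $0$ at both ends has exactly twice as many transitions as maximal runs of $1$, which gives $2R_{q,\beta}$. Summing over $t$, equivalently over $\beta$, and then over $q$ via Lemma~\ref{lem:perlevel} yields Eq.~\eqref{eq:runcount-identity}.

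\textbf{Main obstacle.} I expect the hard step to be showing that the local pairwise Stage-2 rule, iterated to fixpoint, really attains the parity normal form in each gap, and does no less. The rule as stated cancels only identical CNOTs that bracket an empty sub-cascade, so I must argue that within an empty gap any two equal-control CNOTs can be brought adjacent. I would argue this by commutation, which is legitimate because the CNOTs share a target and have distinct controls. Conversely, no CNOT in a gap can cancel across a surviving $R_y$ on the target, since that $R_y$ does not commute with the CNOTs. This is the point where I would be most careful to match the compiler's actual cancellation rule.
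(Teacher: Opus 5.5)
Your proposal is correct and follows essentially the same route as the paper's proof. Your induction placing control $q-1-v_2(i)$ before position $i$, with each such CNOT a Gray-bit flip, is the paper's step (B.0); your per-gap parity count between consecutive survivors is the paper's (B.1)--(B.2) greedy path matching, in which each block of control-$t$ CNOTs between consecutive occupied segments leaves $s \bmod 2$ unmatched, and your appeal to the commutation of same-target CNOTs simply justifies more explicitly the paper's premise that other-control CNOTs are transparent to the Stage-2 scan.
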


\noindent\textbf{Proof.} The argument proceeds in three steps.

\textbf{(B.0) Control positions.} By induction on $d$, the CNOT immediately before position $i$ ($1\le i\le2^d-1$) in $T(d,q)$ has control $q-1-v_2(i)$. The base $d=1$ has its single CNOT $\mathrm{CX}(q,q{-}1)$ before position $1$ with $q-1-v_2(1)=q-1$. For the step: the center CNOT $\mathrm{CX}(q,q{-}d)$ sits before position $2^{d-1}$ with $q-1-v_2(2^{d-1})=q-d$; positions in the left copy reuse the induction hypothesis verbatim; a position $i$ in the right copy maps to $i'=i-2^{d-1}\in[1,2^{d-1})$, and since $i'<2^{d-1}$ has its lowest set bit below bit $d{-}1$, $v_2(i)=v_2(i')$, so the control is again $q-1-v_2(i)$. As $\mathrm{gray}(i)$ and $\mathrm{gray}(i{-}1)$ differ exactly in bit $v_2(i)$, the control-$t$ CNOTs are precisely the flip-points of the sequence $(x_r)$; there are $2^t$ of them, and for $t=0$ the appended $\mathrm{CX}(q,0)$ is the closing flip, so the flip count is even and (since $x_0=0$) the sequence returns to $0$.

\textbf{(B.1) Reduction to adjacent matching.} The Stage-2 scan removes only CNOTs and treats other-control CNOTs and removed $R_y$ gates as transparent; identical means same control. Hence the control-$t$ CNOTs are processed independently of all other types, and each free $p_j$ (the $j$-th control-$t$ CNOT) matches the next one $p_{j+1}$ exactly when the segment between them contains no surviving $R_y$. This is greedy left-to-right matching on the path $p_1,\dots,p_{N_t}$ with an edge $(j,j{+}1)$ present iff segment $j$ is survivor-free.

\textbf{(B.2) Parity.} Greedy path matching leaves $s\bmod2$ vertices unmatched in each maximal run of $s$ vertices joined by edges; the occupied (survivor-containing) segments are the barriers separating these runs, so the surviving count is the number of odd-size blocks. Since $x$ flips at each $p_j$ with $x_0=0$, segment $s$ has value $x_s=s\bmod2$; a block between consecutive occupied segments $c_a<c_{a+1}$ has size $c_{a+1}-c_a$, odd iff $x_{c_a}\ne x_{c_{a+1}}$, and the two boundary blocks (using $N_t$ even) are odd iff their extreme occupied segment has $x=1$. Writing $O$ for ``occupied, $x{=}1$'',
\[
\#\text{surviving}=[c_0{=}O]+\#\{O/E\text{ transitions}\}+[c_L{=}O]=2\,(\#\,O\text{-runs}).
\]
Finally a survivor has $x=1$ iff it lies in an $O$ segment, and an $O$-run is broken only by an occupied $x{=}0$ survivor, so the $O$-runs are exactly the maximal $1$-runs of $(x_r)_{r\in\mathcal S}$, i.e.\ $R_{q,\beta}$. Summing over $t$ gives Eq.~\eqref{eq:runcount-identity}. $\square$

\noindent A corollary is that the pruned circuit can be emitted directly from $\mathcal S$: for each $\beta$, place the two bracketing CNOTs around each maximal $1$-run of $(x_r)$, without building the full cascade or running the Stage-2 scan. The identity~\eqref{eq:runcount-identity} holds for any qubit ordering; it is the inequality $\sum_{q,\beta}R_{q,\beta}\le E_s-1$ (Eq.~\eqref{eq:runcount-conjecture}), equivalent to Eq.~\eqref{eq:cnot-bound-As}, that requires the reordering.

\medskip
The identity already yields an unconditional linear-in-$k$ bound, independent of the qubit ordering, by controlling the number of surviving rotations.

\begin{lemma}[Rotation count]\label{lem:rotcount}
Let $|\mathcal S_q|$ be the number of surviving elementary $R_y$ gates at level $q$ and $m_q$ the number of occupied (active or fixed) nodes at that level. Then $|\mathcal S_q|\le m_q$, and the pruned circuit retains $N_{R_y}=\sum_q|\mathcal S_q|\le A_s$ elementary rotations.
\end{lemma}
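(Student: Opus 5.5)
The bound $|\mathcal S_q|\le m_q$ follows from the rank of the level-$q$ Walsh--Hadamard block. The global bound is then a sum over levels.

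Fix the level $q$, whose tree has $2^q$ nodes. Write $N=2^q$ for the number of nodes and $N_A, N_F, N_I$ for the numbers of active, fixed and inactive nodes, so $m_q=N_A+N_F$. By Eq.~\eqref{eq:walsh-block}, the circuit parameters are $\bm\alpha=2^{1-\ell}G_\ell\bm\theta$, with $G_\ell=P_\ell H_\ell$ invertible since $H_\ell H_\ell=2^\ell I$.

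The key step is to track what Stage~1 does to the rows.
\begin{itemize}
\item[(i)] Folding the fixed columns changes only the constant vector $\bm c$. It deletes no rows.
\item[(ii)] Each inactive column consumes exactly one pivot row. Because $G_\ell$ has full rank, the $N_I$ inactive columns are linearly independent. So every inactive column still has a nonzero entry after the earlier eliminations, and a pivot always exists.
\end{itemize}
Hence exactly $N_I$ rows are deleted, and $N-N_I=N_A+N_F=m_q$ rows remain. This is the "drops by exactly the number of inactive nodes" statement in Appendix~\ref{app:compiler}, now justified by the independence of the inactive columns. Surviving rows that are identically zero (zero coefficients on the active columns and zero constant) are deleted as well. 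Therefore $|\mathcal S_q|\le m_q$.

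Here the level index $q$ and the tree depth $\ell$ must be matched. The paper uses $q=n-1-\ell$ in one place, but the $2^q$ gates in Appendix~\ref{app:cnot-structure} correspond to $2^\ell$ nodes. I would state the lemma per cascade level with this identification. The constant-bit pass and any qubit reordering only relabel levels, so they do not affect the count.

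For the total, sum over levels: $\sum_q|\mathcal S_q|\le\sum_q m_q$. By definition, $\sum_q m_q$ is the number of active plus fixed internal nodes. By \sref{sec:classify}, this equals the internal-node count $A_s$ of the binary subtree spanned by $S$, since the occupied internal nodes are exactly the internal nodes on root-to-leaf paths of leaves in $S$.

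The main obstacle is step (ii): making sure that a pivot always exists and that each pivot deletes exactly one row, with no rows zeroed prematurely. I would handle this through the invertibility of $G_\ell$. Elimination by row operations preserves the rank of the inactive-column submatrix, which equals $N_I$. So the sequential pivots are well defined and number exactly $N_I$.
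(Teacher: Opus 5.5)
Your proof is correct and follows the paper's argument. Invertibility of the Gray-reindexed Walsh--Hadamard block makes the inactive columns linearly independent, so the elimination consumes one distinct pivot row per inactive node and leaves at most $m_q$ surviving rotations per level; summing with $\sum_q m_q=A_s$ gives the bound. Your additional justification that a pivot always exists (row operations preserve the independence of the unprocessed inactive columns) and your remark on the $q$ versus $\ell$ indexing make explicit what the paper leaves implicit. One small correction: the constant-bit pass deletes levels rather than relabelling them, but this is harmless because the bound is applied to the compiled tree with $A_s$ evaluated after removal.
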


\noindent\textbf{Proof.} At level $q$ the $2^q$ angles of the nodes at that level map to the $2^q$ elementary $R_y$ angles of the cascade through the Gray-reindexed Walsh--Hadamard matrix, which is invertible. The $2^q-m_q$ inactive nodes contribute that many of its columns; being a subset of the columns of an invertible matrix they are linearly independent, so the block-local integer Gaussian elimination on the inactive columns (Appendix~\ref{app:integrality}) pivots on $2^q-m_q$ distinct rows and zeros the corresponding elementary angles, leaving at most $m_q$ nonzero. Summing over levels and using $\sum_q m_q=A_s$ gives $N_{R_y}\le A_s$. $\square$

\begin{proposition}[Unconditional linear scaling]\label{prop:linear}
For every qubit ordering, $N_{\mathrm{cx}}\le 2n\,A_s\le 2n\bigl((n-1)k+1\bigr)=O(n^2k)$, and in particular $N_{\mathrm{cx}}$ is linear in $k$.
\end{proposition}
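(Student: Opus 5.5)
The plan is to combine the run-count identity (Lemma~\ref{lem:runcount}) with the rotation count (Lemma~\ref{lem:rotcount}), and then bound $A_s$ by a path-union count. First, I would bound each run count by the number of survivors at that level. The sequence $(x_r)_{r\in\mathcal S_q}$ has $|\mathcal S_q|$ entries, and every maximal run of ones contains at least one survivor, so $R_{q,\beta}\le|\mathcal S_q|$ for every bit $\beta$. In fact a slightly sharper $\lceil|\mathcal S_q|/2\rceil$ holds, but it is not needed. Summing over the $q$ control types $\beta=0,\dots,q-1$ at level $q$ gives $c_q=2\sum_\beta R_{q,\beta}\le 2q\,|\mathcal S_q|\le 2n\,|\mathcal S_q|$, since $q\le n-1$.

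Second, I would sum over levels using Lemma~\ref{lem:perlevel}. This gives $N_{\mathrm{cx}}=\sum_q c_q\le 2n\sum_q|\mathcal S_q|=2nN_{R_y}$. Lemma~\ref{lem:rotcount} then gives $N_{R_y}\le A_s$, hence $N_{\mathrm{cx}}\le 2nA_s$. One bookkeeping point needs checking here: the appended closing $\mathrm{CX}(q,0)$ is already counted as a control-$0$ flip point in the proof of Lemma~\ref{lem:runcount}, so no extra term appears. Neither lemma used the qubit ordering, so the bound holds for every ordering.

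Third, I would bound $A_s$. By the classification of \sref{sec:classify}, $A_s$ is the number of internal nodes of the subtree spanned by $S$, which is the union of the $k$ root-to-leaf paths. Each path has $n$ internal nodes, one per layer. Order the leaves arbitrarily. The first path contributes $n$ internal nodes. Every subsequent path shares at least the root with the union of the earlier paths, so it adds at most $n-1$ new internal nodes. This gives $A_s\le n+(k-1)(n-1)=(n-1)k+1$. After constant-bit removal the same holds with $n'\le n$, so the bound in terms of $n$ is unaffected.

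The only step I expect to require care is the first one. It assumes that each maximal $1$-run of $(x_r)$ over survivors is nonempty and made of survivors, which is immediate from the definition in Lemma~\ref{lem:runcount}. I also need the number of control types at level $q$ to be $q\le n-1$, matching the cascade definition $T(q,q)$. With those checked, the chain $N_{\mathrm{cx}}\le 2nA_s\le 2n((n-1)k+1)=O(n^2k)$ follows, and it is linear in $k$ for fixed $n$.
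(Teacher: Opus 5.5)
Your proof is correct and follows the paper's argument. The run-count identity with $R_{q,\beta}\le|\mathcal S_q|\le m_q$, summed over the at most $n$ control types per level and then over levels, gives $N_{\mathrm{cx}}\le 2nA_s$. The only difference is the envelope $A_s\le(n-1)k+1$: you derive it directly by a union-of-paths count, in which each new root-to-leaf path adds at most $n-1$ internal nodes. The paper instead cites its leaf--ancestor pair count through the active/fixed split, which needs exactly $k-1$ active nodes; your version is self-contained and does not rely on that count.
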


\noindent\textbf{Proof.} By Lemma~\ref{lem:runcount}, $N_{\mathrm{cx}}=2\sum_q\sum_{\beta=0}^{q-1}R_{q,\beta}$. Each $R_{q,\beta}$ counts maximal $1$-runs in a sequence of length $|\mathcal S_q|$, so $R_{q,\beta}\le|\mathcal S_q|\le m_q$ by Lemma~\ref{lem:rotcount}; with at most $n$ coordinates $\beta$ per level, $N_{\mathrm{cx}}\le 2\sum_q q\,m_q\le 2n\sum_q m_q=2nA_s$. The envelope $A_s\le(n-1)k+1$ (Appendix~\ref{app:cnot-evidence}) gives the $O(n^2k)$ form. $\square$

\noindent Proposition~\ref{prop:linear} fixes the linear-in-$k$ scaling unconditionally; the tighter structure-dependent bound $N_{\mathrm{cx}}\le 2(A_s+k-2)$ (Appendix~\ref{app:cnot-evidence}) sharpens the $A_s$-prefactor from $2n$ to $2$, and that sharpening is the part that requires the inactive-node-maximising reordering.

\paragraph*{Nearest-neighbour architectures.} The counts above assume all-to-all connectivity; restricting to a linear nearest-neighbour chain is benign. By step (B.0) of Lemma~\ref{lem:runcount}, the CNOT preceding position $i$ at level $q$ has control qubit $q-1-v_2(i)$, i.e.\ control--target distance $d=v_2(i)+1$, so distance $d$ occurs $2^{q-d}$ times among the CNOTs of the full level-$q$ cascade (plus the single closing $\mathrm{CX}(q,0)$ at distance $q$): long-range CNOTs are geometrically rare. Since a distance-$d$ CNOT costs $O(d)$ nearest-neighbour CNOTs, the full cascade maps onto the chain at constant-factor overhead, $\sum_{d\ge1}2^{q-d}\,O(d)=O(2^q)$ per level, preserving the dense ceiling of Eq.~\eqref{eq:cnot-bound} up to a constant; nearest-neighbour realisations of uniformly controlled gates are developed in Ref.~\cite{Bergholm2005}. For the pruned circuit the same holds on the symmetry sectors of interest. Because the gauge fixing of Appendix~\ref{app:integrality} depends only on the (basis-independent) Walsh--Hadamard matrix, the surviving rotations are unchanged by the choice of two-qubit realisation, so each level's uniformly controlled rotation may be laid on the chain by that construction. Routing the pruned circuits of \sref{sec:vqe} onto a linear chain with a standard swap network reproduces the target unitary exactly at a small constant two-qubit overhead ($\le 3.4\times$ the all-to-all count across LiH, BeH$_2$, H$_2$O, NH$_3$ and the half-filled Hubbard sectors, with no qubit reordering). The unconditional envelope $O(n)\,N_{\mathrm{cx}}$, giving $O(n^2k)$ under Eq.~\eqref{eq:cnot-bound-As} and $O(n^3k)$ in general, is attained only by arbitrary sparse supports whose active leaves spread across the register so that the per-level control distance grows with $n$.

\section{CNOT-bound evidence and structural form}\label{app:cnot-evidence}

This appendix collects the structure-dependent form of the CNOT bound stated in \sref{sec:cnot-bound}, a comparison with prior sparse-state-preparation constructions (Table~\ref{tab:sparse-prep}), the underlying counting argument, the supporting numerical evidence, and tight-configuration examples. Linear-in-$k$ scaling itself is established unconditionally in Appendix~\ref{app:cnot-structure} (Proposition~\ref{prop:linear}, $N_{\mathrm{cx}}\le 2nA_s=O(n^2k)$ for any ordering); the evidence collected here concerns the tight constant, the sharp bound $N_{\mathrm{cx}}\le 2(A_s+k-2)$ and the empirical per-leaf slope, which is the part controlled by the inactive-node-maximising reordering.

\begin{table*}[t]
\centering
\renewcommand{\arraystretch}{1.2}
\begin{tabular}{llcc}
\toprule
\rowcolor{headcolor}[\tabcolsep][\tabcolsep]
Method & Asymptotic CNOTs & Arbitrary state & Diagonal metric \\
\midrule
Möttönen et al.~\cite{Mottonen2005}        & $\Theta(2^n)$    & yes    & no \\
Shende--Bullock--Markov~\cite{Shende2006, Shende2004}  & $\sim\frac{23}{48}\cdot 4^n$ & yes    & no \\
Plesch--Brukner~\cite{Plesch2011}          & $\Theta(2^n)$    & yes    & no \\
Iten et al.~\cite{Iten2016}                & $O(nk)$          & sparse & no \\
Gleinig--Hoefler~\cite{Gleinig2021}        & $O(nk)$          & sparse & no \\
\textbf{This work}                         & $\le 2nk-2$  & sparse$^\dagger$ & \textbf{yes} \\
\bottomrule
\end{tabular}
\caption{Comparison with sparse and dense state-preparation constructions. $^\dagger$Complete as a circuit family (\sref{sec:circuit}); in the pruned regime $k\ll 2^n$ the two-qubit count is provably $O(n^2k)$ for any ordering (Proposition~\ref{prop:linear}), and the sharper $\le 2nk-2$ shown, matching the metric-free $O(nk)$ constructions above, is proved for $k=2$ and conjectured for $k\ge3$ under the inactive-node-maximising reordering of Appendix~\ref{app:compiler}. The distinguishing feature is the exactly diagonal metric (last column), which the other constructions lack.}
\label{tab:sparse-prep}
\end{table*}

\paragraph*{Spanned-subtree count $A_s$.}
Let $A_s$ denote the number of internal nodes of the binary subtree spanned by the active-leaves set $S$ (the union of the root-to-leaf paths of $S$), evaluated after the constant-bit removal of Appendix~\ref{app:compiler}. The total active and fixed parameters of the compiler are by construction the elements of this subtree, so $A_s=|\bm\theta_A|+|\bm\theta_F|$, and the subtree has $E_s=A_s+k-1$ edges.

\paragraph*{Structure-dependent bound.}
With the qubit reordering of Appendix~\ref{app:compiler}, the sharpest single-quantity bound is
\begin{equation}\label{eq:cnot-bound-As}
N_{\mathrm{cx}} \;\le\; 2(A_s + k - 2) \;=\; 2(E_s - 1),
\end{equation}
with $A_s$ evaluated on the reordered support. The reordering is necessary: for the identity ordering Eq.~\eqref{eq:cnot-bound-As} can be violated (e.g.\ $S=\{94,109,146,159\}$ at $n=8$ gives $N_{\mathrm{cx}}=52>50=2(A_s+k-2)$, with $A_s=23$ for that ordering), and the permutation that maximises the inactive-node count lowers $A_s$ to $21$ and removes the violation. Since $A_s\le\min(2^n-1,\,(n-1)k+1)$ (the right inequality from the leaf-path counting argument below; the left because $A_s$ counts internal nodes of the complete binary tree on $n$ levels), Eq.~\eqref{eq:cnot-bound-As} implies the two regime bounds of Eq.~\eqref{eq:cnot-bound}.

\paragraph*{Active--fixed split and $F$-dependence.}
Every active node has both children occupied by construction, so the active-internal-node count of the spanned subtree equals the number of pairwise merges of the leaves' root-to-leaf paths, which for $k$ distinct leaves is exactly $k-1$ (verified exhaustively over all $k$-subsets through $n=8$). Writing $F:=|\bm\theta_F|$ for the number of fixed internal nodes in the spanned subtree, $A_s=(k-1)+F$ and Eq.~\eqref{eq:cnot-bound-As} becomes
\begin{equation}\label{eq:cnot-bound-F}
N_{\mathrm{cx}} \;\le\; 4k - 6 + 2F.
\end{equation}
Here $F$ counts the ``single-child'' stretches of the spanned subtree, segments where the active leaves do not branch, and is the only support-dependent quantity in the bound. Its extremal values are determined by $(n,k)$ alone:
\begin{align}
F_{\min}(n,k) &= n - \lceil \log_2 k\rceil, \label{eq:F-min}\\
F_{\max}(n,k) &= k\!\left(n - \lfloor\log_2 k\rfloor\right). \label{eq:F-max}
\end{align}
The minimum is achieved when the $k$ leaves cluster in a single depth-$\lceil\log_2 k\rceil$ subtree: a length-$(n-\lceil\log_2 k\rceil)$ root-to-cluster path of fixed nodes is followed by a (near-)complete binary tree of merges, contributing no further fixed nodes when $k$ is a power of two. The maximum, for $k=2^d$ with $d\le n$, is realised by placing the $k$ leaves in $k$ disjoint depth-$d$ subtrees and descending all $n-d$ remaining levels independently to a single leaf in each subtree; the same construction interpolates for non-powers-of-two. Both extremal formulas were verified by enumeration of all $k$-subsets for $n\le 4$ and by sampling for $n\le 8$, with no violations.

\paragraph*{From structure to $(n,k)$.}
Combining Eq.~\eqref{eq:cnot-bound-F} with Eq.~\eqref{eq:F-max} gives the tight closed-form $(n,k)$ bound
\begin{equation}\label{eq:cnot-bound-nk-tight}
N_{\mathrm{cx}} \;\le\; 4k - 6 + 2k\!\left(n - \lfloor\log_2 k\rfloor\right),
\end{equation}
which exhibits the $\log_2 k$ saving when the leaves cluster. Dropping the $\log$ via the loose envelope $F\le (n-2)k+2$, a leaf-path counting argument, $\sum_l n = nk$ pairs (leaf, ancestor in spanned subtree), with each of the $k-1$ active nodes consuming at least two and each fixed node at least one, yields the simpler form
\begin{equation*}
N_{\mathrm{cx}} \;\le\; 2nk - 2
\end{equation*}
quoted as the sparse bound of Eq.~\eqref{eq:cnot-bound} in the main text. The two bounds coincide at $k=1$ ($N_{\mathrm{cx}}\le 2n-2$); for $k\ge 2$ Eq.~\eqref{eq:cnot-bound-nk-tight} is strictly tighter, by $2k\lfloor\log_2 k\rfloor - 4$, and the gap grows with $k$. In particular, when $k$ is a power of two within a constant factor of $2^n$ the tight bound~\eqref{eq:cnot-bound-nk-tight} collapses to an $n$-independent per-leaf slope $2(n-\log_2 k)=O(1)$, matching the clustered floor, where a support filling a minimal depth-$\lceil\log_2 k\rceil$ subtree costs only $N_{\mathrm{cx}}\approx k$.

\paragraph*{Numerical evidence.}
Under the inactive-node-maximising reordering of Appendix~\ref{app:compiler}, Eq.~\eqref{eq:cnot-bound-As} holds with no violations across exhaustive enumeration of all supports for $n\le 4$ and uniform sampling for $n=5,\dots,8$ (Table~\ref{tab:cnot-evidence}). The bound is saturated by specific supports: e.g.\ $S=\{94,109,146,159\}$ at $n=8$ gives $A_s=21$ and $N_{\mathrm{cx}}=46=2(A_s+k-2)$, whereas the identity ordering on the same support gives $N_{\mathrm{cx}}=52$ and violates the bound. The reordering is essential but is computed by an exponential branch-and-bound; a polynomial-time ordering attaining Eq.~\eqref{eq:cnot-bound-As} is not known, the greedy inactive-node heuristic can violate it for $n\ge 9$, and both that question and the general $k\ge3$ proof are left to future work (\sref{sec:discussion}).

\begin{table}[h]
\centering
\renewcommand{\arraystretch}{1.2}
\begin{tabular}{rrcc}
\toprule
\rowcolor{headcolor}[\tabcolsep][\tabcolsep]
$n$ & configurations tested & maximum ratio & violations \\
\midrule
3 & $247$ (exhaustive)     & $0.60$ & $0$ \\
4 & $65{,}519$ (exhaustive) & $0.71$ & $0$ \\
5 & $600$ (sampled)        & $0.83$ & $0$ \\
6 & $600$ (sampled)        & $0.88$ & $0$ \\
7 & $600$ (sampled)        & $0.91$ & $0$ \\
8 & $600$ (sampled)        & $0.96$ & $0$ \\
\bottomrule
\end{tabular}
\caption{Numerical evidence for Eq.~\eqref{eq:cnot-bound-As} under the inactive-node-maximising reordering. The ratio is $N_{\mathrm{cx}}/[2(A_s+k-2)]$; it is saturated (ratio $1$) by specific supports such as $\{94,109,146,159\}$ at $n=8$, and never exceeded in the sample.}
\label{tab:cnot-evidence}
\end{table}

\paragraph*{Empirical slope summary.}
Proposition~\ref{prop:linear} guarantees linear-in-$k$ scaling but only an $O(n^2)$ per-leaf slope; the empirical slopes are far smaller, the evidence that the tight constant of Eq.~\eqref{eq:cnot-bound-As} is the operative one. For each $n=4,\ldots,8$, up to $80$ uniformly random supports were sampled per $k\in[2,2^n]$ and recorded $N_{\mathrm{cx}}/k$. The minimum (clustered), median (typical) and maximum (worst) observed slopes are reported in Table~\ref{tab:cnot-slopes}. The worst observed slope remains well below the structural envelope $2n$ across the range tested, and the median (typical) slope grows roughly linearly with $n$, from $1.6$ at $n=4$ to $4.6$ at $n=8$. The dense limit $k=2^n$ gives $N_{\mathrm{cx}}=2(2^n-1)$ exactly (Möttönen baseline).

\begin{table}[h]
\centering
\renewcommand{\arraystretch}{1.2}
\begin{tabular}{rcccc}
\toprule
\rowcolor{headcolor}[\tabcolsep][\tabcolsep]
$n$ & best slope & median slope & worst slope & envelope $2n$ \\
\midrule
4 & $0.0$ & $1.6$ & $5.0$  & $8$ \\
5 & $0.0$ & $2.3$ & $7.0$  & $10$ \\
6 & $0.7$ & $3.0$ & $8.0$  & $12$ \\
7 & $2.0$ & $3.8$ & $10.7$ & $14$ \\
8 & $2.7$ & $4.6$ & $12.5$ & $16$ \\
\bottomrule
\end{tabular}
\caption{Empirical slope $N_{\mathrm{cx}}/k$ over uniformly random supports, for $k$ in the sparse regime $k\le n^2$. The bound $2n$ of Eq.~\eqref{eq:cnot-bound} is the structural envelope; the worst observed slope remains well within it across $n=4,\ldots,8$.}
\label{tab:cnot-slopes}
\end{table}

\paragraph*{Proof status.} The unconditional linear-in-$k$ bound $N_{\mathrm{cx}}\le 2nA_s=O(n^2k)$ holds for any qubit ordering (Proposition~\ref{prop:linear}). The tighter structure-dependent bound Eq.~\eqref{eq:cnot-bound-As}, which improves the $A_s$-prefactor from $2n$ to $2$, is proved analytically for $k=2$ after constant-bit removal. For general $k$ the exact identity $N_{\mathrm{cx}}=2\sum_{q,\beta}R_{q,\beta}$ of Appendix~\ref{app:cnot-structure} (Lemma~\ref{lem:runcount}) reduces it to the single combinatorial inequality
\begin{equation}\label{eq:runcount-conjecture}
\sum_{q,\beta} R_{q,\beta} \;\le\; E_s - 1,
\end{equation}
where $R_{q,\beta}$ is the number of maximal $1$-runs of Gray-coordinate $\beta$ over the surviving rows at level $q$, holding under the inactive-node-maximising reordering. Eq.~\eqref{eq:runcount-conjecture} is verified as in Table~\ref{tab:cnot-evidence}, but its general proof, and the question of whether a polynomial-time ordering attains it, are left to future work (\sref{sec:discussion}).

\section{Log-optimal sparse preparation with preserved metric}\label{app:sparse-routing}

The dressing invariance of Appendix~\ref{app:invariance} has a direct consequence for the two-qubit cost of \sref{sec:cnot-bound}, through the sparse-preparation strategy of Li and Luo~\cite{Li2025}. Their construction prepares a $k$-sparse target $\sum_{j}\alpha_j\ket{q_j}$ in two steps: first the $k$ amplitudes are loaded onto the lowest $\lceil\log_2 k\rceil$ qubits as a dense state $\sum_j\alpha_j\ket{j}$, with the remaining qubits left in $\ket{0}$; then a permutation $\sigma$ of the $2^n$ computational basis states relabels each loaded index to its target address, $\ket{j}\mapsto\ket{q_j}$. The circuit is thus
\begin{equation}\label{eq:separable-prep}
\ket{\psi}=U_\sigma\,\bigl(\ket{\psi_{\mathrm{core}}(\bm\theta,\bm\omega)}\otimes\ket{0}\bigr),
\end{equation}
and its two factors play cleanly separated roles. The amplitude core $\ket{\psi_{\mathrm{core}}}$ is the binary-tree ansatz of \sref{sec:polyspherical}, the same construction used throughout the paper, but on a compact $\lceil\log_2 k\rceil$-qubit register rather than the full $n$ qubits, so it carries every chart parameter $(\bm\theta,\bm\omega)$ and the closed-form diagonal metric. The routing $U_\sigma$ is a permutation matrix: it merely sends one basis state to another, is fixed once the support $\{q_j\}$ is chosen, and contains neither amplitudes nor variational parameters. It is therefore exactly the fixed unitary of Proposition~\ref{prop:invariance}, so the metric in the core coordinates $(\bm\theta,\bm\omega)$ is left unchanged by the routing and remains the diagonal metric of Eqs.~\eqref{eq:g-theta}--\eqref{eq:g-cross}. The construction thus differs from the direct pruned tree of \sref{sec:cnot-bound} only in how the same $k$-sparse state is reached: rather than loading amplitudes and placing them on their target basis states within a single $n$-qubit tree, it loads them on a compact tree and places them with a separate, parameter-free permutation.

The amplitude core costs $O(k)$ two-qubit gates. Synthesising the fixed routing $U_\sigma$ with the sparse-permutation construction of Li and Luo~\cite{Li2025} costs $O(nk/\log n)$, so the total is $O(nk/\log n)$, near-optimal for $k$-sparse preparation in general, and information-theoretically optimal when $k=\mathrm{poly}(n)$ (the regime of the sparse states prepared here), while the closed-form diagonal metric is retained. The routing was implemented in full, the matrix-reduction kernel that maps a block of $2m$ bitstrings to canonical positions, the block reduction, and the cycle decomposition into disjoint transpositions; it was verified by exhaustive basis-state checks over the reachable range, and its scaling confirmed by direct two-qubit-gate counting up to $n=2048$: the count grows sub-$O(nk)$, is linear in $k$, and tracks $nk/\log n$.

Two qualifications are essential. First, the advantage is asymptotic: the measured constant ($\approx 2\,nk$ at reachable sizes, against $\sim nk$ for the direct pruned tree of \sref{sec:cnot-bound}) leaves the direct tree cheaper until $n$ is far larger than any near-term register. Second, the construction works in the compact-tree chart and uses a single clean ancilla (the with-ancilla variant of Ref.~\cite{Li2025}; a strictly ancilla-free realisation follows from the borrowed-ancilla Toffoli of the same work). The result is therefore conceptual: the diagonal-metric framework can be pushed to the same near-optimal $O(nk/\log n)$ scaling, a guarantee that preparation-only constructions, which carry no closed-form geometry, do not provide.

\section{Measurement cost of real-time evolution: MinimalMetric versus pVQD}\label{app:shot-cost}

This appendix compares the two variational real-time schemes of \sref{sec:hubbard} by the total number of measurement shots needed to propagate a state over a fixed physical time $T$ while holding the final-state infidelity below a target $\delta$. The argument is general: it uses only the form of the two estimators and the unbiasedness of the parameter-shift gradient, and is independent of the particular Hamiltonian, lattice, or initial state.

\paragraph*{Setup.} Fix a total evolution time $T$ propagated in $N$ uniform steps of size $dt=T/N$. Both schemes advance one step from a finite-shot estimator. MinimalMetric estimates the energy gradient through parameter-shift evaluations of $\braket{H}$ and takes one explicit integration step. pVQD minimizes the one-step infidelity through an inner optimization over overlap (compute--uncompute) measurements. Write the Hamiltonian as $H=\sum_j c_j P_j$ and let $\lambda:=\sum_j|c_j|$ denote its one-norm.

\paragraph*{Per-evaluation variance.} A single-shot estimate of an energy $\braket{H}$ has variance bounded by $\lambda^2$, so resolving it to precision $\epsilon$ costs $M\sim\lambda^2/\epsilon^2$ shots. A single-shot overlap estimate is a Bernoulli outcome with variance at most $1/4$, independent of system size, so resolving it to precision $\epsilon$ costs $M\sim1/\epsilon^2$ shots. The overlap is thus the cheaper primitive per evaluation; the comparison below tracks how many evaluations, and at what precision, each scheme requires.

\paragraph*{MinimalMetric: shot noise averages down.} Write one integration step as
\begin{equation}
\bm\lambda_{k+1}=\bm\lambda_k+dt\,\bm F(\bm\lambda_k)+dt\,\bm\xi_k,
\end{equation}
where $\bm F$ is the exact velocity and $\bm\xi_k$ the shot-noise error of the estimated velocity. Because the parameter-shift gradient is unbiased, $\mathbb E[\bm\xi_k]=\bm0$ with per-component variance $\mathrm{Var}(\bm\xi_k)\sim\lambda^2/M$ for $M$ shots per energy evaluation. Independent errors across steps accumulate as a random walk,
\begin{equation}
\mathrm{Var}\bigl(\Delta\bm\lambda_{\mathrm{final}}\bigr)\;\sim\;\sum_{k=1}^{N}dt^2\,\frac{\lambda^2}{M}\;=\;N\,dt^2\,\frac{\lambda^2}{M}\;=\;T\,dt\,\frac{\lambda^2}{M},
\end{equation}
and the induced final-state infidelity is proportional, through the bounded Fubini--Study metric, to this variance. Fixing the infidelity budget $\delta$ gives the required shots per evaluation,
\begin{equation}
M_{\mathrm{MM}}\;\sim\;\frac{T\,dt\,\lambda^2}{\delta}.
\end{equation}
With $G$ parameter-shift energy evaluations per step, a fixed count set by the chart dimension and the integrator, independent of $dt$, and with the Pauli-group overhead of each evaluation already absorbed into the factor $\lambda^2$, the total shot count is
\begin{equation}\label{eq:shot-mm}
S_{\mathrm{MM}}\;=\;G\,N\,M_{\mathrm{MM}}\;\sim\;G\,\frac{T^2\lambda^2}{\delta},
\end{equation}
where the time step has cancelled ($N\,dt=T$). Refining the time resolution costs MinimalMetric nothing: more steps each carry proportionally less accumulated noise.

\paragraph*{pVQD: the per-step tolerance tightens with resolution.} To hold the cumulative infidelity below $\delta$ over $N$ steps with independent per-step errors, each step must be solved to residual infidelity $\sim\delta/N$. An overlap estimator of single-evaluation variance $1/4$ resolves a per-step infidelity floor $\sim1/M$, so
\begin{equation}
M_{\mathrm{pVQD}}\;\sim\;\frac{N}{\delta}.
\end{equation}
With $R$ overlap evaluations per step (the inner loop, fixed by the ansatz and independent of $dt$), the total is
\begin{equation}\label{eq:shot-pvqd}
S_{\mathrm{pVQD}}\;=\;R\,N\,M_{\mathrm{pVQD}}\;\sim\;R\,\frac{N^2}{\delta}\;=\;R\,\frac{T^2}{dt^2\,\delta}.
\end{equation}
The single-projector saving is real per evaluation, but the per-step tolerance $\delta/N$ makes the total grow as $1/dt^2$ as the grid is refined. Both schemes are charged here under the same independent-per-step-error assumption, and pVQD under a gradient-based inner loop (shift-rule fidelity gradients, as in the literature protocol~\cite{Barison2021}), for which unbiased gradient noise $\sim M^{-1/2}$ against the $\mathcal O(1)$ curvature of the fidelity at its maximum yields the $\sim 1/M$ floor. Both choices favour pVQD: a derivative-free inner loop resolving noisy fidelity values directly floors at the larger $\sim M^{-1/2}$, and coherently aligned per-step residuals would accumulate as $N^2$ rather than $N$. The asymmetry of the comparison below thus stems solely from the $\mathcal O(dt)$ scaling of the MinimalMetric per-step error, not from a different accumulation law.

\paragraph*{Crossover.} The ratio of the two totals is
\begin{equation}\label{eq:shot-ratio}
\frac{S_{\mathrm{MM}}}{S_{\mathrm{pVQD}}}\;\sim\;\frac{G\,\lambda^2}{R}\,dt^2.
\end{equation}
Both the physical time $T$ and the accuracy target $\delta$ cancel: the comparison is governed solely by the time step. MinimalMetric is cheaper whenever
\begin{equation}\label{eq:shot-crossover}
dt\;<\;dt^\star\;\sim\;\sqrt{\frac{R}{G\,\lambda^2}}.
\end{equation}
$S_{\mathrm{MM}}$ is independent of $dt$ while $S_{\mathrm{pVQD}}$ grows as $dt^{-2}$: refining the time resolution leaves MinimalMetric's measurement budget unchanged but inflates that of pVQD without bound. Faithful integration drives $dt$ down (the more so the smaller the target $\delta$, since the integrator error $\mathcal O(dt^p)$ must itself sit below $\delta$), so any sufficiently accurate simulation lands on the MinimalMetric-favourable side $dt<dt^\star$, where MinimalMetric reaches a given trajectory accuracy at strictly lower total measurement cost and its advantage grows as $dt\to0$. The overlap saving of pVQD dominates only at coarse resolution, where the integration itself is uncontrolled.

\section{Restricted-subspace approximate dynamics: protocol and numerics}\label{app:restricted-subspace}

This appendix records the protocol and numerics behind the incomplete-subspace experiment of \sref{sec:mol-dynamics} (the reduced-subspace curves of Fig.~\ref{fig:molecular-dynamics}); the dependence on the subspace size $k$ is collected in Fig.~\ref{fig:restricted-subspace}.

\paragraph*{Projection identity.} Let $S$ be any active-leaves set, $k=|S|$, and $\mathcal M_S$ the image of the pruned chart. Since the chart is complete on its support, $\mathcal M_S$ is the full projective space of $\mathrm{span}(S)$, and its tangent space at any point is all of $\mathrm{span}(S)$ modulo the global-phase gauge. The Dirac--Frenkel condition, that the residual $(\tfrac{d}{dt}+iH)\ket\psi$ be orthogonal to the tangent space, therefore reduces to $i\,\partial_t\ket\psi=P_SH\ket\psi$ with $\ket\psi\in\mathrm{span}(S)$ at all times, whose solution is $\ket{\psi(t)}=e^{-iH_St}\ket{\psi(0)}$ with $H_S=P_SHP_S$, up to the global phase fixed by the gauge projection of Appendix~\ref{app:tdvp-rhs}. The variational integrator of \sref{sec:kahler} on the restricted chart hence carries no method error beyond the subspace choice; every deviation from $e^{-iH_St}$ is integrator truncation. Numerically, the restricted RK4 chart trajectories agree with $e^{-iH_St}$ (evaluated by one dense eigendecomposition of $H_S$) to maximum infidelity $4.0\times10^{-15}$ over the $72$ verification runs below ($48$ at $\kappa=0.05$, $24$ at $\kappa=1$), including runs that traverse chart singularities via the initialiser of Appendix~\ref{app:singular-init}.

\paragraph*{Error certificate.} With $\psi_S(0)=P_S\psi_0/\|P_S\psi_0\|$ and initial captured weight $w_0=\|P_S\psi_0\|^2$, the distance to the exact trajectory splits as
\begin{equation}\label{eq:df-bound-full}
\bigl\|e^{-iHt}\psi_0-e^{-iH_St}\psi_S(0)\bigr\|\;\le\;\underbrace{\sqrt{2\bigl(1-\sqrt{w_0}\bigr)}}_{\text{initial projection}}\;+\;\underbrace{\int_0^t\bigl\|(1-P_S)H\,\psi_S(s)\bigr\|\,ds}_{\text{integrated leakage}},
\end{equation}
the standard Dirac--Frenkel a-priori bound applied to the linear manifold $\mathrm{span}(S)$; both terms are functionals of the restricted trajectory alone. On hardware, $\|(1-P_S)H\psi\|^2=\braket{\psi|H^2|\psi}-\sum_{j\in S}|\braket{j|H|\psi}|^2$ is measurable from the $H^2$ expectation plus one Hadamard-test overlap per (trivially prepared) basis state $\ket j$, $j\in S$. Figure~\ref{fig:restricted-subspace} (bottom row) confirms that the squared bound majorises the observed infidelity uniformly along every trajectory, within about an order of magnitude (the bound charges leaked amplitude coherently, whereas part of it rephases).

\paragraph*{Protocol.} Identical to the dipole-kick benchmark of \sref{sec:mol-dynamics} ($\kappa=0.05$, $T=1$, $200$ steps, RK4), for LiH, BeH$_2$, H$_2$O and NH$_3$ at equilibrium geometry. For each molecule a nested family of incomplete subspaces $S$ of increasing size $k$, chosen so that the truncation error stays small, is handed to the unmodified compiler, and the trajectory is reported as a function of $k$ (Fig.~\ref{fig:restricted-subspace}); the reduced-subspace curve of Fig.~\ref{fig:molecular-dynamics} is one representative $k$ per molecule. CNOT counts are those of the pruned complex circuit ($R_y$ cascade plus phase block, natural qubit ordering, no reordering search), the same compiled-circuit convention as the coherent-depth strips of Fig.~\ref{fig:molecular-dynamics}; the per-step measurement cost is set by the parameter count $2(k-1)$ through the shift rules of \sref{sec:shift}. The CNOT count is not monotone in $k$: highly structured sets (the full sector, the symmetry-reachable support) expose more Gray-code cancellations in the compiler than an arbitrary subset of nearly the same size.

\paragraph*{Results.} As the subspace grows the trajectory infidelity falls monotonically and reaches machine precision once $S$ contains the symmetry-reachable support (Fig.~\ref{fig:restricted-subspace}, top). The fixed Hamiltonian and the $x$-dipole kick obey point-group selection rules that act diagonally on the determinants, so the exact trajectory is confined to an invariant active-leaves set ($9/9$ for LiH, $18/36$ for BeH$_2$, $52/100$ for H$_2$O, $100/100$ for NH$_3$; NH$_3$'s $C_{3v}$ rules do not decouple the kick, hence full support), and restriction to it is exact to machine precision. This extends the symmetry sieve of \sref{sec:vqe} from statics to dynamics. The pruned CNOT count grows with $k$ but non-monotonically (Fig.~\ref{fig:restricted-subspace}, middle). A strong-kick variant ($\kappa=1$, same grid) spreads the trajectory across the sector, so the captured weight at fixed $k$ drops and the infidelities shift up, while remaining certified by Eq.~\eqref{eq:df-bound-full} and holding the projection identity to the same precision. Table~\ref{tab:restricted} lists representative sizes.

\begin{table*}[t]
\centering
\renewcommand{\arraystretch}{1.2}
\begin{tabular}{clccccccc}
\toprule
\rowcolor{headcolor}[\tabcolsep][\tabcolsep]  &  & $K$ & support & \multicolumn{2}{c}{$k\approx0.3K$} & \multicolumn{2}{c}{$k\approx0.5K$} & full $K$ \\
\cmidrule(lr){5-6}\cmidrule(lr){7-8}
\rowcolor{headcolor}[\tabcolsep][\tabcolsep]  &  &  &  & maximum infidelity & CNOTs & maximum infidelity & CNOTs & CNOTs \\
\midrule
\molicon{LiH_R_1.50} & LiH   & $9$   & $9$   & $1.8\times10^{-3}$ ($k{=}3$)  & $16$  & $1.9\times10^{-4}$ ($k{=}5$)  & $34$  & $50$ \\
\molicon{BeH2_eq} & BeH$_2$ & $36$  & $18$  & $1.0\times10^{-6}$ ($k{=}11$) & $98$ & $1.1\times10^{-15}$ ($k{=}18$) & $118$ & $144$ \\
\molicon{H2O_eq} & H$_2$O  & $100$ & $52$  & $3.1\times10^{-6}$ ($k{=}30$) & $226$ & $3.6\times10^{-8}$ ($k{=}50$) & $250$ & $418$ \\
\molicon{NH3_eq} & NH$_3$  & $100$ & $100$ & $9.2\times10^{-5}$ ($k{=}30$) & $344$ & $3.0\times10^{-5}$ ($k{=}50$) & $414$ & $418$ \\
\bottomrule
\end{tabular}
\caption{Incomplete-subspace dipole-kick dynamics ($\kappa=0.05$, $T=1$): sector size $K$, symmetry-reachable support, maximum trajectory infidelity and pruned-circuit CNOT count at $k\approx0.3K$ and $k\approx0.5K$, and the full-sector CNOT count in the same counting convention (that of Fig.~\ref{fig:molecular-dynamics}). At $k=\mathrm{support}$ the restriction is exact: the BeH$_2$ entry at $k=18$ is its support, and the H$_2$O support ($k=52$) compiles to $250$ CNOTs, $60\%$ of the full sector at zero error.}
\label{tab:restricted}
\end{table*}

\begin{figure}[htbp]
\centering
\includegraphics[width=\textwidth]{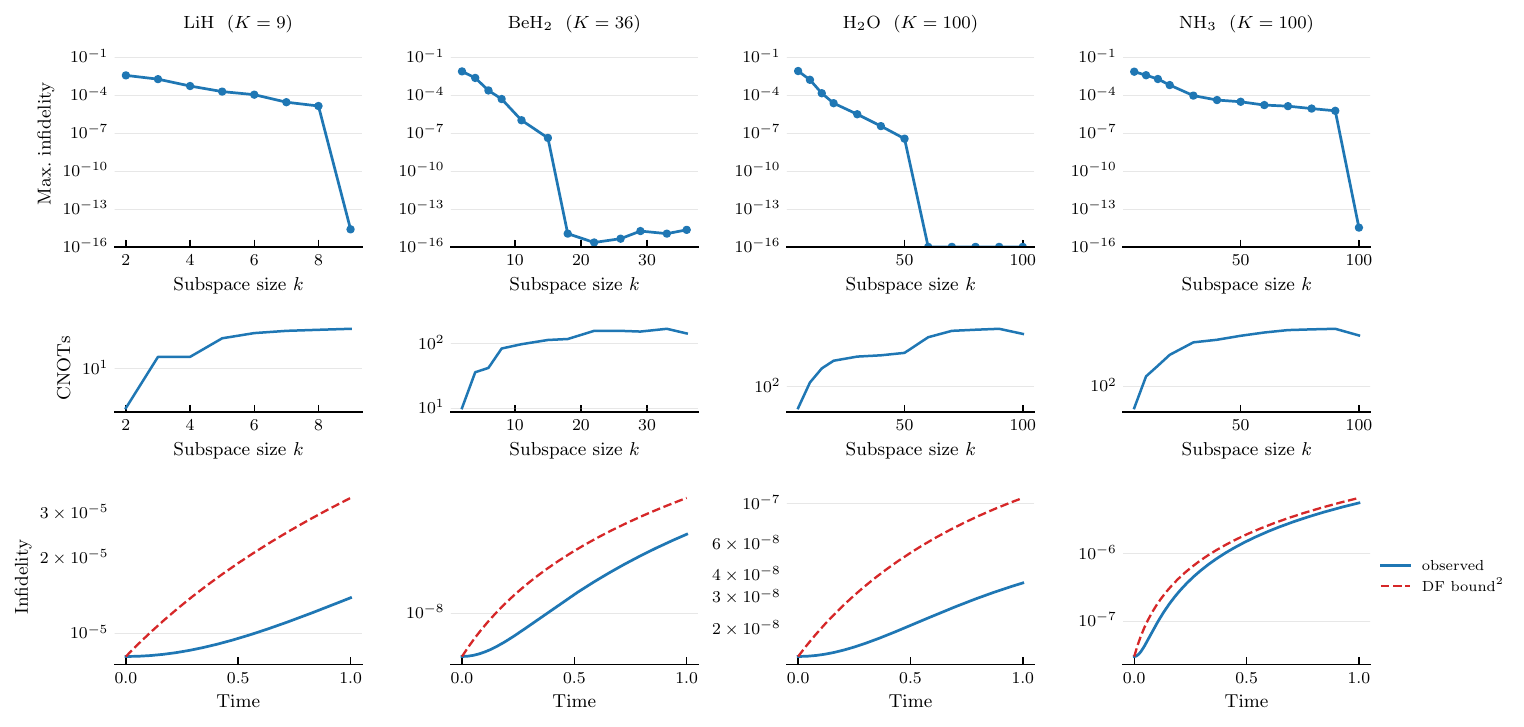}
\caption{Incomplete-subspace approximate dynamics, dependence on subspace size (companion to the reduced-subspace curves of Fig.~\ref{fig:molecular-dynamics}; $\kappa=0.05$, $T=1$; one column per molecule). \textbf{Top}: maximum trajectory infidelity against exact evolution versus subspace size $k$; the approximation improves monotonically and becomes exact once $k$ reaches the symmetry-reachable support. \textbf{Middle}: pruned complex-circuit CNOT count versus $k$ (same convention as Fig.~\ref{fig:molecular-dynamics}). \textbf{Bottom}: the squared Dirac--Frenkel certificate of Eq.~\eqref{eq:df-bound-full} (red dashed) majorises the observed infidelity (blue) along the trajectory, computed from the restricted run alone.}
\label{fig:restricted-subspace}
\end{figure}

\paragraph*{Dressed frame.} Composed with a fixed non-contractible dressing $U$, the same construction integrates the projected dynamics of $\widetilde H=U^\dagger HU$ on the $k$-dimensional model space $U\,\mathrm{span}(S)$, which is generically entangled and has no sparse classical description. The diagonal metric descends by Proposition~\ref{prop:invariance}, and the certificate Eq.~\eqref{eq:df-bound-full} stays measurable on the device ($H$ keeps its polynomial Pauli expansion, only its classical contraction through $U$ is lost). Certified approximate time evolution on a compact model space is then classically available only when the frame is, placing this composition on the same advantage footing as the block-decoupling of \sref{sec:decouple}. The bare-basis experiment here is the classically-checkable validation of exactly the subspace-truncation error that governs the dressed case.

\section{Symmetry-adapted VQE: extra resource data}\label{app:vqe-extra}

This appendix details the two-qubit gate-count progression summarised in \sref{sec:vqe} and plotted in Fig.~\ref{fig:vqe-bondcurves}(b)--(c). All entries use the same CAS active spaces and STO-3G orbitals at equilibrium geometry on a state-vector simulator, and every pruned-ansatz variant reaches the CASCI reference to better than $10^{-4}$\,Ha. Table~\ref{tab:vqe} reports the CNOT cost of the pruned ansatz as progressively more structure is exploited, the particle-and-spin sector $\{N,S_z\}$; additionally the ground-state point-group irrep (point group, the configuration plotted in Fig.~\ref{fig:vqe-bondcurves}); and the symmetry-adapted encoding (SAE) of Ref.~\cite{Picozzi2022}, which also lowers the qubit count ($n=6/6/8/10/10$ in Jordan--Wigner falls to $3/3/3/6/7$ for H$_3^+$/LiH/BeH$_2$/H$_2$O/NH$_3$), against UCCSD at both its full singles-and-doubles pool and its point-group-screened S$+$D variant.

\begin{table*}[t]
\centering
\renewcommand{\arraystretch}{1.2}
\begin{tabular}{clccccc}
\toprule
\rowcolor{headcolor}[\tabcolsep][\tabcolsep]  &  & \multicolumn{3}{c}{Pruned ansatz (this work)} & \multicolumn{2}{c}{UCCSD} \\
\cmidrule(lr){3-5}\cmidrule(lr){6-7}
\rowcolor{headcolor}[\tabcolsep][\tabcolsep]  & Molecule & $\{N,S_z\}$ & point group & symmetry-adapted & full & screened \\
\midrule
\molicon{H3plus_eq} & H$_3^+$  & 32  & 24  & 6   & 280  & 136  \\
\molicon{LiH_R_1.50} & LiH  & 32  & 24  & 6   & 280  & 136  \\
\molicon{BeH2_eq} & BeH$_2$  & 84  & 48  & 6   & 1440 & 320  \\
\molicon{H2O_eq} & H$_2$O  & 224 & 116 & 62  & 3896 & 1168 \\
\molicon{NH3_eq} & NH$_3$  & 224 & 164 & 120 & 3896 & 1976 \\
\bottomrule
\end{tabular}
\caption{CNOT counts for the symmetry-adapted pruned ansatz versus UCCSD on five molecules in STO-3G at equilibrium geometry, state-vector simulation. The three pruned columns exploit progressively more structure: the particle-and-spin sector $\{N,S_z\}$; the same with the ground-state point-group irrep imposed (point group, the configuration plotted in Fig.~\ref{fig:vqe-bondcurves}); and the symmetry-adapted encoding (SAE) of Ref.~\cite{Picozzi2022}, which additionally reduces the qubit count. All three reach the CASCI reference to better than $10^{-4}$\,Ha. UCCSD is shown for the full S$+$D pool and its point-group-screened S$+$D subset; the deeper full-pool circuits are correspondingly harder to optimise and do not reach this accuracy within budget (energy error up to $\sim\!10^{-3}$\,Ha on BeH$_2$, H$_2$O and NH$_3$). In the SAE encoding the UCCSD baseline is itself shallower (LiH $32$, BeH$_2$ $16$, H$_2$O $556$, NH$_3$ $1356$ CNOTs), and the pruned-ansatz reduction persists (e.g.\ LiH $6$ versus $32$).}
\label{tab:vqe}
\end{table*}

\section{Spin adaptation via the sector Schur transform}\label{app:schur-csf}

This appendix records the construction, verification and cost accounting behind the spin-adapted VQE of \sref{sec:schur-csf}: the fixed unitary $U_{\mathrm{S}}$ whose columns, within a molecular $\{N, S_z\}$ sector, are configuration state functions (CSFs), simultaneous eigenfunctions of $N$, $S_z$ and total spin $S^2$.

\paragraph*{Construction.}
$U_{\mathrm{S}}$ is block-diagonal over occupation patterns. A determinant factorises into doubly occupied and empty orbitals (spectators for the spin coupling) and $t$ singly occupied, open-shell orbitals carrying spin-$\tfrac12$'s; within one pattern, the determinants of the sector form the fixed-$S_z$ weight block of the $t$-spin space, and the CSFs are the genealogical (sequential Clebsch--Gordan) coupled states of those $t$ spins in ascending orbital order~\cite{Paldus1974, Shavitt1977}, i.e.\ precisely the columns of the qubit Schur transform~\cite{Bacon2006} restricted to that weight block. One bookkeeping sign is essential in second quantisation: the naive spin-coupling amplitudes are exact when the determinant's creation operators are ordered orbital-by-orbital ($\alpha$ before $\beta$ within each orbital), where same-orbital spin flips carry no Jordan--Wigner string, whereas the qubit basis states of the blocked JW encoding order the modes $\alpha_0\ldots\alpha_{m-1}\beta_0\ldots\beta_{m-1}$; each CSF amplitude is therefore multiplied by the parity of the permutation relating the two orderings of the occupied modes, a per-determinant $\pm 1$.

The resulting CSFs are exactly the Gelfand--Tsetlin states of the quantum Paldus transform of Burkat and Fitzpatrick~\cite{Burkat2025}, whose canonical labels are per-orbital step vectors $\mathbf d\in\{0,1,2,3\}^d$ (empty, singly occupied spin-raising, singly occupied spin-lowering, doubly occupied); this labelling, and the gate-level realisation of the transform, are taken up below.

The column assignment fixes the leaf-indexing convention of the dressed ansatz, since the active leaves of the composed circuit $U_{\mathrm{S}}\ket{\psi(\bm\theta,\bm\omega)}$ are the preimages of the target CSFs. Two conventions are used in this appendix. In the block-wise assignment (the dense dressing of the main benchmark), within each pattern block the sector determinant strings in increasing (leaf-integer) order are matched to the CSFs ordered by total spin $S$ ascending, then by coupling path. The preimages of the lowest-spin CSFs are then the lowest determinant strings of each block, so the active-leaf set of a spin sector is a subset of the determinant leaves of its parent $\{N,S_z\}$ sector. In the canonical assignment (the gate-level Paldus realisation below), each CSF's preimage is its own Gelfand--Tsetlin step vector, written as two bits per orbital on the same register; the active leaves of a spin sector are then the valid step vectors of that $(N,S)$, a combinatorially structured (ballot-constrained) sparse set that the pruning compiler handles natively. Under both conventions closed-shell determinants are their own CSFs and their own preimages up to the Gelfand--Tsetlin phase, the reordering parity $\chi=\pm1$ of the determinant itself; a single-leaf sign is a global phase, so the Hartree--Fock corner and the singular-corner initialisation of Appendix~\ref{app:singular-init} apply unchanged. All CSF phases are fixed to the canonical Gelfand--Tsetlin convention, under which the dense unitary and the gate-level transform below agree column for column, signs included.

\paragraph*{Verification.}
For the three full $\{N,S_z\}$ sectors underlying the panel (CAS$(2e,3o)$, $k=9$; CAS$(4e,4o)$, $k=36$; CAS$(6e,5o)$, $k=100$; the point-group runs use singlet sub-blocks of these), the following were verified against exact blocked-JW matrix representations: orthogonality of $U_{\mathrm{S}}$ and simultaneous $N$/$S_z$/$S^2$ eigenstructure of every column to machine precision ($\le 10^{-15}$); spin-sector multiplicities matching the Weyl dimension formula ($6+3$, $20+15+1$, $50+45+5$ for $S=0,1,2$); exact decoupling of the singlet block of $U_{\mathrm{S}}^\dagger H U_{\mathrm{S}}$ (off-block norm $\sim 10^{-16}$) with lowest eigenvalue equal to the CASCI reference; and $\langle S^2\rangle \le 10^{-15}$ for the composed circuit at random tree parameters.

\paragraph*{Benchmark data.}
Table~\ref{tab:schur-csf} reports the panel of \sref{sec:schur-csf} through the same optimiser, stopping rule and eval currency as the convergence curves of \sref{sec:vqe} (the natural-gradient \texttt{minimize\_energy} loop, singular-corner initialised from the Hartree--Fock determinant), so the spin-adapted curve of Fig.~\ref{fig:vqe-bondcurves} is directly comparable to the point-group MinimalMetric curve beside it. Both members of each pair converge to the same energy floor; the spin-adapted runs reach it in up to $1.8\times$ fewer energy evaluations (the removed higher-spin CSFs cut the parameter count), and their $\langle S^2\rangle$ stays at the machine-zero floor throughout, whereas the point-group determinant sector passes through transient spin contamination up to $\sim 6\times10^{-6}$ and, on NH$_3$, retains $\sim 9\times10^{-7}$ at its stopping point (Fig.~\ref{fig:spin-contamination}). The spin-adaptation is exact by construction: because every retained CSF is an $S^2$ eigenstate, no parameter value can leave the singlet sector, which the determinant ansatz can only approach variationally.

\begin{figure}[h]
\centering
\includegraphics[width=0.7\textwidth]{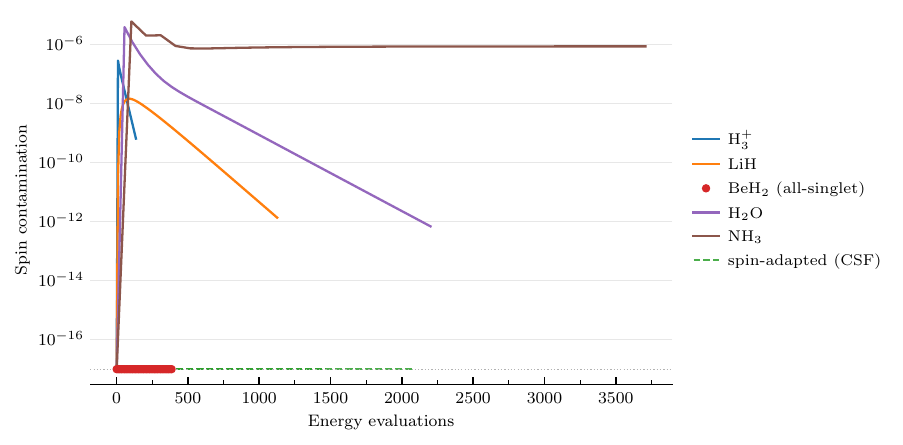}
\caption{Total-spin contamination $|\langle S^2\rangle|$ along the natural-gradient VQE trajectory (log $|\langle S^2\rangle|$ scale, linear energy-evaluation axis), point-group determinant sector versus the spin-adapted (CSF) block within the same irrep. The determinant-sector runs (solid, one colour per molecule) develop contamination up to $\sim10^{-6}$ before descending, and the NH$_3$ run retains it through convergence; the spin-adapted runs (green, dashed, all five molecules) sit at the machine-zero floor at every iterate. BeH$_2$'s $D_{2h}$ sector is already all-singlet and stays at the floor in both cases. Same runs as Fig.~\ref{fig:vqe-bondcurves}.}
\label{fig:spin-contamination}
\end{figure}

\begin{table*}[t]
\centering
\renewcommand{\arraystretch}{1.2}
\begin{tabular}{cllcccccc}
\toprule
\rowcolor{headcolor}[\tabcolsep][\tabcolsep]  & Molecule & Sector & leaves $k$ & parameters $P$ & core CNOTs & energy evaluations & $|E-E_{\mathrm{ref}}|$ & $\max\langle S^2\rangle$ \\
\midrule
\multirow{2}{*}{\molicon{H3plus_eq}} & \multirow{2}{*}{H$_3^+$} & point group  & $5$  & $4$  & $24$  & $144$  & $2\times10^{-10}$ & $3\times10^{-7}$ \\
 & & spin-adapted & $4$  & $3$  & $24$  & $105$  & $2\times10^{-11}$ & $<\!10^{-30}$ \\
\midrule
\multirow{2}{*}{\molicon{LiH_R_1.50}} & \multirow{2}{*}{LiH} & point group  & $5$  & $4$  & $24$  & $1134$ & $3\times10^{-10}$ & $1\times10^{-8}$ \\
 & & spin-adapted & $4$  & $3$  & $24$  & $938$  & $2\times10^{-10}$ & $<\!10^{-30}$ \\
\midrule
\multirow{2}{*}{\molicon{BeH2_eq}} & \multirow{2}{*}{BeH$_2$} & point group  & $6$  & $5$  & $48$  & $396$  & $2\times10^{-10}$ & $<\!10^{-30}$ \\
 & & spin-adapted & $6$  & $5$  & $48$  & $396$  & $2\times10^{-10}$ & $<\!10^{-30}$ \\
\midrule
\multirow{2}{*}{\molicon{H2O_eq}} & \multirow{2}{*}{H$_2$O} & point group  & $28$ & $27$ & $116$ & $2255$ & $2\times10^{-8}$  & $4\times10^{-6}$ \\
 & & spin-adapted & $18$ & $17$ & $126$ & $1505$ & $2\times10^{-8}$  & $1\times10^{-18}$ \\
\midrule
\multirow{2}{*}{\molicon{NH3_eq}} & \multirow{2}{*}{NH$_3$} & point group  & $52$ & $51$ & $164$ & $3811$ & $4\times10^{-6}$  & $6\times10^{-6}$ \\
 & & spin-adapted & $28$ & $27$ & $140$ & $2145$ & $4\times10^{-6}$  & $1\times10^{-19}$ \\
\bottomrule
\end{tabular}
\caption{Spin-adapted (CSF) versus determinant-sector natural-gradient VQE at equilibrium geometry, STO-3G, statevector simulation, identical pipeline and stopping rule for each pair. To match the point-group-screened MinimalMetric curves of \sref{sec:vqe}, both members of each pair are screened to the Hartree--Fock spatial irrep (JW encoding): the ``point group'' rows are the point-group determinant sector (the MM full-symmetry curve of Fig.~\ref{fig:vqe-bondcurves}), and ``spin-adapted'' the singlet CSF block within it. The spin refinement removes the higher-spin CSFs of the same irrep, halving the H$_2$O and NH$_3$ sectors ($28\!\to\!18$, $52\!\to\!28$) and reaching the same floor in up to $1.8\times$ fewer evaluations; BeH$_2$'s $D_{2h}$ sector is already all-singlet, so spin-adaptation is a no-op there. This overlap is expected: the ground-state irrep is singlet-enriched (closed-shell singlets are totally symmetric and have no triplet partner), so point-group screening already discards most triplets, and the residual spin reduction shrinks with the size of the group, recovering the full $\{N,S_z\}$ halving in the low-symmetry limit. ``Core CNOTs'' counts the pruned tree only; the fixed dressing $U_{\mathrm{S}}$ is applied exactly at the statevector level and charged separately (see the cost paragraph). $\max\langle S^2\rangle$ is the largest total-spin expectation recorded along the trajectory; CSF entries marked $<\!10^{-30}$ are structural machine zeros.}
\label{tab:schur-csf}
\end{table*}

\paragraph*{Gate-level realisation: the quantum Paldus transform.}
The quantum Paldus transform of Ref.~\cite{Burkat2025} has been implemented in full at the gate level. The transform is an isometry $U_P$ from the occupation-number basis on $2d$ interleaved JW qubits to the UGA basis $\ket{N}\ket{2S}\ket{2M}\otimes\ket{\mathbf d}$, with three $O(\log d)$-qubit registers and the data qubits left holding the step vector; the circuit is a cascade over orbitals of one Clebsch--Gordan block each, comprising register incrementers and a sequence of Givens rotations on the orbital's qubit pair, multiplexed over the $(2S,2M)$ register values, with the Clebsch--Gordan angles $\cos\theta_{S,M}=\sqrt{(S{+}M{+}\tfrac12)/(2S{+}1)}$. The implementation is verified along an end-to-end chain: against the explicit $d=1$ and $d=2$ basis tables of Ref.~\cite{Burkat2025} (their Figs.~9--10) exactly; against a reference isometry assembled from the independently verified coupling kernel above, column by column for $d\le3$ ($\sim\!10^{-13}$); and column-for-column, signs included, against the dense sector unitary $U_{\mathrm{S}}$ ($\sim\!5\times10^{-13}$). Exact (unoptimised) synthesis over $\{\mathrm{CX}, U\}$ gives $1628/3784/12294/21002$ CX on $12/14/19/21$ qubits for $d=2/3/4/5$, consistent with the $O(d^3\,\mathrm{polylog}\,d)$ scaling of Ref.~\cite{Burkat2025}, whose fault-tolerant data-lookup compilation reduces the count to $O(d^3)$ Toffolis ($5500$ at $d=50$).

Two structural facts make $U_P$ a clean dressing for the tree ansatz. First, used inside a fixed $(N,S,S_z)$ sector, the registers of the inverse transform start at the sector constants and provably return to $\ket{0}$ on any superposition of the sector's step vectors, so they disentangle exactly and the dressing invariance of Proposition~\ref{prop:invariance} applies verbatim, ancillas included. Second, the interface between the interleaved mode ordering of Ref.~\cite{Burkat2025} (in which same-orbital spin flips carry no JW string) and the blocked ordering of the Hamiltonians is a wire relabelling together with the reordering parity $(-1)^{\sum_{p<q} n_{\beta_p} n_{\alpha_q}}$, which is exactly a ladder of $d(d-1)/2$ CZ gates. The statevector benchmark of Table~\ref{tab:schur-csf} executes the sector isometry this verified circuit implements (dressing the tree with the dense $U_{\mathrm S}$), the two agreeing to simulation roundoff; the gate-level transform is the hardware realisation of the same map. On the unscreened $\{N,S\}$ sectors its canonical step-vector leaves compile to $32/32/86/196/196$ CNOTs across the panel, no worse, and at the largest sectors better, than the block-wise leaf assignment ($196$ against $208$ CNOTs for H$_2$O and NH$_3$).

\paragraph*{Cost of the dressing.}
$U_{\mathrm{S}}$ is a parameter-free, state-independent unitary, compiled once per sector: a fixed overhead shared by every energy evaluation, gradient component and optimisation step alike, with the concrete counts quoted above and the asymptotically efficient fault-tolerant compilation of Ref.~\cite{Burkat2025} available at scale (the qubit Schur transform provides the first-quantised analogue at $\mathrm{poly}(n)$ cost~\cite{Bacon2006}). The register qubits can be borrowed from JW spectator qubits, so no ancilla beyond the $2m$-qubit register is needed. Separately, the numerical evidence indicates that an exact, strictly ancilla-free polynomial synthesis is obstructed (the coupling angle depends jointly on the running spin and projection, which no single encoding on the data qubits alone makes simultaneously local), while bounded-spin and approximate variants remain polynomial, a dichotomy developed elsewhere.

\section{Gate-free spin adaptation: a CSF tree and its determinant-tree dual}\label{app:spin-tree}

Particle number, spin projection and point-group symmetry act diagonally on the computational basis, so a state of definite symmetry is supported on a combinatorially specified set of determinants and is prepared exactly by taking those determinants as the active leaves of the tree (\sref{sec:pruning}). Total spin is the exception: its eigenstates, the configuration state functions (CSFs), are entangled superpositions of determinants, so a determinant sector is generally reducible and the pruned tree reaches a definite-spin state only variationally. This appendix gives two constructions that prepare a definite-spin state exactly, with the closed-form geometry intact and with no change-of-basis gate. The first, a tree built directly in the CSF basis, has an exactly diagonal metric and gives a spin eigenstate at every parameter value; the second keeps the plain determinant tree and ties a subset of its angles to pin the state in one spin sector. The two rest on a common backbone (the same distinguished determinants, the same free nodes) and produce the identical circuit; they are dual coordinate systems on it, the CSF tree in the CSF amplitudes and the tied tree in the determinant amplitudes. We take the CSF tree as the primary construction, for its exactly diagonal metric, and the tied determinant tree as its realisation on the determinant tree itself. The construction is given for an arbitrary target spin $S$, the singlet ($S=0$) being the case used in the electronic-structure benchmarks.

\paragraph*{Total spin as a linear constraint.} Fix a molecular symmetry sector with $N$ electrons, spin projection $S_z=S$ (the highest weight of the target multiplet) and, optionally, a point-group irrep, on $n=2m$ blocked Jordan--Wigner qubits, and let $\mathcal D$ be its determinant set (the active leaves of \sref{sec:pruning}), of size $k=|\mathcal D|$. A state $\ket\psi=\sum_{d\in\mathcal D}c_d\ket d$ of projection $S_z=S$ has total spin exactly $S$ if and only if it is a highest-weight vector, annihilated by the total raising operator,
\begin{equation}\label{eq:splus}
S_+\ket\psi=0, \qquad S_+=\sum_{p=1}^m a^\dagger_{p\alpha}a_{p\beta};
\end{equation}
the lower-weight members of the multiplet follow from the fixed operator $S_-$ and need no separate ansatz. Since $S_+$ raises $S_z$ by one, Eq.~\eqref{eq:splus} is a set of linear equations in the amplitudes, one for each determinant $d'$ of the $S_z=S{+}1$ sector $\mathcal D^{+}$,
\begin{equation}\label{eq:splus-components}
\sum_{d\in\mathcal D}\bra{d'}S_+\ket d\,c_d=0, \qquad d'\in\mathcal D^{+},
\end{equation}
each coupling the (Jordan--Wigner-signed) amplitudes of the determinants reached from $d'$ by lowering one singly occupied $\alpha$ spin. There are $r=|\mathcal D^{+}|$ of them, all independent, and the spin-adapted states form a subspace $\mathcal H_S\subset\operatorname{span}(\mathcal D)$ of dimension $f_S=k-r$, the branching (Weyl) dimension of the spin-$S$ block, equal to the number of spin-$S$ CSFs of the sector.

\paragraph*{Block structure.} The constraints \eqref{eq:splus-components} decouple over occupation patterns. A determinant factorises into doubly occupied and empty orbitals, spectators for the spin coupling, and $t$ singly occupied open-shell orbitals; $S_+$ couples two determinants only if they share the occupation pattern and differ by one $\beta\!\to\!\alpha$ flip, so each constraint lives in a single occupation-pattern block and $\mathcal H_S$ is the direct sum of the spin-$S$ blocks of the $t$-spin coupling problem. A block of $t$ open shells (with $t\ge 2S$) then holds $\binom{t}{t/2-S}$ determinants and $\binom{t}{t/2-S-1}$ constraints, leaving $f_S^{(t)}=\binom{t}{t/2-S}-\binom{t}{t/2-S-1}$ spin-$S$ CSFs, and the sector totals are the sums over blocks.

\paragraph*{Ballot leaders and the backbone.} A distinguished subset of the determinants organises both constructions.

\begin{definition}[Ballot leaders]\label{def:ballot}
A determinant is a ballot leader if its open-shell spin word, read in ascending orbital order with $\alpha=+1$ and $\beta=-1$, has all partial sums nonnegative (closed-shell determinants are vacuously ballot leaders).
\end{definition}

Ballot words are the Yamanouchi words, the leading determinants of the genealogically coupled CSFs; a block of $t$ open shells has $f_S^{(t)}$ of them, so the sector has exactly $f_S$ ballot leaders, one for each spin-$S$ CSF. On the pruned determinant tree they span a subtree whose $f_S-1$ branch nodes (those with a ballot leader in each of their two subtrees) form the backbone; the remaining $r=k-f_S$ active nodes lie off it. The CSF tree parametrises the backbone as a tree over the ballot leaders; the determinant-tree ties keep the full tree, freeing the backbone nodes and tying the off-backbone ones. The split is manifestly spin-independent: only $\mathcal D$ and the ballot endpoint change with $S$.

\paragraph*{A tree in the CSF basis.} Let $U_{\mathrm S}$ be the fixed unitary that carries the sector's determinant basis to its CSF basis, a classically tabulated change of basis whose columns are the simultaneous $N$, $S_z$, $S^2$ eigenstates (the genealogically coupled CSFs), the $f_S$ spin-$S$ columns spanning $\mathcal H_S$. Build a binary tree whose $f_S$ leaves are those spin-$S$ CSFs, each placed at its ballot leader, so that the tree is the backbone. This CSF tree carries $f_S-1$ angles $\bm\theta$ and, by the path-product rule of \sref{sec:circuit}, defines a CSF amplitude vector $\mathbf a(\bm\theta)$; the determinant amplitudes are $\mathbf c=U_{\mathrm S}\,\mathbf a(\bm\theta)$, prepared on a plain determinant tree over their support (an intermediate Slater-determinant tree). Because $U_{\mathrm S}$ is a fixed unitary, the Fubini--Study metric of the mapped state is exactly the diagonal metric of Theorem~\ref{thm:diag-metric} in $\bm\theta$, unchanged (Proposition~\ref{prop:invariance}): no tie and no correction. The change of basis is used only classically, to turn $\mathbf a$ into the amplitudes the tree prepares, and never appears as a gate. Every $\bm\theta$ therefore gives a pure spin-$S$ eigenstate, so $\langle S^2\rangle=S(S+1)$ holds identically, and the natural-gradient and shift-rule machinery of \sref{sec:kahler} and \sref{sec:shift} carry over verbatim, with the diagonal inverse metric of Eq.~\eqref{eq:Minv} and the four-term shift rule (the energy is a degree-two trigonometric polynomial in each $\theta$). The pipeline is $\bm\theta\to\mathbf a\to\mathbf c=U_{\mathrm S}\mathbf a\to$ the determinant-tree circuit that prepares the state.

\paragraph*{Ties on the determinant tree.} Alternatively, keep the full pruned tree of \sref{sec:circuit} over $\mathcal D$, which carries $k-1$ active nodes whereas $\mathcal H_S$ has only $f_S-1$ free amplitude ratios; the backbone chooses which $r=k-f_S$ nodes to tie.

\begin{proposition}[Ballot selection rule]\label{prop:ballot-select}
On the pruned determinant tree over $\mathcal D$, keep the $f_S-1$ backbone nodes free and tie the remaining $r=k-f_S$ active nodes. Then for generic free angles the constraints \eqref{eq:splus-components} determine the tied angles uniquely, and the resulting state is a pure spin-$S$ eigenstate at every value of the free angles.
\end{proposition}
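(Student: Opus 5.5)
The plan is to reduce the statement to a triangularity property of the constraint system \eqref{eq:splus-components} with respect to the tree's amplitude coordinates, and then to solve for the tied angles one at a time. Step one uses the block structure. Both the constraints and, after the constant-bit and ordering conventions, the relevant sub-cascades decouple over occupation-pattern blocks. It therefore suffices to treat a single block of $t$ open shells, with $\binom{t}{t/2-S}$ determinants indexed by spin words and $\binom{t}{t/2-S-1}$ constraints indexed by the words of the $S_z=S+1$ sector.

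In step two we order these objects compatibly. Within a block, the operator $S_+$ sends each determinant $d$ to a sum over $\beta\to\alpha$ flips. I would pair each constraint row $d'\in\mathcal D^+$ with a distinguished non-ballot determinant $\nu(d')$. The natural choice is the standard reflection (Andr\'e) bijection: flip to $\beta$ the $\alpha$ at the first position where the partial sum of $d'$'s word reaches its maximal violating level. The non-ballot words of weight $S$ are in bijection with the words of weight $S+1$, so $\nu$ is a bijection onto the non-leaders. The key claim is that, in a suitable total order on determinants (the leaf order of the tree, which is lexicographic in ascending orbital bits), $\nu(d')$ is the \emph{largest} determinant appearing with nonzero coefficient in row $d'$. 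The constraint matrix restricted to the non-leader columns would then be triangular with nonzero (Jordan--Wigner sign $\pm1$) diagonal. This recovers the count $r=k-f_S$ and the independence of the constraints, and it shows that every non-leader amplitude is a linear function of amplitudes of strictly smaller determinants. By recursion, each such amplitude is ultimately a linear function of the ballot-leader amplitudes alone.

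Step three transfers this triangular amplitude statement to tree angles. Each off-backbone active node $a$ has, by definition, no ballot leader in one of its two child subtrees. Its angle is fixed by $\tan\theta_a=\|c_{R(a)}\|/\|c_{L(a)}\|$ through Eq.~\eqref{eq:c2p-ry} (up to the sign/range convention), and all amplitudes in the leader-free child are forced by step two. I would process the off-backbone nodes bottom-up in the leaf order. Within each node's leader-free subtree, the subtree's internal ratios are themselves off-backbone, so they are determined recursively. The node angle is then the unique solution of a ratio equation whose right-hand side depends only on amplitudes already fixed by the free backbone angles and by previously solved tied angles.

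Genericity enters only to exclude divisions by zero. The determined-side norm $\|c_{L(a)}\|$ must be nonzero, which holds off a proper algebraic subset of the free angles. Sign ambiguities are absorbed by the extended ranges recorded in Appendix~\ref{app:c2p-detail}. With the tied angles so chosen, every constraint in \eqref{eq:splus-components} holds. Hence $S_+\ket\psi=0$, and with $S_z=S$ the state is a spin-$S$ highest-weight vector at every free value; uniqueness follows from the triangularity.

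The main obstacle is the triangularity claim of step two. I must verify that the reflection map $\nu$ really selects the maximal determinant of each row \emph{in the tree's leaf order under the blocked Jordan--Wigner bit layout}, rather than in some abstract word order. A related check is that "off-backbone node" matches "leader-free child subtree" for every tied node, and not merely for the count. If maximality fails for the lexicographic order, the first fallback is to use a minimal element together with a top-down recursion. If that also fails, I would prove directly that the square submatrix on non-leader columns is unimodular, using the Gale-type greedy-basis argument of Lemma~\ref{lem:greedy-unimodular}: the greedy least-index basis of the column space of $S_+$ restricted to $\mathcal D$ is exactly the set of non-ballot words.
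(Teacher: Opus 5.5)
\textbf{Gap 1: Step~2 is false in every ordering, not only in the leaf order.} Take one block with $t=4$, $S=0$. The leaders are $\alpha\alpha\beta\beta$ and $\alpha\beta\alpha\beta$. Number the non-leader columns $\beta\beta\alpha\alpha$, $\beta\alpha\beta\alpha$, $\alpha\beta\beta\alpha$, $\beta\alpha\alpha\beta$ as $1$--$4$. The rows $\beta\alpha\alpha\alpha$, $\alpha\beta\alpha\alpha$, $\alpha\alpha\beta\alpha$, $\alpha\alpha\alpha\beta$ then have non-leader supports $\{1,2,4\}$, $\{1,3\}$, $\{2,3\}$, $\{4\}$. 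After the forced pivot in the last row, the remaining $3\times3$ support is a $3$-cycle with two perfect matchings. So no row and column order makes this block triangular. Its determinant is $\pm2$, so your unimodular fallback fails too. (Already for $t=2$ in the paper's bit layout, the largest entry of row $0011$ is the leader $1001$, not $0110$.)

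What you actually need is only that the $r\times r$ non-leader block is invertible. That is equivalent to the projection of $\mathcal H_S$ onto the leader coordinates being injective, i.e.\ to invertibility of the Yamanouchi matrix $W$. This holds, but for a reason on the CSF side, not the constraint side. The genealogical CSF of path $Y$ contains only words whose partial sums are pointwise dominated by those of $Y$, and it has a nonzero Clebsch--Gordan product on $Y$ itself. Hence $W$ is triangular in the dominance order.

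\textbf{Gap 2: Step~3 does not go through, and this is the more serious problem.} The constraints decouple over occupation patterns, but the tree does not. A tied node can also have a leader in one child. In the paper's $\mathrm H_2$ example, the leader $0101$ and the non-leader $0110$ sit in the two children of the tied $\theta_4$. Every leader amplitude is a path product containing tied ancestors, so the backbone angles fix no amplitude by themselves. Your bottom-up ratio equations are therefore implicit and coupled, not explicit.

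Concretely, in that example the determined child carries $c_{0110}=c_{1001}=\sin\theta_0\cos\theta_5$. Its value is set by a leader outside $\theta_4$'s subtree. The other child carries $c_{0101}=\cos\theta_0\cos\theta_4$. So the ratio equation $\tan\theta_4=c_{0110}/c_{0101}$ has $\theta_4$ on both sides. It reduces to $\sin\theta_4=\tan\theta_0\cos\theta_5$, which has no real solution on the open set $\tan\theta_0\cos\theta_5>1$. Genericity is thus not just about nonvanishing denominators, and uniqueness does not follow from any linear triangularity.

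What is required is the statement the paper's proof rests on. The $r\times r$ Jacobian $\Phi_{\mathrm t}$ of the constraints \eqref{eq:splus-components} with respect to the tied angles must be nonsingular; equivalently, $\mathcal H_S$ must be transverse to the fibres of fixed backbone angles. The implicit-function theorem then gives the tied angles as smooth functions of the free ones near any solution. The paper obtains the square system by counting $r$ tied angles against $r$ independent constraints. It asserts generic nonsingularity of $\Phi_{\mathrm t}$, supported numerically up to $t=8$. Your plan needs that same statement, or a proof of it, and does not supply one.
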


\noindent\textbf{Proof.} The backbone has $f_S-1$ nodes (the branch nodes of a binary subtree on the $f_S$ ballot leaders), leaving $(k-1)-(f_S-1)=k-f_S=r$ active nodes tied. The $r$ constraints \eqref{eq:splus-components} depend on the tied angles through a square Jacobian $\Phi_{\mathrm t}$, nonsingular at generic parameters, so the implicit-function theorem fixes the tied angles as smooth functions of the free ones; the amplitude vector then satisfies every constraint and lies in $\mathcal H_S$.\hfill$\square$

Nonsingularity of $\Phi_{\mathrm t}$ and $\langle S^2\rangle=S(S+1)$ have been checked to machine precision for open-shell counts up to $t=8$ and for every sector of the benchmark below. Figure~\ref{fig:spin-tree-example} shows the classification on a minimal validated example.

\begin{figure*}[!htbp]
    \centering
    \resizebox{0.74\linewidth}{!}{

\providecommand{\PWactC}{yellow!70!black}
\providecommand{\PWactF}{yellow!55}
\providecommand{\PWfixC}{red!65!black}
\providecommand{\PWfixF}{red!40}
\providecommand{\PWinaC}{gray!55}
\providecommand{\PWinaF}{gray!15}
\providecommand{\PWleafC}{green!55!black}
\providecommand{\PWleafF}{green!35}
\providecommand{\PWomegaC}{cyan!60!black}
\providecommand{\PWomegaF}{cyan!30}
\providecommand{\PWtieC}{violet!70!black}
\providecommand{\PWtieF}{violet!32}

\begin{tikzpicture}[
  grow=right,
  level distance=18mm,
  level 1/.style={sibling distance=44mm},
  level 2/.style={sibling distance=22mm},
  level 3/.style={sibling distance=11mm},
  level 4/.style={sibling distance=5.5mm},
  level 5/.style={sibling distance=5.5mm, level distance=16mm},
  int/.style={circle, draw, inner sep=0pt, font=\scriptsize, minimum size=5.2mm, align=center},
  ph/.style ={circle, draw=\PWleafC, fill=green!8, inner sep=0pt, font=\tiny, minimum size=4.2mm},
  phA/.style={circle, draw=\PWomegaC, fill=\PWomegaF, inner sep=0pt, font=\tiny, minimum size=4.2mm},
  phI/.style={circle, draw=\PWinaC, fill=\PWinaF, inner sep=0pt, font=\tiny, minimum size=4.2mm},
  lf/.style ={rectangle, draw, inner sep=2pt, font=\tiny},
  lfS/.style={rectangle, draw=\PWleafC, fill=\PWleafF, inner sep=2pt, font=\tiny},
  lfB/.style={rectangle, draw=\PWleafC, line width=0.9pt, fill=\PWleafF, inner sep=2pt, font=\tiny},
  edge from parent/.style={draw, line width=0.4pt},
  el/.style={midway, font=\tiny, sloped, above, inner sep=1pt},
]

\node[int, fill=\PWactF, draw=\PWactC] {$\theta_0$}
  child { node[int, fill=\PWfixF, draw=\PWfixC] {$\theta_1$}
    child { node[int, fill=\PWinaF, draw=\PWinaC] {$\theta_3$}
      child { node[int, fill=\PWinaF, draw=\PWinaC] {$\theta_7$}
        child { node[phI] {$\omega_0$} child { node[lf] {$0000$} } }
        child { node[phI] {$\omega_1$} child { node[lf] {$0001$} } }
      }
      child { node[int, fill=\PWinaF, draw=\PWinaC] {$\theta_8$}
        child { node[phI] {$\omega_2$} child { node[lf] {$0010$} } }
        child { node[phI] {$\omega_3$} child { node[lf] {$0011$} } }
      }
      edge from parent node[el]{$0$}
    }
    child { node[int, fill=\PWtieF, draw=\PWtieC] {$\theta_4$}
      child { node[int, fill=\PWfixF, draw=\PWfixC] {$\theta_9$}
        child { node[phI] {$\omega_4$} child { node[lf] {$0100$} } edge from parent node[el]{$0$} }
        child { node[lfB] {$0101$} edge from parent node[el]{$1$} }   
        edge from parent node[el]{$\cos\theta_4$}
      }
      child { node[int, fill=\PWfixF, draw=\PWfixC] {$\theta_{10}$}
        child { node[phA] {$\omega_6$} child { node[lfS] {$0110$} edge from parent node[el]{$e^{i\omega_6}$} } edge from parent node[el]{$1$} }
        child { node[phI] {$\omega_7$} child { node[lf] {$0111$} } edge from parent node[el]{$0$} }
        edge from parent node[el]{$\sin\theta_4$}
      }
      edge from parent node[el]{$1$}
    }
    edge from parent node[el]{$\cos\theta_0$}
  }
  child { node[int, fill=\PWfixF, draw=\PWfixC] {$\theta_2$}
    child { node[int, fill=\PWactF, draw=\PWactC] {$\theta_5$}
      child { node[int, fill=\PWfixF, draw=\PWfixC] {$\theta_{11}$}
        child { node[phI] {$\omega_8$} child { node[lf] {$1000$} } edge from parent node[el]{$0$} }
        child { node[phA] {$\omega_9$} child { node[lfB] {$1001$} edge from parent node[el]{$e^{i\omega_9}$} } edge from parent node[el]{$1$} }
        edge from parent node[el]{$\cos\theta_5$}
      }
      child { node[int, fill=\PWfixF, draw=\PWfixC] {$\theta_{12}$}
        child { node[phA] {$\omega_{10}$} child { node[lfB] {$1010$} edge from parent node[el]{$e^{i\omega_{10}}$} } edge from parent node[el]{$1$} }
        child { node[phI] {$\omega_{11}$} child { node[lf] {$1011$} } edge from parent node[el]{$0$} }
        edge from parent node[el]{$\sin\theta_5$}
      }
      edge from parent node[el]{$1$}
    }
    child { node[int, fill=\PWinaF, draw=\PWinaC] {$\theta_6$}
      child { node[int, fill=\PWinaF, draw=\PWinaC] {$\theta_{13}$}
        child { node[phI] {$\omega_{12}$} child { node[lf] {$1100$} } }
        child { node[phI] {$\omega_{13}$} child { node[lf] {$1101$} } }
      }
      child { node[int, fill=\PWinaF, draw=\PWinaC] {$\theta_{14}$}
        child { node[phI] {$\omega_{14}$} child { node[lf] {$1110$} } }
        child { node[phI] {$\omega_{15}$} child { node[lf] {$1111$} } }
      }
      edge from parent node[el]{$0$}
    }
    edge from parent node[el]{$\sin\theta_0$}
  };

\end{tikzpicture}}
    \caption{Spin-adapted classification tree, in the style of Fig.~\ref{fig:classified-tree}(a), on the smallest sector that ties a node: two spatial orbitals, two electrons, the singlet sector of the $\{N=2,\,S_z=0\}$ block on $n=4$ qubits (H$_2$ in STO-3G; the same single-tie, $r=1$ structure recurs in the H$_3^+$ and LiH sectors of Table~\ref{tab:spin-tree}). The four active leaves (green) are the sector determinants $\mathcal D=\{0101,0110,1001,1010\}$: the closed shells $\ket{\phi_0\bar\phi_0}$ ($0101$) and $\ket{\phi_1\bar\phi_1}$ ($1010$), and the open-shell pair $\ket{\phi_0\bar\phi_1}$ ($1001$) and $\ket{\bar\phi_0\phi_1}$ ($0110$), so $k=4$ with $f_S=3$ singlet CSFs and $r=k-f_S=1$ constraint. Internal nodes carry the classification of \sref{sec:classify}: \textsc{Active} (yellow, $\theta_i$ free), \textsc{Fixed} (red, $\theta_i$ pinned to $0$ or $\pi/2$, edge factors shown as $0/1$), \textsc{Inactive} (grey, gauge direction). Spin adaptation refines the active nodes by the ballot rule (Prop.~\ref{prop:ballot-select}): the three ballot leaders (heavier green border) span the backbone, whose $f_S-1=2$ branch nodes $\theta_0,\theta_5$ stay free (yellow), while the one off-backbone active node $\theta_4$ becomes \textsc{Tied} (violet). The single singlet constraint equates the two open-shell amplitudes, $c_{0110}=c_{1001}$, that is $\cos\theta_0\sin\theta_4=\sin\theta_0\cos\theta_5$, and so fixes the tied angle in closed form as a function of the free ones, $\theta_4=\arcsin(\tan\theta_0\cos\theta_5)$. The coupled determinants $0110$ and $1001$ are bitwise complements (Hamming distance $4$), so no qubit relabelling makes them siblings: the tie spreads across the three angles $\theta_0,\theta_4,\theta_5$ rather than pinning one node to a constant. Every value of the two free angles yields a pure singlet, $\langle S^2\rangle=0$ identically (verified to machine zero), and natural-gradient optimisation on the same determinant-tree circuit reaches the sector full-configuration-interaction energy (H$_2$/STO-3G, $\lvert E-E_{\mathrm{FCI}}\rvert=4\times10^{-16}$). Leaf phases $\omega_j$ (cyan, active; grey, inactive) behave exactly as in the bare tree, with the global phase gauge-fixed on the Hartree--Fock leaf $0101$ (direct edge).}
    \label{fig:spin-tree-example}
\end{figure*}

\paragraph*{Closed-form ties.} Fixing the tied angles from the free ones is closed form. Collecting the factor at a tied node $j$ in the path products makes its constraint $\sum_d\varphi_d c_d=0$ a single sinusoid in $\theta_j$,
\begin{equation}\label{eq:node-tie}
A\cos\theta_j+B\sin\theta_j=C,
\end{equation}
with $A,B$ the partial sums of the constraint over $j$'s two subtrees and $C$ the part carried by any of its determinants outside them, all functions of the remaining angles alone; one sinusoid set equal to a constant, it is solved for $\theta_j$ in closed form. When $j$ separates the support the constant vanishes and $\theta_j=\operatorname{atan2}(-A,B)$ is a pure rotation; a $t=2$ open pair whose two determinants are Jordan--Wigner far apart (every $r=1$ sector here) is separated only at a backbone node, so the ballot rule instead ties an off-backbone node reached by one of the pair, keeping $C\neq0$ (there $A=0$ and $\theta_j=\arcsin(C/B)$). A block with several coupled constraints is triangularised by eliminating a shared separating node between two of them through their resultant $A_1B_2-A_2B_1$, again of the form~\eqref{eq:node-tie} in the next node down, so the block's tied angles follow in sequence. The Clebsch--Gordan coefficients of the coupling enter only through the fixed weights $\varphi_d$.

\paragraph*{Induced metric.} Because every tied angle is a function of the free ones, the prepared state is a smooth map of the free parameters, and the metric it induces on them follows from the diagonal metric of Theorem~\ref{thm:diag-metric} by a low-rank correction. With $\bm\theta^{\mathrm f}$ and $\bm\theta^{\mathrm t}(\bm\theta^{\mathrm f})$ the free and tied angles, $J=\partial\bm\theta^{\mathrm t}/\partial\bm\theta^{\mathrm f}$ the tie Jacobian, and $D_{\mathrm f},D_{\mathrm t}$ the diagonal weights $w_i$ of the free and tied nodes, the chain rule gives $g_{\mathrm f}=D_{\mathrm f}+J^{\top}D_{\mathrm t}J$, a diagonal matrix plus a term of rank at most $r$, with Woodbury inverse
\begin{equation}\label{eq:induced-inv}
g_{\mathrm f}^{-1}=D_{\mathrm f}^{-1}-D_{\mathrm f}^{-1}J^{\top}\!\bigl(D_{\mathrm t}^{-1}+J\,D_{\mathrm f}^{-1}J^{\top}\bigr)^{-1}\!J\,D_{\mathrm f}^{-1},
\end{equation}
whose only inversion is the $r\times r$ matrix in parentheses; the tie Jacobian is $J=-\Phi_{\mathrm t}^{-1}\Phi_{\mathrm f}$, with $\Phi_{\mathrm f}$ the derivative of the constraints in the free angles. The natural-gradient and time-evolution updates of \sref{sec:kahler} therefore carry over verbatim, with the diagonal inverse metric of Eq.~\eqref{eq:Minv} replaced by Eq.~\eqref{eq:induced-inv} and the shift-rule energy gradient of \sref{sec:shift} projected onto the free angles by the same chain rule. Both additions are $r\times r$ classical linear algebra; the circuit and its shift-rule oracle are untouched.

\paragraph*{The two constructions are dual.} They share the backbone: the CSF tree's nodes are exactly the free nodes of the determinant-tree ties (the same heap positions), and the two produce the identical determinant-tree circuit, differing only in coordinates. The CSF tree varies the CSF amplitudes $\mathbf a$, whose metric is exactly diagonal; the tied tree varies the determinant amplitudes $\mathbf c$, whose metric is that diagonal plus the rank-$r$ tie correction \eqref{eq:induced-inv}. On the backbone the two are linked by the triangular Yamanouchi matrix $W=U_{\mathrm S}[\text{ballot leaders},\,\text{spin-}S\text{ columns}]$, the determinant amplitudes at the ballot leaders being $W\mathbf a$; the tied angles the CSF tree realises coincide with the closed-form ties to machine precision. The CSF tree carries the simpler geometry, a strictly diagonal metric and no ties, and reaches the spin eigenstate cleanly where the bare determinant tree stalls (NH$_3$ below), at the cost of tabulating $U_{\mathrm S}$; the tie construction needs no such tabulation and works in the determinant amplitudes directly.

\paragraph*{Cost and validation.} Both constructions prepare the pruned determinant tree of \sref{sec:pruning} for the sector $\mathcal D$: the two-qubit cost is that of the bare sector tree, no change-of-basis unitary is appended and no auxiliary register is needed, and spin adaptation is paid entirely in classical work, all polynomial in $k$: marking the ballot leaders and the backbone, and then, for the CSF tree, tabulating the fixed $U_{\mathrm S}$ and the diagonal metric, or, for the tied determinant tree, solving the closed-form ties \eqref{eq:node-tie} and forming the rank-$r$ metric correction \eqref{eq:induced-inv}. Table~\ref{tab:spin-tree} reports both on the benchmark molecules. Every state of either family lies in $\mathcal H_S$ by construction, so $\langle S^2\rangle=S(S+1)$ holds identically, and natural-gradient optimisation reaches the sector configuration-interaction reference using only the determinant-tree two-qubit gates. The construction is spin-general: for a triplet target ($S=1$) in the BeH$_2$ and H$_2$O active spaces, the CSF tree reaches the lowest spin-1 energy with $\langle S^2\rangle=2$ to within $10^{-15}$. It is also encoding-native: a symmetry-adapted encoding acts on a fixed sector as a sign-free $\mathrm{GF}(2)$ relabelling of the determinants, so mapping each Slater determinant of the intermediate tree to its encoded correspondent (the only sign being the blocked-to-interleaved reordering parity already used above) lands the identical spin-$S$ preparation in any symmetry-adapted encoding, at the encoded qubit count and with $\langle S^2\rangle=S(S+1)$ preserved -- the form used for the spin-conserving dressed two-block ansatz of Appendix~\ref{app:two-block}.

\begin{table*}[t]
\centering
\renewcommand{\arraystretch}{1.2}
\begin{tabular}{clcccccc}
\toprule
\rowcolor{headcolor}[\tabcolsep][\tabcolsep] &  & $k$ & $f_S$ & free/tied & core CNOTs & $|E-E_{\mathrm{ref}}|$ & $\langle S^2\rangle$ \\
\midrule
\molicon{H3plus_eq} & H$_3^+$ & $5$  & $4$  & $3$ / $1$   & $24$  & $8\times10^{-11}$ & $\le10^{-25}$ \\
\molicon{LiH_R_1.50} & LiH & $5$  & $4$  & $3$ / $1$   & $24$  & $1\times10^{-9}$  & $\le10^{-25}$ \\
\molicon{BeH2_eq} & BeH$_2$ & $6$  & $6$  & $5$ / $0$   & $48$  & $1\times10^{-10}$ & $0$ \\
\molicon{H2O_eq} & H$_2$O  & $28$ & $18$ & $17$ / $10$ & $116$ & $2\times10^{-8}$  & $\le10^{-18}$ \\
\molicon{NH3_eq} & NH$_3$  & $52$ & $28$ & $27$ / $24$ & $164$ & $5\times10^{-6}$  & $\le10^{-19}$ \\
\bottomrule
\end{tabular}
\caption{Gate-free spin-adapted VQE at equilibrium geometry, STO-3G, point-group-screened singlet sector, statevector simulation, for the CSF tree; the determinant-tree ties produce the identical circuit and match these figures to the displayed precision except on NH$_3$ (see below). $k$ is the sector determinant count and $f_S$ its spin-adapted (CSF) dimension; ``free/tied'' are the numbers of free and tied tree angles ($f_S-1$ and $k-f_S$), equal to the CSF tree's angles and to the backbone and off-backbone nodes of the tie construction; ``core CNOTs'' is the two-qubit count of the pruned determinant tree, the entire two-qubit cost of either construction. The energy error is against the sector configuration-interaction reference; $\langle S^2\rangle$ equals $S(S+1)=0$ by construction, and the entry is the largest deviation recorded along the optimisation. The CSF tree holds $\langle S^2\rangle$ at machine zero throughout, via its exactly diagonal metric; on NH$_3$ the determinant-tree ties instead reach $|E-E_{\mathrm{ref}}|=2\times10^{-7}$ but pass through a residual $\langle S^2\rangle\le10^{-7}$ near a chart corner that the CSF tree avoids. BeH$_2$'s $D_{2h}$ sector is already all-singlet, so no node is tied and the two constructions coincide.}
\label{tab:spin-tree}
\end{table*}

\section{Classical simulability of the bare ansatz, and its boundary}\label{app:simulability}

The bare ansatz of \sref{sec:circuit}, the tree state $\ket{\psi(\bm\theta,\bm\omega)}$ and the energy $\braket{\psi|H|\psi}$, optimised or propagated by the diagonal-metric updates of \sref{sec:kahler}, is a classical algorithm whenever the active-leaf support is small. This is made precise here, because it fixes what the benchmarks of \sref{sec:vqe} and \sref{sec:hubbard} do and do not claim, and it isolates exactly where a quantum device becomes indispensable: in the composition $U(\bm\phi)\ket{\psi(\bm\theta,\bm\omega)}$ with a non-classically-contractible dressing (Appendices~\ref{app:two-block}--\ref{app:fs-protocols}). The result is stated in a form deliberately broader than the tree ansatz, because the phenomenon is structural rather than an artefact of this construction: it applies verbatim to any variational family that shares the tree's two enabling features, an enumerable sparse support and classically computable amplitudes, including several constructions in the literature. The tree ansatz then enters as the instantiating example, and the closure properties of the class fix the necessary conditions any dressing must meet to restore genuinely quantum content.

\paragraph*{The class.}

Throughout, a frame is an orthonormal basis $\{V\ket{j}\}$ of the $n$-qubit Hilbert space specified by a classically describable unitary $V$: the identity (computational basis), a wire relabelling, or a classically tabulated basis change such as the sector Schur transform of Appendix~\ref{app:schur-csf}. All conditions below are stated in a fixed frame; they are written for $V=1$, and every statement is covariant under a change of frame, with $V$ absorbed into the observable.

\begin{definition}[Sector-sparse variational scheme]\label{def:sector-sparse}
A parametrised family of $n$-qubit states $\{\ket{\psi(\bm\lambda)}\}$ with $P$ real parameters, together with an observable $H$, is $(k,V)$-sector-sparse if, in the frame $V$:
\begin{itemize}
\item[(A1)] \textbf{Enumerable support.} Every reachable state is normalised, $\ket{\psi(\bm\lambda)}=\sum_{j\in S}c_j(\bm\lambda)\ket{j}$ with $\sum_{j\in S}|c_j|^2=1$, for a single explicitly enumerated label set $S$, $|S|=k$ (stored sorted, so that membership costs $\mathcal O(n+\log k)$); if instead the chart delivers unnormalised $c$, every quantity below acquires the classically computable factor $c^\dagger c$ and its derivatives, and the conclusions are unchanged.
\item[(A2)] \textbf{Amplitude and derivative oracles.} Each amplitude $c_j(\bm\lambda)$ and each derivative $\partial_\mu c_j(\bm\lambda)$ is computable in $\mathrm{poly}(n)$ time. (The derivative oracle is not implied by the amplitude oracle for an arbitrary parametrisation, but holds automatically whenever each parameter enters through a gate with an efficiently computable generator; in the tree chart both are $\mathcal O(n)$ path products by Eq.~\eqref{eq:c2p-ry}.)
\item[(A3)] \textbf{Frame-sparse observable.} $H=\sum_{l=1}^{M}h_l\,P_l$, where each term maps frame-basis states to frame-basis states up to a phase, $P_l\ket{j}=\eta_{l,j}\ket{\pi_l(j)}$ with $\pi_l$ and $\eta_{l,j}$ computable in $\mathrm{poly}(n)$. Pauli strings are the canonical case (for the molecular and Hubbard Hamiltonians in a fixed encoding, $M=\mathrm{poly}(n)$); every statement below extends verbatim to row-computable $s$-sparse terms at cost $\times\, s$.
\end{itemize}
\end{definition}

Write $c=(c_j)_{j\in S}\in\mathbb C^k$ and $\partial_\mu c$ for the coefficient vector and its derivatives, both classically available in $\mathcal O(k\,\mathrm{poly}(n))$ by (A2), and $P_S$ for the orthogonal projector onto $\mathrm{span}(S)$. A single lemma carries all of the $n$-qubit content; everything that follows is exact linear algebra in dimension $k$.

\begin{lemma}[Compression]\label{lem:compression}
Under (A1) and (A3), the compressed observable $H_S:=P_SHP_S$, as a $k\times k$ Hermitian matrix indexed by $S$, is computable exactly in $\mathcal O\big(Mk\,(n+\log k)\big)$ time, with at most $\min(k^2,Mk)$ nonzero entries; explicitly,
\begin{equation}\label{eq:energy-classical}
\braket{\psi|H|\psi}\;=\;c^\dagger H_S\,c,
\qquad
\big(H_S\big)_{j\,j'}\;=\;\sum_{l=1}^{M}h_l\,\eta_{l,j'}\,\big[\pi_l(j')=j\big].
\end{equation}
The same holds for any other frame-sparse observable, and for $H^2$ at cost $\mathcal O\big(M^2k\,(n+\log k)\big)$.
\end{lemma}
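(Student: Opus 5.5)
The plan is a direct column-by-column construction of $H_S$ followed by one line of algebra for the energy identity.

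\textbf{Energy identity.} By (A1) the state lies in $\mathrm{span}(S)$, so $P_S\ket{\psi}=\ket{\psi}$. Hence $\braket{\psi|H|\psi}=\braket{\psi|P_SHP_S|\psi}$. Expanding $\ket\psi=\sum_{j\in S}c_j\ket j$ gives $\sum_{j,j'\in S}c_j^*\braket{j|H|j'}c_{j'}=c^\dagger H_S c$, which is the first half of Eq.~\eqref{eq:energy-classical}. Hermiticity of $H_S$ is inherited from $H$.

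\textbf{Matrix entries.} For the entry formula I will use (A3) term by term. Each term acts as $P_l\ket{j'}=\eta_{l,j'}\ket{\pi_l(j')}$, so
\[
\braket{j|H|j'}=\sum_l h_l\,\eta_{l,j'}\,[\pi_l(j')=j].
\]
Restricting $j,j'$ to $S$ gives the second half of Eq.~\eqref{eq:energy-classical}.

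\textbf{Algorithm and cost.} Loop over $j'\in S$ and $l=1,\dots,M$. For each pair:
\begin{itemize}
\item compute $\pi_l(j')$ and $\eta_{l,j'}$ in $\mathrm{poly}(n)$ time, which is $\mathcal O(n)$ for Pauli strings, the convention behind the stated bound;
\item test membership $\pi_l(j')\in S$ by binary search on the sorted list, at cost $\mathcal O(n+\log k)$ as stipulated in (A1);
\item if the image is in $S$, accumulate $h_l\eta_{l,j'}$ into entry $(\pi_l(j'),j')$, stored in a sparse column or dictionary.
\end{itemize}
There are $Mk$ iterations, each of cost $\mathcal O(n+\log k)$, which gives the claimed total. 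The nonzero count follows from two facts. Each column receives at most $M$ contributions, so there are at most $Mk$ nonzeros. The matrix is $k\times k$, so there are at most $k^2$. Together these give $\min(k^2,Mk)$.

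\textbf{Other observables and $H^2$.} Any other frame-sparse observable is handled by the identical argument. For $H^2$ I will expand $H^2=\sum_{l,l'}h_lh_{l'}P_lP_{l'}$. Each product again maps basis states to basis states up to a phase: $P_lP_{l'}\ket{j'}=\eta_{l',j'}\eta_{l,\pi_{l'}(j')}\ket{\pi_l\pi_{l'}(j')}$, and this is computable in $\mathrm{poly}(n)$ by composing the two oracles. So $H^2$ is frame-sparse with $M^2$ terms, and the same loop gives cost $\mathcal O(M^2k(n+\log k))$.

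\textbf{Potential gap.} The one point to watch is that the intermediate state $\pi_{l'}(j')$ may leave $S$. This is harmless: the compression $P_SH^2P_S$ projects only the final index, so no membership test is applied at the intermediate step. Without the membership test, $(H^2)_S$ would wrongly become $H_SH_S$.

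Nothing else here is hard. The frame covariance is immediate, since $V$ is absorbed into the observable and (A3) is stated in the frame.
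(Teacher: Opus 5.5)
Your proposal is correct and follows the paper's proof essentially verbatim. It runs the same loop over the $Mk$ pairs $(l,j')$ with one binary-search membership test and one accumulation per pair, derives the energy identity from $P_S\ket\psi=\ket\psi$, inherits Hermiticity from $H$, and handles $H^2$ by expanding into the $M^2$ frame-sparse monomials $P_lP_{l'}$. Your explicit phase-composition formula for $P_lP_{l'}$ and your remark that the stated bound needs $\mathcal O(n)$ oracle calls are useful details the paper leaves implicit.

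There is one slip in your closing remark, and the logic is inverted. Imposing a membership test at the intermediate label $\pi_{l'}(j')$ is what would wrongly produce $H_SH_S=P_SHP_SHP_S$. Omitting that test, as your procedure does, correctly gives $P_SH^2P_S$.
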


\textbf{Proof.} For each $l\in[M]$ and $j'\in S$, compute $\pi_l(j')$ and $\eta_{l,j'}$, test $\pi_l(j')\in S$ by binary search, and, if present, add $h_l\,\eta_{l,j'}$ to the entry $(H_S)_{\pi_l(j'),\,j'}$. Each pair $(l,j')$ touches exactly one entry, so the loop has $Mk$ iterations; Hermiticity is inherited from $H$, and the first identity in Eq.~\eqref{eq:energy-classical} is $\ket\psi=P_S\ket\psi$. For $H^2$, expand into the $M^2$ frame-sparse monomials $P_lP_{l'}$. $\square$

\begin{theorem}[Exact dequantization of the metric-aware loop]\label{thm:dequantization}
For any $(k,V)$-sector-sparse scheme, the following are computable classically, exactly (to machine precision, with no sampling and no additive-error estimation), in time polynomial in $n$, $k$, $M$ and $P$:
(i) the energy $E(\bm\lambda)=\braket{\psi|H|\psi}$;
(ii) the Euclidean gradient $dE$;
(iii) the chart pullback metric $\bar g_{\mu\nu}=\mathrm{Re}\braket{\partial_\mu\psi|\partial_\nu\psi}$ of Theorem~\ref{thm:diag-metric}, the full quantum geometric tensor $G_{\mu\nu}=\braket{\partial_\mu\psi|\partial_\nu\psi}-\braket{\partial_\mu\psi|\psi}\!\braket{\psi|\partial_\nu\psi}$, and hence the projective Fubini--Study metric $\mathrm{Re}\,G$ and the Berry curvature $-2\,\mathrm{Im}\,G$;
(iv) one step of quantum natural gradient, variational imaginary-time, or variational real-time (McLachlan) evolution;
(v) exact sampling of frame-basis measurements of $\ket{\psi(\bm\lambda)}$;
(vi) exact sampling from the sector-Haar ensemble on $\mathrm{span}(S)$, and closed-form design moments of frame-sparse observables over it;
(vii) the Dirac--Frenkel leakage certificate $\|(1-P_S)H\psi\|$ of Eq.~\eqref{eq:df-bound-full}, including its per-leaf decomposition.
\end{theorem}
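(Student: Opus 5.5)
The plan is to reduce every item to $k$-dimensional linear algebra on the coefficient vector $c$, its derivatives $\partial_\mu c$, and the compressed matrix $H_S$ of Lemma~\ref{lem:compression}. By (A1) and (A2), $c$ and all $P$ derivative vectors $\partial_\mu c$ are available in $\mathcal O(Pk\,\mathrm{poly}(n))$ time. I will work in the frame $V=1$ throughout, since covariance under a change of frame simply absorbs $V$ into the observable.

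With that setup, items (i)--(iii) are direct. Item (i) is Eq.~\eqref{eq:energy-classical}. For (ii), $\partial_\mu E=2\,\mathrm{Re}\,(\partial_\mu c)^\dagger H_S c$, because $\partial_\mu\ket\psi$ is also supported on $S$. For (iii), $\braket{\partial_\mu\psi|\partial_\nu\psi}=(\partial_\mu c)^\dagger\partial_\nu c$ and $\braket{\psi|\partial_\nu\psi}=c^\dagger\partial_\nu c$. These give $G$, $\mathrm{Re}\,G$ and $-2\,\mathrm{Im}\,G$ at cost $\mathcal O(P^2k)$.

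For (iv), I will assemble the natural-gradient, imaginary-time and McLachlan linear systems from these exact quantities. Each system is $P\times P$ and is solved by an exact pseudoinverse in $\mathcal O(P^3)$. For the McLachlan right-hand side, the vector $\mathrm{Im}\braket{\partial_\mu\psi|H|\psi}$ is again $(\partial_\mu c)^\dagger H_S c$. In the tree chart this step collapses to the diagonal division of Eq.~\eqref{eq:Minv}, but the general statement only needs the dense solve. Item (v) is exact sampling from the explicit distribution $\{|c_j|^2\}_{j\in S}$, using inverse-CDF sampling over the sorted list.

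For (vi), I will use the tree construction of \sref{sec:fs-apps}: independent Beta-distributed angles per node, which is exact because $\det g$ factorises by Theorem~\ref{thm:diag-metric}. A simpler alternative is to normalise a $k$-dimensional complex Gaussian vector. For the moments I will use the two-design identities, which give $\mathrm{Tr}(H_S)/k$ for the first moment and $[\mathrm{Tr}(A_S)\mathrm{Tr}(B_S)+\mathrm{Tr}(A_SB_S)]/(k(k+1))$ for the second. Both are traces of compressed $k\times k$ matrices from the Compression lemma. For (vii), I will write $\|(1-P_S)H\psi\|^2=c^\dagger(H^2)_S c-\|H_S c\|^2$, using the $H^2$ clause of the lemma. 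For the per-leaf decomposition, I will accumulate, for each $l$ and $j'$ with $\pi_l(j')\notin S$, the contribution $h_l\eta_{l,j'}c_{j'}$ into a sparse dictionary keyed by the out-of-sector label. This has at most $Mk$ keys, so it is polynomial in size and its squared norm is the leakage.

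The main obstacle is the word \emph{exactly} together with genuine polynomiality in (vii) and (vi). The leakage lives outside $S$, on a possibly exponential space, so I must show that its support is at most $Mk$ labels. That follows from (A3), since each term maps a basis state to a single basis state. I also need to confirm that the second-moment formula requires only traces of compressed operators, not anything outside $\mathrm{span}(S)$. The first thing I will check is that $\mathrm{Tr}(A_SB_S)$ equals $\mathrm{Tr}(P_SAP_SB)$ as needed for the sector-projected design moment. Everything else is bookkeeping of the $\mathcal O(\mathrm{poly}(n,k,M,P))$ costs.
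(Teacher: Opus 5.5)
Your proposal is correct and follows the paper's proof essentially step for step. The paper does the same things: one compression to $H_S$ via Lemma~\ref{lem:compression}; items (i)--(iv) as $k$-dimensional linear algebra on $c$ and $\partial_\mu c$, with a pseudoinverse solve in (iv); exact sampling from $\{|c_j|^2\}$ for (v); a normalised complex Gaussian plus the symmetric-subspace moments for (vi); and the at-most-$Mk$-label dictionary for (vii).

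One small precision concerns the Beta tree sampler in (vi). If you use it, justify it by the Dirichlet-aggregation argument of Proposition~\ref{prop:fs-haar}, or fall back on the Gaussian. The reason is that the determinant of the gauge-fixed diagonal chart metric of Theorem~\ref{thm:diag-metric} lacks the factor $p_0$ that the projective Fubini--Study volume carries. Already at $k=2$ it gives density $\propto\sin\theta$ rather than the Haar $\sin\theta\cos\theta$. In (vii), prefer the dictionary sum to the difference $c^\dagger(H^2)_Sc-\|H_Sc\|^2$, which loses relative precision when the leakage is small.
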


\textbf{Proof.} Compress once by Lemma~\ref{lem:compression}. \textbf{(i)} Eq.~\eqref{eq:energy-classical}: one sparse matrix--vector product, $\mathcal O(\min(k^2,Mk))$. \textbf{(ii)} $\partial_\mu E=2\,\mathrm{Re}\big[(\partial_\mu c)^\dagger H_S\,c\big]$: one inner product per parameter, $\mathcal O(Pk)$ after the single matrix--vector product. (No shift rules are needed classically; the shift-rule identities of \sref{sec:shift} are statements about this same trigonometric polynomial, so the two evaluations agree identically.) \textbf{(iii)} In coefficients the chart pullback is $\bar g_{\mu\nu}=\mathrm{Re}\big[(\partial_\mu c)^\dagger(\partial_\nu c)\big]$ and the projective tensor adds the rank-one term, $G_{\mu\nu}=(\partial_\mu c)^\dagger(\partial_\nu c)-\big[(\partial_\mu c)^\dagger c\big]\big[c^\dagger(\partial_\nu c)\big]$: an $\mathcal O(k)$ contraction per entry, $\mathcal O(P^2k)$ in total for either. Theorem~\ref{thm:diag-metric} is the special case in which the tree chart renders the chart pullback $\bar g$ diagonal (the two metrics differing only by the rank-one global-phase correction on the leaf-phase block, cf.\ \sref{sec:metric}), with the $\mathcal O(P^2 k)$ assembly collapsing to the $\mathcal O(k)$ path products. \textbf{(iv)} Every ingredient of Eqs.~\eqref{eq:vite-flow} and \eqref{eq:tdvp-flow}, or equivalently of the McLachlan normal equations, whose matrix is the chart pullback $\bar g$ (equivalently $\mathrm{Re}\,G$ after the global-phase projection) and whose right-hand side has components $\mathrm{Im}\big[(\partial_\mu c)^\dagger H_S\,c\big]$ up to the sign convention of Ref.~\cite{Yuan2019}, is supplied by (i)--(iii); the linear solve is $\mathcal O(P^3)$ classical work, with the global-phase gauge handled by the same projection as in Appendix~\ref{app:tdvp-rhs} and a pseudoinverse where gauge directions render $\bar g$ singular, and collapses to $\mathcal O(P)$ when $\bar g$ is diagonal. \textbf{(v)} The outcome distribution is supported on $S$ with probabilities $|c_j|^2$: enumerate and sample exactly in $\mathcal O(k)$. \textbf{(vi)} Draw $z\sim\mathcal{CN}(0,1)^{k}$ and set $c=z/\lVert z\rVert$: by unitary invariance of the isotropic complex Gaussian this is the Haar ensemble on the sector projective space $\mathbb{C}\mathbf P^{k-1}$, at $\mathcal O(k)$ per sample; Proposition~\ref{prop:fs-haar} is the chart-level refinement identifying this law with the Fubini--Study volume of the tree chart. Ensemble moments follow in closed form from the symmetric-subspace identity $\mathbb E\big[(\ket\psi\!\bra\psi)^{\otimes t}\big]=\Pi^{(t)}_{\mathrm{sym}}\big/\binom{k+t-1}{t}$~\cite{Harrow2013, Dankert2009}: in particular $\mathbb E\braket{\psi|A|\psi}=\operatorname{Tr}(A_S)/k$ and $\mathbb E\big[\braket{\psi|A|\psi}\braket{\psi|B|\psi}\big]=\big[\operatorname{Tr}(A_S)\operatorname{Tr}(B_S)+\operatorname{Tr}(A_SB_S)\big]/[k(k+1)]$, with the compressed $A_S,B_S$ from Lemma~\ref{lem:compression}. \textbf{(vii)} $H\ket\psi$ is supported on $\bigcup_l\pi_l(S)$, at most $Mk$ labels whose amplitudes are explicitly computable (accumulate $h_l\,\eta_{l,j'}\,c_{j'}$ at label $\pi_l(j')$ in a dictionary), so $\|(1-P_S)H\psi\|^2$ is the exact sum of $|\cdot|^2$ over the labels outside $S$, in $\mathcal O\big(Mk(n+\log Mk)\big)$, per-leaf terms included. $\blacksquare$

\medskip

The tree ansatz satisfies Definition~\ref{def:sector-sparse} with $V$ the identity (or, for the spin-adapted loop of \sref{sec:schur-csf}, the sector Schur transform), amplitude and derivative oracles given by the $\mathcal O(n)$ path products of Eq.~\eqref{eq:c2p-ry}, and $P=\mathcal O(k)$ parameters. Theorem~\ref{thm:dequantization} therefore recovers, and slightly sharpens, the statement used in the main text:

\begin{proposition}\label{prop:simulability}
Ground-state search and real- or imaginary-time variational evolution of the bare ansatz $\ket{\psi(\bm\theta,\bm\omega)}$ supported on $k$ active leaves, under a Hamiltonian of $M$ Pauli terms, are classically simulable in $\mathcal O(Mk^2)$ time per optimiser/integrator step (one gradient sweep of $\mathcal O(k)$ energy evaluations, each $\mathcal O(Mk)$); the compressed evaluation of Theorem~\ref{thm:dequantization} lowers this to $\mathcal O(Mk+k^2)$ per step and extends it to measurement sampling, sector-Haar ensemble averages, and the leakage certificate.
\end{proposition}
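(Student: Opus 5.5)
The plan is to obtain Proposition~\ref{prop:simulability} as an instantiation of Theorem~\ref{thm:dequantization}, so the work is to verify the three hypotheses of Definition~\ref{def:sector-sparse} for the bare tree and then count operations.

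\textbf{Checking Definition~\ref{def:sector-sparse}.} For (A1), take $V$ the identity (or the sector Schur transform $U_{\mathrm S}$ in the spin-adapted case, absorbed into the observable as allowed by the frame covariance) and $S$ the active-leaf set. The pruned chart is confined to $\mathrm{span}(S)$ by the classification of \sref{sec:classify}. Every reachable state is normalised, since the path-product probabilities telescope to $\sum_{j\in S}p_j=1$. For (A2), each amplitude $c_j$ is a product of at most $n$ factors $\cos\theta_m$ or $\sin\theta_m$ times $e^{i\omega_j}$. Its derivative in any single $\theta_i$ or $\omega_j$ replaces one factor by its derivative, so both oracles cost $\mathcal O(n)$. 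For (A3), molecular and Hubbard Hamiltonians in a fixed (JW or symmetry-adapted) encoding are sums of $M$ Pauli strings, and each maps a basis state to a single basis state up to a phase. The parameter count is $P=\mathcal O(k)$: $k-1$ angles and at most $k-1$ phases, as stated in \sref{sec:gauge}.

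\textbf{Naive cost.} Next I would account for the first, unimproved bound. Without compression, one energy evaluation of a $k$-sparse state against $M$ Pauli terms costs $\mathcal O(Mk)$: apply each $P_l$ to each of the $k$ support labels and accumulate the overlaps. A gradient sweep uses $4|\bm\theta|+2|\bm\omega|=\mathcal O(k)$ such evaluations via the shift rules of \sref{sec:shift}, which gives $\mathcal O(Mk^2)$ per step. The natural-gradient raise is elementwise by Eq.~\eqref{eq:Minv}. The real-time rotation adds the $\mathcal O(k)$ tree passes of Appendix~\ref{app:tdvp-rhs}, which do not affect the order.

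\textbf{Compressed cost.} I would build $H_S$ once by Lemma~\ref{lem:compression}, at cost $\mathcal O(Mk)$ up to the $n+\log k$ lookup factor. Then $E=c^\dagger H_S c$ and the vector $H_Sc$ cost $\mathcal O(\min(k^2,Mk))$. Each $\partial_\mu E=2\,\mathrm{Re}[(\partial_\mu c)^\dagger H_Sc]$ is then a single inner product. The step I expect to be the only real obstacle is keeping the derivative contractions inside the $k^2$ budget. A derivative $\partial_{\theta_i}c$ is supported on the subtree of node $i$, so summing over nodes costs $\sum_i|\mathrm{subtree}(i)|$. That sum could reach order $nk$ for spread supports, which is fine but still needs a bound. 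The fallback is the trivial bound $\mathcal O(Pk)=\mathcal O(k^2)$, which already suffices. The total per step is therefore $\mathcal O(Mk+k^2)$, and the diagonal metric makes the solve $\mathcal O(P)$ rather than $\mathcal O(P^3)$.

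\textbf{Remaining claims.} Measurement sampling, sector-Haar averages and the leakage certificate follow directly from items (v)–(vii) of Theorem~\ref{thm:dequantization} applied to this instance.
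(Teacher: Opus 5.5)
Your proposal is correct and follows the paper's route: the paper proves Proposition~\ref{prop:simulability} by observing that the tree satisfies Definition~\ref{def:sector-sparse} (identity frame, $\mathcal O(n)$ path-product amplitude and derivative oracles, $P=\mathcal O(k)$), and then reads off from Theorem~\ref{thm:dequantization} the naive $\mathcal O(k)\times\mathcal O(Mk)$ sweep, the compressed $\mathcal O(Mk+k^2)$ cost, and items (v)--(vii). Your extra step bounding the derivative contractions (each $\partial_\mu c$ has at most $k$ nonzero entries, giving $\mathcal O(Pk)=\mathcal O(k^2)$) is the same $\mathcal O(Pk)$ count as part (ii) of that theorem's proof, with the $\mathrm{poly}(n)$ lookup and oracle factors suppressed as in the statement.
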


Consequently the symmetry-adapted VQE of \sref{sec:vqe} and the variational dynamics of \sref{sec:mol-dynamics} and \sref{sec:hubbard} are, as run on these small sectors, classical computations. Their content is therefore not a quantum--classical separation but (a) that the quantum realisation of the same state is a circuit of strictly lower two-qubit depth than the UCCSD/HVA/pVQD alternatives, and (b) that the exactly diagonal geometry makes the optimisation and the time integration well-conditioned by construction. These are statements about the primitive; the separation question is delimited by the closure and boundary statements below.

\paragraph*{Closure under contractible dressings.}

The class of Definition~\ref{def:sector-sparse} is moreover closed under composition with a broad family of fixed dressings, so that entanglement, or even non-Clifford content, in a dressing does not by itself restore quantum hardness. Call a fixed unitary $U$ contractible (relative to the scheme) if each conjugated term $U^\dagger P_lU$ admits a classically computable re-expansion into $\mathrm{poly}(n)$ frame-sparse monomials, or if the matrix elements $\braket{j|U^\dagger HU|j'}$ on $S\times S$ are otherwise classically computable in polynomial time.

\begin{proposition}[Closure]\label{prop:closure}
If $\big(\{\ket{\psi(\bm\lambda)}\},H\big)$ is $(k,V)$-sector-sparse and $U$ is contractible, the dressed loop, Theorem~\ref{thm:dequantization} applied to $\ket\Psi=U\ket{\psi(\bm\lambda)}$ with observable $H$, equivalently to $\ket\psi$ with $\widetilde H=U^\dagger HU$, remains exactly classically computable in polynomial time. Contractible dressings include:
(a) any polynomial-size Clifford circuit, for which $U^\dagger P_lU$ is a single Pauli string computable by stabiliser tableau in $\mathcal O(n^2)$~\cite{Gottesman1997, Aaronson2004}, so $M'=M$;
(b) any matchgate (fermionic-Gaussian) circuit, provided each term of $H$ has Majorana degree at most a constant $d$: writing $P_l$ as a Majorana monomial and $U^\dagger\gamma_\mu U=\sum_\nu R_{\mu\nu}\gamma_\nu$ with $R\in\mathrm O(2n)$ classically computable~\cite{Valiant2002, TerhalDiVincenzo2002}, the conjugate expands into at most $(2n)^d$ Majorana monomials, each a single Pauli string under Jordan--Wigner, so $M'\le M(2n)^d$ (molecular and Hubbard Hamiltonians have $d\le4$);
(c) any constant-depth geometrically local circuit acting on terms of constant weight: $U^\dagger P_lU$ is supported on the light cone of $\mathrm{supp}(P_l)$, of constant size, and is computed by exact dense conjugation on that region;
(d) any circuit with a polynomial-bond-dimension tensor-network representation (matrix-product operator of bond $\chi$): each matrix element $\braket{j|U^\dagger HU|j'}$ is a product-state sandwich of an MPO of bond $\mathcal O(M\chi^2)$, computable in $\mathrm{poly}(n,\chi,M)$, and there are $k^2$ of them~\cite{Schollwock2011};
(e) any change to another classically describable frame, absorbed into $V$; the Schur/CSF dressing of \sref{sec:schur-csf}, whose compressed Hamiltonian is the classical unitary-group approach~\cite{Paldus1974, Shavitt1977}, is of this type.
\end{proposition}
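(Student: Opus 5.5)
The plan is to reduce everything to Lemma~\ref{lem:compression} and Theorem~\ref{thm:dequantization}, applied to the pair $(\ket{\psi(\bm\lambda)},\widetilde H)$ with $\widetilde H=U^\dagger HU$. Since $U$ is fixed, $\braket{\Psi|H|\Psi}=\braket{\psi|\widetilde H|\psi}$, and by Proposition~\ref{prop:invariance} the quantum geometric tensor, and hence the metric, Berry curvature and all flows, are unchanged. Frame-basis sampling and design moments of $\ket\Psi$ likewise transfer to those of $\ket\psi$ with the observable conjugated. Two routes cover contractibility. If each $U^\dagger P_lU$ re-expands into $\mathrm{poly}(n)$ frame-sparse monomials with computable coefficients, then $(\{\ket\psi\},\widetilde H)$ is again $(k,V)$-sector-sparse with $M'=\mathrm{poly}(n)\,M$, and Theorem~\ref{thm:dequantization} applies verbatim. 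If instead only the $S\times S$ matrix elements are computable, I will note that the proof of Theorem~\ref{thm:dequantization} items (i)--(iv) and (vi) uses $H_S$ only. So feeding the $k^2$ entries $\braket{j|\widetilde H|j'}$ directly in place of Lemma~\ref{lem:compression} suffices. For (vii) I would conjugate $P_S$ instead, or restrict the claim to the monomial route.

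Next I verify the five instances.
\begin{itemize}
\item[(a)] For a Clifford $U$, conjugation maps Paulis to signed Paulis. This follows by propagating through the circuit gate by gate via the stabiliser tableau, giving $M'=M$.
\item[(b)] For a matchgate $U$, I write each $P_l$ as a product of at most $d$ Majoranas and use the Gaussian covariance $U^\dagger\gamma_\mu U=\sum_\nu R_{\mu\nu}\gamma_\nu$. The orthogonal $R$ is obtained as the product of the $2n\times 2n$ rotations of the individual gates. Multiplying out gives at most $(2n)^d$ monomials with coefficients that are products of entries of $R$. Each monomial is a Pauli string up to phase under Jordan--Wigner, so it is frame-sparse.
\item[(c)] For a constant-depth local $U$, the backward light cone of a constant-weight $P_l$ has constant size. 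I compute $U^\dagger P_lU$ by dense conjugation on that region, which costs $O(1)$, and expand it in at most $4^{O(1)}$ Paulis.
\item[(d)] For an MPO-representable $U$, I build the MPO of $\widetilde H$ by contracting $U^\dagger$, the bond-$M$ MPO of $H$, and $U$, giving bond $O(M\chi^2)$. Each element $\braket{j|\widetilde H|j'}$ is then a product-state sandwich, computed by sequential contraction in $\mathrm{poly}(n,\chi,M)$, and there are $k^2$ such elements.
\item[(e)] For a frame change, $UV$ is again a classically describable frame. The scheme is then sector-sparse in the new frame by definition, and for the Schur dressing the compressed matrix is supplied by unitary-group approach matrix elements.
\end{itemize}

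The main obstacle is (d), together with the second contractibility route generally. Items (v) and (vii) of Theorem~\ref{thm:dequantization} concern measurement of $\ket\Psi$ itself and the leakage of $H\Psi$, not only $H_S$. For (v) I would argue that frame-basis sampling of $U\ket\psi$ follows from the MPO (respectively Clifford/matchgate) structure applied to the $k$-term superposition, via sequential conditional marginals. If that fails in the fully general contractible case, I would state the closure for the metric-aware loop (i)--(iv), (vi) and flag sampling as requiring the stronger instance-specific structure, which (a)--(e) all provide.
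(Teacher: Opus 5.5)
Your core argument matches the paper's. In cases (a)--(c) the dressed $\widetilde H$ again satisfies (A3) with the stated $M'$, so Theorem~\ref{thm:dequantization} applies verbatim. In cases (d)--(e) the $k\times k$ block $\widetilde H_S$ is computed entrywise and the proof is re-entered after the compression step. Your checks of the five instances agree with the paper's. The problems are confined to how you handle the obstacle you flag at the end.

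\textbf{Item (v).} This is not an obstacle, and your proposed fix for it is wrong. The statement defines the dressed loop as the bare core $\ket\psi$ with observable $\widetilde H=U^\dagger HU$, equivalently $\ket\Psi$ read out in the rotated frame $UV$. Item (v) is therefore just the bare $\mathcal O(k)$ sampling from $|c_j|^2$ and needs nothing from $U$. Your fallback claim, that (a)--(e) all supply efficient computational-basis sampling of $U\ket\psi$, fails for (c). Take constant-depth nearest-neighbour circuits on a two-dimensional lattice acting on $\ket{0^n}$, which is a $1$-sparse core; cluster-state preparation followed by single-qubit rotations is an example. These are the standard case of output distributions believed to be classically hard to sample. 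Only in one dimension does (c) reduce to the MPO case (d). So do not add computational-basis sampling of $\ket\Psi$ to the claim.

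\textbf{Item (vii).} On the matrix-element route, ``conjugating $P_S$'' is not an algorithm, and retreating to the monomial route would lose case (d). Use instead the identity $\|(1-P_S)\widetilde H\psi\|^2=c^\dagger(\widetilde H^2)_S\,c-\|\widetilde H_S\,c\|^2$, which holds because $\ket\psi=P_S\ket\psi$. The block $(\widetilde H^2)_S=P_SU^\dagger H^2UP_S$ is obtained exactly as $\widetilde H_S$ was. In case (d), for instance, its entries are product-state sandwiches of the MPO of $U^\dagger H^2U$, of bond $\mathcal O(M^2\chi^2)$. This recovers the total leakage certificate on that route.
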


\textbf{Proof.} In cases (a)--(c) the dressed Hamiltonian $\widetilde H$ satisfies (A3) with the stated term count $M'$, and Theorem~\ref{thm:dequantization} applies verbatim. In case (b) the bounded-degree hypothesis is essential: a generic weight-$w$ Pauli string carries a Jordan--Wigner $Z$-string of Majorana degree $\Theta(n)$, for which the expansion is exponential. In cases (d)--(e) one computes the $k\times k$ compressed matrix $\widetilde H_S$ directly, entrywise, and re-enters the proof of Theorem~\ref{thm:dequantization} after the compression step. A complementary view of (d): a $k$-sparse state is a sum of $k$ product terms, hence has Schmidt rank at most $k$ across every cut, so the bare manifold lies inside the bond-dimension-$k$ matrix-product-state manifold and the dressed loop is a bond-dimension-$k\chi$ tensor-network computation; the real-time integrator of \sref{sec:kahler} is, classically, a tree-structured TDVP on that manifold. $\square$

\begin{corollary}[Necessary conditions for quantum advantage]\label{cor:advantage}
No quantum-advantage claim can rest on a sector-sparse core alone, nor on a sector-sparse core composed with a contractible dressing. An architecture of the form $U(\bm\phi)\ket{\psi(\bm\theta,\bm\omega)}$ with an efficiently preparable sparse core is a candidate for advantage only if, simultaneously, (i) no classically describable frame renders the dressed pair $\big(U\ket\psi,\,H\big)$ sector-sparse with computable amplitudes, and (ii) the dressing escapes every contraction of Proposition~\ref{prop:closure}: depth growing with $n$, and non-Clifford, non-Gaussian content admitting no polynomial tensor-network description. In particular, entanglement alone, shallow non-Clifford layers, and Gaussian (matchgate) dressings of local observables do not suffice.
\end{corollary}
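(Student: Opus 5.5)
The corollary is essentially the contrapositive of Theorem~\ref{thm:dequantization} combined with Proposition~\ref{prop:closure}. I will prove it in three short steps, reading ``quantum advantage'' in its operational sense: a classical algorithm that reproduces the loop's outputs (energies, gradients, metric, updates, samples, certificates) exactly in polynomial time rules it out.

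\textbf{Step 1: the bare core.} A sector-sparse core on its own satisfies (A1)--(A3) in some frame $V$. Theorem~\ref{thm:dequantization} then computes every quantity the metric-aware loop consumes or emits, items (i)--(vii), classically and exactly in $\mathrm{poly}(n,k,M,P)$. So the bare core cannot support a separation.

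\textbf{Step 2: contractible dressings.} If $U$ is contractible, Proposition~\ref{prop:closure} says the dressed loop on $U\ket{\psi}$ with observable $H$ is again exactly classically computable. This is because it is equivalent to the bare loop with $\widetilde H=U^\dagger HU$, and Proposition~\ref{prop:invariance} guarantees the geometry is unchanged by the fixed dressing. This settles the first sentence of the corollary.

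\textbf{Step 3: necessary conditions (i) and (ii).} Both follow by contraposition.

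For (i): suppose some classically describable frame $V'$ makes $(U\ket\psi,H)$ sector-sparse with computable amplitudes. Applying Theorem~\ref{thm:dequantization} in frame $V'$ again simulates the loop, so no advantage is possible.

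For (ii): suppose $U$ falls into any of cases (a)--(e) of Proposition~\ref{prop:closure}. Then $U$ is contractible and Step~2 applies. Reading these cases negatively gives the required escape conditions:
\begin{itemize}
\item escaping (a) requires non-Clifford content;
\item escaping (b) requires non-Gaussian content, for bounded-degree observables such as the molecular and Hubbard ones ($d\le4$);
\item escaping (c) requires depth that is not constant, i.e.\ growing with $n$, for constant-weight terms;
\item escaping (d) requires no polynomial-bond-dimension tensor-network description.
\end{itemize}
The ``in particular'' clause then follows directly. Entanglement alone is compatible with Clifford and Gaussian circuits. Shallow non-Clifford layers fall under (c). Matchgate dressings of local (bounded Majorana degree) observables fall under (b).

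The main point needing care is wording rather than mathematics. The corollary states \emph{necessary}, not sufficient, conditions, and it should be phrased as ``candidate only if''. A second subtlety is the ``depth growing with $n$'' condition. Case (c) needs geometric locality and constant-weight terms, so strictly one should say that constant-depth local dressings of local observables are excluded. I would state this qualification explicitly instead of claiming that every constant-depth circuit is contractible.
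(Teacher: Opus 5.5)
Your proposal is correct and follows the paper's own route. The paper gives no separate proof: it reads the corollary off as the contrapositive of Theorem~\ref{thm:dequantization} and Proposition~\ref{prop:closure}, spelled out in the ``Where efficiency is lost'' paragraph, exactly as your Steps~1--3 do, with case~(e) absorbed into your condition~(i). Your caveat is accurate and sharper than the paper's own wording: case~(c) only excludes geometrically local constant-depth dressings acting on constant-weight terms, so ``depth growing with $n$'' is a gloss rather than a strict consequence of Proposition~\ref{prop:closure}.
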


\paragraph*{Scope, and relation to prior simulability results.}

Theorem~\ref{thm:dequantization} covers, beyond the tree ansatz, any construction meeting Definition~\ref{def:sector-sparse} with polynomial $k$: sparse-state-preparation circuits used as variational ans\"atze~\cite{Mottonen2005, Li2025, Luo2025}; fixed-Hamming-weight ans\"atze at constant weight $w$ (Dicke-state cores and number-conserving mixers on sectors of size $\binom nw=\mathcal O(n^w)$), for which the fixed-sector special case is known~\cite{Monbroussou2025}; qubit-efficient encodings of selected-CI spaces; and the Schur-dressed spin-adapted loop of \sref{sec:schur-csf}. It does not cover ans\"atze with generically dense sector support, UCCSD, HVA, or hardware-efficient circuits on sectors that grow superpolynomially: the theorem draws a boundary, not a blanket dequantization of variational algorithms. The proof pattern also places the result within an existing body of work, which it sharpens rather than discovers. Assumptions (A1)--(A2) define a subclass of the computationally tractable states of Van den Nest, for which expectation values of sparse observables are classically estimable to additive error~\cite{VandenNest2011, Schwarz2013}; enumerable support upgrades estimation to exact strong simulation, and extends it from single expectation values to the entire metric-aware loop, its geometry (iii)--(iv), its ensembles (vi), and its certificates (vii). The fixed $k$-dimensional invariant subspace likewise realises the polynomial-dimension case of Lie-algebraic and subspace simulation~\cite{Goh2023, Monbroussou2025}, and the conjunction of perfect conditioning (Theorem~\ref{thm:diag-metric}) with exact dequantization is an explicit instance of the observation that the structure enabling trainability also enables classical surrogates~\cite{Cerezo2025sim}.

\paragraph*{Where efficiency is lost.}

Theorem~\ref{thm:dequantization} rests on two finiteness assumptions, and the construction is designed so that a quantum device becomes necessary exactly when either fails.

\textbf{(a) Dense support.} If the target is not sparse in any accessible frame ($k=\Theta(2^n)$ active leaves), then the amplitude list, the compressed evaluation Eq.~\eqref{eq:energy-classical} and the parameter count all become exponential, and the pruned circuit saturates the M\"ott\"onen worst case $N_{\mathrm{cx}}=2(2^n-1)$ of Eq.~\eqref{eq:cnot-bound}. Sparsity in some efficiently describable frame is what keeps the bare loop classical; without it the ansatz is simply a dense (and now exponential-cost) state preparation.

\textbf{(b) A non-contractible dressing.} Compose the bare state with an entangling unitary, $\ket{\Psi}=U(\bm\phi)\ket{\psi(\bm\theta,\bm\omega)}$ as in Eq.~\eqref{eq:dressed-main} of the main text, and the relevant expectation becomes $\braket{\psi|\,\widetilde H(\bm\phi)\,|\psi}$ with the dressed Hamiltonian $\widetilde H(\bm\phi)=U^\dagger(\bm\phi)HU(\bm\phi)$. Two things change. First, the dressed state $U\ket\psi$ generically spreads the $k$-sparse support over all $2^n$ basis states, so the sparse evaluation Eq.~\eqref{eq:energy-classical} no longer applies. Second, conjugating $H$ through $U(\bm\phi)$ expands $\widetilde H$ into a number of Pauli strings that grows exponentially with the gate count of $U(\bm\phi)$ in general, with no telescoping. By Proposition~\ref{prop:closure}, entanglement alone is not the criterion: Clifford, Gaussian, constant-depth-local and polynomial-bond dressings all remain contractible, so the dressing must defeat all of these structures at once (Corollary~\ref{cor:advantage}). When $U(\bm\phi)$ is such a circuit (non-Clifford and non-Gaussian, of depth growing with $n$, admitting no efficient stabiliser or low-bond-dimension tensor-network description), neither $U\ket\psi$ nor $\widetilde H$ has a known polynomial classical representation. Estimating the per-configuration energy $\braket{\psi|\widetilde H|\psi}$, or, in the sampling applications of Appendix~\ref{app:fs-protocols}, the per-sample response $\braket{\psi(p)|U_\star^\dagger O U_\star|\psi(p)}$, is then believed classically intractable, while it remains a single expectation-value measurement on hardware that prepares Eq.~\eqref{eq:dressed-main} directly.

The bare ansatz is thus best read as an enabling primitive: an efficient, exactly controllable, well-conditioned preparation of states (and, through Appendix~\ref{app:fs-design}, of Fubini--Study sample ensembles). Its composition with a hard $U(\bm\phi)$ is where a quantum processor performs a computation that the classical primitive alone cannot: the two-block dressing of \sref{sec:two-block}, the Schrieffer--Wolff dressing of \sref{sec:decouple}, and the process-benchmarking and infinite-temperature protocols of \sref{sec:fs-apps}.

\section{Exact sector-Haar sampling from the diagonal metric}\label{app:fs-design}

Several of the applications in \sref{sec:fs-apps} require not a single prepared state but an unbiased ensemble of states covering a symmetry sector. The tree ansatz supplies this directly, and the diagonal metric makes the draw exact and closed-form: sampling its parameters from the Fubini--Study volume element reproduces the unitary-invariant (Haar) ensemble on the sector through a rejection-free, per-node construction.

\paragraph*{Statement.}

Fix the active-leaf sector $S$ with $k=|S|$, and the binary tree spanning $S$ with its $k-1$ active internal nodes. Use the complex chart in the canonical gauge $\omega_0=0$, so the free coordinates are the $k-1$ branch angles $\bm\theta$ and the $k-1$ relative leaf phases $\omega_1,\dots,\omega_{k-1}$, together a chart on the sector projective space $\mathbb{C}\mathbf P^{k-1}$. Each active node $a$ splits the active leaves beneath it into a cosine-child subtree with $n^a_L$ of them and a sine-child subtree with $n^a_R$.

\begin{proposition}\label{prop:fs-haar}
The Fubini--Study volume measure $d\mu_{\mathrm{FS}}\propto\sqrt{\det g}\,d\bm\theta\,d\bm\omega$ on this chart coincides with the independent draws
\begin{equation}\label{eq:fs-measure}
t_a\sim\mathrm{Beta}(n^a_R,n^a_L),\quad \theta_a=\arcsin\sqrt{t_a},\qquad \omega_1,\dots,\omega_{k-1}\sim\mathrm{Unif}[0,2\pi),
\end{equation}
and the states $\ket{\psi(\bm\theta,\bm\omega)}$ it produces are distributed exactly as the sector-Haar (uniform Fubini--Study) ensemble on $\mathbb{C}\mathbf P^{k-1}$, i.e.\ $\ket\psi\stackrel{d}{=}V\ket{\psi_0}$ with $V$ Haar on $\mathrm U(k)$ restricted to $S$. Each sample costs $\mathcal O(k)$ classical operations, with no determinant evaluation and no rejection.
\end{proposition}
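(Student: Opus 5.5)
The plan is to compute $\sqrt{\det g}$ explicitly from Theorem~\ref{thm:diag-metric}, show that it factorises node by node into the Beta densities of Eq.~\eqref{eq:fs-measure}, and then identify the resulting law with the Haar ensemble. There is one subtlety to settle first. The chart metric of Theorem~\ref{thm:diag-metric} differs from the projective Fubini--Study metric by the rank-one correction on the leaf-phase block (Appendix~\ref{app:ng-omega-equiv}). In the gauge $\omega_0=0$ the phase block of the projective metric restricted to $\omega_1,\dots,\omega_{k-1}$ is $D'-p'p'^\top$, with $D'=\mathrm{diag}(p_1,\dots,p_{k-1})$. By the matrix determinant lemma,
\[
\det(D'-p'p'^\top)=\Bigl(\prod_{j\ge1}p_j\Bigr)\Bigl(1-\textstyle\sum_{j\ge1}p_j\Bigr)=\prod_{j\in S}p_j .
\]
The $\theta$-block is untouched, since the global-phase overlap $\braket{\partial_{\theta_a}\psi|\psi}$ is real and the cross terms vanish by the argument in the proof of Theorem~\ref{thm:diag-metric}. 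This gives
\[
\sqrt{\det g}=\Bigl(\prod_a w_a\Bigr)^{1/2}\Bigl(\prod_{j\in S}p_j\Bigr)^{1/2}.
\]

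Next, I express both products as path products. Each $w_a$ is the product of $f_m^2$ over the ancestors of $a$, and each $p_j$ is the product of $f_m^2$ over the ancestors of $j$. Therefore the exponent of $\cos^2\theta_m$ in $\prod_a w_a\prod_j p_j$ equals the number of active nodes and leaves in the cosine subtree of $m$. A binary subtree with $n_L$ leaves has $n_L-1$ active internal nodes, so this count is $2n^m_L-1$; likewise the exponent of $\sin^2\theta_m$ is $2n^m_R-1$. Taking the square root yields
\[
\sqrt{\det g}\propto\prod_m \cos^{2n^m_L-1}\theta_m\,\sin^{2n^m_R-1}\theta_m ,
\]
with no $\omega$ dependence. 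The phases are therefore independent and uniform.

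For each angle I change variables to $t=\sin^2\theta$, using $dt=2\sin\theta\cos\theta\,d\theta$. The factor becomes $t^{n_R-1}(1-t)^{n_L-1}\,dt$, which is exactly $\mathrm{Beta}(n_R,n_L)$. Since the density is a product over nodes, the draws are independent, which proves the first claim.

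For the Haar identification, I use the fact that the Fubini--Study volume is $\mathrm U(k)$-invariant, so its normalisation on $\mathbb{C}\mathbf P^{k-1}$ is the unique unitarily invariant probability measure. The chart covers an open dense full-measure set: the singular configurations are measure zero, and the parameter ranges of Appendix~\ref{app:c2p-detail} biject onto it. Hence the pushforward is sector-Haar. As an independent check I will use the Gaussian construction of Theorem~\ref{thm:dequantization}(vi). If $c=z/\|z\|$ with $z$ isotropic Gaussian, the leaf probabilities are Dirichlet$(1,\dots,1)$, and Dirichlet aggregation shows that the subtree mass splits at node $a$ are independent $\mathrm{Beta}(n_R,n_L)$. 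The phases are independent uniform. This matches Eq.~\eqref{eq:fs-measure} directly. The $\mathcal O(k)$ cost follows because there are $k-1$ Beta draws, $k-1$ phases, and the path products are accumulated in one tree traversal.

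The main obstacle is the first step: correctly handling the projective metric rather than the raw chart metric. The chart metric restricted to the gauge-fixed phases would give $\prod_{j\ge1}p_j$ and lose one leaf's factor, which spoils the exponent counting at the nodes on the path to leaf $0$. I would resolve this with the determinant-lemma computation above and cross-check it against the Dirichlet argument. A secondary point needing care is the count of active internal nodes in each subtree. This requires every internal node of the spanned tree to be active; otherwise fixed nodes, being pinned, carry no coordinate, so they contribute only constant factors and do not change the counting.
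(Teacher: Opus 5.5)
Your argument is correct. Its computational core matches the paper's: the diagonal determinant, the count of $2m-1$ nodes in a full binary subtree with $m$ leaves, and the substitution $t_a=\sin^2\theta_a$. It differs from the paper in two respects.

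First, you take the determinant of the projective Fubini--Study metric on the gauge-fixed chart, obtaining $\det(D'-p'p'^\top)=\prod_{j\in S}p_j$ from the matrix determinant lemma. The paper instead writes $\det g=\prod_i w_i\prod_j p_j$ over every leaf, directly from the diagonality of Theorem~\ref{thm:diag-metric}. In the gauge $\omega_0=0$ that formula holds for the projective metric, or for the redundant all-phase chart (the round volume on the unit sphere). It does not hold for the raw chart metric, which drops the gauge-fixed leaf's factor: for $k=2$ it gives $\sqrt{\det g}=\sin\theta$ rather than $\sin\theta\cos\theta$. Your step therefore makes explicit something the paper's proof passes over. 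A small precision: what leaves the $\theta$-block untouched is that $\braket{\partial_{\theta_a}\psi|\psi}$ vanishes, which follows from its being real together with normalisation.

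Second, you identify the law with sector-Haar through $\mathrm U(k)$-invariance of the Fubini--Study volume and uniqueness of the invariant probability measure on $\mathbb{C}\mathbf P^{k-1}$, keeping Dirichlet aggregation only as a cross-check. The paper makes Dirichlet aggregation the primary argument. It proves that the independent Beta draws are Haar without reference to the metric, and uses the determinant calculation only to match that law to $\sqrt{\det g}$. The paper's route is therefore insensitive to the chart-versus-projective subtlety in its Haar half. Yours is more conceptual, but it depends on using exactly the projective metric and on the chart being a diffeomorphism onto a full-measure open set; you address both.
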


\paragraph*{Proof.}

\textbf{The target law.} A uniform pure state on $\mathbb{C}\mathbf P^{k-1}$ is, in any fixed orthonormal basis, a normalised isotropic complex Gaussian vector. Its squared moduli $(|c_j|^2)_{j\in S}$ are therefore flat Dirichlet, $(|c_j|^2)\sim\mathrm{Dir}(1,\dots,1)$ on the probability simplex, and the $k-1$ relative phases are independent, uniform, and independent of the moduli. Sector-Haar sampling is thus exactly: draw $(|c_j|^2)\sim\mathrm{Dir}(1,\dots,1)$ and attach uniform relative phases.

\textbf{Dirichlet aggregation reproduces the tree.} By Eq.~\eqref{eq:c2p-ry} the squared leaf amplitude $p_j=\prod_{m\in\mathrm{anc}(j)}f_m(\theta_m)^2$ is built by routing unit mass down the tree, each node $a$ sending a fraction $t_a=\sin^2\theta_a$ into its sine child and $1-t_a$ into its cosine child. The Dirichlet grouping (aggregation) property states that if $(|c_j|^2)\sim\mathrm{Dir}(1,\dots,1)$ then, for any partition of the leaves into a cosine group of $n^a_L$ and a sine group of $n^a_R$, the total sine-group mass is $\mathrm{Beta}(n^a_R,n^a_L)$, and, conditioned on that split, each group is again flat Dirichlet and independent of the other. Applied recursively down the tree, this makes the per-node fractions $t_a$ independent $\mathrm{Beta}(n^a_R,n^a_L)$; conversely, drawing them so and routing the mass reconstructs $\mathrm{Dir}(1,\dots,1)$ exactly. With $\theta_a=\arcsin\sqrt{t_a}$ and uniform relative phases this is precisely Eq.~\eqref{eq:fs-measure}, which therefore samples the sector-Haar ensemble.

\textbf{It is the metric volume.} It remains to identify Eq.~\eqref{eq:fs-measure} with the metric volume $\sqrt{\det g}\,d\bm\theta\,d\bm\omega$. Because $g$ is diagonal (Theorem~\ref{thm:diag-metric}), $\det g=\prod_i w_i\prod_j p_j$, a product over every internal node $i$ and leaf $j$ of the path products of $f_a(\theta_a)^2$. Collecting the dependence on a single angle $\theta_a$, the factor $\cos^2\theta_a$ occurs once for each node lying in its cosine subtree and $\sin^2\theta_a$ once for each node in its sine subtree; since a full binary subtree with $m$ leaves has $2m-1$ nodes, the resulting exponents are $2n^a_L-1$ and $2n^a_R-1$, so
\begin{equation}\label{eq:detg-factorise}
\sqrt{\det g}\;=\;\prod_a (\cos\theta_a)^{2n^a_L-1}(\sin\theta_a)^{2n^a_R-1}.
\end{equation}
This separates over the angles and is independent of the phases. The marginal $\propto(\cos\theta_a)^{2n^a_L-1}(\sin\theta_a)^{2n^a_R-1}d\theta_a$ is exactly the density of $\theta_a=\arcsin\sqrt{t_a}$ with $t_a\sim\mathrm{Beta}(n^a_R,n^a_L)$ (change variables $t_a=\sin^2\theta_a$), and the phases, absent from $\det g$, are uniform. Hence $\sqrt{\det g}\,d\bm\theta\,d\bm\omega$ equals Eq.~\eqref{eq:fs-measure}, completing the proof. $\square$

\textbf{Independent check.} The tree sampler is cross-validated against the Gaussian oracle $z\sim\mathcal{CN}(0,1)^k,\ \ket\psi=z/\lVert z\rVert$ scattered on $S$, exact sector-Haar by rotational invariance of the isotropic complex Gaussian, through first-/second-moment and Kolmogorov--Smirnov tests in the reference implementation.

\paragraph*{Design corollaries.}

Being the Haar measure on sector pure states, $\mu_{\mathrm{FS}}$ is an exact state $t$-design for every $t$. Two moments are used in the applications.

\textbf{One-design (trace estimation).} With $P_S$ the projector onto $\mathrm{span}\,S$,
\begin{equation}\label{eq:fs-1design}
\mathbb E_{\psi\sim\mathrm{FS}(S)}\big[\ket\psi\!\bra\psi\big]=\frac{P_S}{k},
\end{equation}
so that $\mathbb E_{\psi}\big[\braket{\psi|W|\psi}\big]=\operatorname{Tr}(P_S W)/k$ for any observable $W$: the FS average of any expectation value is an unbiased estimator of the sector trace. This is the basis of the infinite-temperature estimator of Appendix~\ref{app:fs-protocols} ($\mathbb E_\psi\ket\psi\!\bra\psi$ is the maximally mixed sector state), and typicality suppresses the single-sample fluctuation to $\mathcal O(1/k)$.

\textbf{Two-design (average fidelity).} The second moment is
\begin{equation}\label{eq:fs-2design}
\mathbb E_{\psi}\big[\ket\psi\!\bra\psi^{\otimes2}\big]=\frac{P_S\otimes P_S\,(\mathbb I+\mathrm{SWAP})}{k(k+1)} ,
\end{equation}
which gives, for any operator $M$ acting on the sector, the closed form
\begin{equation}\label{eq:Favg}
\mathbb E_{\psi}\big[\,|\braket{\psi|M|\psi}|^2\,\big]=\frac{|\operatorname{Tr}M|^2+\operatorname{Tr}(M^\dagger M)}{k(k+1)} .
\end{equation}
Equation~\eqref{eq:Favg} with $M=P_S\,U_\star^\dagger U_{\mathrm{exact}}\,P_S$ is the exact sector-averaged process fidelity that the echo benchmark of Appendix~\ref{app:fs-protocols} measures.

The distinction from the standard practice of averaging over computational-basis states matters: the basis-state ensemble is only a $1$-design (it reproduces Eq.~\eqref{eq:fs-1design} but not Eq.~\eqref{eq:fs-2design}), so it is unbiased for linear trace quantities but biased for the quadratic, state-dependent quantities (coherent process error, anisotropic noise) that the second moment controls. The FS sampler is the cheapest exact $2$-design on the sector, and the bias it removes is exactly the off-diagonal coherence that basis states miss.

\section{Measurement protocols for the Fubini--Study sampling applications}\label{app:fs-protocols}

This appendix specifies the two device protocols summarised in \sref{sec:fs-apps}: the echo-fidelity process benchmark (Fig.~\ref{fig:haar-echo}) and the infinite-temperature dynamical-correlator estimator (Fig.~\ref{fig:thermal-fs}). Both consume FS samples from Appendix~\ref{app:fs-design} and pass them through a classically hard unitary, so the sampling is the enabling primitive and the hard evolution is what makes the per-sample quantity quantum.

\paragraph*{Echo fidelity (process benchmarking).}

To benchmark a process $U_\star$ against an ideal target $U_{\mathrm{exact}}$ over a sector $S$, draw $\ket{\psi(p)}=A(p)\ket{0^n}$ from $\mathrm{FS}(S)$, apply the device process, the ideal inverse, and the state-preparation inverse, and measure the return probability
\begin{equation}\label{eq:echo}
f(p)=\Pr(0^n)=\big|\braket{\psi(p)|U_{\mathrm{exact}}^\dagger U_\star|\psi(p)}\big|^2 .
\end{equation}
The FS average $\overline{f}$ estimates the sector-averaged process fidelity, whose exact value is given in closed form by the two-design identity Eq.~\eqref{eq:Favg} with $M=P_S U_{\mathrm{exact}}^\dagger U_\star P_S$; this serves as ground truth alongside the sampled estimate. In Fig.~\ref{fig:haar-echo} the process is $U_\star=(\text{second-order Trotter step at }dt)^{T/dt}$ and $U_{\mathrm{exact}}=e^{-iHT}$, so $\overline{1-f}$ is the state-dependent coherent Trotter error averaged uniformly over the sector. In the noiseless study the ideal inverse is applied exactly; on hardware $U_{\mathrm{exact}}^\dagger$ is instantiated as the time-reversed Trotter circuit at an $r$-fold finer step, the same palindromic second-order construction with the gate order reversed and the angles negated, whose own coherent error enters the echo at relative order $1/r^2$ and is negligible against the coarse-step error under study. Two comparison probes are shown. Averaging over computational sector-basis states is biased because the Trotter error is state-dependent and the basis ensemble is only a one-design (Appendix~\ref{app:fs-design}), so it misses the off-diagonal coherent error. The single (ground) state echo is unrepresentative for the opposite reason: a near-eigenstate accumulates only a global phase that the echo cancels, so it reports almost no error and is the least informative probe of the averaged process.

\paragraph*{Infinite-temperature dynamical correlators.}

On a sector the infinite-temperature state is the maximally mixed $P_S/k$, which the FS one-design Eq.~\eqref{eq:fs-1design} represents as a pure-state Monte-Carlo ensemble. The target observables are dynamical, since static infinite-temperature quantities are typically featureless:
\begin{equation}\label{eq:CAB}
C_{AB}(t)=\frac1K\operatorname{Tr}\!\big[P_S\,A(t)\,B\,P_S\big],\qquad A(t)=U^\dagger(t)\,A\,U(t).
\end{equation}
For each FS sample the per-state quantity $\braket{\psi(p)|A(t)B|\psi(p)}=\braket{\psi(p)|U^\dagger(t)\,A\,U(t)\,B|\psi(p)}$ is measured by an ancilla Hadamard test: prepare the ancilla in $\ket{+}$, apply a controlled $B$, evolve the system by $U(t)$, and jointly measure $A$ on the system register together with the ancilla in $X$ (real part) and $Y$ (imaginary part). On the post-evolution state $\tfrac{1}{\sqrt2}(\ket{0}U\ket\psi+\ket{1}UB\ket\psi)$ the joint observables give $\braket{A\otimes X_a}=\operatorname{Re}\braket{\psi|U^\dagger A U B|\psi}$ and $\braket{A\otimes Y_a}=\operatorname{Im}\braket{\psi|U^\dagger A U B|\psi}$. The FS average over samples then reproduces Eq.~\eqref{eq:CAB} with single-sample variance suppressed by typicality as $\mathcal O(1/k)$. (For Haar samples a controlled $B$ with an ancilla is required because the samples do not commute with the eigenprojectors of $B$, so sequential projective measurement is invalid.) In Fig.~\ref{fig:thermal-fs}, $U(t)$ is the term-level second-order Trotter circuit of the Hubbard dynamics and $A,B$ are local $S^z$ (spin) or $\delta n$ (charge) operators; the conjugated operator $U^\dagger(t)AU(t)$ becomes increasingly nonlocal and classically expensive to represent, while the device merely evolves the sampled state and measures the ancilla. The exact-propagator correlator is overlaid as the target, and basis-state averaging, also unbiased but with $\mathcal O(1)$ per-sample variance, is shown for comparison.

\section{The FS-sampled Schrieffer--Wolff gradient surrogate}\label{app:sw-surrogate}

The two-block dressing of \sref{sec:two-block} and the exact decoupling of \sref{sec:decouple} are trained by the same problem-agnostic objective: an energy-only surrogate for the Schrieffer--Wolff~\cite{Bravyi2011} block-decoupling condition, evaluated by ordinary parameter-shift gradients on FS samples of the model space. The objective, its commutator interpretation, and the condition under which its global minimum is the exact decoupling, are set out below.

\paragraph*{Objective and gradient.}

Let $P$ be the model space ($k_P=|P|$), $Q$ its complement within the sector, and $U(\bm\phi)=\prod_a e^{-i\phi_a A_a}$ a dressing built from symmetry-adapted $P\!\leftrightarrow\!Q$ excitation generators $A_a=i(T_a-T_a^\dagger)$. Each $A_a$ is a partial-isometry difference with $A_a^3=A_a$, so $e^{-i\phi_a A_a}$ is exact and the four-term rule of Eq.~\eqref{eq:shift-rule} applies unchanged. Exact block decoupling is the Schrieffer--Wolff condition
\begin{equation}\label{eq:sw-condition}
Q\,\widetilde H(\bm\phi)\,P=0,\qquad \widetilde H(\bm\phi)=U^\dagger(\bm\phi)HU(\bm\phi),
\end{equation}
whose held-out witness is the leakage $\mathcal L(\bm\phi)=\|Q\widetilde H P\|_F^2/k_P$. Rather than measure $\mathcal L$ (which needs $Q$, the spectrum, and a block encoding of $H$), train the FS-sampled gradient surrogate
\begin{equation}\label{eq:cgrad}
C_{\mathrm{grad}}(\bm\phi)=\mathbb E_{\chi\sim\mathrm{FS}(P)}\Big[\textstyle\sum_a\big|\,\partial_{\phi_a}\braket{\chi|\widetilde H(\bm\phi)|\chi}\big|^2\Big],
\end{equation}
in which $\ket\chi$ is drawn from the FS measure on the model space (Appendix~\ref{app:fs-design}) and each $\partial_{\phi_a}\braket{\chi|\widetilde H|\chi}$ is a parameter-shift energy gradient of the circuit $U(\bm\phi)\,T(\bm\theta_\chi)\ket{0}$. The optimiser never constructs $Q$, $\widetilde H$, or $\mathcal L$.

\paragraph*{Why it approximates Schrieffer--Wolff.}

Each surrogate gradient is a commutator response,
\begin{equation}\label{eq:commutator-response}
g_a(\chi)=\partial_{\phi_a}\braket{\chi|\widetilde H|\chi}=\braket{\chi|[\widetilde H,B_a]|\chi},\qquad B_a=U^\dagger\partial_{\phi_a}U ,
\end{equation}
i.e.\ the first-order energy response along the tangent direction $B_a$ of the dressing. $C_{\mathrm{grad}}$ is an average of the non-negative quantities $|g_a(\chi)|^2$ over an ensemble of full support on $P$; since each $[\widetilde H,B_a]$ is Hermitian, $\braket{\chi|[\widetilde H,B_a]|\chi}=0$ for all $\chi\in P$ is equivalent by polarization to $P\,[\widetilde H,B_a]\,P=0$. Hence $C_{\mathrm{grad}}=0$ if and only if the dressed energy is stationary throughout $P$ under every pool direction, $P\,[\widetilde H,B_a]\,P=0$ for every $a$. If the generators $\{A_a\}$ span the full off-block tangent space, all $P\!\leftrightarrow\!Q$ rotations, then this stationarity is equivalent to the Schrieffer--Wolff condition Eq.~\eqref{eq:sw-condition}, and the global minimum of $C_{\mathrm{grad}}$ is exact decoupling. Otherwise $C_{\mathrm{grad}}$ is a projected Schrieffer--Wolff residual: it drives the leakage down only within the span of the pool, and $\mathcal L$ plateaus at the residual outside it (the inhomogeneous-Hubbard regime).

\paragraph*{Tangent-completeness.}

The pool is tangent-complete exactly when no $P$ determinant and no $Q$ determinant differ by more than a double excitation, for then the screened generalised singles-and-doubles exhaust the off-block tangent and
\begin{equation}\label{eq:tangent-complete}
n_\phi=|P|\,|Q| .
\end{equation}
This holds whenever the active space spans at most three spatial orbitals (so $n_\phi=6$ for CAS$(2e,3o)$ and $n_\phi=20$ for CAS$(4e,3o)$), which is why the LiH decoupling of \sref{sec:decouple} reaches machine-zero leakage at every geometry, whereas the half-filled Hubbard pool floors below the full off-block and its leakage plateaus at the higher-order residual.

\section{Two-block dressed-Hamiltonian construction}\label{app:two-block}

This appendix records the construction benchmarked in \sref{sec:two-block}: the factorised ansatz, the external-excitation pool and its parameter-matched screening, and the two-stage optimisation. It instantiates the dressed-state composition Eq.~\eqref{eq:dressed-main} with a state-preparation block that is itself the symmetry-pruned tree.

\paragraph*{The factorised ansatz.}

In the full symmetry-adapted encoding the trial state factorises as
\begin{equation}\label{eq:twoblock-state}
\ket{\Psi(\bm\theta,\bm\phi)}=U_{\mathrm{ext}}(\bm\phi)\,T(\bm\theta)\,\ket{0},
\end{equation}
where $T(\bm\theta)$ is the symmetry-pruned tree of \sref{sec:pruning} with active leaves equal to a complete-active-space (CAS) sector $P$, preparing the exact CAS ground state at fixed cost and with the diagonal metric of Theorem~\ref{thm:diag-metric} intact on its parameters; and $U_{\mathrm{ext}}(\bm\phi)=\prod_k e^{-i\phi_k A_k}$ is an ordered product of symmetry-allowed generalised single- and double-excitation rotations coupling $P$ to its external complement $Q$. Because the $P$-block state is multireference, the generators range over all spin-orbital index combinations (not only Hartree--Fock occupied$\to$virtual), which matters at stretched geometries. Each $A_k=i(T_k-T_k^\dagger)$ satisfies $A_k^3=A_k$, so the exponential is exact and real-orthogonal and the real tree chart suffices. The ordered product is the first-order Trotterisation of the Schrieffer--Wolff dressing $e^{S}$ with purely off-block $S$ (the \texttt{--symmetric} variant is the second-order symmetric product at twice the dressing CNOT cost), so \sref{sec:two-block} and \sref{sec:decouple} are two readings of the same dressed Hamiltonian $\widetilde H=U_{\mathrm{ext}}^\dagger H U_{\mathrm{ext}}$: minimising the dressed energy versus driving its off-block to zero.

\paragraph*{Pool screening and parameter matching.}

The raw pool is screened by dropping null, sector-leaking and purely in-$P$ generators, then pruned by ordered Gram--Schmidt on the off-block parts $Q A_k P$ (singles before doubles, lexicographic), so the reported $\phi$-count equals the off-block tangent dimension. For a fair, equal-parameter comparison with UCCSD the surviving pool is ranked by the device-measurable dressing gradient at $\bm\phi=0$,
\begin{equation}\label{eq:adapt-rank}
\big|\partial_{\phi_k}E\big|_{\bm\phi=0}=2\,\big|\braket{\psi_P|H A_k|\psi_P}\big| ,
\end{equation}
a single round of ADAPT-style selection~\cite{Grimsley2019} charged at shift-rule cost, and only the top $M_{\mathrm{UCCSD}}-n_{\mathrm{tree}}$ generators are retained, so the TwoBlock total parameter count equals that of the UCCSD reference in the identical encoding. Optimisation proceeds in two stages: natural-gradient convergence of the CAS-only tree ($\bm\theta$), then joint refinement (natural gradient on $\bm\theta$, gradient descent on $\bm\phi$ against the dressed Hamiltonian). The undressed baseline (\textsc{CAS-Tree}) is stage one alone.

\paragraph*{Exactly spin-adapted two-block.}

The spin-adapted variant (\textsc{TwoBlock-MM-spin}) composes the gate-free spin-adapted state preparation of Appendix~\ref{app:spin-tree} with a spin-conserving dressing. The state-preparation block is the CSF tree over the singlet block of the CAS sector $P$, built natively in the symmetry-adapted encoding: the intermediate-Jordan--Wigner CSF tree is formed as in Appendix~\ref{app:spin-tree} and each of its Slater determinants is relabelled to its encoded computational-basis correspondent, the only sign being the blocked-to-interleaved reordering parity (the encoding's Clifford being a sign-free $\mathrm{GF}(2)$ map on the sector). The prepared state is a pure singlet at every $\bm\theta$ ($\langle S^2\rangle=0$), the CAS tree carries $f_S-1$ free angles ($17$ for the water CAS$(6,5)$ singlet block, against $k-1=27$ for the determinant sector), and the exactly diagonal metric of Theorem~\ref{thm:diag-metric} is intact by Proposition~\ref{prop:invariance}.

The dressing generators are the spin-conserving analogues of the generalised singles and doubles, built from the spin-summed one-body operators $E_{pq}=a^\dagger_{p\uparrow}a_{q\uparrow}+a^\dagger_{p\downarrow}a_{q\downarrow}$: the singlet singles $E_{pq}-E_{qp}$ and the generalised singlet doubles $E_{pq}E_{rs}-\mathrm{h.c.}$, each commuting with $S^2$ (verified $\|[A_a,S^2]\|\le10^{-9}$ for all $392$ water generators), so $U(\bm\phi)$ preserves total spin and $U(\bm\phi)\ket\psi$ stays an exact singlet throughout. Unlike the spin-orbital generators of the determinant construction, a singlet double is a genuine two-body operator whose Jordan--Wigner image is a sum of non-commuting Pauli terms and therefore does not satisfy $A^3=A$; in the simulation each rotation $e^{-i\phi_a A_a}$ is applied exactly by an eigendecomposition on the $k$-dimensional sector, so the dressed $\langle S^2\rangle$ is machine zero ($\le5\times10^{-18}$). At the gate level a singlet double is realised by a single first-order Trotter step: although that step is not exact as an operator (its Pauli terms do not all commute -- $128$ of the $\binom{32}{2}$ pairs anticommute for a generic double), the error at the small dressing angles reached by the optimiser ($|\bm\phi|_\infty\approx10^{-2}$) is negligible, the reps-one circuit reproducing the exact dressed state to $\Delta E\approx1.6\times10^{-10}$\,Ha, residual $\langle S^2\rangle\approx4\times10^{-10}$ and sector leakage $\approx10^{-15}$; the quoted two-qubit counts thus correspond to a faithfully spin-pure circuit with no Trotter-repetition penalty.

Because a singlet double costs roughly twice the CNOTs of a spin-orbital double, the spin-adapted pool is screened by a cost-aware ADAPT criterion: generators are ranked by $|\partial_{\phi_a}E|/\mathrm{CNOT}(A_a)$ and admitted until the cumulative dressing CNOT count reaches that of the determinant two-block (evaluated from the same $S_z$-only pool at the CAS ground), so the comparison is drawn at matched two-qubit cost rather than matched parameter count. This ranking fills the budget with the cheap, gate-exact singlet singles and the cost-efficient doubles; a plain-gradient ranking instead spends the budget on a few expensive high-gradient doubles and underperforms. At CNOT count equal to or below the determinant two-block ($1140/1048/776/1024$ against $1196/1108/896/1048$ at $R=1.5/1.75/2.0/2.25\,\text{\AA}$) the spin-adapted ansatz reaches chemical accuracy at every geometry ($1.6$ and $1.5\times10^{-4}$\,Ha near equilibrium, improving to $8.4$ and $2.4\times10^{-5}$\,Ha at $R=2.0$ and $2.25\,\text{\AA}$), more accurately than the determinant two-block in the strongly-correlated regime and with $\langle S^2\rangle$ at machine zero throughout.

\section{Absence of barren plateaus on a polynomial sector}\label{app:no-bp}

Two properties established above fix the trainability of the metric-aware loop with no further assumption: the pruned chart is \emph{complete} on $\mathrm{span}(S)$ (\sref{sec:pruning}), and, drawn from the Fubini--Study measure, its state is an \emph{exact $2$-design} on the sector (Appendix~\ref{app:fs-design}, Eq.~\eqref{eq:fs-2design}). We show they imply that the natural-gradient signal driving the optimiser is the quantum variance of the sector-projected observable, and is therefore only polynomially small whenever the sector dimension $k=|S|$ is polynomial in $n$.

Throughout $O=O^\dagger$ is the objective operator ($O=H$ for VQE), $P_S$ the projector onto $\mathrm{span}(S)$, and $O_S:=P_SOP_S$ its $k\times k$ sector block (the $H_S$ of \sref{sec:mol-dynamics} when $O=H$). Write $\mathrm{Var}_\psi(A):=\braket{\psi|A^2|\psi}-\braket{\psi|A|\psi}^2$ for the quantum variance in $\ket\psi$, and $\mathrm{Var}_{\mathrm{spec}}(A):=\tfrac1k\mathrm{Tr}(A^2)-\big(\tfrac1k\mathrm{Tr}A\big)^2$ for the variance of the eigenvalue spectrum of a $k\times k$ operator $A$. The energy is $E=\braket{\psi|O|\psi}$, and $g$ the diagonal metric of Theorem~\ref{thm:diag-metric}; we write $\|\nabla_g E\|_g^2:=dE^\top g^{-1}dE=\sum_\mu(\partial_\mu E)^2/g_{\mu\mu}$ for the squared natural-gradient norm, with $\nabla_g E=g^{-1}dE$ as in \sref{sec:kahler}.

\begin{lemma}[Gradient--variance identity]\label{lem:grad-var}
If the chart is complete on $\mathrm{span}(S)$, then at every non-singular chart point
\begin{equation}
\|\nabla_g E\|_g^2 \;=\; 4\,\mathrm{Var}_\psi(O_S).
\end{equation}
\end{lemma}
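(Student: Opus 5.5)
The identity follows from two facts: at a non-singular point the chart differential is onto the horizontal tangent space of $\mathbb{C}\mathbf P^{k-1}$, and the metric-raised gradient norm of a function is intrinsic, so it can be evaluated in any basis. The plan is to compute $\|\nabla_g E\|_g^2$ intrinsically on the sector projective space and then pull it back.

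Write $\ket\psi\in\mathrm{span}(S)$ and the energy as $E=\braket{\psi|O_S|\psi}$; this is valid because $\ket\psi=P_S\ket\psi$. For a horizontal tangent vector $\xi$ (with $\braket{\psi|\xi}=0$), $dE[\xi]=2\,\mathrm{Re}\braket{\xi|O_S|\psi}=2\,\mathrm{Re}\braket{\xi|(O_S-E)\psi}$. Here we use that $\ket{\eta}:=(O_S-E)\ket\psi$ is itself horizontal, since $\braket{\psi|\eta}=0$. With $g=\mathrm{Re}\braket{\cdot|\cdot}$ on the horizontal space, the Riesz representative of $dE$ is therefore $\nabla E=2\ket\eta$. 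Its squared norm is $4\braket{\eta|\eta}=4\bigl(\braket{\psi|O_S^2|\psi}-E^2\bigr)=4\,\mathrm{Var}_\psi(O_S)$.

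It remains to show that the chart expression $dE^\top g^{-1}dE$ equals this intrinsic norm. Completeness of the chart on $\mathrm{span}(S)$, together with non-singularity ($\sin 2\theta_a\neq0$ and all $w_i,p_j>0$), means the pullback $g$ is positive definite on the $2(k-1)$ free coordinates in the gauge $\omega_0=0$. Since $\dim_{\mathbb R}\mathbb{C}\mathbf P^{k-1}=2(k-1)$, the differential is then an isomorphism onto the tangent space of the ray. For an isomorphism $L$ with pulled-back metric $L^\top G L$, one has $(L^\top\alpha)^\top(L^\top G L)^{-1}(L^\top\alpha)=\alpha^\top G^{-1}\alpha$, so the norm is chart-independent.

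The delicate point is that Theorem~\ref{thm:diag-metric}'s $g$ is the chart pullback $\mathrm{Re}\braket{\partial\psi|\partial\psi}$, not the projective one; the two differ on the leaf-phase block by $-pp^\top$. To handle this, work in the gauge $\omega_0=0$, or with all phases and the pseudoinverse, and invoke Appendix~\ref{app:ng-omega-equiv}. Because $\bm1^\top\nabla_\omega E=0$ by Eq.~\eqref{eq:omega-sum-zero}, Eq.~\eqref{eq:agreement} shows that $D^{-1}u$ and $g^{\mathrm{FS},+}_\omega u$ differ only along $\bm1$. Pairing with the sum-zero $u$ kills that difference, so $u^\top D^{-1}u=u^\top g^{\mathrm{FS},+}_\omega u$. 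On the $\theta$ block the derivatives $\partial_{\theta_i}\psi$ are already horizontal: $\braket{\psi|\partial_{\theta_i}\psi}=\tfrac12\partial_{\theta_i}\braket{\psi|\psi}=0$ for real path products, and the cross term $\mathrm{Re}$ vanishes by the proof of Theorem~\ref{thm:diag-metric}. Hence the $\theta$ block agrees with the Fubini--Study metric.

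Combining the two blocks gives $\|\nabla_g E\|_g^2=4\,\mathrm{Var}_\psi(O_S)$. The main obstacle is this gauge bookkeeping between the chart metric and the projective metric; the rest is routine.
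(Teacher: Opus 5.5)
Your central computation is the paper's own proof. The paper evaluates $\|\nabla_g E\|_g^2$ as $\sup_{\|\delta\|=1}dE[\delta]^2$ over horizontal $\ket{\delta}\in\mathrm{span}(S)$, with the supremum attained at $\ket{\delta}\propto(O_S-E)\ket{\psi}$; that is your Riesz-representative argument in dual form. You go further by reconciling the chart expression $\sum_\mu(\partial_\mu E)^2/g_{\mu\mu}$ with this intrinsic norm. The paper only asserts that step, by taking $g(\delta,\delta)=\braket{\delta|\delta}$ on horizontal directions. Your one-phase-per-leaf version of the reconciliation is correct: $\partial_{\theta_i}\psi$ is horizontal, the cross block vanishes, and on the sum-zero phase gradient $u$ one has $u^\top D^{-1}u=u^\top g^{\mathrm{FS},+}_\omega u$.

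Two points need repair.

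First, the gauge $\omega_0=0$ is not an alternative route, because there the identity is false. With $\omega_0$ fixed, the chart pullback on the phases is $D'=\mathrm{diag}(p_1,\dots,p_{k-1})$. The Fubini--Study block is $D'-p'p'^\top$, whose inverse is $D'^{-1}+\bm{1}\bm{1}^\top/p_0$ by Sherman--Morrison. So $dE^\top g^{-1}dE$ falls short of $4\,\mathrm{Var}_\psi(O_S)$ by $\bigl(\sum_{j\ge1}\partial_{\omega_j}E\bigr)^2/p_0=(\partial_{\omega_0}E)^2/p_0$. No sum-zero argument rescues this, because the truncated phase gradient is not sum-zero. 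As a concrete check, take $k=2$, $\ket{\psi}=\cos\theta\ket{0}+e^{i\omega_1}\sin\theta\ket{1}$ and $O_S=X$. The gauge-fixed $g=\mathrm{diag}(1,\sin^2\theta)$ gives $4\cos^2 2\theta\cos^2\omega_1+4\cos^2\theta\sin^2\omega_1$, whereas $4\,\mathrm{Var}_\psi(X)=4\cos^2 2\theta\cos^2\omega_1+4\sin^2\omega_1$. The lemma therefore holds for the redundant chart of Appendices~\ref{app:ng-omega-equiv} and~\ref{app:tdvp-rhs}, not for the gauge-fixed chart of Fig.~\ref{fig:classified-tree}; drop that option.

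Second, your invariance step uses the isomorphism version and is stated in that gauge-fixed chart. Positive definiteness of $\mathrm{Re}\braket{\partial\psi|\partial\psi}$ would not by itself exclude tangents along $i\ket{\psi}$. In the redundant chart you need the surjective version instead:
\begin{itemize}
\item the differential $L$ onto $T_{[\psi]}\mathbb{C}\mathbf P^{k-1}$ is surjective, with one-dimensional kernel given by the uniform phase shift, since the Fubini--Study pullback $\mathrm{diag}(w)\oplus(D-pp^\top)$ has rank $2(k-1)$ when all $w_i,p_j>0$;
\item $dE=L^\top\alpha$ annihilates that kernel;
\item for such covectors, $(L^\top\alpha)^\top(L^\top GL)^+(L^\top\alpha)=\alpha^\top G^{-1}\alpha$.
\end{itemize}
With this pseudoinverse form of chart invariance, your block-by-block computation closes the proof.
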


\noindent\textbf{Proof.} By completeness the reachable tangent directions at $\ket\psi$ are, away from the measure-zero chart singularities of \sref{sec:gauge-singular}, all $\ket\delta\in\mathrm{span}(S)$ with $\braket{\psi|\delta}=0$ (the chart's tangent space, cf.\ the projection identity of Appendix~\ref{app:restricted-subspace}), with metric $g(\delta,\delta)=\braket{\delta|\delta}$ (the chart metric of \sref{sec:metric}, i.e.\ $\tfrac14$ of the quantum Fisher information; the global-phase direction carries no gradient of a physical objective, \sref{sec:gauge-singular}). Since $dE[\delta]=2\,\mathrm{Re}\braket{\delta|O|\psi}$,
\begin{equation}
\|\nabla_g E\|_g^2=\sup_{\|\delta\|=1}dE[\delta]^2=4\,\big\|(O_S-\langle O_S\rangle)\ket\psi\big\|^2=4\,\mathrm{Var}_\psi(O_S),
\end{equation}
the supremum attained at $\ket\delta\propto(O_S-\langle O_S\rangle)\ket\psi$, the component of $O\ket\psi$ orthogonal to $\ket\psi$ inside $\mathrm{span}(S)$. \hfill$\square$

\begin{lemma}[Sector average]\label{lem:sector-avg}
Under Fubini--Study initialisation, for which $\ket\psi$ is Haar-distributed on the unit sphere of $\mathrm{span}(S)$,
\begin{equation}
\mathbb E\,\|\nabla_g E\|_g^2=\frac{4k}{k+1}\,\mathrm{Var}_{\mathrm{spec}}(O_S),
\qquad
\mathrm{Var}[E]=\frac{\mathrm{Var}_{\mathrm{spec}}(O_S)}{k+1}.
\end{equation}
\end{lemma}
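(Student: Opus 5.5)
The plan is to reduce both identities to first and second moments of the Haar ensemble on $\mathrm{span}(S)$. Lemma~\ref{lem:grad-var} converts the left-hand side into $4\,\mathrm{Var}_\psi(O_S)$ at every non-singular point, and the singular configurations have Fubini--Study measure zero (Appendix~\ref{app:fs-design}), so
\[
\mathbb E\,\|\nabla_g E\|_g^2=4\,\mathbb E\bigl[\braket{\psi|O_S^2|\psi}\bigr]-4\,\mathbb E\bigl[\braket{\psi|O_S|\psi}^2\bigr].
\]
By Proposition~\ref{prop:fs-haar}, $\ket\psi$ is exactly sector-Haar, so the one-design identity Eq.~\eqref{eq:fs-1design} and the two-design identity Eq.~\eqref{eq:fs-2design} apply verbatim. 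All operators are compressed to the $k\times k$ block. Because $\ket\psi\in\mathrm{span}(S)$, we have $\braket{\psi|O|\psi}=\braket{\psi|O_S|\psi}$, so it suffices to work with $O_S$ throughout.

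For the first term, Eq.~\eqref{eq:fs-1design} gives $\mathbb E\braket{\psi|O_S^2|\psi}=\mathrm{Tr}(O_S^2)/k$. For the second term, I apply Eq.~\eqref{eq:Favg} with $M=O_S$. Since $O_S$ is Hermitian, $|\braket{\psi|O_S|\psi}|^2=\braket{\psi|O_S|\psi}^2$, and therefore
\[
\mathbb E\braket{\psi|O_S|\psi}^2=\frac{(\mathrm{Tr}\,O_S)^2+\mathrm{Tr}(O_S^2)}{k(k+1)}.
\]

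The remaining step is algebra. Writing $a=\mathrm{Tr}(O_S^2)/k$ and $b=(\mathrm{Tr}\,O_S/k)^2$, so that $\mathrm{Var}_{\mathrm{spec}}(O_S)=a-b$, the difference becomes
\[
a-\frac{kb+a}{k+1}=\frac{k(a-b)}{k+1}.
\]
Multiplying by $4$ gives the first claim. For the energy variance, $\mathbb E[E]=\mathrm{Tr}\,O_S/k=\sqrt{b}$ (up to sign), so
\[
\mathrm{Var}[E]=\frac{kb+a}{k+1}-b=\frac{a-b}{k+1},
\]
which is the second claim.

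The main obstacle is justifying that Lemma~\ref{lem:grad-var} may be averaged over the whole ensemble. The identity holds only off the chart singularities, so I must check that this exceptional set is null under $d\mu_{\mathrm{FS}}$. That follows from the Beta-distributed angles in Eq.~\eqref{eq:fs-measure}, which put zero mass on the boundary values $\theta_a\in\{0,\pi/2\}$. I must also confirm that the global-phase gauge direction contributes nothing to $\|\nabla_g E\|_g^2$; Eq.~\eqref{eq:gauge-orth} guarantees this. Both checks are brief given the earlier results, and everything else is the routine two-design computation above.
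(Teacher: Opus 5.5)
Your proposal is correct and matches the paper's proof. You convert the natural-gradient norm to $4\,\mathrm{Var}_\psi(O_S)$ via Lemma~\ref{lem:grad-var}, evaluate the first and second sector-Haar moments (Eqs.~\eqref{eq:fs-1design} and \eqref{eq:Favg} are exactly the symmetric-subspace identities the paper invokes), and do the same algebra for both identities; your explicit check that the chart singularities carry zero Fubini--Study mass is a point the paper leaves implicit, and you handle it correctly.
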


\noindent\textbf{Proof.} For $\ket\psi$ Haar on the unit sphere of $\mathrm{span}(S)$, the symmetric-subspace identity~\cite{Harrow2013, Dankert2009} gives $\mathbb E\braket{\psi|O_S|\psi}=\tfrac1k\mathrm{Tr}(O_S)$, $\mathbb E\braket{\psi|O_S^2|\psi}=\tfrac1k\mathrm{Tr}(O_S^2)$ and $\mathbb E\braket{\psi|O_S|\psi}^2=\big[(\mathrm{Tr}O_S)^2+\mathrm{Tr}(O_S^2)\big]/[k(k+1)]$, the sector-Haar moments of Theorem~\ref{thm:dequantization}(vi). Hence $\mathbb E\,\mathrm{Var}_\psi(O_S)=\tfrac1k\mathrm{Tr}(O_S^2)-\big[(\mathrm{Tr}O_S)^2+\mathrm{Tr}(O_S^2)\big]/[k(k+1)]=\tfrac{k}{k+1}\mathrm{Var}_{\mathrm{spec}}(O_S)$; Lemma~\ref{lem:grad-var} then gives the first identity and the same moments the second. \hfill$\square$

\begin{proposition}[No barren plateau on a polynomial sector]\label{prop:no-bp}
If $k=|S|=\mathrm{poly}(n)$ and $\mathrm{Var}_{\mathrm{spec}}(O_S)=\Omega(1/\mathrm{poly}(n))$, then by Lemma~\ref{lem:sector-avg} $\mathbb E\,\|\nabla_g E\|_g^2=\tfrac{4k}{k+1}\mathrm{Var}_{\mathrm{spec}}(O_S)$ and $\mathrm{Var}[E]=\mathrm{Var}_{\mathrm{spec}}(O_S)/(k+1)$; both are bounded below by an inverse polynomial in $n$, not exponentially small, so the bare ansatz exhibits no barren plateau. The exponential suppression of a fully expressive circuit is recovered from the same identities on replacing the sector by the whole space, $k\to2^n$.
\end{proposition}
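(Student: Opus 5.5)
The proposition is essentially a corollary of Lemmas~\ref{lem:grad-var} and~\ref{lem:sector-avg}. The plan is to verify that their hypotheses hold under the stated assumptions, and then turn the two identities into inverse-polynomial lower bounds.

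First, I will invoke the completeness of the pruned chart on $\mathrm{span}(S)$ (\sref{sec:pruning}) so that Lemma~\ref{lem:grad-var} applies at every non-singular chart point. The singular configurations have Fubini--Study measure zero, since $\sqrt{\det g}$ of Eq.~\eqref{eq:detg-factorise} vanishes there, so almost surely none are drawn. They therefore do not affect the expectation.

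Second, I will invoke Proposition~\ref{prop:fs-haar}: Fubini--Study initialisation produces exactly the sector-Haar ensemble. Lemma~\ref{lem:sector-avg} then gives
\[
\mathbb E\,\|\nabla_g E\|_g^2=\tfrac{4k}{k+1}\mathrm{Var}_{\mathrm{spec}}(O_S),\qquad
\mathrm{Var}[E]=\mathrm{Var}_{\mathrm{spec}}(O_S)/(k+1).
\]

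Third, I will bound the prefactors. Since $k\ge 2$ for a nontrivial sector, $\tfrac{4k}{k+1}\ge \tfrac{8}{3}$. Since $k=\mathrm{poly}(n)$, $1/(k+1)\ge 1/\mathrm{poly}(n)$. Multiplying by the assumed $\mathrm{Var}_{\mathrm{spec}}(O_S)=\Omega(1/\mathrm{poly}(n))$ shows that both quantities are $\Omega(1/\mathrm{poly}(n))$. This rules out the exponential concentration that defines a barren plateau.

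Finally, for the converse remark I will rerun Lemma~\ref{lem:sector-avg} with $S$ equal to the full computational basis, so $k=2^n$. For an observable with $\mathrm{Var}_{\mathrm{spec}}(O)=O(\mathrm{poly}(n))$, for instance a Pauli sum with polynomially many terms of bounded coefficients, $\mathrm{Var}[E]$ becomes $O(\mathrm{poly}(n)/2^n)$. This recovers the standard exponential suppression.

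The main obstacle is interpretive rather than computational, namely which gradient quantity "barren plateau" refers to. The lemma bounds the natural-gradient norm, which is $\Theta(\mathrm{Var}_{\mathrm{spec}})$ even for the full space, while the Euclidean signal is $\mathrm{Var}[E]$. I would therefore rest the no-plateau claim on $\mathrm{Var}[E]$ (equivalently, the Euclidean gradient variance). The comparison with $k\to 2^n$ is then carried by the $1/(k+1)$ factor, and I would state that the natural-gradient identity plays only a supporting role.
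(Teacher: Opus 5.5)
Your proposal is correct and follows the paper's route. You read both identities off Lemma~\ref{lem:sector-avg}, using completeness of the pruned chart, Proposition~\ref{prop:fs-haar} for the sector-Haar law, and the FS-null singular set (where $\sqrt{\det g}=0$) so that Lemma~\ref{lem:grad-var} holds almost surely; you then apply the elementary bounds $4k/(k+1)\ge 8/3$ and $1/(k+1)\ge 1/\mathrm{poly}(n)$.

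Your interpretive caveat is also right, and it sharpens the paper's final sentence. Under $k\to2^n$ only $\mathrm{Var}[E]$ becomes exponentially small, while $\mathbb E\,\|\nabla_g E\|_g^2\to 4\,\mathrm{Var}_{\mathrm{spec}}(O)$ stays $\Theta(\mathrm{Var}_{\mathrm{spec}})$. On the full space the obstruction for the natural gradient shows up instead as exponentially small metric weights, which is what the paper's estimation-cost remark addresses.
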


\noindent\textbf{Estimation cost.} The optimiser forms $\nabla_g E$ from the parameter-shift derivatives and the metric weights $(w_i,p_j)$, both measured directly as probabilities. Under Fubini--Study initialisation these weights are inverse-polynomial in $k$ rather than exponentially small (each active node carries at least two of the $k$ leaves, and the leaf probabilities are flat Dirichlet with mean $1/k$), so the natural gradient is estimable at polynomial sampling cost.

\noindent\textbf{Dressed core.} For the two-block ansatz $\ket{\Psi}=U(\bm\phi)\ket\psi$ (\sref{sec:two-block}), differentiating in the core parameters at fixed $\bm\phi$ replaces $O$ by the dressed Hamiltonian $\widetilde H=U^\dagger H U$ (Proposition~\ref{prop:invariance}), so Lemma~\ref{lem:grad-var} gives $\|\nabla_{\bm\theta}E\|_g^2=4\,\mathrm{Var}_\psi(\widetilde H_S)$ with $\widetilde H_S=P_S U^\dagger H U P_S$ and $\|\widetilde H_S\|\le\|H\|$; Proposition~\ref{prop:no-bp} therefore holds verbatim for the state-preparation block at every dressing. The dressing parameters $\bm\phi$ are not covered: $U(\bm\phi)$ generically maps out of the sector and, in the advantage regime (Corollary~\ref{cor:advantage}), is deep enough to be classically hard and hence liable to its own barren plateau. The split accordingly removes the plateau from state preparation and confines any residual one to the dressing, which is kept shallow and structured (ADAPT-screened, warm-started at the exact sector state).

\noindent The result is stated for the noiseless landscape and requires the effective spectrum $\mathrm{Var}_{\mathrm{spec}}(O_S)$ not to collapse, which holds generically and, in the applications here, is the spread of the physical effective Hamiltonian on the sector.

\SupplementaryStandaloneEnd

\end{document}